\providecommand{\tabularnewline}{\\}
\newcommand{\myTitle}{Lindbladians with multiple steady states\xspace}
\newcommand{\mySubtitle}{theory and applications\xspace}
\newcommand{\myName}{Victor V. Albert\xspace}
\newcommand{\mySupervisor}{Liang Jiang\xspace}
\newcommand{\myFaculty}{Steven M. Girvin, Leonid I. Glazman, Michel H. Devoret, Robert J. Schoelkopf\xspace}
\newcommand{\myUni}{Yale University\xspace}
\newcounter{dummy} 
\providecommand{\mLyX}{L\kern-.1667em\lower.25em\hbox{Y}\kern-.125emX\@}
  \edef\@basepath{\expandafter\@firstofone\input@path} 
  \def\rm@quotes#1"#2"#3\@nul{\ifx\relax#2\relax #1\else #2\fi}
  \edef\@basepath{\expandafter\rm@quotes\@basepath""\@nul} 
\edef\@basepath{./}\fi
\let\orig@addbibresource\addbibresource
\renewrobustcmd*{\addbibresource}[2][type=file]{\orig@addbibresource[#1]{\@basepath#2}}
\theoremstyle{plain}
\newtheorem{thm}{\protect\theoremname}
\newenvironment{proof}[1][\protect\proofname]{\par
\normalfont\topsep6\p@\@plus6\p@\relax
\trivlist
\itemindent\parindent
\item[\hskip\labelsep\scshape #1]\ignorespaces
}{%
\endtrivlist\@endpefalse
}
\providecommand{\proofname}{Proof}
\renewcommand*{\AC@hyperlink}[2]{%
\begingroup       
\hypersetup{hidelinks}%
\hyperlink{#1}{#2}%
\endgroup   
}%
\providecommand{\theoremname}{Theorem}
\begin{document}
\global\long\def\half{\frac{1}{2}}
\global\long\def\a{\alpha}
\global\long\def\b{\beta}
\global\long\def\g{\gamma}
\global\long\def\c{\chi}
\global\long\def\d{\delta}
\global\long\def\o{\omega}
\global\long\def\m{\mu}
\global\long\def\s{\sigma}
\global\long\def\n{\nu}
\global\long\def\z{\zeta}
\global\long\def\l{\lambda}
\global\long\def\e{\epsilon}
\global\long\def\k{\kappa}
\global\long\def\x{\chi}
\global\long\def\r{\rho}
\global\long\def\t{\theta}
\global\long\def\G{\Gamma}
\global\long\def\D{\mathcal{D}}
\global\long\def\O{\Omega}
\global\long\def\pr{\prime}

\global\long\def\oo{\mathcal{O}}
\global\long\def\L{\mathcal{L}}
\global\long\def\H{\mathcal{H}}
\global\long\def\R{\mathcal{P}}
\global\long\def\A{\mathcal{A}}
\global\long\def\F{\mathcal{F}}
\global\long\def\dfunc{\mathfrak{J}}
\global\long\def\met{\mathcal{M}}
\global\long\def\geom{\mathcal{Q}}
\global\long\def\rest{\mathcal{R}}
\global\long\def\ppp{\mathcal{P}_{\!\!\!{\scriptscriptstyle \infty}}}
\global\long\def\pppd{\dot{\mathcal{P}}_{\!\!\!{\scriptscriptstyle \infty}}}
\global\long\def\qqq{\mathcal{Q}_{{\scriptscriptstyle \infty}}}
\global\long\def\N{\mathcal{N}}
\global\long\def\E{\mathcal{E}}
\global\long\def\GG{\mathcal{G}}

\global\long\def\lind{\kappa}
\global\long\def\bra{\langle}
\global\long\def\ket{\rangle}
\global\long\def\bb{\langle\!\langle}
\global\long\def\kk{\rangle\!\rangle}
\global\long\def\ash{\textnormal{As(\ensuremath{\mathsf{H}})}}
\global\long\def\oph{\textnormal{Op(\ensuremath{\mathsf{H}})}}
\global\long\def\h{\textnormal{\ensuremath{\mathsf{H}}}}
\global\long\def\dg{\dagger}
\global\long\def\dgt{\ddagger}
\global\long\def\tr{\textnormal{Tr}}
\global\long\def\Tr{\textnormal{\textsc{Tr}}}
\global\long\def\id{\mathcal{I}}

\global\long\def\pp{P}
\global\long\def\qq{Q}
\global\long\def\rin{\rho_{\textnormal{\textsf{in}}}}
\global\long\def\rout{\rho_{{\scriptscriptstyle \infty}}}
\global\long\def\rss{\rho_{\mathsf{ss}}}

\global\long\def\la{\varDelta}
\global\long\def\sl{(}
\global\long\def\sr{)}
\global\long\def\al{[}
\global\long\def\ar{]}
\global\long\def\kef{\kappa_{\textsf{eff}}}

\global\long\def\idfs{\mathcal{P}_{\!\textsf{dfs}}}
\global\long\def\idfsd{\dot{\mathcal{P}}_{\!\textsf{dfs}}}
\global\long\def\iidfs{P_{\!\textsf{dfs}}}
\global\long\def\iidfsd{\dot{P}_{\!\textsf{dfs}}}
\global\long\def\hout{H_{{\scriptscriptstyle \infty}}}
\global\long\def\sout{\mathcal{H}_{{\scriptscriptstyle \infty}}}
\global\long\def\sdfs{\mathcal{H}_{\textsf{dfs}}}

\global\long\def\Ga{\varGamma}
\global\long\def\St{\varPsi}
\global\long\def\stdfs{\varPsi^{\textnormal{\textsf{dfs}}}}
\global\long\def\J{J}
\global\long\def\kkk#1{\left.\left|#1\right\rangle \!\!\!\right\rangle }
\global\long\def\as{\varrho_{\textnormal{\textsf{ax}}}}
\global\long\def\du{d_{\varrho}}
\global\long\def\ai{P_{\!\textnormal{\textsf{ax}}}}
\global\long\def\ot{\otimes}
\global\long\def\an{n_{\textnormal{\textsf{ax}}}}
\global\long\def\aso{\mathcal{L}_{\textnormal{\textsf{ax}}}}
\global\long\def\asi{\mathcal{P}_{\textnormal{\textsf{ax}}}}
\global\long\def\hha{\textnormal{\ensuremath{\mathsf{H_{\textnormal{\textsf{ax}}}}}}}
\global\long\def\hhdfs{\textnormal{\ensuremath{\mathsf{H_{dfs}}}}}
\global\long\def\op{\textnormal{Op}}
\global\long\def\opa{\textnormal{Op(\ensuremath{\mathsf{H_{\textnormal{\textsf{ax}}}}})}}
\global\long\def\opdfs{\textnormal{Op(\ensuremath{\mathsf{H_{dfs}}})}}
\global\long\def\da{d_{\textnormal{\textsf{ax}}}}

\global\long\def\p{\partial}
\global\long\def\hpert{V}
\global\long\def\spert{\mathcal{V}}
\global\long\def\reac{W^{\textsf{dfs}}}
\global\long\def\rreac{\mathcal{W}^{\textsf{dfs}}}
\global\long\def\rrreac{\mathcal{W}}
\global\long\def\hpertdfs{V}
\global\long\def\spertdfs{\mathcal{V}}
\global\long\def\apert{V_{\textsf{ax}}}
\global\long\def\X{\mathcal{X}}
\global\long\def\xdfs{Y_{\textsf{dfs}}}
\global\long\def\xxdfs{\mathcal{Y}_{\textsf{dfs}}}
\global\long\def\Y{\mathcal{Y}}

\global\long\def\pad{P_{0}}
\global\long\def\padd{\dot{P}_{0}}
\global\long\def\qad{Q_{0}}
\global\long\def\uad{U_{\textsf{ad}}}
\global\long\def\uadd{U}
\global\long\def\sad{S}
\global\long\def\uk{U_{\textsf{dfs}}}
\global\long\def\ukdfs{\mathcal{U}_{\textsf{dfs}}}
\global\long\def\U{\mathcal{U}}
\global\long\def\hk{K}
\global\long\def\hol{\psi}
\global\long\def\xx{\mathbf{x}}
\global\long\def\vv{\dot{\mathbf{x}}}
\global\long\def\path{\mathbb{P}}
\global\long\def\dd{d}
\global\long\def\ber{\mathcal{B}}
\global\long\def\rdfs{\mathcal{R}_{\mathsf{dfs}}}
\global\long\def\dax{S_{\textsf{ax}}}
\global\long\def\ddax{\mathcal{S}_{\textsf{ax}}}
\global\long\def\uax{R_{\textsf{ax}}}
\global\long\def\uuax{\mathcal{R}_{\textsf{ax}}}
\global\long\def\tang{\mathsf{T_{M}}(\xx)}
\global\long\def\pt{\mathbf{w}}

\global\long\def\adfs{A^{\textsf{dfs}}}
\global\long\def\aadfs{\mathcal{A}^{\textsf{dfs}}}
\global\long\def\aax{\mathcal{A}^{\textsf{ax}}}
\global\long\def\adfst{\widetilde{A}^{\textsf{dfs}}}
\global\long\def\aadfst{\widetilde{\mathcal{A}}^{\textsf{dfs}}}
\global\long\def\bdfs{B^{\textsf{dfs}}}
\global\long\def\bbdfs{\mathcal{B}^{\textsf{dfs}}}
\global\long\def\bbdfst{\widetilde{\mathcal{B}}^{\textsf{dfs}}}
\global\long\def\fdfs{F^{\textsf{dfs}}}
\global\long\def\ffdfs{\mathcal{F}^{\textsf{dfs}}}
\global\long\def\gdfs{M^{\textsf{dfs}}}
\global\long\def\ggdfs{\mathcal{M}^{\textsf{dfs}}}
\global\long\def\qdfs{Q^{\textsf{dfs}}}
\global\long\def\psp{\mathsf{M}}
\global\long\def\qqdfs{\mathcal{Q}^{\textsf{dfs}}}
\global\long\def\qax{\mathcal{Q}^{\textsf{ax}}}
\global\long\def\gax{\mathcal{M}^{\textsf{ax}}}
\global\long\def\ddfs{S_{\textsf{dfs}}}
\global\long\def\dddfs{\mathcal{S}_{\textsf{dfs}}}
\global\long\def\dist{\mathcal{S}}
\global\long\def\ps{\mathcal{P}_{\St}}
\global\long\def\psd{\dot{\mathcal{P}}_{\St}}
\global\long\def\psb{\overline{\psi}}
\global\long\def\stb{\overline{\varPsi}^{\textsf{dfs}}}
\global\long\def\idfsb{\overline{\mathcal{P}}_{\!\textsf{dfs}}}
\global\long\def\iidfsb{\overline{P}_{\!\textsf{dfs}}}
\global\long\def\pdfs{G^{\textsf{dfs}}}
\global\long\def\ppdfs{\mathcal{G}^{\textsf{dfs}}}
\global\long\def\ooo{\mathcal{O}_{\textsf{ax}}}

\global\long\def\intt#1#2{\int_{#1}^{#2}\!\!\!}
\global\long\def\Re{\text{Re}}
\global\long\def\hdg{H_{\textsf{edg}}}
\global\long\def\aa{a}
\global\long\def\ph{\hat{n}}
\global\long\def\adg{\Delta_{\textnormal{\textsf{edg}}}}
\global\long\def\dgg{\Delta_{\textnormal{\textsf{dg}}}}

\global\long\def\rr{\mathbf{r}}
\global\long\def\mo{\mathbf{p}}

\global\long\def\hi{\textnormal{\ensuremath{\mathsf{H_{\textnormal{\textsf{in}}}}}}}
\global\long\def\ho{\textnormal{\ensuremath{\mathsf{H_{out}}}}}
\global\long\def\dii{d_{\textnormal{\textsf{in}}}}
\global\long\def\doo{d_{\textnormal{\textsf{out}}}}

\global\long\def\ran{\textnormal{ran}}
\global\long\def\T{{\cal T}}
\global\long\def\S{{\cal S}}
\global\long\def\cat{\textnormal{\textsf{Cat}}}

\global\long\def\ct#1{\left|#1\right\rangle }
\global\long\def\cb#1{\left\langle #1\right|}

\global\long\def\nn{\pi}

\global\long\def\bl{b}
\global\long\def\pb{\hat{m}}
\global\long\def\de{\hat{\Delta}}
\global\long\def\ra{\Delta}

\global\long\def\LE{{\cal L}_{\textnormal{\textsf{eff}}}}
\global\long\def\HE{H_{\textnormal{\textsf{eff}}}}
\global\long\def\FE{F_{\textnormal{\textsf{eff}}}}
\global\long\def\K{{\cal K}}

\global\long\def\NU{{\cal Z}}

\DeclareRobustCommand{\lrff}{\rotatebox[origin=c]{45}{$\Leftrightarrow$}}
\DeclareRobustCommand{\lrtf}{\rotatebox[origin=c]{315}{$\Leftrightarrow$}}
\DeclareRobustCommand{\lff}{\rotatebox[origin=c]{45}{$\Leftarrow$}}
\DeclareRobustCommand{\rtf}{\rotatebox[origin=c]{315}{$\Rightarrow$}}

\DeclareRobustCommand{\ul}{{\raisebox{2pt}{\ytableaushort{ {*(black)} {} , {} {} }}}} 
\DeclareRobustCommand{\ur}{{\raisebox{2pt}{\ytableaushort{ {} {*(black)} , {} {} }}}} 
\DeclareRobustCommand{\ll}{{\raisebox{2pt}{\ytableaushort{ {} {} , {*(black)} {} }}}} 
\DeclareRobustCommand{\lr}{{\raisebox{2pt}{\ytableaushort{ {} {} , {} {*(black)} }}}} 
\DeclareRobustCommand{\di}{{\raisebox{2pt}{\ytableaushort{ {*(black)} {} , {} {*(black)} }}}} 
\DeclareRobustCommand{\of}{{\raisebox{2pt}{\ytableaushort{ {} {*(black)} , {*(black)} {} }}}}
\DeclareRobustCommand{\thr}{{\raisebox{2pt}{\ytableaushort{ {*(black)} {*(black)} , {} {*(black)} }}}}
\DeclareRobustCommand{\thu}{{\raisebox{2pt}{\ytableaushort{ {*(black)} {*(black)} , {*(black)} {} }}}}
\DeclareRobustCommand{\tho}{{\raisebox{2pt}{\ytableaushort{ {} {*(black)} , {*(black)} {*(black)} }}}}
\DeclareRobustCommand{\lmt}{{\raisebox{2pt}{\ytableaushort{ {*(black)} {}, {*(black)} {} }}}}
\DeclareRobustCommand{\rmt}{{\raisebox{2pt}{\ytableaushort{ {}{*(black)} ,{} {*(black)}  }}}}
\DeclareRobustCommand{\emp}{{\raisebox{2pt}{\ytableaushort{ {} {} , {} {} }}}}
\ytableausetup{boxsize = 2pt}
\DeclareRobustCommand{\one}{\hyperref[l:one]{\raisebox{.5pt}{\textcircled{\raisebox{-.9pt}{1}}}}}
\DeclareRobustCommand{\two}{\hyperref[l:two]{\raisebox{.5pt}{\textcircled{\raisebox{-.9pt}{2}}}}}
\DeclareRobustCommand{\three}{\hyperref[l:thr]{\raisebox{.5pt}{\textcircled{\raisebox{-.9pt}{3}}}}}
\DeclareRobustCommand{\four}{\hyperref[l:fou]{\raisebox{.5pt}{\textcircled{\raisebox{-.9pt}{4}}}}}
\DeclareRobustCommand{\five}{\hyperref[l:fiv]{\raisebox{.5pt}{\textcircled{\raisebox{-.9pt}{5}}}}}
\DeclareRobustCommand{\six}{\hyperref[l:six]{\raisebox{.5pt}{\textcircled{\raisebox{-.9pt}{6}}}}}
\newcommand{\ulbig}{\ytableausetup{boxsize = 3pt}
\raisebox{3pt}{\ytableaushort{ {*(black)} {} , {} {} }}
\ytableausetup{boxsize = 2pt}}
\newcommand{\ofbig}{\ytableausetup{boxsize = 3pt}
\raisebox{3pt}{\ytableaushort{ {} {*(black)} , {*(black)} {} }}
\ytableausetup{boxsize = 2pt}}
\newcommand{\urbig}{\ytableausetup{boxsize = 3pt}
\raisebox{3pt}{\ytableaushort{ {} {*(black)} , {} {} }}
\ytableausetup{boxsize = 2pt}}
\newcommand{\llbig}{\ytableausetup{boxsize = 3pt}
\raisebox{3pt}{\ytableaushort{ {} {} , {*(black)} {} }}
\ytableausetup{boxsize = 2pt}}
\newcommand{\lrbig}{\ytableausetup{boxsize = 3pt}
\raisebox{3pt}{\ytableaushort{ {} {} , {} {*(black)} }}
\ytableausetup{boxsize = 2pt}}
\newcommand{\thubig}{\ytableausetup{boxsize = 3pt}
\raisebox{3pt}{\ytableaushort{ {*(black)} {*(black)} , {*(black)} {} }}
\ytableausetup{boxsize = 2pt}}
\newcommand{\thobig}{\ytableausetup{boxsize = 3pt}
\raisebox{3pt}{\ytableaushort{ {} {*(black)} , {*(black)} {*(black)} }}
\ytableausetup{boxsize = 2pt}}
\newcommand{\empbig}{\ytableausetup{boxsize = 3pt}
\raisebox{3pt}{\ytableaushort{ {} {} , {} {} }}
\ytableausetup{boxsize = 2pt}}
\newcommand{\dibig}{\ytableausetup{boxsize = 3pt}
\raisebox{3pt}{\ytableaushort{ {*(black)} {} , {} {*(black)} }}
\ytableausetup{boxsize = 2pt}}

\selectlanguage{american}%
\begingroup 
\let\clearpage\relax 
\let\cleardoublepage\relax

\inputencoding{latin9}
\thispagestyle{empty}
\begin{center}
\spacedlowsmallcaps{Abstract}\\
\myTitle: \mySubtitle\\
Victor V. Albert\\
2017
\end{center}\foreignlanguage{english}{Markovian master equations, often called
Liouvillians or Lindbladians, are used to describe decay and decoherence
of a quantum system induced by that system's environment. While a
natural environment is detrimental to fragile quantum properties,
an engineered environment can drive the system toward exotic phases
of matter or toward subspaces protected from noise. These cases often
require the Lindbladian to have more than one steady state, and such
Lindbladians are dissipative analogues of Hamiltonians with multiple
ground states. This thesis studies Lindbladian extensions of topics
commonplace in degenerate Hamiltonian systems, providing examples
and historical context along the way.}

\selectlanguage{english}%
An important property of Lindbladians is their behavior in the limit
of infinite time, and the first part of this work focuses on deriving
a formula for the asymptotic projection \textemdash{} the map corresponding
to infinite-time Lindbladian evolution. This formula is applied to
determine the dependence of a system's steady state on its initial
state, to determine the extent to which decay affects a system's linear
or adiabatic response, and to determine geometrical structures (holonomy,
curvature, and metric) associated with adiabatically deformed steady-state
subspaces. Using the asymptotic projection to partition the physical
system into a subspace free from nonunitary effects and that subspace's
complement (and making a few other minor assumptions), a Dyson series
is derived to all orders in an arbitrary perturbation. The terms
in the Dyson series up to second order in the perturbation \foreignlanguage{american}{are
shown to reproduce quantum Zeno dynamics and the effective operator
formalism.}\selectlanguage{english}%

\inputencoding{latin9}\begin{titlepage}
\large
\hfill
\null 
\vspace{1in}
\begin{center}%

\begingroup \color{Maroon} 
\LARGE
\textls{\myTitle:}
\\ 
\textls{\mySubtitle}
\endgroup

\vfil     
\begin{singlespace}
{\large                
A Dissertation\\       
Presented to the Faculty of the Graduate School\\                 
of\\               
Yale University\\            
in Candidacy for the Degree of\\          
Doctor of Philosophy
\par\vfil\vskip 6ex%
by\\               
\myName\par\vskip 1.5em%
Dissertation Director: 
\mySupervisor\par     
}\vskip 1.5em%
May 2017

\newpage{}

\thispagestyle{empty}
\vfil
\null
\vspace{4in}
© 2017 by Victor V. Albert \\
All rights reserved.\end{singlespace}
\end{center}
\end{titlepage}
\inputencoding{latin9}
\refstepcounter{dummy}
\pdfbookmark[1]{\contentsname}{tableofcontents} 
\setcounter{tocdepth}{2} 
\setcounter{secnumdepth}{3} 
\manualmark 
\markboth{\spacedlowsmallcaps{\contentsname}}{\spacedlowsmallcaps{\contentsname}} 
\tableofcontents  
\automark[section]{chapter} 
\renewcommand{\chaptermark}[1]{\markboth{\spacedlowsmallcaps{#1}}{\spacedlowsmallcaps{#1}}} 
\renewcommand{\sectionmark}[1]{\markright{\thesection\enspace\spacedlowsmallcaps{#1}}} 

\newpage{}

\refstepcounter{dummy} 
\addcontentsline{toc}{chapter}{\tocEntry{Acronyms}}

\chapter*{Acronyms}
\begin{acronym}[UML]
\acro{OPH}[Op(\textsf{H})]{space of operators that act on a Hilbert space \textsf{H}}
\acro{ASH}[As(\textsf{H})]{Asymptotic subspace --- the subspace of Op(\textsf{H}) for which evolution under a given Lindbladian is unitary}
\acro{DFS}{Decoherence-free subspace --- specific type of As(\textsf{H})}
\acro{NS}{Noiseless subspace --- specific type of As(\textsf{H}) that factorizes into a tensor product of a DFS and a unique auxiliary mixed state}
\acro{QGT}{Quantum geometric tensor --- a quantity providing a metric  on and encoding the geometric phase properties of a subspace}
\end{acronym}              \vspace{8ex}
\refstepcounter{dummy} 
\addcontentsline{toc}{chapter}{\tocEntry{List of Theorems}}
\listoftheorems

\newpage{}\begin{center}
\begin{singlespace}
\thispagestyle{empty}
\vfil
\null
\vspace{4in}
\textit{
For my parents\\
Tatyana \& Thomas\\
and my wife\\
Olga}
\end{singlespace}
\end{center}

\inputencoding{latin9}\refstepcounter{dummy} 
\addcontentsline{toc}{chapter}{\tocEntry{Acknowledgments}}
\addtocontents{toc}{\protect\vspace{\beforebibskip}} 

\newpage{}

\chapter*{Acknowledgments}

First I would like to thank my adviser, my wife, and my parents, all
of whom created the environment and provided the opportunities for
a work such as this to be possible. Their contributions however were
on different timescales, but they are all nevertheless important to
the final cumulative outcome. Leading by example, my parents always
made me believe that whatever abilities I have would be sufficient
for ``success'' as long as they are complemented with hard work.
My wife has always supported my professional life by listening to
me when I needed to talk and letting me devote as much time to work
as I wanted. (Luckily, work hasn't amounted to all my time, so I continue
to be happily married!) Of course, a proper upbringing and a supportive
better half would be for nothing if one is in a stressful relationship
with their adviser. Luckily, it has been the opposite: Liang has been
supportive, thoughtful, patient, and kind and has demonstrated what
it takes to be a good physicist. As I developed my own tastes, he
let me pursue other directions while at the same time letting me know
what he thought was important.

I also acknowledge the support of my committee of collaborators. Despite
all of them being extremely busy, I know that I can always count on
Profs. Devoret, Girvin, Glazman, and Schoelkopf to lend an ear and
help sort out scientific and even personal problems. Their combination
of accomplishments and humility is truly inspiring. Among my other
collaborators, I want to thank especially Martin Fraas and Barry Bradlyn
for carefully checking a large portion of this work and helping overcome
a negative referee report.

Several other faculty, both at Yale and elsewhere, have contributed
to the supportive scientific environment I have enjoyed so far: Lorenza
Viola (whom I want to also acknowledge for being an external reader
for this work), Zaki Leghtas, Mazyar Mirrahimi, Barbara Terhal, Alexey
Gorshkov, Nick Read, Steve Flammia, Francesco Ticozzi, Marlan Scully,
Dariusz Chruscinski, Oscar Viyuela, Kirill Velizhanin, and Frank Harris.
Members of my adviser's group have provided a welcoming, active, and
collaborative learning atmosphere: Stefan Krastanov, Sreraman Muralidharan,
Arpit Dua, Kyungjoo Noh, Chao Shen, Linshu Li, Chang-ling Zou, Chi
Shu, Marios Michael, and Jianming Wen. Throughout the years, I was
also fortunate to have fruitful discussions with Richard Brierley,
Matti Silveri, Alex Petrescu, Aris Alexandradinata, Andrey Gromov,
Judith Höller, \foreignlanguage{english}{Phil Reinhold, Reinier Heeres},
Volodymyr Sivak, Ion Garate, Zaheen Sadeq, and many of the members
of Qlab and RSL.

Last and definitely not least, I want to thank my long-time friends
at Yale \textemdash{} Brian Tennyson, Jukka Väyrynen, Staff Sheehan,
Zlatko Minev, Kevin Yaung, and Alexei Surguchev \textemdash{} for
being there during all these years.

\begin{flushleft}
\inputencoding{latin9}\newpage{}\foreignlanguage{english}{}%
\begin{minipage}[t]{0.5\textwidth}%
\selectlanguage{english}%
\begin{flushleft}
\begin{singlespace}``\textit{Give me a Hamiltonian and I will move
the world.}''\end{singlespace}
\par\end{flushleft}
\begin{flushleft}
\hfill{}\textendash{} Leonid I. Glazman
\par\end{flushleft}\selectlanguage{english}%
\end{minipage}
\par\end{flushleft}

\chapter{Introduction and motivation\label{ch:1}}
\selectlanguage{english}%

\section{The field of study}
\selectlanguage{american}%

\subsection{Applied quantum physics}

\selectlanguage{english}%
\inputencoding{latin9}%
We are often taught in school that physics consists of the never-ending
interplay between theorist and experimentalist. Theorists propose
models for experimentalists to verify, while unexpected experimental
results cause theorists to adjust their models. While this type of
science is currently happening in high-energy areas of physics (such
as dark matter detection), other areas have become more crystallized
and their theories more verified. One such area can be called \textit{low-energy
}or \textit{applied quantum physics}. In this broad field, the theory
\textemdash{} quantum mechanics \textemdash{} is used to describe
almost all processes and is not generally questioned. In other words,
enough evidence has been gathered showing that the vast majority of
processes on nanometer length scales and at nanoKelvin temperatures
are most effectively and to a high degree of precision described by
quantum mechanics. Instead of testing theories, some primary aims
of this field are to notice interesting quantum mechanical effects,
to demonstrate (or \textit{realize}) these effects using various currently
available quantum technologies, and to develop devices based on these
effects which have potential applications to the ``real world''.
In the past, examples of such devices include the laser, the transistor,
and nuclear magnetic resonance imagers. Although these devices can
be described using mostly classical physics \cite{siegman,ernstnmr},
it is difficult to do so without resorting to the discretized nature
of quantum energy levels and even more difficult to claim that the
rise of quantum mechanics did not directly contribute to their development.

Currently, a primary potential application of this field is the development
of a \textit{quantum computer} \cite{maninbook,Feynman1982} \textemdash{}
a device that has been theoretically proven to perform certain computational
tasks significantly faster than any ordinary computer. By ``significantly'',
we mean that the time taken by a quantum computer to perform a task
is a couple of days while a classical computer would take at least
until the sun burns out. Since one of the tasks \textemdash{} integer
factorization \textemdash{} can be used to crack an often-used encryption
scheme used to communicate securely over the internet and since quantum
technologies are improving very quickly, quantum computation continues
to steadily gain attention from the non-scientific community. 

Another important task that ``quantum computers can do in their sleep''
\cite{Feynman1982} is \textit{quantum simulation} \textemdash{} the
tailoring of one quantum system, in this case the quantum computer,
to simulate another, less well-understood quantum system. Therefore,
the development of a quantum computer should in principle lead to
a qualitatively improved understanding of quantum processes in other
fields such as chemistry or biology.

A closely related field is \textit{quantum metrology and sensing}
\textemdash{} the use of quantum devices to perform precise measurements
and microscopy which are not possible with ``classical devices''.
This field's associated quantum technologies have promising applications
ranging from table-top measurements of nature's fundamental constants
to imaging of biological systems.

Developments in applied quantum physics should also potentially allow
one to better understand and even synthesize exotic \textit{quantum
phases of matter}. ``For a large collection of similar particles,
a \textit{phase} is a region in some parameter space in which the
thermal equilibrium states possess some properties in common that
can be distinguished from those in other phases'' \cite{Read2012}.
Naturally, a quantum phase of matter is one which is most efficiently
described by quantum mechanics. More such phases have been predicted
and theoretically studied than realized in the laboratory, and one
of the goals of this field is to bridge this gap.

Another primary application of this field is \textit{quantum communication}
\textemdash{} using the properties of quantum mechanics to communicate
in such a way that any eavesdropper is readily noticed as soon as
they try to intercept any communication. Such secure quantum communication
channels not only have obvious applications in society, but also have
lead to the development of the field of \textit{quantum information
theory} \textemdash{} a merging of applied quantum physics and information
theory which analyzes, among other things, how much information can
securely go through quantum communication channels \cite{wildebook}.

While the applications are clear, it is currently unclear which technology
will \textit{actually} build the first practical (i.e., scalable!)
quantum computer. Consequently, this lack of a precise focus on one
technology has stimulated a broad and thrilling theoretical investigation
into all areas of quantum mechanics which are even remotely useful
in building quantum devices. The area that characterizes this thesis
is \textit{open quantum systems} \textemdash{} the study of quantum
systems which are in contact with a larger environment or reservoir.

\subsection{Open quantum systems}

Introductory physics courses devote much time to studying systems
which are isolated from their environments. Examples include an object
falling without air resistance or, more informally, a ``spherical
cow in vacuum''. Besides being necessary for a complete understanding
of the systems in question, environmental effects can also steer the
systems in favorable directions. For example, including air resistance
in the calculation of a falling object leads to the understanding
that a parachute can prevent said object from falling too quickly. 

Environmental effects are even more pronounced in quantum systems.
On the one hand, environment or \textit{quantum reservoir engineering}
is poised to synthesize longer-lasting quantum memories, faster quantum
computers, and exotic and previously inaccessible quantum phases of
matter. On the other hand, commonplace environmental effects are known
to destroy delicate quantum states, preventing us from observing them
in our everyday lives. One aspect of the field of open quantum systems
deals with understanding the limitations and possibilities of using
the environment of a quantum system to further the aforementioned
goals of applied quantum physics.

While ordinary or closed quantum system dynamics is generated by a
Hamiltonian, open quantum system dynamics is generally not. We now
derive the most general form of evolution of a system coupled to an
environment (see, e.g., \cite{breuer}, Sec.~3.2.1). Assuming that
the dynamics of the universe is generated by a Hamiltonian, the reduced
dynamics of any open quantum system $S$ coupled to an environment
$E$ can be derived starting from the joint Hamiltonian equation of
motion (in units of $\hbar=1$)
\begin{equation}
\frac{d\r_{SE}}{dt}=-i\left[H_{SE},\r_{SE}\right]\,,\label{eq:first}
\end{equation}
where $\r_{SE}$ is the quantum-mechanical density matrix of $S$
and $E$, $H_{SE}$ is the Hamiltonian governing the dynamics of $S$
and $E$, and $[A,B]\equiv AB-BA$ is the commutator of $A$ and $B$.
The reduced density matrix of the system at time $t$ is then
\begin{equation}
\r\left(t\right)\equiv\tr_{E}\left\{ \r_{SE}\left(t\right)\right\} =\tr_{E}\left\{ e^{-iH_{SE}t}\r_{SE}\left(0\right)e^{iH_{SE}t}\right\} \,,
\end{equation}
where $\tr_{E}\{A\}\equiv\sum_{\ell}\bra\ell|A|\ell\ket$ is a tracing
out of the environment and $\{|\ell\ket\}_{\ell}$ is a basis of states
for the environment. Closed-system evolution preserves the \textit{purity
}of quantum states, i.e., $\r(t)$ can be written as a rank-one projection\footnote{We remind the reader that the \textit{rank} of a diagonalizable matrix
is the number of its not necessarily distinct nonzero eigenvalues.} onto some state $|\psi(t)\ket$ {[}$\r(t)=|\psi(t)\ket\bra\psi(t)|$,
assuming that $\r(0)=|\psi(0)\ket\bra\psi(0)|${]}. In the open scenario,
the system and the environment may become entangled under $H_{SE}$
and the resulting initially pure reduced density matrix may become
\textit{mixed} (i.e., not pure). Therefore, throughout this thesis,
we will always denote a quantum state by its density matrix, which
happens to also be an operator on the Hilbert space.

For simplicity, let us now make the assumption that the initial state
factorizes. In other words, $\r_{SE}\left(0\right)=\rin\ot|0\ket\bra0|$,
where $\rin$ is an arbitrary initial state of the system and $|0\ket=|\ell=0\ket$.\footnote{The following derivation is easily extendable to an environment in
an arbitrary initial state, $\rho_{E}=\sum_{\ell}c_{\ell}|\ell\ket\bra\ell|$,
as long as the joint initial state is factorizable, $\r_{SE}\left(0\right)=\rin\ot\rho_{E}$
\cite{Pechukas1994}.} Explicitly writing out $\tr_{E}$, we can massage $\r\left(t\right)$
into the alternative form
\begin{equation}
\r\left(t\right)=\sum_{\ell}E^{\ell}\left(t\right)\rin E^{\ell\dg}\left(t\right)\,,\label{eq:kraus}
\end{equation}
where the \textit{Kraus operators} $E^{\ell}\left(t\right)\equiv\bra\ell|e^{-iH_{SE}t}|0\ket$
operate only on the system Hilbert space. This \textit{Kraus map}
\cite{Kraus1971}, \textit{quantum channel}, or \textit{c}ompletely
\textit{p}ositive \textit{t}race-\textit{p}reserving \textit{(CPTP)}
map is the most general map from density matrices to density matrices
that respects the laws of quantum mechanics, namely (\cite{preskillnotes},
Ch.~3), 
\begin{enumerate}
\item it preserves the trace of $\r$: $\tr\{\rho\left(t\right)\}=1$ for
all $t$, 
\item it preserves positivity: $\r\left(t\right)\geq0$ for all $t$, and 
\item it preserves positivity when acting on a part of a larger system. 
\end{enumerate}
The unitarity of $e^{-iH_{SE}t}$ and completeness of $\{|\ell\ket\}_{\ell}$
can be used to derive the constraint
\begin{equation}
\sum_{\ell}E^{\ell\dg}\left(t\right)E^{\ell}\left(t\right)=I\,,\label{eq:krausnorm}
\end{equation}
where $I$ is the identity on the system. This constraint is equivalent
to Property 1 above. Equation~(\ref{eq:kraus}) can arguably be taken
as the starting point of the entire field of open quantum systems
and there are many types of such quantum channels \cite{Wolf2008}.
In the next Section, we simplify it to derive the Lindbladian \textemdash{}
the most restrictive form of open-system evolution but also the \textit{simplest}
non-trivial extension of Hamiltonian-based quantum mechanics.

\section{What is a Lindbladian?\label{sec:What-is-a}}

Let us study the time evolution of the system density matrix $\r\left(t\right)$
over an infinitesimal time increment $\dd t$, following Ch.~3 of
Ref.~\cite{preskillnotes}. We assume that the time evolution over
only this increment takes the same form as eq.~(\ref{eq:kraus}),
namely,
\begin{equation}
\r\left(t+dt\right)=\sum_{\ell}E^{\ell}\left(dt\right)\r\left(t\right)E^{\ell\dg}\left(dt\right)\equiv\E_{dt}\left[\r\left(t\right)\right]\,.\label{eq:kraus-1}
\end{equation}
This assumption implies that the behavior of $\r\left(t+dt\right)$
depends only on $\r\left(t\right)$ and not any previous times $\tau<t$.
Such a statement is commonly known as the \textit{Markov approximation},
and it is necessary to obtain a Lindbladian from the more general
form (\ref{eq:kraus}). We proceed to approximate $\E_{dt}$ and construct
a bona fide linear differential equation for $\r$, but first let
us sketch what this will accomplish. Namely, we show that evolution
due to any quantum channel $\E_{t}$ is generated by a Lindbladian
when $t$ is ``small'':
\begin{equation}
\E_{dt}=\id+dt\L+\cdots\,\,\,\,\,\,\,\,\,\,\,\,\,\,\text{and}\,\,\,\,\,\,\,\,\,\,\,\,\,\,\L\equiv\lim_{dt\rightarrow0}\frac{\E_{dt}-\id}{dt}\,,
\end{equation}
where $\id$ is the identity channel, $\id\left(\r\right)=\r$, and
$\L$ is a Lindbladian. To determine the precise form of $\L$, we
expand $E^{\ell}\left(dt\right)$ and keep only the terms up to order
$O(dt)$. If evolution of $\r$ is governed by a Hamiltonian $H$
only, then $E^{0}=I-iHdt$ and all other $E^{\ell>0}=0$, yielding
the von Neumann equation $\frac{d\r}{dt}=-i[H,\r]$ analogous to eq.~(\ref{eq:first}).
However, in this more general case, we include an order $O(\sqrt{dt})$
dependence of $E^{\ell}$, which contributes another piece of order
$O(dt)$ since $E^{\ell}$ acts on both sides of $\r\left(t\right)$
simultaneously. Without loss of generality, let us write 
\begin{equation}
E^{0}\left(dt\right)\sim I+\left(-iH+V\right)dt\,\,\,\,\,\,\,\,\,\,\,\,\,\,\,\,\,\text{and}\,\,\,\,\,\,\,\,\,\,\,\,\,\,\,\,\,E^{\ell>0}\left(dt\right)\sim\sqrt{\lind_{\ell}dt}F^{\ell}\,,
\end{equation}
where $\lind_{\ell}$ are real nonzero rates and $V$ is a to-be-determined
Hermitian operator. To determine $V$, we plug these leading-order
forms into eqs.~(\ref{eq:krausnorm}-\ref{eq:kraus-1}) and keep
only terms up to $O(dt)$. Since we want eq.~(\ref{eq:krausnorm})
to be satisfied to this order, we require $V=-\half\sum_{\ell>0}F^{\ell\dg}F^{\ell}$.
Plugging this into eq.~(\ref{eq:kraus-1}) and dividing both sides
by $dt$ yields the \textit{Lindbladian}
\begin{equation}
\frac{d\r}{dt}=\L\left(\r\right)\equiv-i[H,\r]+\half\sum_{\ell>0}\lind_{\ell}\left(2F^{\ell}\r F^{\ell\dg}-F^{\ell\dg}F^{\ell}\r-\r F^{\ell\dg}F^{\ell}\right)\,,\label{eq:def}
\end{equation}
with Hamiltonian $H$,\textit{ jump operators} $F^{\ell}$, and rates
$\lind_{\ell}>0$. The \textit{recycling}, \textit{sandwich}, or \textit{jump
term} $F^{\ell}\cdot F^{\ell\dg}$ acts non-trivially on the state
from both sides simultaneously and is the reason one cannot reduce
the above equation to one involving only a ket-state $|\psi\ket$.
The remaining terms $F^{\ell\dg}F^{\ell}$ can combined with $H$
to form the non-Hermitian operator 
\begin{equation}
K\equiv H-\frac{i}{2}\sum_{\ell>0}\lind_{\ell}F^{\ell\dg}F^{\ell}\,.\label{eq:kkkkkkk}
\end{equation}
This operator generates the \textit{deterministic} or \textit{no-jump}
part of the evolution and its contribution can be expressed as a modified
von Neumann equation using the redefined commutator $[K,\r]^{\#}\equiv K\r-\r K^{\dg}$.
Together, the recycling and deterministic terms conspire to make the
evolution of $\r$ preserve properties 1-3 listed in the previous
Section, \textit{unlike} systems whose time evolution is generated
by a non-Hermitian operator alone.\footnote{Systems where the evolution is governed by $K$ only are also sometimes
called ``open quantum systems'' \cite{Rotter2015} despite not preserving
Properties 1-3 listed above.} In particular, using the cyclic property of the trace, it is straightforward
to show that $\tr\{\r\}$ remains conserved throughout the evolution:
$\tr\{\nicefrac{d\r}{dt}\}=0$. In this way, the jump term compensates
for the decay of probability caused by the deterministic term.

Solving eq.~(\ref{eq:def}) yields the system density matrix for
all times,
\begin{equation}
\r\left(t\right)\equiv e^{t\L}\left(\rin\right)\,,\label{eq:evol}
\end{equation}
where $e^{t\L}$ is a formal power series in $t\L$ and $\rin$ is
the initial state of the system. The exponential can be done directly
by re-expressing the $N\times N$ density matrix as an $N^{2}\times1$
vector (assuming the system Hilbert space is $N$-dimensional), which
in turn allows one to re-express $\L$ as an $N^{2}\times N^{2}$
matrix acting from the left on the vector version of $\r$ (see Sec.~\ref{app:Preliminaries}
for further details). However, there is another method to solve the
above equation for initially pure $\rin=|\psi\ket\bra\psi|$ that
allows one to work with $N\times N$ matrices instead of $N^{2}\times N^{2}$
ones. This method is based on an average over multiple instances (or
\textit{quantum trajectories}) of a procedure applied to the initial
ket-state $|\psi\ket$ \cite{Plenio1998}. During a trajectory of
a typical version of this \textit{unraveling} procedure, the system
is acted on by one instance of the jump term (and then renormalized)
at a discrete set of certain randomly generated times $\{\tau_{n}\}$
and otherwise evolved deterministically under $K$ during the increments
of time between neighboring $\tau_{n}$. More specifically, at time
$t=0$ and assuming one jump operator, a random number $r\in[0,1]$
is generated and the system evolves under $K$ until its norm decreases
to $r$, i.e., until a time $\tau_{1}$ at which $|e^{i\tau_{1}K}|\psi\ket|^{2}=r$.
Then, the evolved state is acted on by the jump term and renormalized,
resulting in the state $|\psi_{1}\ket=Fe^{i\tau_{1}K}|\psi\ket/|Fe^{i\tau_{1}K}|\psi\ket|$.
The same procedure is then repeated with $|\psi_{1}\ket$ until the
desired time of evolution $t$ is reached. It turns out that, in the
limit of an infinite number of such trajectories, the average over
the final states of the trajectories is exactly $\r(t)$ from eq.~(\ref{eq:evol}).
Unraveling is not only useful from a numerical standpoint, but it
also has a physical analogy to the evolution of a continuously measured
quantum state \textit{conditioned} on the results of the measurement.
For example, if the measured quantity is a photon detector and the
sole jump operator represents photon loss, clicks of the detector
are associated with applications of the jump term to the density matrix
while periods of time for which the detector is quiet are associated
with evolution under $K$.
\selectlanguage{american}%

\section{A brief historical context\label{subsec:A-brief-historical}}

\begin{figure}[t]
\includegraphics[width=1\textwidth]{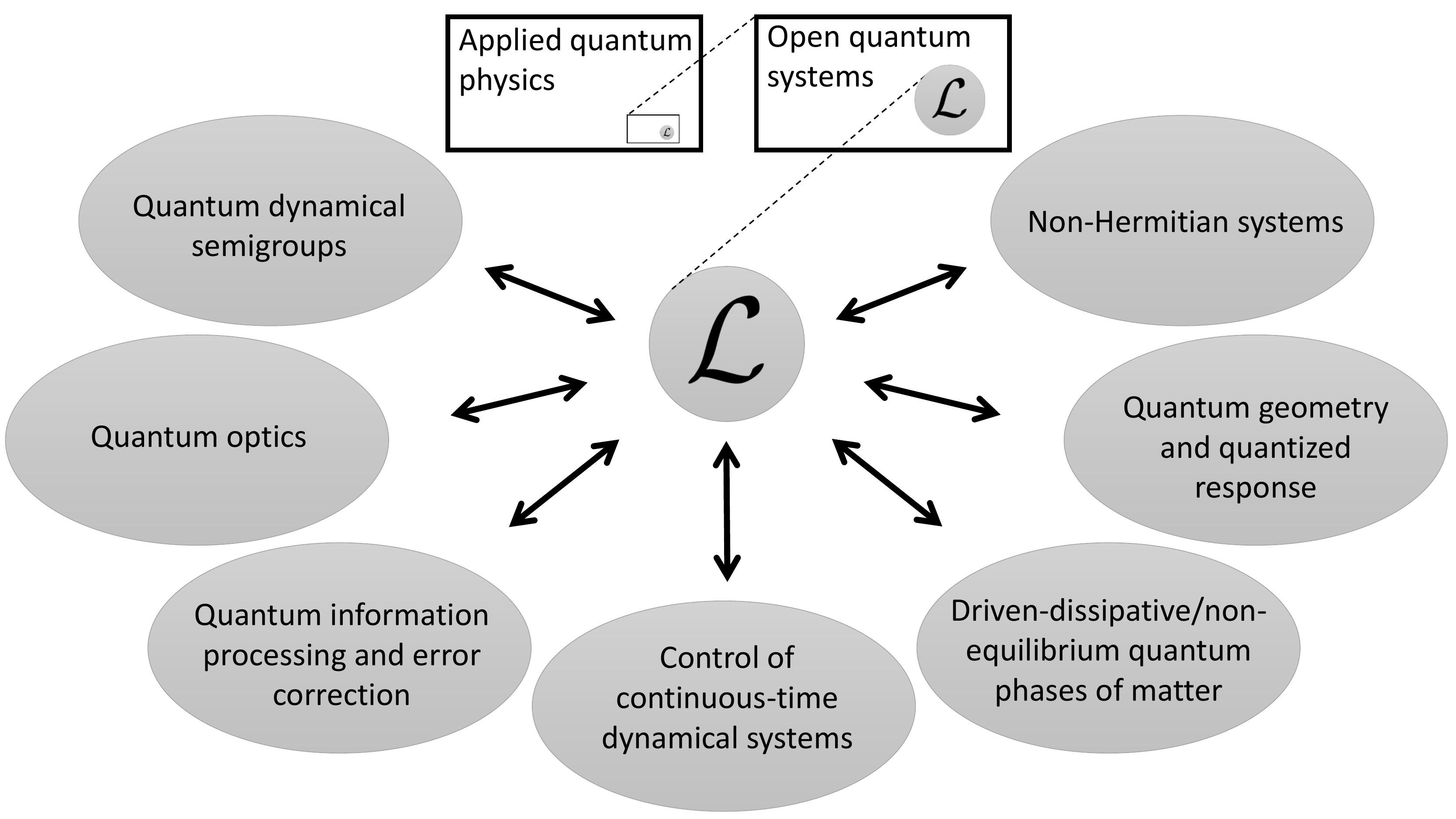}\caption{\label{fig:intro}Sketch of some of the ways in which Lindbladians
(\(\L\)) fit into the field of quantum physics and of connections
between them and other subfields. Details and references are provided
in Sec.~\ref{subsec:A-brief-historical}.}
\end{figure}

\selectlanguage{english}%
Lindbladians (\ref{eq:def}) are also called Louivillians, Lindblad
master equations, Lindblad-Kossakowski differential equations, or
GKS-L equations. Those names mostly stem from two nearly simultaneous
papers, one by Gorini, Kossakowski, and Sudarshan \cite{Gorini1976a}
that derives the equation for finite-dimensional systems and the other
by Lindblad \cite{Lindblad1976} for the infinite-dimensional case.
Variations of eq.~(\ref{eq:def}) were written down earlier \cite{Belavin1969,Lehmberg1970},
although complete positivity of $\L$ was not proven (see \cite{Chruscinski2017}
for a history). Note also a later chain of interest from the high-energy
community (e.g., \cite{Bertlmann2006}), started by Ref.~\cite{Banks1984}.
Equation~(\ref{eq:def}) is most general form for time-independent
Hamiltonian and jump operators, but time-dependent extensions are
also possible \cite{violacitation}. 

The heuristic derivation of $\L$ in the previous section has not
covered all of the conditions on a system and reservoir for which
Lindbladian evolution captures the dynamics of the system. The standard
treatment \cite{alicki_book} covers three common types of derivations
of $\L$ starting from $H_{SE}$: the weak coupling limit, the low
density approximation, and the singular coupling limit. Each of these
relies on specific physical assumptions regarding, e.g., correlation
functions of the environment. For example, in the weak coupling limit
derivation, one typically assumes that (a) correlations of the system
with the environment develop slowly, (b) excitations of the environment
caused by system decay quickly, and (c) terms which are fast-oscillating
when compared to the system timescale of interest can be neglected.
These three approximations are called Born, Markov (as already discussed
above), and rotating wave, respectively. The weak coupling limit is
common in quantum optical scenarios (e.g., Ch.~1 of Ref.~\cite{carmichael1}),
and there are comparisons to other models \cite{Kohen1997} and analyses
studying its validity in said scenarios \cite{rivas2010,Boyanovsky2017}
as well as in heat or electron transport \cite{Wichterich2007,Rivas2016,Elenewski2017}
and many-body systems \cite{Hofer2017,Gonzalez2017}.

Lindbladians lie at the nexus of a multitude of topics in applied
quantum physics and beyond, some of which are listed in Fig.~\ref{fig:intro}.
We conclude this Section by commenting about each of them in approximately
chronological order, noting that this is by no means a complete survey.
\begin{enumerate}
\item Studies of general Lindbladians arose in mathematical physics under
the umbrella of \textit{quantum dynamical semigroups} \cite{Kossakowski1972,Ingarden1975}
\textemdash{} continuous, one-parameter families of dynamical maps
satisfying the (homogeneous) semigroup property $\T_{t}\T_{s}=\T_{t+s}$
(clearly satisfied when $\T_{t}=e^{t\L}$). Early work (e.g., \cite{Alicki1976,Spohn1977,Gorini1978})
often focused on Lindbladians with a unique steady state, on systems
stabilizing states in thermal equilibrium, and on systems satisfying
detailed balance. Unbounded generators, a topic outside the scope
of this thesis, have been rigorously studied in this field (e.g.,
\cite{Dhahri2010,Siemon2017}). Overviews are provided \foreignlanguage{american}{in
the books }\cite{alicki_book,rivas_book,tarasov_book} and an important
relatively recent work is Ref. \cite{baum2}. 
\selectlanguage{american}%
\item Physicists' interest in Lindbladians initiated in \textit{quantum
optics}, where the approximations needed to derive the Lindblad form
are often well-justified. Expositions of this literature are provided
in the books \cite{carmichael1,zoller_book}, interesting methods
to tackle such problems are given in Refs.~\cite{klimov_book,Buchleitner2002},
and some specific older references are discussed in the beginning
of Sec.~\ref{sec:Many-photon-absorption}. The term ``reservoir
engineering'' was coined in this context \foreignlanguage{english}{\cite{Poyatos1996}.}
\item At the turn of the millennium, researchers began to realize the potential
for Lindbladians to stabilize subspaces for \textit{quantum information
and error correction} \cite{Zanardi1997}. In other words, Lindbladians
with multiple steady states are the focus of this field. Their steady-state
subspaces can be decoherence-free subspaces \cite{Lidar1998}, noiseless
subsystems \cite{Knill2000}, and, most generally, information-preserving
structures \cite{BlumeKohout2008,robin}. \foreignlanguage{english}{These
structures }are covered thoroughly in this thesis\foreignlanguage{english}{
and have been realized in various quantum technologies, including
quantum optics \cite{Kwiat2000,Bourennane2004}, liquid-state NMR
\cite{Viola2001,Fortunato2002,Boulant2005}, trapped ions \cite{Kielpinski2001,Roos2006,Barreiro2010,zoller_stabilizers,Schindler2013,Pruttivarasin2015},
and circuit QED \cite{Leghtas2014,S.Touzard}.} \foreignlanguage{english}{Such
Lindbladians can also model an autonomous version of quantum error-correction
\cite{Paz1998,Barnes2000,Ahn2002}, where syndromes are measured and
corrections applied in continuous fashion using $\L$'s jump operators.
Holonomic quantum computation \cite{Zanardi1999} with }such information-preserving
structures\foreignlanguage{english}{ and its associated literature
are discussed in the beginning of Ch.~\ref{ch:5}.}
\item Some researchers began to take a look at Lindbladians from the mathematical
perspective of \textit{control/systems theory} with the goal of controllability
\cite{Altafini2003,Altafini2004} and state/subspace stabilization
\cite{Dirr2008,Dirr2009,Mirrahimi2008,schirmer}, with earlier discussions
focused on more general quantum channels \cite{Lloyd2001}. We discuss
some of this literature in Sec.~\ref{sec:Ground-state-subspaces}.
Researchers also applied the well-developed theory of continuous monitoring
and feedback to state stabilization, at first for the simplest case
of a two-level system \cite{wangwiseman,Wiseman2002} and later more
generally (e.g., \cite{Ticozzi2008}); see the books \cite{wisemanmilburn,gregoratti2009}
for an introduction and further literature. The effect of a generic
measurement on a system can also be described by a Lindbladian, collapsing
the system to a convex combination of pointer states in a quantum-to-classical
transition \cite{Zurek2003}; see Sec.~\ref{sec31} for such a Lindbladian.
\item Physicists started thinking about using many-body Lindbladians to
stabilize \textit{non-equilibrium} (i.e., not in the form of a Gibbs
ensemble) \textit{steady states} (NESS). Most of the initial theory
\cite{Diehl2008,Kraus2008,Garcia-Ripoll2009} was motivated by cold
atom physics, but another early work \cite{Carusotto2009} focused
on optical cavities. Besides state generation, researchers found they
could engineer phase transitions \cite{Prosen2008,Prosen2008a} and
encode the outcome of a quantum computation into the steady state
\cite{Verstraete2009}. These early works initiated the burgeoning
field of \textit{driven-dissipative open systems} \textemdash{} an
extension of the study of Hamiltonian-stabilized phases and phase
transitions to Lindbladians; we discuss some of this literature in
Sec.~\ref{sec:Ground-state-subspaces}. The reader should consult
Refs.~\cite{Noh2017,Noh2017a} for more thorough reviews of driven-dissipative
systems.
\item A series of works \cite{Avron2011,Avron2012a,Avron2012b} thoroughly
investigated Lindbladian generalizations of geometric/quantized adiabatic
response. We investigate these and related efforts in Chs.~\ref{ch:4}-\ref{ch:5}
and provide a review and generalization of the quantum geometric tensor
(\acs{QGT}), a related geometric quantity \foreignlanguage{english}{\cite{provost1980,BerryQGT}},
in Ch.~\ref{ch:6}. These works rely on a rigorous formulation of
the adiabatic theorem for Lindbladians, which is reviewed in Ch.~\ref{ch:5}.
\item A simple ``classical'' open system with loss and gain is one where
time evolution is generated by a \textit{``non-Hermitian Hamiltonian,''}
which is similar to the deterministic term $K$ (\ref{eq:kkkkkkk}).
However, the dependence of the eigenvalues and eigenvectors of $\L$
on $K$, and therefore the connection between Lindbladians and systems
with loss and gain, is not completely understood. We show in Sec.~\ref{subsec:non-Hermitian}
that, for certain types of $\L$, a subspace of the system undergoes
exactly the evolution generated by $K$ and the recycling term merely
takes one out of that subspace. \foreignlanguage{english}{There also
exist methods to extend a given $K$ into full Lindblad form \cite{Bertlmann2006,Selsto2012}.
Such methods, and Lindbladians in general, may be useful as phenomenological
models of resonance decay \cite{Genkin2008}.}
\end{enumerate}

\section{Which Lindbladians are the focus of this work?}

Initial states $\rin$ undergoing Lindbladian evolution (\ref{eq:evol})
evolve into infinite-time or \textit{asymptotic states} $\rout$ for
sufficiently long times \cite{baum2}, 
\begin{equation}
\rin\xrightarrow{t\rightarrow\infty}\rout\equiv\lim_{t\rightarrow\infty}e^{t\L}\left(\rin\right)=e^{-i\hout t}\ppp\left(\rin\right)e^{i\hout t}\,.\label{eq:te}
\end{equation}
The non-unitary effect of Lindbladian time evolution is encapsulated
in the \textit{asymptotic projection} superoperator $\ppp$ (with
$\ppp^{2}=\ppp$). The extra Hamiltonian $\hout$ quantifies any residual
unitary evolution, which persists for all time and, of course, does
not cause any further decoherence. The various asymptotic states $\rout$
are elements of an \textit{asymptotic subspace} \acs{ASH} \textemdash{}
a subspace of the space of operators \acs{OPH} acting on the system
Hilbert space $\h$,
\begin{equation}
\ash\equiv\ppp\oph\,.\label{eq:ashdef}
\end{equation}
The asymptotic subspace attracts all initial states, is free from
the non-unitary effects of $\L$, and any remaining time evolution
within \acs{ASH} is exclusively unitary. The asymptotic subspace
can thus be thought of as a Hamiltonian-evolving subspace embedded
in a larger Lindbladian-evolving space. If \acs{ASH} has no residual
unitary evolution, then $\hout$ is zero and all $\rout$ are \textit{stationary}
or \textit{steady}.

The results presented in this thesis, with the notable exception of
Ch.~\ref{ch:7}, are novel for Lindbladians which (a) admit multiple
steady states and (b) cause one (or more) state \textit{population}
to decay. We clarify these notions in this Section, introducing convenient
notation along the way.

\subsection{Multiple steady states}

When $\dim\ash=1$, only one asymptotic state exists and all $\rin$
converge to it. This is what happens \textit{generically}, i.e., if
one were to pick a Lindbladian at random. Therefore, for the same
reason as Hamiltonians with degenerate ground states, the set of Lindbladians
with multiple steady states is ``small'' (i.e., of measure zero)
compared to the set of all $\L$. In \textit{general} however, \acs{ASH}
may be multi-dimensional and, in that case, the resulting asymptotic
state \textit{will} depend on the initial condition $\rin$. Such
Lindbladians are interesting to study (once again) for similar reasons
as Hamiltonians with degenerate ground states, although one most likely
needs a properly engineered environment to synthesize such Lindbladians.
Subspace stabilization is also the default scenario from the systems
theory perspective \cite{Ticozzi2008,Ticozzi2009,Ticozzi2014}; this
work however focuses on asymptotics and response instead of stabilizability
of particular states or subspaces.

On one hand, \acs{ASH} which can support quantum information are
promising candidates for storing, preserving, and manipulating such
information, particularly when their states can be engineered to possess
favorable features (e.g., topological protection \cite{Diehl2011,bardyn}).
With many experimental efforts (see Sec.~\ref{subsec:A-brief-historical})
aimed at engineering environments admitting nontrivial asymptotic
subspaces, it is important to gain a comprehensive understanding of
any differences between the properties of these subspaces and analogous
subspaces of Hamiltonian systems (e.g., subspaces spanned by degenerate
energy eigenstates).

On the other hand, response properties of \acs{ASH} which do not
necessarily support quantum information can help model experimental
probes into driven-dissipative open systems. Due to, for example,
symmetry \cite{pub011,prozen} or topology \cite{Diehl2011}, the
asymptotic subspace can be degenerate yet not support a qubit. For
example, an \acs{ASH} spanned by two orthogonal pure state projections
$\St_{0}=|\psi_{0}\ket\bra\psi_{0}|$ and $\St_{1}=|\psi_{1}\ket\bra\psi_{1}|$
(see Sec.~\ref{sec31}) only consists of density matrices which are
their convex superpositions, $\rout=c\St_{0}+(1-c)\St_{1}$ with $0\leq c\leq1$,
so no off-diagonal coherences between $\St_{0}$ and $\St_{1}$ are
present. For these and similar cases, standard thermodynamical concepts
\cite{Alicki1976,DEROECK2006,Jaksic2013,Strasberg2013} (see Ref.
\cite{Liu2017} for a review) may not apply and steady states may
no longer be thermal or even full-rank. The work here is directly
tailored to such systems, i.e., those possessing one or more non-equilibrium
steady states whose rank is less than $\dim\h$. The rank constraint
implies the presence of population decay, which is the remaining feature
that we now address.

\subsection{Presence of population decay\label{subsec:Presence-of-decay}}

Unlike Hamiltonians, Lindbladians have the capacity to model decay.
As a result, Lindbladians are often used to describe commonplace non-Hamiltonian
processes (e.g., cooling to a ground state). We define the presence
of \textit{population decay} as the disappearance of at least one
population component of all possible $\rin$ in the infinite-time
limit. In other words, there exist one or more states $|\psi\ket$
such that 
\begin{equation}
\bra\psi|e^{t\L}(|\psi\ket\bra\psi|)|\psi\ket\xrightarrow{t\rightarrow\infty}0\,.\label{eq:popdec}
\end{equation}
Before proceeding, it is important to make a clear distinction between
the decaying and non-decaying parts of the $N$-dimensional system
Hilbert space $\h$. Let us group all non-decaying parts of \acs{OPH},
the space of operators on $\h$, into the upper left corner {[}of
the matrix representation of \acs{OPH}{]} and denote them by the
``upper-left'' block $\ulbig$. Thereby, any completely decaying
parts will be in the complementary $\lrbig$ block, and coherences
between the two will be in the ``off-diagonal'' blocks $\ofbig$.
We can discuss such a decomposition in the familiar language of NMR
\cite{ernstnmr}: the $\ulbig$ block consists of a degenerate ground
state subspace $\{|\psi_{k}\ket\}_{k=0}^{d-1}$ immune to nonunitary
effects, the $\lrbig$ block contains the set of populations decaying
with rates commonly known as $1/T_{1}$, and the $\ofbig$ block is
the set of coherences dephasing with rate $1/T_{2}$. While we have
implicitly assumed only one pair of decay times $T_{1,2}$ for simplicity,
in general every population in $\lrbig$ and coherence in $\ofbig$
has its own decay time.

For the NMR case above, $\ash=\ulbig$ forms a $d^{2}$-dimensional
\textit{decoherence-free subspace (}\foreignlanguage{american}{\acs{DFS}}\textit{)}
\cite{Lidar1998}. More generally, there can be further dephasing
\textit{within} $\ulbig$ without population decay. While we postpone
the discussion of the (many!) types of \acs{ASH} until Ch.~\ref{ch:2},
let us briefly mention one illustrative example. In the NMR case,
\acs{ASH} is spanned by $\{|\psi_{k}\ket\bra\psi_{l}|\}_{k,l=0}^{d-1}$.
If we add a dephasing process which makes all coherences $|\psi_{k}\ket\bra\psi_{l\neq k}|$
between the degenerate ground states decay, \acs{ASH} will then reduce
to the $d$-dimensional subspace spanned by only the state populations
$\{|\psi_{k}\ket\bra\psi_{k}|\}_{k=0}^{d-1}$. This is a case of $\ash\subset\ulbig$,
but $\ash\subseteq\ulbig$ in general. An example of \acs{ASH} which
includes dephasing is shown in the gray region in Fig.~\ref{fig:decomp}(a).

\begin{figure}[t]
\begin{centering}
\includegraphics[width=0.6\columnwidth]{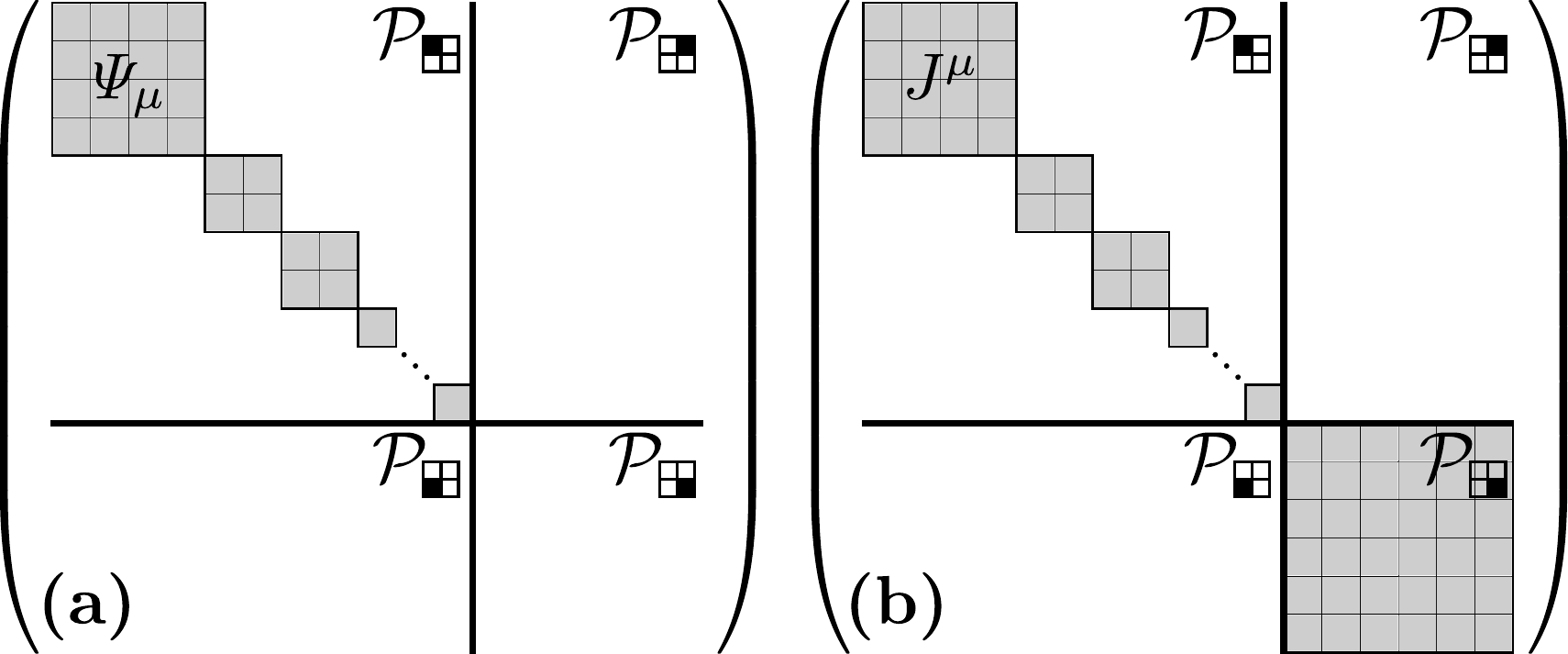}
\par\end{centering}
\caption{\label{fig:decomp}Decompositions of the space of matrices \(\oph\)
acting on a Hilbert space \(\h\) using the projections \(\{\pp,\qq\}\)
defined in (\ref{eq:cond}) and their corresponding superoperator
projections \(\{\R_{\ul},\R_{\ur},\R_{\ll},\R_{\lr}\}\) defined in
(\ref{eq:superproj}). Panel \textbf{(a)} depicts the block diagonal
structure of the asymptotic subspace \(\ash\), which is spanned by
steady-state basis elements \(\Psi_\mu\) (cf. \cite{robin}, Fig.~3).
Panel \textbf{(b)} depicts the subspace of \(\oph\), spanned by conserved
quantities \(J_\mu\). These quantities determine dependence of the
final (asymptotic) state \(\rout\) on the initial state \(\rin\).}
\end{figure}

Let us now formally define the superoperator projections on the blocks.
The subspaces $\ulbig$, $\urbig$, $\llbig$, and $\lrbig$ are just
four blocks (or corners) making up a matrix, so a nonspecialist reader
may simply visualize them without focusing too much on their technical
definitions in this paragraph. Let $\pp$ be the orthogonal operator
projection ($\pp=\pp^{2}=\pp^{\dg}$) on \textit{and only on} the
non-decaying subspace of $\h$ (also, the maximal invariant subspace).
This projection is uniquely defined by the following conditions:
\begin{equation}
\begin{aligned}\rout & =\pp\rout\pp\,\,\,\,\,\,\forall\,\,\,\,\,\,\rout\in\ash\,,\\
\tr\{\pp\} & =\max_{\rout}\{\text{rank}(\rout)\}\,.
\end{aligned}
\label{eq:cond}
\end{equation}
The first condition makes sure that $\pp$ projects onto all non-decaying
subspaces while the second guarantees that $\pp$ does not project
onto any decaying subspace. Naturally, the orthogonal projection onto
the maximal decaying subspace of $\h$ is $\qq\equiv I-\pp$ (with
$\pp\qq=\qq\pp=0$). Therefore, population decay (\ref{eq:popdec})
occurs for any part of $\rin$ that is in $\lrbig$:
\begin{equation}
\qq\r(t)\qq\rightarrow0\,\,\,\,\,\,\,\,\,\,\,\,\,\,\,\text{as}\,\,\,\,\,\,\,\,\,\,\,\,\,\,\,t\rightarrow\infty\,.
\end{equation}
The projection $\pp$ is also the projection on the range or \textit{support}\footnote{\selectlanguage{american}%
\label{fn:The-support-(kernel)}The \textit{support} (\textit{kernel})
of an operator $A$ is the set of all vectors $|x\ket$ which are
not mapped (are mapped) to zero under $A$. The \textit{range} of
$A$ is the set of all vectors $|y\ket$ which are mapped to under
action of $A$ on some other vector $|x\ket$: $A|x\ket=|y\ket$.
If $A|y\ket\notin\ker(A)$ if and only if $|y\ket\in\ran(A)$, then
the range and the support of $A$ are equal (up to the zero vector).
Whenever we use the word support, we will be describing operators
for which this is true.\selectlanguage{english}%
} of $\ppp\left(I\right)$. We define the \textit{four-corners projections}
acting on $A\in\oph$ as follows:
\begin{equation}
\begin{aligned}A_{\ul} & \equiv\R_{\ul}(A)\equiv\pp A\pp\\
A_{\ur} & \equiv\R_{\ur}(A)\equiv\pp A\qq\\
A_{\ll} & \equiv\R_{\ll}(A)\equiv\qq A\pp\\
A_{\lr} & \equiv\R_{\lr}(A)\equiv\qq A\qq\,.
\end{aligned}
\label{eq:superproj}
\end{equation}
By our convention, taking the conjugate transpose of the upper-right
part places it in the lower-left subspace (projection acts \textit{before}
adjoint): $A_{\ur}^{\dg}\equiv(A_{\ur})^{\dg}=(A^{\dg})_{\ll}$. The
superoperators $\R_{\emp}$ (with $\empbig\in\{\ulbig,\urbig,\llbig,\lrbig\}$)
are projections ($\R_{\emp}=\R_{\emp}^{2}$) which partition the identity
$\id$ on \acs{OPH},
\begin{eqnarray}
\R_{\ul}+\R_{\ur}+\R_{\ll}+\R_{\lr} & = & \id\,,\label{eq:superdecomp}
\end{eqnarray}
analogous to $\pp+\qq=I$. They conveniently add, e.g.,
\begin{equation}
\R_{\of}\equiv\R_{\ur}+\R_{\ll}\,\,\,\,\,\,\,\,\,\text{and}\,\,\,\,\,\,\,\,\,\R_{\di}\equiv\R_{\ul}+\R_{\lr}\,.\label{eq:off}
\end{equation}
The subspace $\ofbig\equiv\R_{\of}\oph$ consists of all coherences
between $\pp\h$ and $\qq\h$, and the ``diagonal'' subspace $\dibig\equiv\R_{\di}\oph$
consists of all operators which do not contain any such coherences.
While most of the cases we focus on have $\lrbig\neq0$, for completeness
we mention two cases which \textit{do not} contain decaying subspaces.
The four-corners projections are different from those of the Nakajima-Zwanzig
method \cite{Nakajima1958,Zwanzig1960} and are instead somewhat related
to the Feshbach projection method \cite{Feshbach1958,Rotter2015}
(see Sec. \ref{subsec:non-Hermitian}).

\paragraph{Hamiltonian case:}

If $\L=-i[H,\cdot]$ for some Hamiltonian, any state written in terms
of the $N$ eigenstate projections $|E_{k}\ket\bra E_{k}|$ of $H$
($H|E_{k}\ket=E_{k}|E_{k}\ket$) is a steady state. Therefore, there
is no decaying subspace in Hamiltonian evolution ($\pp=I$).

\paragraph{Unique state case (full-rank):}

In the case of a one-dimensional \acs{ASH}, $\pp$ is the projection
on the rank of the unique steady state $\rout\equiv\varrho$. If the
state's spectral decomposition is $\varrho=\sum_{k=0}^{\du-1}\l_{k}|\psi_{k}\ket\bra\psi_{k}|$
(with $\du$ being the number of nonzero eigenvalues of $\varrho$),
then $\pp=\sum_{k=0}^{\du-1}|\psi_{k}\ket\bra\psi_{k}|$. If all $N$
eigenvalues are nonzero, then $\varrho$ is full-rank (e.g., in a
Gibbs state) and there is no decaying subspace ($\pp=I$).
\selectlanguage{american}%

\section{Summary and reading guide\label{sec:Questions-addressed-and}}

\selectlanguage{english}%
Nontrivial decaying subspaces $\lrbig$ are ubiquitous in actively
researched quantum information schemes (e.g., \cite{cats,Paulisch2015,Reiter2017}).
For example, consider \textit{driven two-photon absorption} \textemdash{}
a bosonic Lindbladian with jump operator 
\begin{equation}
F=\aa^{2}-\a^{2}\,,\label{eq:jump-for-cats}
\end{equation}
bosonic lowering operator $\aa$ ($[\aa,\aa^{\dg}]=I$), and real
non-negative parameter $\a$. The steady states of such a Lindbladian
are the two coherent states $|\a\ket$ and $\left|-\a\right\rangle $,
since both are annihilated by $F$. (In this Section, we always consider
the $\a\gg1$ limit, meaning that the overlap between the two states
is negligible.) In this pseudo double-well system (recently realized
experimentally \cite{Leghtas2014,S.Touzard}), the asymptotic subspace
$\ash=\ulbig$ is spanned by the outer products $|\a\ket\bra\a|$,
$|\a\ket\cb{-\a}$, $\ct{-\a}\bra\a|$, and $\ct{-\a}\cb{-\a}$, and
the projection $\pp$ (\ref{eq:cond}) generating the four-corners
decomposition (in the large $\a$ limit) is
\begin{equation}
\pp=|\a\ket\bra\a|+\ct{-\a}\cb{-\a}\,.
\end{equation}
All states orthogonal to $\left|\pm\a\right\rangle $ constitute $\qq\h$
and the decaying subspace $\lrbig$ is spanned by outer products of
those states. Similarly, the coherences $\ofbig$ are spanned by all
states of the form $\ct{\pm\a}\bra\psi|$ and $\ct{\psi}\cb{\pm\a}$,
where $\cb{\pm\a}\psi\ket=0$. For simplicity, we phrase the concepts
tackled in this work in terms of questions about this particular example.
While many of the questions we consider have already been answered
for this type of \acs{ASH}, it is simpler to first state them in
this context and detail the extensions to the various types of \acs{ASH}
later. We summarize the remaining chapters below, explicitly mention
all collaborators directly related to this thesis, and end with a
brief reading guide.
\begin{description}
\item [{Chapter\,Two}] contains an application of the four-corners decomposition
to Lindbladians and a derivation of when Lindbladians admit a decaying
subspace $\lrbig$ (Thm.~\ref{prop:2}). In process, we present various
types of $\L$, making contact with quantum channel simulation and
non-Hermitian systems. It turns out that, while not all quantum channels
can be expressed as $e^{t\L}$ for any finite $t$ (since $\L$ generates
only Markovian channels), all channels can be embedded in some asymptotic
projection $\ppp=\lim_{t\rightarrow\infty}e^{t\L}$ (assuming $\hout\neq0$).\\
Building on previous results \cite{Jakob2004,pub011}, we then proceed
to derive an analytical formula for the asymptotic projection $\ppp$
from eq.~(\ref{eq:te}) in terms of conserved quantities of $\L$
(Thms.~\ref{thm:dual}-\ref{prop:3}). For driven two-photon absorption
and other examples where $\ash=\ulbig$, the formula states
\begin{equation}
\rout=\ppp\left(\rin\right)=\R_{\ul}\left(\rin\right)-\R_{\ul}\L\R_{\lr}\L^{-1}\R_{\lr}\left(\rin\right)\,.\label{eq:maindecomp}
\end{equation}
One can see that there are two terms. The first ($\R_{\ul}$) term
states the obvious: if one starts in \acs{ASH}, then nothing happens
for all time. The second term shows how an initial state in $\lrbig$
is transferred to an asymptotic state in $\ulbig$. Since there are
no other terms, it turns out that $\rout$ does not depend on any
initial coherences in $\ofbig$. We summarize the ramifications of
a more general version of this formula in terms of the no-leak (\ref{eq:no-leak})
and the clean-leak (\ref{eq:clean-leak}) properties. We then overview
the various types of \acs{ASH}, summarize the results derived in
Chs.~\ref{ch:4}-\ref{ch:5} for each \acs{ASH} type, and build
notation for the various types that is used to derive said results.\\
We conclude the chapter in Sec.~\ref{sec:Relation-of-conserved}
with some statements about the relation between conserved quantities
and symmetries of Lindbladians. Symmetries of a Hamiltonian $H$ are
powerful tools since they can be used to block-diagonalize $H$,
\begin{equation}
H=\bigoplus_{k}H_{k}\,,
\end{equation}
such that all states in a given block $H_{k}$ have the same eigenvalue
of the symmetry operator. Each block can then be further diagonalized
by finding its eigenstates $\{|E_{k}\ket\}$, and each projection
$|E_{k}\ket\bra E_{k}|$ is a steady state ($[H,|E_{k}\ket\bra E_{k}|]=0$).
So if each block is $d_{k}$ dimensional, then $H_{k}$ admits at
least $d_{k}$ steady states. Turning to Lindbladians, one can also
use symmetries \textit{on the superoperator level} to block-diagonalize
$\L$,
\begin{equation}
\L=\bigoplus_{\k}\mathcal{O}_{\k}\,,
\end{equation}
where each superoperator $\mathcal{O}_{\k}$ is not generally in Lindblad
form. Therefore, $\mathcal{O}_{\kappa}$ may not admit any steady
states at all! We will learn in the correspondence from Thm. \ref{thm:dual}
that there are as many steady-state basis elements as there are conserved
quantities. Thus, if $\mathcal{O}_{\k}$ does not admit any steady
states, it also does not admit any conserved quantities. This is a
key difference between Hamiltonians and Lindbladians that breaks the
usual duality between symmetries and conserved quantities known as
Noether's theorem. We explore this idea in Sec.~\ref{sec:Relation-of-conserved}.\\
This chapter is technical and heavily based on \citet*{pub011} and
\citet*{ABFJ}, although some parts have been expanded and further
clarified. The connection to non-Hermitian systems in Sec.~\ref{subsec:non-Hermitian}
is new.
\item [{Chapter\,Three}] reviews examples of conserved quantities from
few-qubit systems and undriven ($\a=0$) two-photon absorption. We
begin with the simplest possible example \textemdash{} a single two-level
system admitting a two-dimensional \acs{ASH} and expound on its relation
to more complicated systems. We continue with a two-qubit example,
studying (among other things) the effect of residual unitary evolution
on $\rout$ {[}$\hout\neq0$ in eq.~(\ref{eq:maindecomp}){]}. In
turns out that in general $\ppp$ depends on $\hout$, leading to
additional dephasing of $\rout$ caused by a ``misalignment'' of
the driving inside \acs{ASH} with the decay coming from outside.
We conclude with a many-body example which allows stabilization of
the ground-state subspace of any frustration-free Hamiltonian (Thm.~\ref{thm:ff}),
making contact with and reviewing earlier work on state stabilization.\\
This chapter is non-technical and is a collection of single-body examples
from Refs.~\cite{pub011,ABFJ}. The many-body example in Sec.~\ref{sec:Ground-state-subspaces}
is synthesized from Refs.~\cite{Ticozzi2009,Ticozzi2012} (see also
\cite{Ticozzi2014}).
\item [{Chapter\,Four}] studies Lindbladian perturbation theory. We begin
with the effect of Hamiltonian perturbations on $\rout$. One can
show numerically \cite{cats} that applying a perturbation of the
type 
\begin{equation}
\hpert=\e(\aa+\aa^{\dg})\label{eq:catpert}
\end{equation}
to the two-photon absorption Lindbladian generates, to linear order,
motion within \acs{ASH} due to the effective Hamiltonian $\hpert_{\ul}\equiv\pp\hpert\pp$.
Related results \cite{Zanardi2014,Zanardi2015} also show that Hamiltonian
perturbations and perturbations to the jump operators of $\L$ generate
unitary evolution within some \acs{ASH} to linear order. Do these
results hold in general? In Sec.~\ref{subsec:hams}, we apply our
formula for $\ppp$ to prove that such perturbations induce unitary
evolution within \textit{all} \acs{ASH} to linear order. This result
also holds for perturbations to the jump operators, $F\rightarrow F+f$,
extending the capabilities of environment-assisted quantum computation
and quantum Zeno dynamics (\cite{Facchi2002,Schafer2014,Signoles2014,Arenz2016};
see also \cite{Anandan1988,Beige2000a}).\\
Extending Ref.~\cite{Oreshkov2010}, we determine the energy scale
governing leakage out of \acs{ASH} due to Hamiltonian perturbations,
jump operator perturbations, and adiabatic evolution (the latter is
shown in the next chapter). Contrary to popular belief, this scale
is not always the dissipative gap of $\L$ \textemdash{} the nonzero
eigenvalue with the smallest real part {[}see eq.~(\ref{eq:dissipative-gap-def}){]}.
On the contrary, this leakage scale is the dissipative gap of $\R_{\thu}\L\R_{\thu}\equiv\L_{\thu}$.
The derivation is given in Sec.~\ref{subsec:Leakage-out-of} for
Hamiltonian/jump operator perturbations and in Sec.~\ref{subsec:adiabatic-response}
for non-adiabatic corrections.\\
More generally, Ch.~\ref{ch:4} contains an application of the four-corners
($\empbig$) partition to the Kubo formula, splitting the formula
into a part within \acs{ASH} which closely corresponds to the ordinary
Hamiltonian-based formula and parts which cause leakage out of \acs{ASH}
and contain non-unitary effects. Theorem \ref{thm:Dyson} provides
an all-order Dyson expansion for cases where $\dim\ash\geq1$, given
a slowly ramping-up perturbation and assuming the initial state is
already in \acs{ASH}.\foreignlanguage{american}{ }An important distinction
from most previous work is that we do not assume anything about $\L$
or its steady states (their number, detailed balance, thermodynamic
equilibrium, etc.), making this analysis applicable to thermodynamic
and quantum computational systems alike. It turns out that the number
of terms to each order in this expansion is equal to a Catalan number.
We finish the chapter by making contact with \foreignlanguage{american}{dark
states }\cite{Kraus2008}\foreignlanguage{american}{, geometric response
}\cite{Avron2012a}\foreignlanguage{american}{, and the effective
operator formalism} \cite{Reiter2012}.\\
All first-order perturbation theory results are from Ref.~\citep{ABFJ}.
The exact Dyson series for all higher orders and connections to previous
work in Sec.~\ref{sec:Exact-all-order-Dyson} are new and will be
studied further in a future publication \cite{pert}.
\item [{Chapter\,Five}] studies the geometric ``phase'' acquired by
$\rout$ after cyclic adiabatic deformations of $\L$. Adiabatically
changing the value of $\a$ of a coherent state $|\a\ket$ in our
two-photon absorption example over a closed path produces a Berry
phase (more generally, a \textit{holonomy}) proportional to the area
enclosed by the path \cite{Chaturvedi1987}. However, does this result
still hold when the coherent state is part of an \acs{ASH} of an
open system? Can Lindbladians induce any additional undesirable effects
in the adiabatic limit for the various types of \acs{ASH}? We extend
previous results \cite{Carollo2006,Sarandy2006,Oreshkov2010,Avron2012a,Albert2015}
to show in Sec.~\ref{subsec:adiabatic-response} that cyclic Lindbladian-based
\cite{Avron2012b} adiabatic evolution of states in \acs{ASH} is
\textit{always} unitary. This result extends the capabilities of holonomic
quantum computation \cite{Zanardi1999,pachos1999,lidarbook_zanardi}
via reservoir engineering. This chapter also contains an application
of the four-corners partition of the leading non-adiabatic corrections
to adiabatic evolution.\\
This chapter is a reshuffled version of the results from Ref.~\citep{ABFJ}.
\item [{Chapter\,Six}] introduces a Lindbladian version of the quantum
geometric tensor (\ac{QGT}) \cite{provost1980,BerryQGT} which encodes
both the curvature associated with the aforementioned adiabatic deformations
and a metric associated with distances between adiabatically connected
steady states. We also construct other geometric tensors and discuss
why these are not always relevant to adiabatic deformations.\\
This chapter is a reshuffled version of the results from Ref.~\citep{ABFJ}.
\item [{Chapter\,Seven}] applies what the results derived from the previous
chapters to the driven two-photon absorption system. We discuss in
detail the steady states for all parameters $\a$ and derive the system's
conserved quantities. Using said conserved quantities, we find out
$\rout$ for various initial states. We then study the leading-order
effect of the aforementioned Hamiltonian perturbation (\ref{eq:catpert})
and two types of noise \textemdash{} dephasing and loss. The perturbation
theory formalism of Ch.~\ref{ch:4} reveals that the system's steady
states are resilient to dephasing noise. We then apply the adiabatic
results of Ch.~\ref{ch:5} and review a way to induce holonomic quantum
computation within \acs{ASH} by adiabatically varying the state parameters
$\ct{\pm\a}$.\\
\foreignlanguage{american}{The conserved quantities and all perturbative
effects are from \citet*{cats} while the holonomic quantum computation
results are from \citet*{Albert2015}. However, }application of perturbation
theory to eq.~(\ref{eq:catpert}) in Sec.~\ref{subsec:A-Hamiltonian-based-gate}
is new, providing a theoretical underpinning to the numerical results
of Ref.~\cite{cats}.
\item [{Chapter\,Eight}] discusses single- and multi-mode extensions of
the driven two-photon absorption example and their relation to cat
codes \foreignlanguage{american}{\cite{Cochrane1999,Leghtas2013b,cats,Albert2015}}
in coherent state quantum information processing.\\
The single-mode extensions were first introduced in the Supplement
of Ref.~\cite{zaki} and further discussed in Refs.~\cite{Albert2015,Bergmann2016}
and Li, Zou, Albert, Muralidharan, Girvin, and Jiang \cite{li2016}.
The multi-mode extensions in Secs.~\ref{sec:Two-mode-cat-codes}-\ref{sec:-mode-cat-codes}
are studied in Albert, Mundhada, Grimm, Touzard, Devoret, and Jiang
\cite{paircat}.
\end{description}
The only recommended prerequisites for understanding about 80\% of
this work are a course in introductory quantum mechanics and a course
in linear algebra (taught by a mathematician). Readers unfamiliar
with Lindbladians are welcome to continue reading the introduction
below and are encouraged to consult the simpler examples in the first
part of Ch.~\ref{ch:3} if things become too general in Chs.~\ref{ch:2},
\ref{ch:4}, or \ref{ch:5}. Applied physicists and quantum computing
researchers familiar with open systems are encouraged to skip ahead
to Chs.~\ref{ch:7}-\ref{ch:8} to enjoy the fruits of the labor
of the previous chapters applied to concrete examples; links to the
general derivations are provided there for convenience. Mathematicians
are encouraged to read the Theorems below and in Ch.~\ref{ch:2}.
An outlook is presented in Ch.~\ref{ch:9}.

\section{A technical introduction\label{app:Preliminaries}}

\subsection{The playground and its features}

This subsection contains a standard introduction into superoperators
and double-ket notation \cite{mukamel,Caves1999,Kosut2009,ernstnmr}.
Lindbladians operate on the space of (linear) operators on $\h$,
or $\oph\equiv\h\ot\h^{\star}$ (also known as Liouville, von Neumann,
or Hilbert-Schmidt space). This space is also a Hilbert space when
endowed with the Hilbert-Schmidt inner product and Frobenius norm
(for $N\equiv\dim\h<\infty$). An operator $A$ in quantum mechanics
is thus both in the space of operators \textit{acting on} ordinary
states and in the space of vectors \textit{acted on} by superoperators.
We denote the two respective cases as $A|\psi\ket$ and $\oo|A\kk$
(for $|\psi\ket\in\h$ and for a superoperator $\oo$). Strictly speaking,
$|\r\kk$ is an $N^{2}$-by-1 vector and $\r$ is an $N$-by-$N$
matrix, and superoperators acting on the vector version of $\r$ are
constructed according to the conversion rule
\begin{align}
A\r B & \leftrightarrow(A\otimes B^{T})|\r\kk\,,
\end{align}
where $B^{T}$ is the transpose of $B$. This extra transpose is necessary
because the bra (row) part of the outer products making up $\r$ is
flipped when $\r$ is written as a (column) vector, $|\psi\ket\bra\phi|\rightarrow|\psi\ket(|\phi\ket)^{\star}$
(see Sec.~2.1.4.5 of \cite{ernstnmr} for details). The double-ket
notation differentiates between superoperators acting on the matrix
or vector versions of operators: $\oo|A\kk$ and $|\oo(A)\kk$ are
written in vector form while $\oo(A)$ is a matrix.

For $A,B\in\oph$, the Hilbert-Schmidt inner product and Frobenius
norm are respectively
\begin{equation}
\bb A|B\kk\equiv\tr\{A^{\dg}B\}\,\,\,\,\,\,\,\,\,\,\text{and}\,\,\,\,\,\,\,\,\,\,\left\Vert A\right\Vert \equiv\sqrt{\bb A|A\kk}\,.\label{eq:inprod}
\end{equation}
The inner product allows one to define an adjoint operation $\dgt$
which complements the adjoint operation $\dg$ on matrices in \acs{OPH}:
\begin{equation}
\bb A|\oo(B)\kk=\bb A|\oo|B\kk=\bb\oo^{\dgt}(A)|B\kk\,.\label{eq:adj}
\end{equation}
Writing $\oo$ as an $N^{2}$-by-$N^{2}$ matrix, $\oo^{\dgt}$ is
just the conjugate transpose of that matrix. For example, if $\oo(\cdot)=A\cdot B$,
then one can use eq.~(\ref{eq:inprod}) to verify that 
\begin{equation}
\oo^{\dgt}(\cdot)=A^{\dg}\cdot B^{\dg}\,.
\end{equation}
Similar to the Hamiltonian description of quantum mechanics, $\oo$
is Hermitian if $\oo^{\dgt}=\oo$. For example, all projections $\R_{\emp}$
from eq.~(\ref{eq:superproj}) are Hermitian.

\begin{figure}
\centering{} \includegraphics[width=0.45\linewidth]{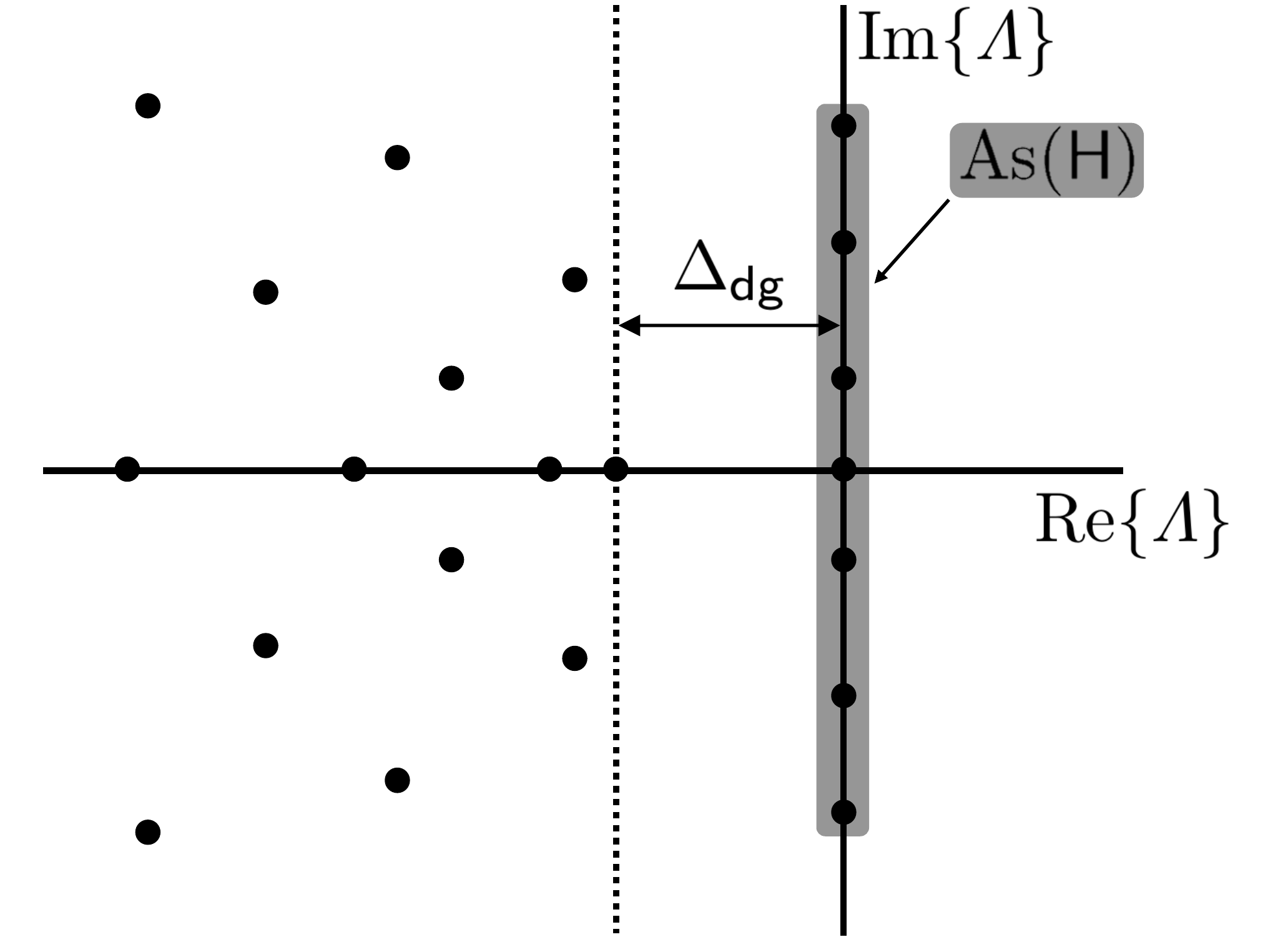}\caption{\label{fig:While--may}A plot of a spectrum of an example \(\L\)
with 21 eigenvalues \(\varLambda\) in the complex plane.}
\label{f1-1} 
\end{figure}

\subsection{More on Lindbladians\label{subsec:More-on-Lindbladians}}

The form of the Lindbladian (\ref{eq:def}) is not unique due to the
following ``gauge'' transformation (for complex $g_{\ell}$), 
\begin{equation}
\begin{aligned}H & \rightarrow H-\frac{i}{2}\sum_{\ell}\lind_{\ell}(g_{\ell}^{\star}F^{\ell}-g_{\ell}F^{\ell\dg})\\
F^{\ell} & \rightarrow F^{\ell}+g_{\ell}I\,,
\end{aligned}
\label{eq:gauge}
\end{equation}
that allows parts of the Hamiltonian to be included in the jump operators
(and vice versa) while keeping $\L$ (\ref{eq:def}) invariant. Note
that there exists a unique ``gauge'' in which $F^{\ell}$ are traceless
(\cite{Gorini1976a}, Thm.~2.2). The Lindbladian is also invariant
under unitary transformations on the jumps: for any unitary matrix
$u_{\ell\ell^{\pr}}$,
\begin{equation}
\sqrt{\lind_{\ell}}F^{\ell}\rightarrow\sum_{\ell^{\prime}}u_{\ell\ell^{\pr}}\sqrt{\lind_{\ell^{\pr}}}F^{\ell^{\pr}}\,.
\end{equation}

It is easy to determine how an observable $A\in\oph$ evolves (in
the Heisenberg picture) using the definition of the adjoint (\ref{eq:adj})
and cyclic permutations under the trace:
\begin{equation}
\L^{\dgt}(A)=-\H(A)+\half\sum_{\ell}\lind_{\ell}\left(2F^{\ell\dg}AF^{\ell}-\left\{ F^{\ell\dg}F^{\ell},A\right\} \right).\label{eq:adjl}
\end{equation}
The superoperator $\H(\cdot)\equiv-i[H,\cdot]$ corresponding to the
Hamiltonian (more precisely, the adjoint representation of $H$) is
therefore \textit{anti-Hermitian} because we have absorbed the ``$i$''
in its definition.

The norm of a wavefunction corresponds to the trace of $\r$ ($\tr\{\r\}=\bb I|\r\kk$);
we have already seen in Sec.~\ref{sec:What-is-a} that it is preserved
under both Hamiltonian and Lindbladian evolution. It is easy to check
that the exponential of any superoperator of the above form preserves
both trace {[}$\bb I|\L|\r\kk=0$ with $I$ the identity operator{]}
and Hermiticity \{$\L(A^{\dg})=[\L(A)]^{\dg}$ as can be verified
from eq.~(\ref{eq:def})\}. However, the norm/purity of $\r$ ($\bb\r|\r\kk=\tr\{\r^{2}\}$)
is not always preserved under Lindbladian evolution.

While $\L$ may not be diagonalizable, one can still obtain information
about the dynamics by observing its eigenvalues (see Fig.~\ref{fig:While--may}).
All eigenvalues $\varLambda$ lie on the non-positive plane and non-real
eigenvalues exist in complex conjugate pairs (hence the symmetry under
complex conjugation). The dots with zero real part in the Figure represent
\acs{ASH}, whose eigenstates have \foreignlanguage{american}{pure
imaginary eigenvalues ($\varLambda=i\la$ for real $\la$) and thus}
survive in the infinite-time limit. The value 
\begin{equation}
\dgg\equiv\min_{\Re\varLambda\neq0}\left|\Re\varLambda\right|\label{eq:dissipative-gap-def}
\end{equation}
is the \textit{dissipative/dissipation/damping/relaxation gap }(also,
\textit{asymptotic decay rate} \cite{Kessler2012}) \textendash{}
the slowest non-zero rate of convergence toward \acs{ASH}.

One can eigendecompose $\L$ to obtain, in principle, the evolution
for all time. Let us first assume that $\L$ is diagonalizable with
eigenvalues $\varLambda$, an additional index $\m$ which labels
any degeneracies for each $\varLambda$, right eigenmatrices $R_{\varLambda\m}$
($\L|R_{\varLambda\m}\kk=\varLambda|R_{\varLambda\m}\kk$), and left
eingematrices $L^{\varLambda\m}$ ($\L^{\dgt}|L^{\varLambda\m}\kk=\varLambda^{\star}|L^{\varLambda\m}\kk$).
Then, the evolution superoperator (\ref{eq:evol}) can be written
as
\begin{equation}
e^{t\L}|\rin\kk=\sum_{\varLambda,\m}e^{\varLambda t}|R_{\varLambda\m}\kk\bb L^{\varLambda\m}|\rin\kk\,.\label{eq:expdeg}
\end{equation}
If $\L$ is not diagonalizable, there exists at least one Jordan block
of $\L$ (in Jordan normal form) which has only one eigenmatrix, with
the remaining matrix basis elements in the support of the block making
up the block's \textit{generalized eigenmatrices} (e.g., \cite{puri},
Sec.~10.2). Exponentiating such an $\L$ brings about extra powers
of $t$ in front of the exponent $e^{\varLambda t}$ above as well
as off-diagonal elements of the form $|R_{\varLambda\m}\kk\bb L^{\varLambda,\n\neq\m}|$.
For example, if $\varLambda$ has a two-dimensional Jordan block with
right eigenmatrix $|R_{\varLambda0}\kk$ ($\L|R_{\varLambda0}\kk=\varLambda|R_{\varLambda0}\kk$)
and generalized right eigenmatrix $|R_{\varLambda1}\kk$ ($\L|R_{\varLambda1}\kk=|R_{\varLambda0}\kk+\varLambda|R_{\varLambda1}\kk$),
then $e^{t\L}$ on that block is
\begin{equation}
e^{\varLambda t}\left(|R_{\varLambda0}\kk\bb L^{\varLambda0}|+t|R_{\varLambda1}\kk\bb L^{\varLambda0}|+|R_{\varLambda1}\kk\bb L^{\varLambda1}|\right)\equiv e^{\varLambda t}\begin{pmatrix}1 & t\\
0 & 1
\end{pmatrix}\,.\label{eq:expjor}
\end{equation}
Let us partition the Jordan normal form of $\L$ into blocks that
are either diagonal or have an upper diagonal of ones. Since there
could exist blocks associated with a particular $\varLambda$ which
contain both diagonal and off-diagonal sub-blocks, the sum over $\varLambda$
has to include each sub-block separately. Generalizing eq.~(\ref{eq:expdeg}),
the full expansion of $e^{t\L}$ is then
\begin{equation}
e^{t\L}|\rin\kk=\sum_{\varLambda,\m}e^{\varLambda t}|R_{\varLambda\m}\kk\sum_{\n\geq\m}\frac{(\d_{\varLambda}t)^{\n-\m}}{(\n-\m)!}\bb L^{\varLambda\n}|\rin\kk\,,\label{eq:expansion}
\end{equation}
with $\m,\n\in\{0,1,\cdots\}$ indexing either the degeneracy of the
eigenspace of each $\varLambda$ if the Jordan block associated with
$\varLambda$ is diagonal ($\d_{\varLambda}=0$; here $0^{0}=1$)
or indexing the generalized eigenmatrices of $\varLambda$'s Jordan
block if the block is not diagonal ($\d_{\varLambda}=1$).

\selectlanguage{american}%
Equations (\ref{eq:expjor}-\ref{eq:expansion}) immediately reveal
that all Jordan blocks with pure imaginary eigenvalues $\varLambda=i\la$
are diagonal (\cite{schirmer}, Sec.~5; \cite{baum2}, Thm.~18;
\cite{wolf2010}, Prop. 6.2). By contradiction, if one assumes that
$\L$ is not diagonalizable in the subspace of the Jordan normal form
with diagonals of zero real part, then exponentiating those Jordan
blocks causes the dynamics to diverge as $t\rightarrow\infty$.

\selectlanguage{english}%
\begin{table}
\centering{}%
\begin{tabular}{cc}
\toprule 
\addlinespace
Operator & Superoperator\tabularnewline
Notation & Notation\tabularnewline\addlinespace
\midrule
\addlinespace
$\L(\r)$ & $\L|\r\kk$\tabularnewline\addlinespace
$\tr\{A^{\dg}\oo(\r)\}$ & $\bb A|\oo|\r\kk$\tabularnewline\addlinespace
$-i[H,\r]$ & $\H|\r\kk$\tabularnewline\addlinespace
$-i[\hpert,\r]$ & $\spert|\r\kk$\tabularnewline\addlinespace
$U\r U^{\dg}$ & $\U|\r\kk$\tabularnewline\addlinespace
$S\r S^{\dg}$ & $\dist|\r\kk$\tabularnewline\addlinespace
\bottomrule
\end{tabular}~~%
\begin{tabular}{cc}
\toprule 
\addlinespace
Operator & Superoperator\tabularnewline
Notation & Notation\tabularnewline\addlinespace
\midrule
\addlinespace
$A_{\ul}\equiv PAP$ & $|A_{\ul}\kk\equiv\R_{\ul}|A\kk$\tabularnewline\addlinespace
$A_{\ur}\equiv PAQ$ & $|A_{\ur}\kk\equiv\R_{\ur}|A\kk$\tabularnewline\addlinespace
$A_{\ll}\equiv QAP$ & $|A_{\ll}\kk\equiv\R_{\ll}|A\kk$\tabularnewline\addlinespace
$A_{\lr}\equiv QAQ$ & $|A_{\lr}\kk\equiv\R_{\lr}|A\kk$\tabularnewline\addlinespace
$P\L(Q\r Q)P$ & $\R_{\ul}\L\R_{\lr}|\r\kk$\tabularnewline\addlinespace
$i\tr\{H[\St_{\m},\St_{\n}]\}$ & $\bb\St_{\m}|\H|\St_{\n}\kk$\tabularnewline\addlinespace
\bottomrule
\end{tabular}\caption{\label{tab:glossary}\protect\ytableausetup{boxsize = 2pt}Comparison
of operator and superoperator notations for symbols used throughout
the text (cf. Table 3.2 in \cite{mukamel}). \(\L\) is a Lindbladian
superoperator (\ref{eq:def}), \(\oo\) is a superoperator, \(A\)
is an operator, and \(\r\) is a density matrix. Hamiltonians \(H\)
and \(\hpert\) have corresponding Hamiltonian superoperators \(\H\)
and \(\spert\), respectively. Unitary operators \(U\) and \(S\)
have corresponding unitary superoperators \(\U\) and \(\dist\),
respectively. The projection \(\pp\) (\ref{eq:cond2}) projects onto
the largest subspace whose states do not decay under \(\L\) and \(\qq\equiv I-\pp\)
with \(I\) the identity. The last two entries respectively represent
the part \(\R\ytableaushort{ {*(black)} {} , {} {} }\L\R\ytableaushort{ {} {} , {} {*(black)} }\)
of the projection decomposition of \(\L\) (\ref{eq:def}) acting
on \(\r\) and a (superoperator) matrix element of \(\H\) in terms
of a Hermitian matrix basis \(\{\St_{\m}\}\).}
\end{table}

\subsection{Double-bra/ket basis for steady states}

We now bring in intuition from Hamiltonian-based quantum mechanics
by building bases for \acs{OPH} from those for $\h$. Given any orthonormal
basis $\{|\phi_{k}\ket\}_{k=0}^{N-1}$ for $\h$, one can construct
the corresponding orthonormal (under the trace) outer product basis
for \acs{OPH}, 
\begin{equation}
\{|\varPhi{}_{kl}\kk\}_{k,l=0}^{N-1}\,,\text{\,\,\,\,\,\,\,\,\,where\,\,\,\,\,\,\,\,\,}\varPhi_{kl}\equiv|\phi_{k}\ket\bra\phi_{l}|\,.\ 
\end{equation}
The analogy with quantum mechanics is that the matrices $\Phi_{kl}\leftrightarrow|\Phi_{kl}\kk$
and $\Phi_{kl}^{\dg}\leftrightarrow\bb\Phi_{kl}|$ are vectors in
the vector space \acs{OPH} and superoperators $\oo$ are linear operators
on those vectors. Furthermore, one can save an index and use properly
normalized Hermitian matrices $\Ga_{\k}^{\dg}=\Ga_{\k}$ to form an
orthonormal basis $\{|\Ga_{\kappa}\kk\}_{\kappa=0}^{N^{2}-1}$: 
\begin{equation}
\bb\Ga_{\kappa}|\Ga_{\lambda}\kk\equiv\tr\{\Ga_{\kappa}^{\dg}\Ga_{\lambda}\}=\tr\{\Ga_{\kappa}\Ga_{\lambda}\}=\d_{\kappa\lambda}\,.
\end{equation}
Each $\G_{\k}$ consists of Hermitian linear superpositions of the
outer products $|\phi_{k}\ket\bra\phi_{l}|$ and is \textit{not} a
density matrix. For example, an orthonormal Hermitian matrix basis
for \acs{OPH} with $\h$ two-dimensional consists of the identity
matrix and the three Pauli matrices, all normalized by $\nicefrac{1}{\sqrt{2}}$.
An example for $N=3$ is the set of properly normalized Gell-Mann
matrices.

It is easy to see that the coefficients in the expansion of any Hermitian
operator in such a matrix basis are real. For example, the coefficients
$c_{\kappa}$ in the expansion of a density matrix, 
\begin{equation}
|\r\kk=\sum_{\kappa=0}^{N^{2}-1}c_{\kappa}|\Ga_{\kappa}\kk\,\,\,\,\,\,\,\,\,\text{with}\,\,\,\,\,\,\,\,\,c_{\kappa}=\bb\Ga_{\kappa}|\r\kk\,,
\end{equation}
are clearly real and represent the components of a generalized Bloch/coherence
vector \cite{schirmer,alicki_book}. Furthermore, defining 
\begin{equation}
\oo_{\kappa\lambda}\equiv\bb\Ga_{\kappa}|\oo|\Ga_{\lambda}\kk\equiv\tr\{\Ga_{\kappa}^{\dg}\oo(\Ga_{\l})\}
\end{equation}
for any superoperator $\oo$, one can write
\begin{eqnarray}
\oo & = & \sum_{\kappa,\lambda=0}^{N^{2}-1}\oo_{\kappa\lambda}|\Ga_{\kappa}\kk\bb\Ga_{\lambda}|\,.
\end{eqnarray}
There are many physical $\oo$ for which the ``matrix'' elements
$\oo_{\k\l}$ are real. For example, we define the superoperator equivalent
of a Hamiltonian $H$ acting on a state $\r$ as $\H(\r)\equiv-i[H,\r]$
{[}so that if $H$ generates time evolution, $\p_{t}\r=\H(\r)${]}.
For this case, it is easy to show that matrix elements $\H_{\k\l}$
are real using cyclic permutations under the trace and Hermiticity
of the $\Ga$'s:
\begin{equation}
\H_{\kappa\lambda}^{\star}=\bb\G_{\kappa}|\H|\G_{\lambda}\kk^{\star}=i\tr\{\Ga_{\l}[H,\Ga_{\k}]\}=-i\tr\{\Ga_{\k}[H,\Ga_{\l}]\}=\bb\G_{\kappa}|\H|\G_{\lambda}\kk=\H_{\k\l}\,.\label{eq:hamder}
\end{equation}
This calculation easily extends to all Hermiticity-preserving $\oo$,
i.e., superoperators such that $\oo(A^{\dg})=[\oo(A)]^{\dg}$ for
all operators $A$. 

Given a Lindbladian, one can provide necessary and sufficient conditions
under which it generates Hamiltonian time evolution. This early key
result in open quantum systems can be used to determine whether a
perturbation generates unitary evolution.
\selectlanguage{american}%
\begin{thm}
[When Lindbladians generate unitary evolution \cite{Kossakowski1972}]\foreignlanguage{english}{\label{prop:1}The
matrix $\L_{\kappa\lambda}=\bb\Ga_{\kappa}|\L|\Ga_{\lambda}\kk$ is
real. Moreover, 
\begin{equation}
\L_{\lambda\kappa}=-\L_{\kappa\lambda}\,\,\,\Leftrightarrow\,\,\,\L=-i\left[H,\cdot\right]\text{ \textit{with} \textit{Hamiltonian} }H.\label{eq:prop1}
\end{equation}
}
\end{thm}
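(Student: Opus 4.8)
The plan is to separate the reality claim from the two directions of the equivalence, and to keep in mind throughout that the antisymmetry condition $\L_{\lambda\kappa}=-\L_{\kappa\lambda}$ is precisely the statement that $\L$ is \emph{anti-Hermitian} as a superoperator, $\L^\dgt=-\L$. I would first dispose of reality. Since $\L$ is Hermiticity-preserving [$\L(A^\dg)=\L(A)^\dg$, read off (\ref{eq:def})] and each $\Ga_\kappa$ is Hermitian, I would write $\L_{\kappa\lambda}=\tr\{\Ga_\kappa\L(\Ga_\lambda)\}$, conjugate using $\overline{\tr M}=\tr\{M^\dg\}$ together with cyclicity, and then use $\L(\Ga_\lambda)^\dg=\L(\Ga_\lambda^\dg)=\L(\Ga_\lambda)$ to return to $\L_{\kappa\lambda}$, giving $\L_{\kappa\lambda}^\star=\L_{\kappa\lambda}$. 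This is the same computation already performed for $\H$ in (\ref{eq:hamder}), which the text notes extends to any Hermiticity-preserving map, so it is essentially free. Because the matrix is real, its conjugate transpose is a plain transpose, and since $\oo^\dgt$ is represented by the conjugate transpose of $\oo_{\kappa\lambda}$, the condition $\L_{\lambda\kappa}=-\L_{\kappa\lambda}$ is equivalent to $\L^\dgt=-\L$.

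For the easy ($\Leftarrow$) direction I would simply invoke that the Hamiltonian superoperator $\H=-i[H,\cdot]$ is anti-Hermitian, stated below (\ref{eq:adjl}); alternatively one computes $\H_{\kappa\lambda}+\H_{\lambda\kappa}$ directly, expanding both commutators so that cyclicity cancels the four resulting traces in pairs and yields $\H_{\lambda\kappa}=-\H_{\kappa\lambda}$. Hence $\L=-i[H,\cdot]$ gives the antisymmetry at once.

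For the substantive ($\Rightarrow$) direction I would decompose $\L=\H+\D$ into its Hamiltonian part and its dissipator $\D(\r)=\half\sum_\ell\lind_\ell(2F^\ell\r F^{\ell\dg}-\{F^{\ell\dg}F^\ell,\r\})$. Because $\H^\dgt=-\H$ regardless, the hypothesis $\L^\dgt=-\L$ is equivalent to $\D^\dgt=-\D$, i.e.\ to the vanishing of the Hermitian part $\D+\D^\dgt$. Reading $\D^\dgt$ off the adjoint formula (\ref{eq:adjl}) gives $(\D+\D^\dgt)(\r)=\sum_\ell\lind_\ell(F^\ell\r F^{\ell\dg}+F^{\ell\dg}\r F^\ell-\{F^{\ell\dg}F^\ell,\r\})$, and the goal is to show this operator identity forces $\D=0$.

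The main obstacle is converting that operator identity into a usable scalar statement, and it is here that the positivity of the rates $\lind_\ell>0$ — equivalently, complete positivity of the generated semigroup — does the real work. I would test the identity on an arbitrary pure state $\r=\ct\psi\cb\psi$: a short trace computation gives $\bb\r|\D|\r\kk=\sum_\ell\lind_\ell(|\bra\psi|F^\ell|\psi\ket|^2-\|F^\ell\ct\psi\|^2)$, which is real since $\D$ is Hermiticity-preserving and equals $\half\bb\r|(\D+\D^\dgt)|\r\kk$, hence vanishes by hypothesis. By Cauchy--Schwarz each summand $|\bra\psi|F^\ell|\psi\ket|^2-\|F^\ell\ct\psi\|^2\leq0$, so with $\lind_\ell>0$ every summand must vanish for every $\ct\psi$; equality in Cauchy--Schwarz then forces $F^\ell\ct\psi$ parallel to $\ct\psi$ for all $\ct\psi$, which is possible only if $F^\ell=c_\ell I$. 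Such a jump operator contributes nothing to $\D$, so $\D=0$ and $\L=-i[H,\cdot]$. The one point demanding care is that positivity of the rates is indispensable: it is exactly what prevents the nonpositive summands from cancelling, and dropping it would allow anti-Hermitian dissipators that are not pure commutators.
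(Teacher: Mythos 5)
Your proof is correct, and for the substantive ($\Rightarrow$) direction it takes a genuinely different route from the paper. The paper's argument is structural: from $\L^{\dgt}=-\L$ it deduces that $e^{t\L}$ is a surjective isometry of $\oph$ and then invokes Kossakowski's classification of surjective isometric dynamical semigroups (Thm.~6 of \cite{Kossakowski1972}) to conclude $e^{t\L}(\r)=U_{t}\r U_{t}^{\dg}$, never touching the explicit Lindblad form of the generator. You instead work directly with a Lindblad representation $\L=\H+\D$, reduce the hypothesis to $\D+\D^{\dgt}=0$, and test on pure states to get $\sum_{\ell}\lind_{\ell}\bigl(|\bra\psi|F^{\ell}|\psi\ket|^{2}-\Vert F^{\ell}\ct{\psi}\Vert^{2}\bigr)=0$; Cauchy--Schwarz plus $\lind_{\ell}>0$ then forces each $F^{\ell}$ to have every vector as an eigenvector, hence $F^{\ell}=c_{\ell}I$ and $\D=0$. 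Both arguments are sound. Yours is elementary and self-contained, and it makes explicit exactly where positivity of the rates enters (as you note, with indefinite rates one could cook up anti-Hermitian dissipators that are not commutators); the paper's is shorter given the cited theorem and is representation-independent, applying to any generator of an isometric semigroup rather than to a fixed Lindblad decomposition. (The paper's footnote sketches yet a third route via the spectrum of $\L$ lying on the imaginary axis.) The reality claim and the $\Leftarrow$ direction in your writeup coincide with the paper's computations.
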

\selectlanguage{english}%
\begin{proof}
To prove reality, use the definition of the adjoint of $\L$, Hermiticity
of $\Ga_{\kappa}$, and cyclicity under the trace: 
\begin{eqnarray}
\L_{\kappa\lambda}^{\star} & = & \bb\Ga_{\lambda}|\L^{\dgt}|\Ga_{\kappa}\kk=\bb\L(\Ga_{\lambda})|\Ga_{\kappa}\kk=\bb\Ga_{\kappa}|\L|\Ga_{\lambda}\kk=\L_{\kappa\lambda}\,.
\end{eqnarray}
$\Leftarrow$ Assume $\L$ generates unitary evolution. Then there
exists a Hamiltonian $H$ such that $\L|\Ga_{\kappa}\kk=-i|[H,\Ga_{\kappa}]\kk$
and $\L$ is antisymmetric:
\begin{eqnarray}
\L_{\lambda\kappa} & = & -i\tr\{\Ga_{\lambda}[H,\Ga_{\kappa}]\}=i\tr\{\Ga_{\kappa}[H,\Ga_{\lambda}]\}=-\L_{\kappa\lambda}\,.
\end{eqnarray}
$\Rightarrow$\footnote{An alternative way to prove this part is to observe that all eigenvalues
of $\L$ lie on the imaginary axis and use Thm.~18-3 in \cite{baum2}.} Assume $\L_{\kappa\lambda}$ is antisymmetric, so $\L^{\dgt}=-\L$.
Then the dynamical semigroup $\{e^{t\L};\,t\geq0\}$ is isometric
(norm-preserving): let $t\geq0$ and $|A\kk\in\oph$ and observe that
\begin{equation}
\bb e^{t\L}(A)|e^{t\L}(A)\kk=\bb A|e^{-t\L}e^{t\L}|A\kk=\bb A|A\kk\,.
\end{equation}
Since it is clearly invertible, $e^{t\L}:\oph\rightarrow\oph$ is
a surjective map. All surjective isometric one-parameter dynamical
semigroups can be expressed as $e^{t\L}(\r)=U_{t}\r U_{t}^{\dg}$
with $U_{t}$ belonging to a one-parameter unitary group $\{U_{t};\,t\in\mathbb{R}\}$
acting on $\h$ (\cite{Kossakowski1972}, Thm.~6). Therefore, there
exists a Hamiltonian $H$ such that $U_{t}=e^{-iHt}$ and $\L(\r)=-i[H,\r]$.\selectlanguage{english}%
\end{proof}

\inputencoding{latin9}\newpage{}\foreignlanguage{english}{}%
\begin{minipage}[t]{0.5\textwidth}%
\selectlanguage{english}%
\begin{flushleft}
\begin{singlespace}``\textit{{[}...{]} the orientation in the literature
on semigroups is towards the proof of rigorous mathematical results
and hence the connections to quantum optics applications are somewhat
indirect.}''\end{singlespace}
\par\end{flushleft}
\begin{flushleft}
\hfill{}\textendash{} Howard J. Carmichael
\par\end{flushleft}\selectlanguage{english}%
\end{minipage}

\chapter{The asymptotic projection and conserved quantities\label{ch:2}}
\selectlanguage{english}%

\section{Four-corners partition of Lindbladians, with examples\label{sec:Four-corners-partition-of}\label{app:decomp}}

From the previous chapter, we learned that the four-corners projections
(\ref{eq:superproj}) partition every operator $A\in\oph$ into four
independent parts. Combining this notation with the vectorized or
\textit{double-ket} notation for matrices in \acs{OPH} (see Sec.~\ref{app:Preliminaries}),
we can express any $A$ as a vector whose components are the respective
parts. The following are therefore equivalent,
\begin{equation}
A=\begin{pmatrix}A_{\ul} & A_{\ur}\\
A_{\ll} & A_{\lr}
\end{pmatrix}\,\,\,\,\longleftrightarrow\,\,\,\,|A\kk=\left[\begin{array}{c}
|A_{\ul}\kk\\
|A_{\of}\kk\\
|A_{\lr}\kk
\end{array}\right],\label{eq:mat}
\end{equation}
and $A_{\of}=A_{\ur}+A_{\ll}$. With $A$ written as a block vector,
superoperators can now be represented as 3-by-3 block matrices acting
on said vector. Note that we use square-brackets for partitioning
\textit{superoperators} and parentheses for \textit{operators} in
\acs{OPH} {[}as in Fig.~\ref{fig:decomp} and eq.~(\ref{eq:mat}){]}.
We will do so with the Lindbladian $\L$ (\ref{eq:def}). Recall that
\begin{equation}
\L(\r)=-i[H,\r]+\half\sum_{\ell}\lind_{\ell}\left(2F^{\ell}\r F^{\ell\dg}-F^{\ell\dg}F^{\ell}\r-\r F^{\ell\dg}F^{\ell}\right)\label{eq:def-1}
\end{equation}
with Hamiltonian $H$,\textit{ }jump operators $F^{\ell}\in\oph$,
and positive rates $\lind_{\ell}$. By writing $\L=\id\L\id$ using
eqs.~(\ref{eq:superdecomp}) and (\ref{eq:off}), we find that 
\begin{eqnarray}
\L & = & \left[\begin{array}{ccc}
\L_{\ul}\, & \,\R_{\ul}\L\R_{\of}\, & \,\R_{\ul}\L\R_{\lr}\\
0 & \L_{\of} & \,\R_{\of}\L\R_{\lr}\\
0 & 0 & \,\L_{\lr}
\end{array}\right]\,,\label{eq:gen}
\end{eqnarray}
where $\L_{\emp}\equiv\R_{\emp}\L\R_{\emp}$. Note that $\L_{\ul}$
is a bona fide Lindbladian governing evolution within $\ulbig$. The
reason for the zeros in the first column is the inability of $\L$
to take anything out of $\ulbig$ (stemming from the definition of
the four-corners projections). This turns out to be sufficient for
$\R_{\lr}\L\R_{\of}$ to also be zero, leading to the block upper-triangular
form above. These constraints on $\L$ translate to well-known constraints
on the Hamiltonian and jump operators as follows.
\begin{thm}
[When Lindbladians generate decay \cite{baum2,Shabani2005,Ticozzi2008,ABFJ}]\label{prop:2}Let
$\{\pp,\qq\}$ be projections on $\h$ and $\{\R_{\ul},\R_{\ur},\R_{\ll},\R_{\lr}\}$
be their corresponding projections on $\oph$. Then
\begin{eqnarray}
F_{\ll}^{\ell} & = & 0\text{ for all }\ell\label{eq:fzero}\\
H_{\ur} & = & -\frac{i}{2}\sum_{\ell}\lind_{\ell}F_{\ul}^{\ell\dg}F_{\ur}^{\ell}\,.\label{eq:conham}
\end{eqnarray}
\end{thm}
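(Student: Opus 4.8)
The hypothesis I would use is precisely the vanishing of the first column of the block form~(\ref{eq:gen}): since $\pp\h$ is the maximal invariant (non-decaying) subspace, $\L$ cannot map $\ulbig$ out of itself, so $\R_{\ll}\L\R_{\ul}=\R_{\ur}\L\R_{\ul}=\R_{\lr}\L\R_{\ul}=0$. The plan is to feed an arbitrary operator $\r=\pp A\pp$ supported on $\pp\h$ into the explicit Lindbladian~(\ref{eq:def-1}) and then read off the two stated identities by projecting $\L(\pp A\pp)$ into the $\lrbig$ and $\ofbig$ corners, using $\pp\qq=\qq\pp=0$ throughout to kill terms.

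First I would extract~(\ref{eq:fzero}) from $\R_{\lr}\L\R_{\ul}=0$, i.e.\ from $\qq\L(\pp A\pp)\qq=0$. In the commutator $-i[H,\pp A\pp]$ and in the deterministic terms $F^{\ell\dg}F^{\ell}\pp A\pp$ and $\pp A\pp F^{\ell\dg}F^{\ell}$, every summand keeps a factor $\pp$ adjacent to one of the outer $\qq$'s and hence dies. Only the recycling term survives, leaving
\begin{equation}
\qq\L(\pp A\pp)\qq=\sum_{\ell}\lind_{\ell}\,\qq F^{\ell}\pp A\pp F^{\ell\dg}\qq=\sum_{\ell}\lind_{\ell}\,F_{\ll}^{\ell}A F_{\ll}^{\ell\dg}=0\,.
\end{equation}
Choosing $A=|\psi\ket\bra\psi|$ turns this into the sum $\sum_{\ell}\lind_{\ell}\,(F_{\ll}^{\ell}|\psi\ket)(F_{\ll}^{\ell}|\psi\ket)^{\dg}$ of positive-semidefinite operators with strictly positive weights $\lind_{\ell}$; such a sum vanishes only if each term does, so $F_{\ll}^{\ell}|\psi\ket=0$ for every $\ell$ and every $|\psi\ket$, giving $F_{\ll}^{\ell}=0$.

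With $F_{\ll}^{\ell}=0$ in hand I would then obtain~(\ref{eq:conham}) from $\R_{\ur}\L\R_{\ul}=0$, i.e.\ from $\pp\L(\pp A\pp)\qq=0$. The identity $F_{\ll}^{\ell}=0$ gives both $F^{\ell}\pp=F_{\ul}^{\ell}$ and $\pp F^{\ell\dg}\qq=(F_{\ll}^{\ell})^{\dg}=0$, so the recycling term now drops out of this corner as well. Tracking the remaining commutator and deterministic pieces, all of which carry $\pp A\pp$ on the left, collapses the expression to
\begin{equation}
\pp\L(\pp A\pp)\qq=\pp A\pp\Bigl(iH_{\ur}-\half\sum_{\ell}\lind_{\ell}F_{\ul}^{\ell\dg}F_{\ur}^{\ell}\Bigr)\,.
\end{equation}
Setting $A=I$, so that $\pp A\pp=\pp$ acts as the identity on $\pp\h$, forces the bracket to vanish and yields $H_{\ur}=-\frac{i}{2}\sum_{\ell}\lind_{\ell}F_{\ul}^{\ell\dg}F_{\ur}^{\ell}$, which is~(\ref{eq:conham}). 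The companion condition $\R_{\ll}\L\R_{\ul}=0$ produces the adjoint of this identity, consistent with Hermiticity of $H$, so it is automatically satisfied.

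The only genuinely non-mechanical step is the positivity argument in the second paragraph: cancellations between different $\ell$ are forbidden precisely because the rates $\lind_{\ell}$ are strictly positive, which is exactly where the complete-positivity structure of $\L$ enters. Everything else is bookkeeping with $\pp\qq=0$. Finally, I would note that each computation is reversible: substituting~(\ref{eq:fzero}) and~(\ref{eq:conham}) back in reproduces the vanishing first column of~(\ref{eq:gen}) (and, as remarked in the text, the vanishing of $\R_{\lr}\L\R_{\of}$), so the operator conditions are in fact equivalent to the invariance of $\ulbig$.
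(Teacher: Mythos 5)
Your proposal is correct and follows essentially the same route as the paper: project the invariance of $\ulbig$ under $\L$ onto the $\lrbig$ corner, use positivity of the surviving recycling term together with $\lind_{\ell}>0$ to force $F_{\ll}^{\ell}=0$, then project onto the $\ur$ corner to read off the condition on $H_{\ur}$. The only cosmetic difference is that you test against rank-one operators $|\psi\ket\bra\psi|$ where the paper takes a trace against a full-rank density matrix; both positivity arguments are equivalent.
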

\begin{proof}
By definition (\ref{eq:cond}), $\ulbig$ is the smallest subspace
of \acs{OPH} containing all asymptotic states. Therefore, all states
evolving under $\L$ converge to states in $\ulbig$ as $t\rightarrow\infty$
(\cite{baum2}, Thm.~2-1). This implies invariance, i.e., states
$\r_{\ul}=\R_{\ul}(\r)$ remain there under application of $\L$:
\begin{equation}
\L(\r_{\ul})=\L\R_{\ul}(\r)=\R_{\ul}\L\R_{\ul}(\r)\,.\label{eq:lp}
\end{equation}
Applying $\R_{\lr}$, we get $\R_{\lr}\L\R_{\ul}(\r)=\sum_{\ell}\lind_{\ell}F_{\ll}^{\ell}\r F_{\ll}^{\ell\dg}=0$
since the projections are mutually orthogonal. Taking the trace, $\bb I|\R_{\lr}\L\R_{\ul}|\r\kk=\sum_{\ell}\lind_{\ell}\tr\{\r F_{\ll}^{\ell\dg}F_{\ll}^{\ell}\}=0$.
If $\r$ is a full rank density matrix ($\text{rank}\{\r\}=\tr\{\pp\}$),
then each summand above is non-negative (since $\lind_{\ell}>0$ and
$F_{\ll}^{\ell\dg}F_{\ll}^{\ell}$ are positive semidefinite). Thus
the only way for the above to hold for all $\r$ is for $F_{\ll}^{\ell\dg}F_{\ll}^{\ell}=0$
for all $\ell$, which implies that $F_{\ll}^{\ell}=0$. Applying
$\R_{\ur}$ to eq.~(\ref{eq:lp}) and simplifying using $F_{\ll}^{\ell}=0$
gives
\begin{equation}
\R_{\ur}\L\R_{\ul}(\r)=\pp\r\left(iH_{\ur}-\half\sum_{\ell}\lind_{\ell}F_{\ul}^{\ell\dg}F_{\ur}^{\ell}\right)=0\,,
\end{equation}
implying the condition on $H_{\ur}$.
\end{proof}
The constraints on $H_{\of}$ and $F_{\ll}^{\ell}$ (due to Hermiticity,
$H_{\ll}=H_{\ur}^{\dg}$) leave only their complements as degrees
of freedom. The four-corners decomposition provides simple expressions
for the surviving matrix elements of $\L$ (\ref{eq:gen}) in terms
of $H_{\di},F_{\thr}^{\ell}$; these are shown below. From eq.~(\ref{eq:conham}),
one can see that $H_{\ur}$ in general depends on the nonzero jump
operator rates $\lind_{\ell}$, demonstrating an intricate cancellation
of Lindbladian effects via a Hamiltonian term. However, $\pp$ is
independent of $\lind_{\ell}$ in many physically relevant ($\k$\textit{-robust}
\cite{Ticozzi2008}) Lindbladians, and in those cases $H_{\of}=0$
and either $F_{\ul}^{\ell}=0$ or $F_{\ur}^{\ell}=0$ for each $\ell$.

We now list all of the matrix elements of $\L$ (\ref{eq:gen}) and
mention important special cases of \foreignlanguage{american}{\acs{DFS}}
type, making contact with previous works and applications. Based on
conditions (\ref{eq:cond}) and after simplifications due to Thm.~\ref{prop:2},
the non-zero elements of eq.~(\ref{eq:gen}) acting on a Hermitian
matrix $\r=\r_{\ul}+\r_{\of}+\r_{\lr}$ are\begin{subequations}
\begin{align}
\L_{\ul}(\r) & =-i\left[H_{\ul},\r_{\ul}\right]+\half\sum_{\ell}\lind_{\ell}\left(2F_{\ul}^{\ell}\r_{\ul}F_{\ul}^{\ell\dg}-F_{\ul}^{\ell\dg}F_{\ul}^{\ell}\r_{\ul}-\r_{\ul}F_{\ul}^{\ell\dg}F_{\ul}^{\ell}\right)\label{eq:term0}\\
\L_{\ur}(\r) & =-i\left(H_{\ul}\r_{\ur}-\r_{\ur}H_{\lr}\right)+\half\sum_{\ell}\lind_{\ell}\left[2F_{\ul}^{\ell}\r_{\ur}F_{\lr}^{\ell\dg}-F_{\ul}^{\ell\dg}F_{\ul}^{\ell}\r_{\ur}-\r_{\ur}(F^{\ell\dg}F^{\ell})_{\lr}\right]\\
\L_{\ll}(\r) & =[\L_{\ur}(\r)]^{\dg}\phantom{+\half\sum_{\ell}}\\
\L_{\lr}(\r) & =-i\left[H_{\lr},\r_{\lr}\right]+\half\sum_{\ell}\lind_{\ell}\left[2F_{\lr}^{\ell}\r_{\lr}F_{\lr}^{\ell\dg}-(F^{\ell\dg}F^{\ell})_{\lr}\r_{\lr}-\r_{\lr}(F^{\ell\dg}F^{\ell})_{\lr}\right]\label{eq:llr}\\
\R_{\ul}\L\R_{\of}(\r) & =\sum_{\ell}\lind_{\ell}\left(F_{\ul}^{\ell}\r_{\of}F_{\ur}^{\ell\dg}-\r_{\of}F_{\ur}^{\ell\dg}F_{\ul}^{\ell}\right)+H.c.\phantom{+\half\sum_{\ell}}\label{eq:term}\\
\R_{\of}\L\R_{\lr}(\r) & =\sum_{\ell}\lind_{\ell}\left(F_{\ur}^{\ell}\r_{\lr}F_{\lr}^{\ell\dg}-F_{\ul}^{\ell\dg}F_{\ur}^{\ell}\r_{\lr}\right)+H.c.\phantom{+\half\sum_{\ell}}\\
\R_{\ul}\L\R_{\lr}(\r) & =\sum_{\ell}\lind_{\ell}F_{\ur}^{\ell}\r_{\lr}F_{\ur}^{\ell\dg}\,.\phantom{+\half\sum_{\ell}}\label{eq:transfer}
\end{align}
\end{subequations}Note that $(F^{\ell\dg}F^{\ell})_{\lr}=F_{\ur}^{\ell\dg}F_{\ur}^{\ell}+F_{\lr}^{\ell\dg}F_{\lr}^{\ell}$,
that $\R_{\ur}\L\R_{\ll}=\R_{\ll}\L\R_{\ur}=0$, i.e., evolution of
coherences decouples under this decomposition, and that the transfer
term $\R_{\ul}\L\R_{\lr}$ is in the form of a quantum channel.

\subsection{DFS case\label{subsec:DFS-case}}

Recall from Sec.~\ref{subsec:Presence-of-decay} that now $\ash=\ulbig$
and $\pp=\sum_{k=0}^{d-1}|\psi_{k}\ket\bra\psi_{k}|$ is the projection
on the \foreignlanguage{american}{\acs{DFS}} states $\{|\psi_{k}\ket\}_{k=0}^{d-1}$.
In the case of a non-steady \foreignlanguage{american}{\acs{DFS}},
any residual evolution within $\ulbig$ is exclusively unitary and
generated by the Hamiltonian superoperator $\sout\equiv\L_{\ul}$.
The jump operators in $\L_{\ul}$ (\ref{eq:term0}) must then act
trivially, and the most general condition for them (for some complex
constants $a_{\ell}$) is 
\begin{equation}
F_{\ul}^{\ell}=a_{\ell}\pp\,.
\end{equation}
This implies that $\R_{\ul}\L\R_{\of}$ (\ref{eq:term}) is zero and
the partition (\ref{eq:gen}) becomes
\begin{eqnarray}
\L & = & \left[\begin{array}{ccc}
\sout\, & 0 & \,\R_{\ul}\L\R_{\lr}\\
0 & \,\,\,\L_{\of}\,\,\, & \,\R_{\of}\L\R_{\lr}\\
0 & 0 & \,\L_{\lr}
\end{array}\right].\label{eq:gen-1}
\end{eqnarray}
This extra zero will prove important when we examine the Kubo formula
for this case in Ch.~\ref{ch:4}.
\selectlanguage{american}%

\subsection{Semisimple DFS case\label{subsec:Semisimple-DFS-case}}

\selectlanguage{english}%
A simplified (\textit{semisimple} \cite{Lidar1998})\textit{ }\foreignlanguage{american}{\acs{DFS}}
case can be obtained by setting $F_{\ul}^{\ell}=0$, i.e., $a_{\ell}=0$.
Then, all \foreignlanguage{american}{\acs{DFS}} states are annihilated
by the jumps, 
\begin{equation}
F^{\ell}|\psi_{k}\ket=0\,.
\end{equation}
If $|\psi_{k}\ket$ are also eigenstates of $H$, they are called
\textit{dark states} \cite{Kraus2008}. We can also set $H=0$, meaning
that the \foreignlanguage{american}{\acs{DFS}} is stationary ($\hout=0$).
Stationarity leads to only one new zero matrix element of $\L$, 
\begin{eqnarray}
\L & = & \left[\begin{array}{ccc}
0 & 0 & \,\R_{\ul}\L\R_{\lr}\\
0 & \,\,\,\L_{\of}\,\,\, & \,\R_{\of}\L\R_{\lr}\\
0 & 0 & \,\L_{\lr}
\end{array}\right].\label{eq:gen-1-2}
\end{eqnarray}
However, the $\ofbig$ part now decays deterministically:
\begin{equation}
\L_{\of}(\r)=-\half\sum_{\ell}\lind_{\ell}\{\r_{\of},(F^{\ell\dg}F^{\ell})_{\lr}\}\equiv-\{\r_{\of},\hdg\}\,,\label{eq:disgap}
\end{equation}
where the ground states of the \textit{decoherence} \cite{Karasik2008}
or\textit{ parent }\cite{Iemini2015}\textit{ Hamiltonian }$\hdg$
are exactly the \foreignlanguage{american}{\acs{DFS}} states. We
see that the excitation gap $\adg$ (where $\textsf{edg}$ stands
for \textit{effective dissipative gap}) of this Hamiltonian is relevant
in Lindbladian first-order perturbation theory in Sec.~\ref{subsec:Example:-decoherence-Hamiltonian}.
All examples considered in Chs.~\ref{ch:7}-\ref{ch:8} are of this
type.

\subsection{Non-Hermitian Hamiltonian systems\label{subsec:non-Hermitian}}

Now let us keep $H\neq0$ and further simplify the jumps to 
\begin{equation}
F=F_{\ur}\,.
\end{equation}
Having $F_{\lr}=0$ leads to significant simplification of the matrix
elements of $\L$ for the \foreignlanguage{american}{\acs{DFS}} case
(\ref{eq:gen-1}). Now, $\R_{\of}\L\R_{\lr}=0$ and evolution within
$\thobig$ has no jump terms and is governed solely by the deterministic
part (\ref{eq:kkkkkkk}) of $\L$, 
\begin{equation}
K=K_{\di}\equiv H_{\di}-i\half\sum_{\ell}\lind_{\ell}F_{\ur}^{\ell\dg}F_{\ur}^{\ell}\,\,\,\,\,\,\text{with superoperator}\,\,\,\,\,\,\K(\r)\equiv-i(K\r-\r K^{\dg})\,.
\end{equation}
Since $K_{\ul}=H_{\ul}$, evolution within $\ulbig$ is unitary and
governed by the Hamiltonian piece $H_{\ul}$. The remaining $\thobig$
parts evolve under $K_{\lr}$, which includes the \textit{positive
definite} ($\sum_{\ell}\lind_{\ell}F_{\ur}^{\ell\dg}F_{\ur}^{\ell}>0$
on $\lrbig$) contribution due to the jumps and thereby guarantees
decay of everything initially in $\thobig$. Due to trace preservation,
the transfer term (\ref{eq:transfer}) $\R_{\ul}\L\R_{\lr}\neq0$
makes sure that populations in $\lrbig$ are transferred into $\ulbig$
in the limit of infinite time, making up the only non-deterministic
part of the evolution. The full decomposition is then
\begin{eqnarray}
\L & = & \left[\begin{array}{ccc}
\K_{\ul} & 0 & \R_{\ul}\L\R_{\lr}\\
0 & \,\,\,\K_{\of}\,\,\, & 0\\
0 & 0 & \K_{\lr}
\end{array}\right].\label{eq:gen-1-1}
\end{eqnarray}
Thus, in the $\thobig$ sector, the evolution is generated by the
non-Hermitian Hamiltonian $K$. In a related work \cite{Venuti2017},
a non-Hermitian Hamiltonian governs evolution in $\ofbig$ but not
in $\lrbig$.

It is useful here to remark on the relation between the four-corners
projections and the Feshbach projection method \cite{Feshbach1958,Rotter2015}.
The former splits the Hilbert space into decaying and non-decaying
parts while the latter splits it into subspaces spanned by bound and
scattering states. Connections between the two become clearer when
one looks at the effect of perturbations which excite states from
$\ulbig$ into $\lrbig$ on $\L$ of the type (\ref{eq:gen-1-1}).
We calculate leading-order corrections to evolution within $\ulbig$
due to these effects in Sec.~\ref{subsec:Effective-Operator-Formalism},
showing that they are identical to those derived by the effective
operator formalism \cite{Reiter2012} \textemdash{} an extension of
the Feshbach method to Lindbladians.
\selectlanguage{american}%

\subsection{Quantum channel simulation\foreignlanguage{english}{\label{subsec:Quantum-channel-simulation}}}

\selectlanguage{english}%
In this most simplified example, \acs{ASH} is a steady \foreignlanguage{american}{\acs{DFS}}
($\hout=H_{\ul}=0$) of dimension $d$ and there is a decaying subspace
$\lrbig$ of dimension $N-d$. We assume that $F=F_{\ur}$, that all
rates $\lind_{\ell}$ of the Lindbladian are equal to one rate $\kef$,
and that 
\begin{equation}
\sum_{\ell}F^{\ell\dg}F^{\ell}=\sum_{\ell}F_{\ur}^{\ell\dg}F_{\ur}^{\ell}=\qq\,.\label{eq:cptp2}
\end{equation}
The last condition guarantees that the transfer term (\ref{eq:transfer}),
$\R_{\ul}\L\R_{\lr}\equiv\kef\E$, is proportional to a bona fide
quantum channel $\E$ from a $N-d$-dimensional input space $\lrbig$
to a $d$-dimensional output space $\ulbig$. The condition (\ref{eq:cptp2})
also makes sure that $\thobig$ decays uniformly:
\begin{eqnarray}
\L & = & \kef\left[\begin{array}{ccc}
0 & 0 & \E\\
0 & \,\,\,-\half\R_{\of}\,\,\, & 0\\
0 & 0 & -\R_{\lr}
\end{array}\right].\label{eq:gen-1-1-1}
\end{eqnarray}
The rate $\kef$ is the inverse of the uniform relaxation time $T_{1}$
for $\lrbig$. Notice how the uniform coherence relaxation time $T_{2}=\half T_{1}$.

The example is constructed such that the piece of the asymptotic projection
$\ppp\R_{\lr}$ (\ref{eq:te}) taking states from $\lrbig$ into $\ulbig$
is exactly the channel $\E$. While we derive the equation for $\ppp$
below, we apply that result (Thm.~\ref{prop:3}) here since the simplifications
are trivial: 
\begin{equation}
\ppp\R_{\lr}=-(\R_{\ul}\L\R_{\lr})\L_{\lr}^{-1}=\frac{1}{\kef}\R_{\ul}\L\R_{\lr}=\E\,.
\end{equation}
In other words, all quantum channels can be embedded in some $\ppp=\lim_{t\rightarrow\infty}e^{t\L}$.
Note that this scheme is different from constructing a Lindbladian
$\L\equiv\E-\id$ out of a channel $\E$, shown in eq.~(\ref{eq:stab}).
Here, the channel $\E$ does not act on all of $\oph$ because the
channel's input and output spaces are different subspaces of $\oph$.
It is impossible to simulate a more general channel that maps all
of $\oph$ to itself using only a Lindbladian \cite{Wolf2008}. 

The immediate application of this construction is quantum error correction
\cite{Terhal2015}, where $\lrbig$ is the subspace into which a protected
state (initially in $\ulbig$) is taken after application of an error
channel. The channel $\ppp\R_{\lr}$ then recovers the information,
and the above construction shows that all such recovery channels can
be implemented to arbitrary accuracy by turning on Lindbladian evolution
for a sufficiently long period of time.

\section{The asymptotic projection}

Armed with the partition of $\L$ from eq.~(\ref{eq:gen}), here
we study cases where \acs{ASH} contains unitarily evolving states
{[}$\hout\neq0$ from eq.~(\ref{eq:te}){]} and formally introduce
the asymptotic projection $\ppp$. The basis for \acs{ASH} consists
of right eigenmatrices of $\L$ with pure imaginary eigenvalues. Recalling
eq.~(\ref{eq:expansion}), we can expand $|\rout\kk$ in such a basis
since exponentials ($e^{t\varLambda}$) of any eigenvalues $\varLambda$
with negative real parts decay to zero for large $t$. We call the
non-decaying eigenmatrices \textit{right asymptotic eigenmatrices}
$|\St_{\la\m}\kk$ or \acs{ASH} \textit{basis elements} with purely
imaginary eigenvalue $\varLambda=i\la$ (used here as an index) and
degeneracy index $\m$ (that depends on $\la\in\mathbb{R}$). By definition,
$|\St_{\la\m}\kk\in\ulbig$ and the eigenvalue equation is 
\begin{equation}
\L|\St_{\la\m}\kk=i\la|\St_{\la\m}\kk\,.\label{eq:reig}
\end{equation}
Since $\L$ is not always diagonalizable, any degeneracy may induce
a non-trivial Jordan block structure for a given $\la$ (see Sec.~\ref{subsec:More-on-Lindbladians}).
However, we show below that all Jordan blocks corresponding to asymptotic
eigenmatrices are diagonal. Therefore, there exists a dual set of
\textit{conserved quantities} or \textit{left asymptotic eigenmatrices}
$\bb\J^{\la\m}|$ such that
\begin{equation}
\bb\J^{\la\m}|\L=i\la\bb\J^{\la\m}|\longleftrightarrow\L^{\dgt}(\J^{\la\m})=-i\la(\J^{\la\m})\,.\label{eq:leig}
\end{equation}
We state and prove this duality in the following correspondence. 
\begin{thm}
[Conserved quantity -- steady state correspondence \cite{pub011}]\label{thm:dual}Let
$\{|\St_{\la\m}\kk\}_{\la,\m}$ be an orthonormal basis for $\ash\subseteq\oph$
of dimension $D$, i.e., $\bb\St_{\la\m}|\St_{\varTheta\n}\kk=\d_{\la\varTheta}\d_{\m\n}$.
Then, for all $\rin\in\oph$, the asymptotic state $\rout$ is expressible
as
\begin{equation}
|\rout(t)\kk=\sum_{\la,\m}c_{\la\m}e^{i\la t}|\St_{\la\m}\kk\label{eq:inf}
\end{equation}
and there exist $D$ conserved quantities $\{J^{\la\m}\}_{\la,\m}$
such that 
\begin{align}
c_{\la\m} & \equiv\bb\J^{\la\m}|\rin\kk=\tr\{(\J^{\la\m})^{\dg}\rin\}\label{eq:coeff-1}\\
\bb\J^{\la\m}|\St_{\varTheta\n}\kk & =\d_{\la\varTheta}\d_{\m\n}\,.\label{eq:norms}
\end{align}
\end{thm}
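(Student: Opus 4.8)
The plan is to read the expansion (\ref{eq:inf}) directly off the Jordan-form formula (\ref{eq:expansion}) for $e^{t\L}$, and then manufacture the dual conserved quantities by biorthonormalizing the left eigenmatrices against the given orthonormal right basis. First I would separate the evolution $e^{t\L}|\rin\kk$ in eq.~(\ref{eq:expansion}) into a non-decaying part and a decaying part: every term carrying an eigenvalue with $\Re\varLambda<0$ has a factor $e^{\varLambda t}$ that sends it to zero as $t\rightarrow\infty$ (the accompanying polynomial factor $(\d_{\varLambda}t)^{\n-\m}$ does not prevent this), so the surviving contributions come exactly from the purely imaginary eigenvalues $\varLambda=i\la$. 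By eq.~(\ref{eq:reig}) these are precisely the eigenmatrices spanning $\ash$.

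Next I would invoke the fact, established just below eqs.~(\ref{eq:expjor})--(\ref{eq:expansion}), that every Jordan block with a purely imaginary eigenvalue is diagonal. Hence $\d_{\varLambda}=0$ there and the inner sum over $\n\geq\m$ in eq.~(\ref{eq:expansion}) collapses to its $\n=\m$ term, leaving
\begin{equation}
|\rout(t)\kk=\sum_{\la,\m}e^{i\la t}|R_{i\la,\m}\kk\,\bb L^{i\la,\m}|\rin\kk\,,
\end{equation}
where $R_{i\la,\m}$ and $L^{i\la,\m}$ are the right and left eigenmatrices at eigenvalue $i\la$. Since the spectral decomposition is free to take the given orthonormal basis as its right eigenmatrices, I set $R_{i\la,\m}=\St_{\la\m}$ and define $\J^{\la\m}\equiv L^{i\la,\m}$, which reproduces eq.~(\ref{eq:inf}) with $c_{\la\m}=\bb\J^{\la\m}|\rin\kk$, i.e.\ eq.~(\ref{eq:coeff-1}).

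It then remains to secure the biorthogonality (\ref{eq:norms}), using the freedom to re-choose $\J^{\la\m}$ within each left eigenspace. For $\la\neq\varTheta$ it is automatic: evaluating $\bb\J^{\la\m}|\L|\St_{\varTheta\n}\kk$ two ways, via $\bb\J^{\la\m}|\L=i\la\bb\J^{\la\m}|$ from eq.~(\ref{eq:leig}) and via $\L|\St_{\varTheta\n}\kk=i\varTheta|\St_{\varTheta\n}\kk$, gives $i(\la-\varTheta)\bb\J^{\la\m}|\St_{\varTheta\n}\kk=0$, so the overlap vanishes. For $\la=\varTheta$, diagonalizability of $\L$ on $\ash$ means the $i\la$ left eigenspace pairs nondegenerately with the span of $\{|\St_{\la\n}\kk\}_{\n}$; concretely the Gram matrix $G_{\m\n}\equiv\bb L^{i\la,\m}|\St_{\la\n}\kk$ is invertible, so replacing the $L^{i\la,\m}$ by the appropriate linear combinations dual to $G$ (still left eigenmatrices at $i\la$) enforces $\bb\J^{\la\m}|\St_{\la\n}\kk=\d_{\m\n}$ while leaving $c_{\la\m}=\bb\J^{\la\m}|\rin\kk$ intact.

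Finally I would note that the $\J^{\la\m}$ are conserved quantities in the sense of eq.~(\ref{eq:leig}): since $\bb\J^{\la\m}|\L=i\la\bb\J^{\la\m}|$, the overlap $\bb\J^{\la\m}|\r(t)\kk=e^{i\la t}c_{\la\m}$ evolves only by a fixed phase (and is strictly conserved when $\la=0$), so the $D$ numbers $c_{\la\m}$ are determined by $\rin$ alone --- which is exactly the statement that $\rout$ depends on the initial state only through these overlaps. The single load-bearing step is the $\la=\varTheta$ nondegeneracy of the pairing, where diagonalizability of the asymptotic block is indispensable: a defective Jordan block would make $G$ singular and leave some asymptotic eigenmatrix without a dual conserved quantity. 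That obstruction is precisely what the already-proven diagonality of purely imaginary Jordan blocks rules out.
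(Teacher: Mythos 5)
Your proposal is correct and follows essentially the same route as the paper: both pass to the Jordan normal form, use the already-established diagonality of the purely imaginary Jordan blocks to reduce the evolution to a sum of phase factors times rank-one spectral projections, and then biorthogonalize the left eigenmatrices against the given right basis (the paper cites a Gram--Schmidt-like procedure where you invert the Gram matrix, which is the same freedom used slightly more explicitly). Your added remark that a defective block at $i\la$ would make the pairing degenerate is exactly the obstruction the paper's preliminary diagonality argument is there to rule out.
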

\selectlanguage{american}%
\begin{proof}
The matrix form of $\L$ can be put into Jordan normal form $\mathcal{C}$
via a non-unitary similarity transformation $\NU$, 
\begin{equation}
\L=\NU^{-1}\mathcal{C}\NU\,.\label{jordan}
\end{equation}
We have already seen from eq.~(\ref{eq:expjor}) that all Jordan
blocks with pure imaginary eigenvalues $i\la$ must be diagonal in
order for the rules of quantum mechanics to not be violated as $t\rightarrow\infty$.
Therefore, the respective transformed left and right eigenmatrices,
$|\tilde{\St}_{\la\m}\kk=\NU|\St_{\la\m}\kk$ and $\bb\tilde{J}^{\la\m}|=\bb J^{\la\m}|\NU^{-1}$,
are linearly independent and orthogonal to all other basis matrices
of \acs{OPH}. Thus they can be made to be \textit{biorthogonal} \cite{brink2001},
$\bb\J^{\la\m}|\St_{\varTheta\n}\kk=\d_{\la\varTheta}\d_{\m\n}$,
\foreignlanguage{english}{by }a procedure similar to the Gram-Schmidt
process. It is clear that once the transformed vectors are biorthogonal,
the original ones are also, satisfying eq.~(\ref{eq:norms}). Taking
the expectation value of $\J^{\la\m}$ with respect to $\rout$ yields
\begin{equation}
\bb\J^{\la\m}|\r(t)\kk=\bb\J^{\la\m}|e^{t\L}|\rin\kk=e^{i\la t}\bb\J^{\la\m}|\rin\kk\equiv c_{\la\m}e^{i\la t}\,.\label{eq:ccdynam}
\end{equation}
Since $\J^{\la\m}$ are dual to $\St_{\la\m}$ in the sense described
above, eq.~(\ref{eq:inf}) holds.
\end{proof}
\selectlanguage{english}%
The coefficients $c_{\la\m}$ determine the footprint that $\rin$
leaves on $\rout$, implying that $\rout$ does not depend on dynamics
at any intermediate times. In general, any part of $|\rin\kk$ not
in the kernel of $\ppp$ imprints on the asymptotic state since, by
definition, that part overlaps with some $\J^{\la\m}$. Equation (\ref{eq:ccdynam})
tells us that the $\J^{\la\m}$ are either conserved in time (when
$\la=0$) or oscillating indefinitely. While the term \textit{non-decaying
quantity} is thus more accurate, we use the term conserved quantity
to describe all $\J^{\la\m}$ since they are are bona fide conserved
quantities in the rotating frame of $\hout$ (as we now show). Since
it was shown that evolution of asymptotic states is exclusively unitary
(\cite{baum2}, Thm.~2), it must be that the eigenvalue set $\{\la\}$
is that of a Hamiltonian superoperator, which we define to be $\sout\equiv-i[\hout,\cdot]$.
In other words, we use the set $\{\la\}$ to construct a Hamiltonian
$\hout\in\ulbig$ (defined up to a constant energy shift) such that
each $\la$ is a difference of the energies of $\hout$ and $|\St_{\la\m}\kk$
are eigenmatrices of $\sout$.\footnote{The eigenvalues $i\la$ can be extracted from $H_{\ul}$, as shown
in eq.~(\ref{eq:diffe}). Note that $\sout$ shares the same eigenvalues
as $\ppp\L\ppp$, but $\sout\neq\ppp\L\ppp$ because the latter is
not anti-Hermitian.} Because of this, the eigenmatrices $\{\St,\J\}$ must come in complex
conjugate pairs: $\St_{-\la\m}=\St_{\la\m}^{\dg}$ (which obstructs
us from constructing a Hermitian basis for $\{\St_{\la\ne0,\m}\}$)
and same for $\J^{\la\m}$. The explicit form of $\hout$ depends
on the structure of \acs{ASH}.

Combining $\ppp$ with the definition of $\sout$ allows us to rearrange
eq.~(\ref{eq:inf}) into
\begin{equation}
|\rout(t)\kk\equiv\lim_{t\rightarrow\infty}e^{t\L}|\rin\kk=e^{t\sout}\sum_{\la,\m}|\St_{\la\m}\kk\bb\J^{\la\m}|\rin\kk=e^{t\sout}\ppp|\rin\kk\,,\label{eq:asymp}
\end{equation}
where the outer products are used to express the asymptotic projection
\begin{equation}
\ppp\equiv\sum_{\varDelta,\m}|\St_{\la\m}\kk\bb\J^{\la\m}|\,.\label{eq:asproj}
\end{equation}
This is indeed a projection ($\ppp^{2}=\ppp$) due to eq.~(\ref{eq:norms}).
This projection is onto the \textit{peripheral spectrum} of $e^{t\L}$
\textemdash{} all eigenvalues of $e^{t\L}$ (for $t>0$) whose modulus
is one. As a result, one can apply standard formulas to express it
in two other ways,
\begin{align}
\ppp & =\lim_{T\rightarrow\infty}\frac{1}{T}\sum_{\la}\intop_{0}^{T}dte^{t(\L-i\la)}=-\frac{1}{2\pi i}\ointop_{\G}dz\left(\L-z\right)^{-1}\,.
\end{align}
The first (ergodic) way is to take a proper limit such that the projections
on all eigenspaces associated with $i\la$ remain {[}\cite{wolf2010},
eq.~(6.15){]}. The second way defines $\ppp$ in terms of Riesz projections
on the relevant eigenspaces: $\G$ is the contour in the complex plane
which encircles all $i\la$ and no other points in the spectrum of
$\L$ {[}\cite{katobook}, Ch.~1, eq.~(5.23){]}. While both ways
exist and are equal to each other since we assume a finite-dimensional
\acs{OPH}, they are also valid for some infinite-dimensional examples
(see Chs.~\ref{ch:7}-\ref{ch:8}).

\section{Analytical formula for conserved quantities}

We proceed to determine $\J^{\la\m}$ by plugging in the four-corners
partition of $\L$ (\ref{eq:gen}) into the eigenvalue equation (\ref{eq:leig}).
The block upper-triangular structure of $\L$ readily implies that
$|\J_{\ul}^{\la\m}\kk$ are left eigenmatrices of $\L_{\ul}$: 
\begin{equation}
\bb\J_{\ul}^{\la\m}|\L_{\ul}=i\la\bb\J_{\ul}^{\la\m}|\,.
\end{equation}
Writing out the conditions on the remaining components $|\J_{\tho}^{\la\m}\kk$
yields an analytic expression for $|\J^{\la\m}\kk$. We state this
formula below, noting that $[\L_{\lr},\R_{\lr}]=0$. Plugging that
result into eq.~(\ref{eq:asproj}) and setting $\la=0$ yields the
formula for $\ppp$ (\ref{eq:maindecomp}) for the case when $\hout=0$.
We finish this Chapter by going through the ramifications of the formula
for $\ppp$ for the various types of \acs{ASH}, introducing notation
used throughout the rest of the thesis, and comparing our conserved
quantities to those of Hamiltonian-based systems. In Ch.~\ref{ch:3},
we apply the decomposition of $\ppp$ into conserved quantities $J^{\la\m}$
to study examples of footprints left on $\rout$ by $\rin$.
\begin{thm}
[Analytical expression for conserved quantities \cite{ABFJ}]\label{prop:3}The
left eigenmatrices of $\L$ corresponding to pure imaginary eigenvalues
$i\la$ are
\begin{equation}
\bb\J^{\la\m}|=\bb\J_{\ul}^{\la\m}|\left(\R_{\ul}-\L\frac{\R_{\lr}}{\L_{\lr}-i\la\R_{\lr}}\right)\,\,\,\longleftrightarrow\,\,\,\J^{\la\m}=\J_{\ul}^{\la\m}-\R_{\ul}\L^{\dgt}\frac{\R_{\lr}}{\L_{\lr}^{\dgt}+i\la\R_{\lr}}(\J_{\ul}^{\la\m})\,,\label{eq:main}
\end{equation}
where $\bb\J_{\ul}^{\la\m}|$ are left eigenmatrices of $\L_{\ul}$.
\end{thm}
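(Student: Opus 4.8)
The plan is to substitute the block--upper--triangular form (\ref{eq:gen}) of $\L$ into the left eigenvalue equation (\ref{eq:leig}) and solve the resulting triangular linear system by back-substitution, starting from the $\ulbig$ block. Decomposing $\bb\J^{\la\m}|=\bb\J_{\ul}^{\la\m}|+\bb\J_{\of}^{\la\m}|+\bb\J_{\lr}^{\la\m}|$ and reading $\bb\J^{\la\m}|\L=i\la\bb\J^{\la\m}|$ one block-column at a time, the vanishing first block-column of (\ref{eq:gen}) decouples the top relation $\bb\J_{\ul}^{\la\m}|\L_{\ul}=i\la\bb\J_{\ul}^{\la\m}|$, which is exactly the assertion that $\bb\J_{\ul}^{\la\m}|$ is a left eigenmatrix of the genuine Lindbladian $\L_{\ul}$. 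This pins down the free data (indexed by $\m$), and the two remaining components are then slaved to it.

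Next I would write the $\ofbig$- and $\lrbig$-column equations,
\begin{align}
\bb\J_{\ul}^{\la\m}|(\R_{\ul}\L\R_{\of})+\bb\J_{\of}^{\la\m}|\L_{\of} &= i\la\bb\J_{\of}^{\la\m}|\,,\nonumber\\
\bb\J_{\ul}^{\la\m}|(\R_{\ul}\L\R_{\lr})+\bb\J_{\of}^{\la\m}|(\R_{\of}\L\R_{\lr})+\bb\J_{\lr}^{\la\m}|\L_{\lr} &= i\la\bb\J_{\lr}^{\la\m}|\,,\nonumber
\end{align}
and solve them by inverting $\L_{\of}-i\la\R_{\of}$ and $\L_{\lr}-i\la\R_{\lr}$ on their respective blocks. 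For the $\ash=\ulbig$ (DFS-type) structure (\ref{eq:gen-1}), in which $\R_{\ul}\L\R_{\of}=0$, the first line forces $\bb\J_{\of}^{\la\m}|=0$, the coupling $\bb\J_{\of}^{\la\m}|(\R_{\of}\L\R_{\lr})$ drops, and the second collapses to $\bb\J_{\lr}^{\la\m}|=-\bb\J_{\ul}^{\la\m}|(\R_{\ul}\L\R_{\lr})(\L_{\lr}-i\la\R_{\lr})^{-1}$. Reassembling the three pieces and using $\bb\J_{\ul}^{\la\m}|\L\R_{\lr}=\bb\J_{\ul}^{\la\m}|(\R_{\ul}\L\R_{\lr})$ together with $[\L_{\lr},\R_{\lr}]=0$ yields the compact bra form (\ref{eq:main}); the operator form follows by taking the $\dgt$-adjoint, using Hermiticity of the corner projections and $(\L_{\lr}-i\la\R_{\lr})^{\dgt}=\L_{\lr}^{\dgt}+i\la\R_{\lr}$. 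The disappearance of $\bb\J_{\of}^{\la\m}|$ is precisely the structural reason $\rout$ carries no dependence on the initial $\ofbig$ coherences in (\ref{eq:maindecomp}) (for general $\ash\subsetneq\ulbig$ the surviving $\bb\J_{\of}^{\la\m}|$ contributes an extra coherence-block term).

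The hard part will be justifying the two inversions, i.e.\ that $\L_{\of}-i\la\R_{\of}$ and $\L_{\lr}-i\la\R_{\lr}$ are invertible on $\ofbig$ and $\lrbig$. The lever is the defining property (\ref{eq:cond}) that $\ulbig$ is the \emph{maximal} non-decaying subspace, so every operator supported on $\ofbig\oplus\lrbig$ is driven to zero as $t\to\infty$ (the convergence result of \cite{baum2} already invoked in the proof of Thm.~\ref{prop:2}). Hence the restricted semigroups $e^{t\L_{\of}}$ and $e^{t\L_{\lr}}$ are strictly contractive and $\mathrm{spec}(\L_{\of})$, $\mathrm{spec}(\L_{\lr})$ lie in the open left half-plane; in particular the pure imaginary number $i\la$ is an eigenvalue of neither block, so both shifted operators are invertible and the back-substitution is well defined. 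As a final consistency check I would confirm that the resulting $\ppp=\sum_{\la,\m}|\St_{\la\m}\kk\bb\J^{\la\m}|$ is idempotent and reduces, at $\la=0$, to (\ref{eq:maindecomp}).
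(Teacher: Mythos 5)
Your overall strategy --- back-substitution through the block-triangular left eigenvalue equation, with invertibility of the shifted corner blocks secured by the maximality of the non-decaying subspace --- is the same as the paper's, and your handling of the $\lrbig$ equation and of the invertibility of $\L_{\lr}-i\la\R_{\lr}$ matches the paper's contradiction argument. However, there is a genuine gap in how you dispose of the $\ofbig$ component, and it leads you to a wrong conclusion. You only eliminate the coupling term $\bb\J_{\ul}^{\la\m}|(\R_{\ul}\L\R_{\of})$ by restricting to the DFS structure (\ref{eq:gen-1}), where $\R_{\ul}\L\R_{\of}=0$ as a superoperator, and your closing parenthetical asserts that for general $\ash\subsetneq\ulbig$ a surviving $\bb\J_{\of}^{\la\m}|$ contributes an extra coherence-block term. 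The theorem, though, is stated (and is true) for arbitrary $\L$: one always has $\J_{\of}^{\la\m}=0$, and formula (\ref{eq:main}) carries no such correction.

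The missing ingredient is the lemma that $[F_{\ul}^{\ell},\J_{\ul}^{\la\m}]=[F_{\ul}^{\ell\dg},\J_{\ul}^{\la\m}]=0$ for every jump operator and every conserved quantity. The paper proves this by evaluating the dissipation function $\dfunc(\J_{\ul}^{\la\m})$ of $\L_{\ul}$, tracing the resulting identity against a full-rank steady state of $\L_{\ul}$ (such a state exists precisely because $\pp$ is chosen minimal, so that $\ulbig$ supports a steady state of maximal rank), and invoking positivity of each summand to force every commutator to vanish. With that in hand, the explicit form (\ref{eq:term}) of $\R_{\ul}\L\R_{\of}$ shows that $\R_{\of}\L^{\dgt}\R_{\ul}(\J_{\ul}^{\la\m})=0$ even though $\R_{\ul}\L\R_{\of}\neq0$ in general; the $\ofbig$ equation then reduces to a pure eigenvalue equation for $\L_{\of}$ at $i\la$, and your spectral argument (equivalently, the paper's biorthogonality contradiction) gives $\J_{\of}^{\la\m}=0$. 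Without this lemma your derivation establishes the formula only for the DFS case, not in the stated generality.
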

\begin{proof}
\footnote{There exist related formulas for the parts of $\ppp\R_{\lr}$ corresponding
to fixed points of discrete-time quantum channels in Lemma 5.8 of
Ref.~\cite{robin} and Prop. 7 of Ref.~\cite{Cirillo2015} and of
Markov chains in Thm.~3.3 of Ref.~\cite{Novotny2012}.}For a left eigenmatrix $\bb\J^{\la\m}|$ with eigenvalue $i\la$,
$\L^{\dgt}|\J^{\la\m}\kk=-i\la|\J^{\la\m}\kk$. Now partition this
eigenvalue equation using the projections $\{\R_{\ul},\R_{\of},\R_{\lr}\}$.
Taking the adjoint of the partitioned $\L$ from eq.~(\ref{eq:gen})
results in 
\begin{equation}
\L^{\dgt}|\J^{\la\m}\kk=\left[\begin{array}{ccc}
\L_{\ul}^{\dgt}\, & 0\vspace{4pt} & 0\\
\,\R_{\of}\L^{\dgt}\R_{\ul}\, & \L_{\of}^{\dgt} & 0\vspace{4pt}\\
\,\R_{\lr}\L^{\dgt}\R_{\ul}\, & \,\R_{\lr}\L^{\dgt}\R_{\of}\, & \,\L_{\lr}^{\dgt}
\end{array}\right]\left[\begin{array}{c}
|\J_{\ul}^{\la\m}\kk\vspace{4pt}\\
|\J_{\of}^{\la\m}\kk\vspace{4pt}\\
|\J_{\lr}^{\la\m}\kk
\end{array}\right].
\end{equation}
The eigenvalue equation is then equivalent to the following three
conditions:\begin{subequations}
\begin{eqnarray}
-i\la\J_{\ul}^{\la\m} & = & \L_{\ul}^{\dgt}(\J_{\ul}^{\la\m})\label{eq:pc1}\\
-i\la\J_{\of}^{\la\m} & = & \R_{\of}\L^{\dgt}\R_{\ul}(\J_{\ul}^{\la\m})+\L_{\of}^{\dgt}(\J_{\of}^{\la\m})\label{eq:pc2}\\
-i\la\J_{\lr}^{\la\m} & = & \R_{\lr}\L^{\dgt}\R_{\ul}(\J_{\ul}^{\la\m})+\L_{\lr}^{\dgt}(\J_{\lr}^{\la\m})+\R_{\lr}\L^{\dgt}\R_{\of}(\J_{\of}^{\la\m})\,.\label{eq:pc3}
\end{eqnarray}
\end{subequations}We now examine them in order.
\begin{enumerate}
\item Condition (\ref{eq:pc1}) implies that $[F_{\ul}^{\ell\dg},\J_{\ul}^{\la\m}]=0$
for all $\ell$.\footnote{This part is essentially the Lindblad version of a similar statement
for quantum channels (\cite{robin}, Lemma 5.2). Another way to prove
this is to apply ``well-known'' algebra decomposition theorems (e.g.,
\cite{Knill2000}, Thm.~5).} To show this, we use the \textit{dissipation function} $\dfunc$
associated with $\L_{\ul}$ \cite{Lindblad1976}. For some $A\in\ulbig$,
\begin{eqnarray}
\dfunc(A) & \equiv & \L_{\ul}^{\dgt}(A^{\dg}A)-\L_{\ul}^{\dgt}(A^{\dg})A-A^{\dg}\L_{\ul}^{\dgt}(A)=\sum_{\ell}\lind_{\ell}[F_{\ul}^{\ell},A]^{\dg}[F_{\ul}^{\ell},A]\,.
\end{eqnarray}
Using (\ref{eq:pc1}) and remembering that $\J_{\ul}^{\la\m\dg}=\J_{\ul}^{-\la\m}$,
the two expressions for $\dfunc(\J_{\ul}^{\la\m})$ imply that
\begin{equation}
\L_{\ul}^{\dgt}(\J_{\ul}^{\la\m\dg}\J_{\ul}^{\la\m})=\sum_{\ell}\lind_{\ell}[F_{\ul}^{\ell},\J_{\ul}^{\la\m}]^{\dg}[F_{\ul}^{\ell},\J_{\ul}^{\la\m}].\label{eq:dissfunc}
\end{equation}
We now take the trace using the full rank steady-state density matrix
\begin{equation}
|\rout\kk=\R_{\ul}|\rout\kk\equiv\sum_{\m}c_{\m}|\St_{0\m}\kk\,.
\end{equation}
Such an asymptotic state is simply that from eq.~(\ref{eq:inf})
with $c_{\la\m}=\d_{\la0}c_{\m}$ and $c_{\m}\neq0$. It is full rank
because it is a linear superposition of projections on eigenstates
of $\hout$ and such projections provide a basis for all diagonal
matrices of $\ulbig$. Taking the trace of the left-hand side of eq.~(\ref{eq:dissfunc})
yields
\begin{equation}
\bb\rss|\L_{\ul}^{\dgt}(\J_{\ul}^{\la\m\dg}\J_{\ul}^{\la\m})\kk=\bb\L_{\ul}(\rss)|\J_{\ul}^{\la\m\dg}\J_{\ul}^{\la\m}\kk=0\,,
\end{equation}
implying that the trace of the right-hand side is zero:
\begin{equation}
\sum_{\ell}\lind_{\ell}\tr\{\rss[F_{\ul}^{\ell},\J_{\ul}^{\la\m}]^{\dg}[F_{\ul}^{\ell},\J_{\ul}^{\la\m}]\}=0\,.
\end{equation}
Each summand above is non-negative (since $\lind_{\ell}>0$, the commutator
products are positive semidefinite, and $\rss$ is positive definite).
Thus the only way for the above to hold is for $[F_{\ul}^{\ell},\J_{\ul}^{\la\m}]^{\dg}[F_{\ul}^{\ell},\J_{\ul}^{\la\m}]=0$,
which implies that $F_{\ul}^{\ell}$ and $\J_{\ul}^{\la\m}$ commute
for all $\ell,\la,\m$. If we once again remember that $\J_{\ul}^{\la\m\dg}=\J_{\ul}^{-\la\m}$
and that the eigenvalues come in pairs $\pm\la$, then
\begin{equation}
[F_{\ul}^{\ell\dg},\J_{\ul}^{\la\m}]=[F_{\ul}^{\ell},\J_{\ul}^{\la\m}]=0\,.\label{eq:comm}
\end{equation}
With the jump operators out of the picture, condition (\ref{eq:pc1})
now becomes 
\begin{equation}
-i[H_{\ul},\J_{\ul}^{\la\m}]=i\la\J_{\ul}^{\la\m}\,,\label{eq:diffe}
\end{equation}
confirming that $\la$ are differences of eigenvalues of a Hamiltonian.
\item Now consider condition (\ref{eq:pc2}). The first term on the right-hand
side can be obtained from eq.~(\ref{eq:term}) and is as follows:
\begin{equation}
\R_{\of}\L^{\dgt}\R_{\ul}(\J_{\ul}^{\la\m})=\sum_{\ell}\lind_{\ell}(F_{\ul}^{\ell\dg}\J_{\ul}^{\la\m}F_{\ur}^{\ell}-\J_{\ul}^{\la\m}F_{\ul}^{\ell\dg}F_{\ur}^{\ell})+\sum_{\ell}\lind_{\ell}(F_{\ur}^{\ell\dg}\J_{\ul}^{\la\m}F_{\ul}^{\ell}-F_{\ur}^{\ell\dg}F_{\ul}^{\ell}\J_{\ul}^{\la\m})\,.
\end{equation}
This term is identically zero due to eq.~(\ref{eq:comm}), reducing
condition (\ref{eq:pc2}) to $\L_{\of}^{\dgt}(\J_{\of}^{\la\m})=-i\la\J_{\of}^{\la\m}$.
We now show that this implies 
\begin{equation}
\R_{\of}|\J^{\la\m}\kk=|\J_{\of}^{\la\m}\kk=0\label{eq:crss}
\end{equation}
for all $\la$ and $\m$. By contradiction, assume $\J_{\of}^{\la\m}\,(\neq0)$
is a left eigenmatrix of $\L_{\of}$. Then there must exist a corresponding
right eigenmatrix $\St_{\la\m}^{\prime}=\R_{\of}(\St_{\la\m}^{\prime})$
since $\St$ and $\J$ are biorthogonal by Thm.~\ref{thm:dual}.
However, all right eigenmatrices are contained in $\ulbig$ by definition
(\ref{eq:cond}), so we have a contradiction and $\J_{\of}^{\la\m}=0$.
\item Finally consider condition (\ref{eq:pc3}). Applying eq.~(\ref{eq:crss})
removes the last term on the right-hand side of that condition and
simplifies it to
\begin{equation}
[\L_{\lr}^{\dgt}+i\la\R_{\lr}](\J_{\lr}^{\la\m})=-\R_{\lr}\L^{\dgt}(\J_{\ul}^{\la\m})\,.
\end{equation}
Now we can show that the operator $\L_{\lr}^{\dgt}+i\la\R_{\lr}$
is invertible when restricted to $\lrbig$ using a proof by contradiction
similar to the one used to prove eq.~(\ref{eq:crss}). Inversion
gives a formula for $\J_{\lr}^{\la\m}$ which, along with eq.~(\ref{eq:crss}),
yields the statement.
\end{enumerate}
\end{proof}
\selectlanguage{american}%

\section{No-leak and clean-leak properties}

\selectlanguage{english}%
Armed with the analytical formula for $J^{\la\m}$ from Thm.~\ref{prop:3},
we can now construct the most general $\ppp$. We state our result
for the $\hout=0$ case and outline the non-trivial consequences of
$\hout\neq0$ in Sec.~\ref{subsec:Coherence-suppressed-by}. We can
split $\ppp$ as follows: 
\begin{equation}
\ppp=\ppp\R_{\ul}+\ppp\R_{\lr}=\ps(\id-\L\L_{\lr}^{-1})\,,\label{eq:maindecomp-1}
\end{equation}
where $\ps$ is the \textit{minimal projection} that further projects
$\ulbig$ onto \acs{ASH}. The form of $\ps$, the asymptotic projection
of $\L_{\ul}$ (which does not admit a decaying subspace), depends
on the details of \acs{ASH} and is already known \cite{baum2,robin}.
Our work therefore extends previous Lindbladian results to cases when
a decaying subspace is present. Of course, any pair $\{\L_{\ul},\ps\}$
can be extended (via Thm.~\ref{prop:2}) to the pair $\{\L,\ppp\}$
that admits an arbitrarily large decaying subspace $\lrbig$. The
first ($\ppp\R_{\ul}$) terms states that if one starts in $\ulbig$,
then one is simple projected into $\ash\subseteq\ulbig$. The second
term ($\ppp\R_{\lr}$) shows that an initial state in $\lrbig$ is
transferred to an asymptotic state in $\ulbig$ via application of
the projected inverse $\L_{\lr}^{-1}$. Since the projection $\R_{\of}$
is not present in the above formula, we can immediately read off that
no coherences between the non-decaying subspace and its counterpart
are preserved (see Sec.~\ref{sec:State-initialization}). Thus, the
above formula allows us to determine which parts of $\rin$ are preserved
in the large-time limit. The \foreignlanguage{american}{\acs{DFS}}
case (when $\ps=\R_{\ul}$) was addressed in Sec.~\ref{sec:Questions-addressed-and}.

Since superoperator perturbation theory requires spectral projections
such as $\ppp$, the above formula is also useful in determining how
states that are \textit{already in} \acs{ASH} respond to perturbations.
We now switch gears and sketch the effect of small perturbations $\oo$
on a state $\rout$ already in \acs{ASH} in order to lay the groundwork
for Chs.~\ref{ch:4}-\ref{ch:6}. The perturbations of interest are
either Hamiltonian perturbations $\spert\equiv-i[V,\cdot]$ (with
Hamiltonian $\hpert$ and small parameter $\e$) or derivatives $\p_{\a}\equiv\p/\p\xx_{\a}$
(with parameters $\xx_{\a}$ and adiabatic evolution time $T$) of
the now parameter-dependent $\rout(\xx_{\a})$ and $\L(\xx_{\a})$:
\begin{equation}
\mathcal{O}\in\left\{ \e\spert,\frac{1}{T}\p_{\a}\right\} \,.
\end{equation}
In Chs.~\ref{ch:4}-\ref{ch:5}, we show that both of these can be
used to induce unitary operations on \acs{ASH}. The latter determine
adiabatic connection(s) and thus help with defining parallel transport
(i.e., adiabatic evolution) of \acs{ASH}. We show in Ch.~\ref{ch:4}
that this analysis holds for jump operator perturbations $F^{\ell}\rightarrow F^{\ell}+f^{\ell}$
as well, but omit discussing those perturbations for now to keep things
simple. Within first order for the case of perturbation theory ($\e\rightarrow0$)
and approaching the adiabatic limit for the case of parallel transport
($T\rightarrow\infty$), two relevant perturbative processes after
the action of $\oo$ on an asymptotic state are (A) subsequent projection
onto \acs{ASH} and (B) leakage out of \acs{ASH} via the perturbation
and $\L^{-1}$: 
\begin{equation}
\rout\rightarrow\ppp\oo(\rout)-\L^{-1}\oo(\rout)\,.\label{eq:outline}
\end{equation}
We study these terms here and show later that they occur both in the
Kubo formula and in adiabatic response.

We first observe that $\oo$ is limited in its effect on $\rout$.
Acting with $\oo$ once does not connect $\ulbig$ with $\lrbig$
because $\oo$ does not act non-trivially on $\rout$ from both sides
simultaneously. This \textit{no-leak property} can be understood if
one observes that Hamiltonian superoperator perturbations $\spert$
act nontrivially on $\rout$ only from one side at a time due to their
commutator form. Likewise, derivatives $\p_{\a}$ act nontrivially
on either the ``ket'' or ``bra'' parts of all basis elements used
to write $\rout$ due to the product rule. Therefore, acting with
$\oo$ once only connects $\ulbig$ to itself and nearest-neighbor
squares $\thubig$ and does not cause ``transitions'' into $\lrbig$:
\begin{equation}
\mathcal{O}(\rout)=\R_{\thu}\mathcal{O}(\rout)\,,\tag{\textnormal{\textbf{LP1}}}\label{eq:no-leak}
\end{equation}
where $\R_{\thu}\equiv\id-\R_{\lr}$. Moreover, despite two actions
of $\oo$ connecting $\ulbig$ to $\lrbig$, (\ref{eq:no-leak}) still
provides some insight into second-order effects (see Sec.~\ref{subsec:Second-order-terms}).

The no-leak property is important in determining the energy scale
governing leakage out of \acs{ASH}. Let us apply this property to
the second term in eq.~(\ref{eq:outline}):
\begin{equation}
\L^{-1}\oo(\rout)=\L^{-1}\R_{\thu}\oo(\rout)=\L_{\thu}^{-1}\oo(\rout)\,,\label{eq:no-leak-inv}
\end{equation}
where $\L_{\emp}^{-1}\equiv(\R_{\emp}\L\R_{\emp})^{-1}$ with $\empbig$
being any block(s). Note that the last step in eq.~(\ref{eq:no-leak-inv})
also uses $\R_{\lr}\L\R_{\thu}=0$ from eq.~(\ref{eq:gen}). Since
the restriction to studying $\L$ on $\thubig$ in linear response
has previously gone unnoticed, it is conventionally believed that
the leakage energy scale is determined by the dissipative gap $\dgg$
(\ref{eq:dissipative-gap-def}) \textemdash{} the nonzero eigenvalue
of $\L$ with smallest real part. As shown in eq.~(\ref{eq:no-leak-inv}),
that energy scale is actually governed by the \textit{effective dissipative
gap} $\adg\geq\dgg$ \textemdash{} the nonzero eigenvalue of $\L_{\thu}$
with smallest real part. In Hamiltonian systems ($\L=-i[H,\cdot]$),
a special case of the no-leak property states that the energy denominator
in the first-order perturbative correction to the $k^{\text{th}}$
eigenstate of $H$ contains only energy differences involving the
energy $E_{k}$ of that eigenstate (and not, e.g., $E_{k-1}-E_{k+1}$).

We now project $\oo(\rout)$ back to \acs{ASH} to examine the first
term in eq.~(\ref{eq:outline}). Applying $\ppp$ to eq.~(\ref{eq:no-leak})
and using $\ppp\R_{\of}=0$ from eq.~(\ref{eq:maindecomp-1}) removes
two more squares: 
\begin{equation}
\ppp\oo(\rout)=(\ppp\R_{\di})(\R_{\thu}\oo)(\rout)=\ppp\R_{\ul}\oo(\rout)=\ps\oo\ps(\rout)\,.\tag{\textnormal{\textbf{LP2}}}\label{eq:clean-leak}
\end{equation}
The \textit{clean-leak property} shows that any leakage of the perturbed
$\rout$ into $\ofbig$ does not contribute to the first-order effect
of $\oo$ within \acs{ASH}. Essentially, the clean-leak property
makes \acs{ASH} resistant to the non-unitary effects of Lindbladian
evolution and allows for a closer analogue between \acs{ASH} and
subspaces of unitary systems. The clean-leak property simplifies calculations
of both Hamiltonian perturbations and adiabatic/Berry connections.
It can be used to show that $\ps$ (instead of $\ppp$) fully governs
adiabatic evolution, so a natural Lindbladian generalization of the
quantum geometric tensor (\ac{QGT}) is 
\begin{equation}
\geom_{\a\b}\equiv\ps\p_{\a}\ps\p_{\b}\ps\ps\,.
\end{equation}
In Ch.~\ref{ch:6}, we show that the part of the \ac{QGT} anti-symmetric
in $\a,\b$ corresponds to the adiabatic curvature $\F_{\a\b}$ (determined
from the Berry connections) and, for most relevant \acs{ASH}, derive
a metric $\met_{\a\b}$ on the parameter space from the part of the
\ac{QGT} symmetric in $\a,\b$.

\begin{figure}[t]
\begin{centering}
\includegraphics[width=0.75\columnwidth]{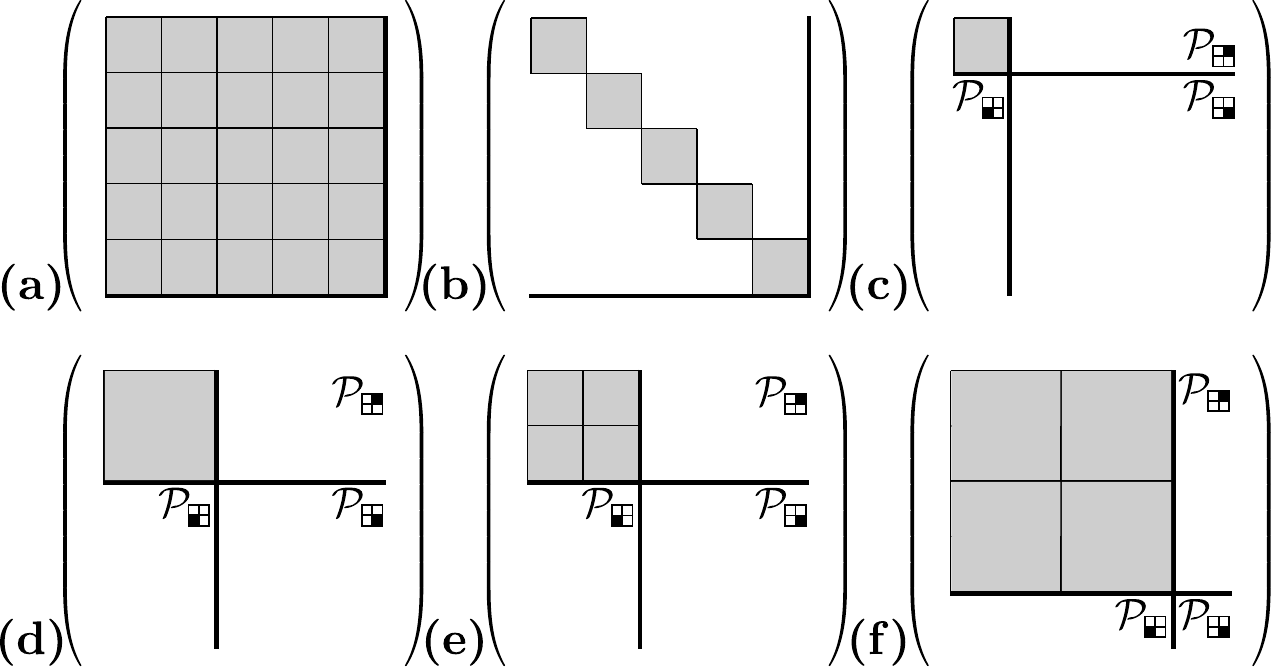}
\par\end{centering}
\caption{\label{fig:specific-ash}\protect\ytableausetup{boxsize = 3pt}Sketches
of various types of \(\ash\) (gray) embedded in a $25$-dimensional
space of matrices \(\oph\) (corresponding to a five-dimensional Hilbert
space \(\h\)). The number of (non-overlapping) gray squares counts
the dimension of \(\ash\) for each type. The first two admit no decaying
subspace and correspond to \textbf{(a)} unitary evolution and \textbf{(b)}
evolution due to a \textit{dephasing Lindbladian} (see Sec.~\ref{sec31}),
where all coherences are absent in the \(\infty\)-time limit. The
last three admit a decaying subspace \(^{\ytableaushort{ {} {} , {} {*(black)} }}\)
and correspond to Lindbladian evolution admitting \textbf{(c)} a unique
pure steady state, \textbf{(d)} a unique mixed steady state of rank
two, \textbf{(e)} a qubit decoherence-free subspace, and \textbf{(f)}
a qubit noiseless subsystem with a two-dimensional auxiliary subspace.}
\end{figure}

\selectlanguage{american}%

\section{Summary of applications to various \foreignlanguage{english}{$\text{As}(\textsf{H})$}}

\selectlanguage{english}%
\label{sec:Examples}

\subsection{Hamiltonian case}

\selectlanguage{american}%
As a sanity check, let us review how the structures we are studying
(trivially) simplify when $\L=-i[H,\cdot]$ \foreignlanguage{english}{{[}Fig.~\ref{fig:specific-ash}(a){]}}.
Then, there is no decaying subspace ($\pp=I$) and there is no extra
dephasing either {[}$\ash=\oph${]}. Literally everything is preserved
in the long-time limit, so $\ppp=\id$.
\selectlanguage{english}%

\subsection{Unique state case}

The most common case occurs when $\L$ admits only one steady state
{[}Fig.~\ref{fig:specific-ash}(c,d){]}, i.e., there exists a unique
state $\varrho\in\ash$ such that
\begin{equation}
\L(\varrho)=0\,.
\end{equation}
Recall that, according to the duality from Thm.~\ref{thm:dual},
there is a unique conserved quantity whose expectation value is preserved
in the infinite-time limit. Due to trace preservation of $e^{t\L}$,
that quantity is just the identity $I$, $\L^{\dgt}(I)=0$, and the
only quantity preserved is the trace of the initial state. In the
double-ket notation, the asymptotic projection can be written as $\ppp=|\varrho\kk\bb I|$.
In the case when $\varrho$ is not full-rank, $\varrho\in\ulbig$
and there is a decaying subspace $\lrbig$. Then, the projection $\pp$
on the support of $\varrho$ is the conserved quantity of $\L_{\ul}$
and the minimal projection is $\ps=|\varrho\kk\bb\pp|$.

This case is relatively trivial when examining its perturbation theory:
since \acs{ASH} is one-dimensional, there is nowhere to move \textit{within}
\acs{ASH}. Indeed, it is easy to show that for any trace-preserving
perturbation $\oo$,
\begin{equation}
\ppp\oo=|\varrho\kk\bb I|\oo=|\varrho\kk\bb\oo^{\dgt}(I)|=0\,.\label{eq:easy}
\end{equation}
The unique case does nevertheless admit a nontrivial \ac{QGT} and
corresponding metric, 
\begin{equation}
\met_{\a\b}=\tr\left\{ \p_{\sl\a}\pp\p_{\b\sr}\varrho\right\} \,,
\end{equation}
where $A_{\sl\a}B_{\b\sr}=A_{\a}B_{\b}+A_{\b}B_{\a}$. This metric
is distinct from the Hilbert-Schmidt metric $\tr\{\p_{(\a}\varrho\p_{\b)}\varrho\}$
for mixed $\varrho$ and is nonzero only when $\varrho$ is not full-rank.
For pure steady states, both metrics reduce to the Fubini-Study metric
\cite{provost1980}.
\selectlanguage{american}%

\subsection{DFS case}

\selectlanguage{english}%
Recall that the simplest multi-dimensional \acs{ASH} which stores
quantum information is a decoherence-free subspace or \foreignlanguage{american}{\acs{DFS}}
{[}Fig.~\ref{fig:specific-ash}(e){]}. A $d^{2}$-dimensional \foreignlanguage{american}{\acs{DFS}}\textit{
block} 
\begin{equation}
\ash=\ulbig
\end{equation}
is spanned by matrices $\{|\psi_{k}\ket\bra\psi_{l}|\}_{k,l=0}^{d-1}$,
where $\{|\psi_{k}\ket\}_{k=0}^{d-1}$ is a basis for a subspace of
the $d\leq N$-dimensional system space. The decaying block $\lrbig$
is then spanned by $\{|\psi_{k}\ket\bra\psi_{l}|\}_{k,l=d}^{N-1}$.
Evolution of the \foreignlanguage{american}{\acs{DFS}} under $\L$
is exclusively unitary, 
\begin{equation}
\p_{t}(|\psi_{k}\ket\bra\psi_{l}|)=\L(|\psi_{k}\ket\bra\psi_{l}|)=-i\left[\hout,|\psi_{k}\ket\bra\psi_{l}|\right]\,,\label{eq:eom-1}
\end{equation}
where $\hout=H_{\ul}$ is the asymptotic Hamiltonian and $k,l\leq d-1$. 

Since the entire upper-left block is preserved for a \foreignlanguage{american}{\acs{DFS}},
\begin{equation}
\ps(\rin)=\R_{\ul}(\rin)=\pp\rin\pp\,.
\end{equation}
We can thus deduce from (\ref{eq:clean-leak}) that the effect of
Hamiltonian perturbations $V$ within \acs{ASH} is $\hpert_{\ul}=\pp\hpert\pp$
\textemdash{} the Hamiltonian projected onto the \foreignlanguage{american}{\acs{DFS}}
(see Sec.~\ref{subsec:hams}). Likewise, if $\oo=\p_{\a}$, then
the Lindbladian adiabatic connection can be shown to reduce to $\p_{\a}\pp\cdot\pp$,
the adiabatic connection of the \foreignlanguage{american}{\acs{DFS}}
(see Sec.~\ref{subsec:holonomy}). Naturally, the \ac{QGT} and its
corresponding metric also reduces to that of the \foreignlanguage{american}{\acs{DFS}}
states. In other words, all such results are the same regardless of
whether the states form a \foreignlanguage{american}{\acs{DFS}} of
a Lindbladian or a degenerate subspace of a Hamiltonian. We apply
this powerful formula to coherent state quantum information processing
schemes in Chs.~\ref{ch:7}-\ref{ch:8}. Applications of this formula
to waveguide QED quantum computation schemes can be found in Ref.~\cite{Paulisch2015}.
We note that this extends a previous related result (Ref.~\cite{Zanardi2014},
footnote {[}23{]}).

Since all states in $\ulbig$ are asymptotic, the steady-state basis
elements and conserved quantities of $\L_{\ul}=\sout=-i[\hout,\cdot]$
are equal. In the notation of Thm.~\ref{prop:3}, $\J_{\ul}^{\la\m}=\St_{\la\m}$.
The structure of the piece $J_{\lr}^{\la\m}$ can be determined by
Thm.~\ref{prop:3} and depends on the $\L_{\lr}$.

\subsection{NS case}

This important case is a combination of the \foreignlanguage{american}{\acs{DFS}}
and unique steady-state cases {[}Fig.~\ref{fig:specific-ash}(f){]}.
In this case, the non-decaying portion of the system Hilbert space
($\pp\h$) factors into a $d$-dimensional subspace $\hhdfs$ spanned
by \foreignlanguage{american}{\acs{DFS}} states and a $\da$-dimensional
auxiliary (also, gauge) subspace $\hha$ which is the range of some
unique steady state $\as$,
\begin{equation}
\da=\dim(\hha)=\text{rank}(\as)\,.
\end{equation}
This combination of a \foreignlanguage{american}{\acs{DFS}} tensored
with the \textit{auxiliary state} $\as$ is called a \textit{noiseless
subsystem} (\foreignlanguage{american}{\acs{NS}}) \cite{Knill2000}.
For one \foreignlanguage{american}{\acs{NS}} block, $\h$ and \acs{OPH}
decompose as\begin{subequations}
\begin{eqnarray}
\h & = & \pp\h\oplus\qq\h=\left(\hhdfs\ot\hha\right)\oplus\qq\h\\
\oph & = & \ulbig\oplus\thobig=\left[\opdfs\ot\opa\right]\oplus\thobig\,.
\end{eqnarray}
\end{subequations}An \foreignlanguage{american}{\acs{NS}} block
is possible if $\L$ respects this decomposition and does not cause
any decoherence within the \foreignlanguage{american}{\acs{DFS}}
part. The Lindbladian in $\ulbig$ is then 
\begin{equation}
\L_{\ul}=\sdfs\ot\asi+\idfs\ot\aso=\sout+\idfs\ot\aso\,,
\end{equation}
where $\idfs$ ($\asi$) is the superoperator identity on the \foreignlanguage{american}{\acs{DFS}}
(auxiliary) space. The auxiliary state $\as$ is the unique steady
state of the Lindbladian $\aso$.

Note that the auxiliary factor becomes trivial when $\as$ is a pure
state ($\da=1$), reducing the \foreignlanguage{american}{\acs{NS}}
to a \foreignlanguage{american}{\acs{DFS}}. This means that the \foreignlanguage{american}{\acs{NS}}
case is distinct from the \foreignlanguage{american}{\acs{DFS}} case
only when $\as$ is mixed ($\da\neq1$). Similarly, if the dimension
of the \foreignlanguage{american}{\acs{DFS}} $d^{2}=d=1$, the \foreignlanguage{american}{\acs{NS}}
reduces to the unique steady-state case. The \foreignlanguage{american}{\acs{NS}}
case thus encapsulates both the \foreignlanguage{american}{\acs{DFS}}
and unique state cases.

The \foreignlanguage{american}{\acs{DFS}} basis elements $|\psi_{k}\ket\bra\psi_{l}|$
from eq.~(\ref{eq:eom-1}) generalize to $|\psi_{k}\ket\bra\psi_{l}|\ot\as$.
Let us now focus on a stationary \foreignlanguage{american}{\acs{NS}}
($\hout=0$) and construct a basis that will be used throughout the
rest of the thesis. Denote the respective \acs{ASH} basis elements
and conserved quantities as $|\St_{\m}\kk\equiv|\St_{\la=0,\m}\kk$
and $|\J^{\m}\kk\equiv|\J^{\la=0,\m}\kk$. Since \acs{ASH} is stationary,
we can construct a Hermitian matrix basis for both \acs{ASH} and
the corresponding conserved quantities that uses one index and is
orthonormal (see Sec.~\ref{app:Preliminaries}). For the \foreignlanguage{american}{\acs{DFS}}
part of the \foreignlanguage{american}{\acs{NS}}, we define the Hermitian
matrix basis $\{|\stdfs_{\m}\kk\}_{\m=0}^{d^{2}-1}$. In this new
notation, the basis elements for one \foreignlanguage{american}{\acs{NS}}
block are then
\begin{equation}
|\St_{\m}\kk=\frac{1}{\an}\begin{pmatrix}|\stdfs_{\m}\kk\ot|\as\kk & 0\,\,\\
0 & 0\,\,
\end{pmatrix}\,.
\end{equation}
We have normalized the states using the auxiliary state norm (purity)
\begin{equation}
\an\equiv\sqrt{\bb\as|\as\kk}=\sqrt{\tr\{\as^{2}\}}
\end{equation}
to ensure that $\bb\St_{\m}|\St_{\n}\kk=\d_{\m\n}$. Since an \foreignlanguage{american}{\acs{NS}}
block is a combination of the unique and \foreignlanguage{american}{\acs{DFS}}
cases, the conserved quantities of $\ulbig$ (i.e., of $\L_{\ul}$)
are direct products of the \foreignlanguage{american}{\acs{DFS}}
and auxiliary conserved quantities \cite{baum2,robin}. The unique
auxiliary conserved quantity is $\ai$, the identity on the auxiliary
subspace $\hha$. Combining this with the result above and multiplying
by $\an$ so that $\St_{\m}$ and $\J^{\m}$ are biorthogonal {[}see
eq.~(\ref{eq:norms}){]}, we obtain 
\begin{equation}
\bb\J^{\m}|=\an\begin{pmatrix}\,\bb\stdfs_{\m}|\ot\bb\ai| & 0\,\,\\
0 & \bb J_{\lr}^{\m}|\,\,
\end{pmatrix}\,.
\end{equation}
We use the \foreignlanguage{american}{\acs{NS}} block basis of the
above form throughout the thesis. The two projections are then\begin{subequations}
\begin{align}
\ppp & =\sum_{\m}|\St_{\m}\kk\bb\J^{\m}|=\ps-\ps\L\L_{\lr}^{-1}\label{eq:proj}\\
\ps & =\sum_{\m}|\St_{\m}\kk\bb\J_{\ul}^{\m}|\equiv\idfs\ot|\as\kk\bb\ai|\,,\label{eq:nsproj-1}
\end{align}
\end{subequations}where the superoperator projection onto the \foreignlanguage{american}{\acs{DFS}}
is explicitly 
\begin{equation}
\idfs(\cdot)=\sum_{\m}|\stdfs_{\m}\kk\bb\stdfs_{\m}|\cdot\kk=\iidfs\cdot\iidfs\,.
\end{equation}
Therefore, states in $\ulbig$ are not perfectly preserved, but are
instead partially traced over the auxiliary subspace:
\begin{equation}
\ps(\rin)=\tr_{\textsf{ax}}\left\{ \pp\rin\pp\right\} \ot\as\,,\label{eq:nsproj}
\end{equation}
where $\pp=\iidfs\ot\ai$ and $\iidfs$ ($\ai$) is the identity on
$\hhdfs$ ($\hha$). 

For this case, the effect of perturbations $\spert$ on \acs{ASH}
is more subtle due to the auxiliary factor, but the induced time evolution
on the \foreignlanguage{american}{\acs{DFS}} is still unitary. The
effective \foreignlanguage{american}{\acs{DFS}} Hamiltonian can be
extracted from $\ps\spert\ps$ (see Sec.~\ref{subsec:hams}) and
is 
\begin{equation}
W=\tr_{\textsf{ax}}\left\{ \as V_{\ul}\right\} \,.\label{eq:nspt}
\end{equation}
Similarly, if we define generators of motion $G_{\a}$ in the $\xx_{\a}$-direction
in parameter space (i.e., such that $\p_{\a}\rout=-i[G_{\a},\rout]$;
see Sec.~\ref{subsec:holonomy}), then the corresponding holonomy
(Berry phase) after a closed path is the path-ordered integral of
the various \foreignlanguage{american}{\acs{DFS}} adiabatic connections
\begin{equation}
\adfst_{\a}=\tr_{\textsf{ax}}\left\{ \as(G_{\a})_{\ul}\right\} \,.\label{eq:nsad}
\end{equation}
In both cases, the effect of the perturbation on the \foreignlanguage{american}{\acs{DFS}}
part depends on $\as$, meaning that $\as$ can be used to modulate
both Hamiltonian-based or holonomic quantum gates. Making contact
with adiabatic evolution, Zanardi and Campos Venuti showed that first-order
Hamiltonian evolution within \acs{ASH} can be thought of as a holonomy.
The results (\ref{eq:nspt}-\ref{eq:nsad}) further develop a connection
between holonomies and first-order perturbative effects within \acs{ASH}
(\cite{Zanardi2015}, Prop. 1) by showing that, for both processes,
evolution is generated by the same type of effective Hamiltonian ($W$
and $\adfst_{\a}$, respectively).

The \ac{QGT} for this case is rather complicated due to the $\as$-assisted
adiabatic evolution. However, we devote Ch.~\ref{ch:6} to showing
that the \ac{QGT} \textit{does} endow us with a metric on the parameter
space for one \foreignlanguage{american}{\acs{NS}} block.

\subsection{Multi-block case}

The noiseless subsystem is the most general form of one block of asymptotic
states of $\L$, and the most general \acs{ASH} is a direct sum of
such \foreignlanguage{american}{\acs{NS}} blocks \cite{baum2,Ticozzi2008,Deschamps2014}
(also called ``superselection sectors'' \cite{Pastawski2016}) {[}see
Fig.~\ref{fig:decomp}(a){]} with corresponding minimal projection
$\ps$. This important result applies to all quantum channels \cite{Lindblad1999,BlumeKohout2008,robin,baumr,Carbone2015}
and stems from a well-known algebra decomposition theorem (see \cite{wolf2010}
for a technical introduction). More technically, \acs{ASH} is a \textit{matrix}
or \textit{von Neumann algebra}, but with each block $\varkappa$
\textit{distorted} \cite{robin,Johnson2015} (see also Corr.~6.7
in Ref.~\cite{wolf2010}) by its corresponding fixed auxiliary steady
state $\as^{(\varkappa)}$:
\begin{equation}
\ash=\bigoplus_{\varkappa}\textnormal{Op\ensuremath{\left(\mathsf{H_{dfs}^{(\varkappa)}}\right)}}\ot\as^{(\varkappa)}\,.
\end{equation}
Of course, the above reduces to the \foreignlanguage{american}{\acs{NS}}
case when there is only one block. Throughout the text, we explicitly
calculate properties of one \foreignlanguage{american}{\acs{NS}}
block and sketch any straightforward generalizations to the multi-block
case. To reiterate, the subtleties of $\ash\subseteq\ulbig$ are independent
of the presence of a decaying subspace $\lrbig$.

If there are two \foreignlanguage{american}{\acs{NS}} blocks (characterized
by projections $\iidfs^{(\varkappa)}\ot\ai^{(\varkappa)}$ with $\varkappa\in\{1,2\}$)
and no decaying subspace, then the conserved quantities $\J^{\varkappa,\m}=\stdfs_{\varkappa,\m}\ot\ai^{(\varkappa)}$
do not have presence in the subspace of coherences between the blocks.
The blocks in which $\J^{\m}\neq0$ are shaded in gray in Fig.~\ref{fig:decomp}(b),
dual to $\St_{\m}$ in Fig.~\ref{fig:decomp}(a).

Both eqs.~(\ref{eq:nspt}) and (\ref{eq:nsad}) extend straightforwardly
to the multi-block case, provided that the blocks maintain their shape
during adiabatic evolution. We make the same connection between ordinary
and adiabatic perturbations to jump operators of $\L$; the latter
were first studied in Avron \textit{et al.} \cite{Avron2012a}. If
we add in jump operator perturbations $F\rightarrow F+f$, then the
generalization of the effective Hamiltonian $V_{\ul}$ (\ref{eq:nspt})
is derived in Sec.~\ref{subsec:linds} to be
\begin{equation}
X_{\ul}\equiv\hpert_{\ul}+\frac{i}{2}\lind\left(F^{\dg}f-f^{\dg}F\right)_{\ul}
\end{equation}
projected onto all of the \foreignlanguage{american}{\acs{NS}} blocks.
This quantity has previously been introduced (\cite{Avron2012a},
Thm.~5) as the operator resulting from joint adiabatic variation
of the Hamiltonian and jump operators of $\L$. It is thus not surprising
that the effect of perturbations to the Hamiltonian and jump operators
on $\rout$ is $X$ projected onto \acs{ASH}.

\section{Relation of conserved quantities to symmetries\label{sec:Relation-of-conserved}}

For $\hout=0$, all $J^{\la\m}=J^{\la=0,\m}$ are conserved quantities,
and a natural question is whether they always commute with the Hamiltonian
$H$ and the jump operators $F^{\ell}$. It turns out that they do
not always commute \cite{baum2,pub011}, and so various generalizations
of Noether's theorem have to be considered \cite{Avron2012a,Gough2015}.

\subsection{A Noether's theorem for Lindbladians?\label{sec12}}
\selectlanguage{american}%

\subsubsection*{Hamiltonian case}

\selectlanguage{english}%
In a unitary system, an (explicitly time-independent) observable $J=J^{\dg}$
is a conserved quantity (i.e. constant of motion) if and only if it
commutes with the Hamiltonian (e.g., angular momentum of the hydrogen
atom). In the spirit of Noether's theorem, one can then generate a
continuous symmetry $U=\exp(i\phi J)$ (for real $\phi$) that leaves
the Hamiltonian invariant. There is thus the following set of equivalent
statements for continuous symmetries in unitary evolution (with one-sided
arrows depicting an ``if-then'' statement, two-sided arrows depicting
``iff,'' and the dot being total time derivative): 
\begin{equation}
\begin{array}{rcl}
 & [J,H]=0\\
\lrff &  & \lrtf\\
\dot{J}=0~ & \Leftrightarrow & ~U^{\dg}HU=H
\end{array}\label{symreg}
\end{equation}
We call the above triple iff relationship \textit{Noether's theorem}.
\selectlanguage{american}%

\subsubsection*{Lindbladian case}

\selectlanguage{english}%
A conserved quantity in Lindbladian systems is one where $\L^{\dagger}(J)=0$
(\ref{eq:leig}). Let us define the superoperator corresponding to
$J$,
\begin{equation}
\mathcal{J}(\cdot)\equiv+i[J,\cdot],\,\text{and its corresponding unitary }\mathcal{U}=\exp(\phi\mathcal{J})\,.
\end{equation}
Just like with Hamiltonians $H$ and their superoperators $\H$, 
\begin{equation}
\mathcal{U}^{\dgt}(F^{\ell})\equiv U^{\dg}F^{\ell}U.
\end{equation}
We consider only superoperators $\mathcal{U}$ which can be written
in terms of a $J$ on the operator level. Using this notation, one
produces an analogous set of statements for $\L$: 
\begin{equation}
\begin{array}{c}
~~[J,H]=[J,F^{\ell}]=0~~\forall\ell\\
\begin{array}{rcl}
\lff~~~~~~~ &  & ~~~~~\rtf\\
\dot{J}=\L^{\dgt}(J)=0~~~~~~ &  & ~~~~\mathcal{U}^{\dgt}\L\mathcal{U}=\L
\end{array}
\end{array}\,.\label{dissym}
\end{equation}
In comparison to the original theorem (\ref{symreg}), four arrows
are lost! The two arrows emanating from $\mathcal{U}^{\dgt}\L\mathcal{U}=\L$
are lost because there exist operators which leave $\L$ invariant
but are neither conserved nor commute with everything. A simple example
of a symmetry that neither commutes with $F^{\ell}$ nor is conserved
is $J=\aa^{\dg}\aa\equiv\ph$ for a bosonic $\L$ with the lowering
operator $F=\aa$, $H=0$; we discuss this case in detail in Sec.~\ref{sec22}.
The loss of the two arrows emanating from $\dot{J}=\L^{\dgt}(J)=0$
is due to the decaying subspace, which guarantees that conserved quantities
$\J$ generally contain the piece $\J_{\lr}$. However, it turns out
that certain conditions can restore these arrows: (1) these arrows
are restored in $\ulbig$ and (2) they are restored for any $J$ which
squares to the identity. We discuss these cases below, noting that
the above theorem was motivated by Ref.~\cite{baum2}.

\subsubsection*{Noether's theorem partially restored in non-decaying subspace}

The only restriction on the Lindbladian $\L_{\ul}$ is that it contains
a full-rank steady state. It turns out that this restriction is sufficient
to partially restore Noether's theorem. Recall that each conserved
quantity $J^{\m}=\J_{\ul}^{\m}+\J_{\lr}^{\m}$. Using the dissipation
function from the proof of Thm.~\ref{prop:3}, one can show that
$[F_{\ul}^{\ell},\J_{\ul}^{\m}]=0$ for all $\ell$ and $\m$. This
restores some of the arrows and produces
\begin{equation}
\begin{array}{c}
~~[J_{\ul},H_{\ul}]=[J_{\ul},F_{\ul}^{\ell}]=0~~\forall\ell\\
\begin{array}{rcl}
\lrff~~~~~~~ &  & ~~~~~\rtf\\
\dot{J}_{\ul}=\L_{\ul}^{\dgt}(J_{\ul})=0~~~~~~ & \Rightarrow & ~~~~\mathcal{U}_{\ul}^{\dgt}\L_{\ul}\mathcal{U}_{\ul}=\L_{\ul}
\end{array}
\end{array}\,,\label{dissym-1}
\end{equation}
a partially restored Noether's theorem that holds in $\ulbig$. While
being in $\ulbig$ is thus a sufficient condition for the above to
hold, it is not a necessary one (see Sec.~\ref{subsec:Clean-case}
for an example).

\subsubsection*{Parity \& discrete rotations}

Having omitted discrete symmetries ($U$ where $\phi$ takes discrete
values), we expound on parity since it is the simplest discrete symmetry
and it is a good starting point for the further examples in Chs.~\ref{ch:3}
and \ref{ch:7}. Equation (\ref{dissym}) shows that if one can find
a non-trivial operator that commutes with everything in $\L$, then
one is lucky to have found both a symmetry and a conserved quantity.
It turns out that systems with parity conservation necessarily have
such an operator and parity can be thought of as a symmetry almost
in the unitary sense of eq.~(\ref{symreg}): 

\begin{equation}
\begin{array}{c}
~~~~~~[P,H]=[P,F^{\ell}]=0~~~\forall l\\
\begin{array}{rcl}
\lrff~~ &  & ~~~\rtf\\
\dot{P}=\L^{\dagger}(P)=0~~ & \Rightarrow & ~~~[\mathcal{P},\L]=0
\end{array}
\end{array}\,.\label{eq:noether-proj}
\end{equation}
In the above, $\mathcal{P}(F^{\ell})=PF^{\ell}P$. The proof is simple.
Assuming $\L^{\dgt}(P)=0$, it is possible to construct conserved
positive- and negative-parity projections $\Pi_{\pm}=\half(I\pm P)$,
which in turn must commute with all operators in $\L$ (\citep{baum2},
Lemma 7). Therefore, $P$ must commute with everything as well.

In general, any set of $d$ conserved projection operators partitions
\acs{OPH} into $d^{2}$ subspaces which evolve independently under
$\L$ (\citep{baum2}, Thm.~3), with at least $d$ of the subspaces
having their own steady state. In each of the $d$ subspaces, the
steady state basis element is a population {[}e.g., $|n\ket\bra n|$
with $n\in\{0,1,\cdots d-1\}$; see Fig.~\ref{fig:specific-ash}(b){]}
and an \acs{ASH} such as this corresponds to a classical $d$it (see
Sec. \ref{sec31}). However, such a symmetry is neither necessary
nor sufficient for the existence of a \foreignlanguage{american}{\acs{DFS}},
i.e., for cases where all $d^{2}$ subspaces each have one steady-state
basis element. Regarding the necessary condition, we study examples
with symmetries which \textit{do} admit a \foreignlanguage{american}{\acs{DFS}}
in Sec.~\ref{sec:Many-photon-absorption} and there exist other examples
which do not \cite{prozen}. Regarding sufficiency, steady-state coherences
$\St_{\m,\n\neq\m}$ can exist with or without a discrete symmetry
(e.g., Sec.~5.3 in Ref.~\citep{baum2}). Both of these cases are
demonstrated pictorially via the two types of $\rout$ below: 
\begin{equation}
\left(\begin{array}{cccc}
\vspace{0.025in}\St_{00} & \St_{01} & \leftarrow & ~~\\
\vspace{0.025in}\St_{10} & \St_{11} & \leftarrow\\
\vspace{0.025in}\uparrow & \uparrow & \nwarrow\\
\vspace{0.025in}
\end{array}\right),\left(\begin{array}{cc|cc}
\St_{00} & \leftarrow & \St_{01} & \leftarrow\\
\uparrow & \nwarrow & \uparrow & \nwarrow\\
\hline \St_{10} & \leftarrow & \St_{11} & \leftarrow\\
\uparrow & \nwarrow & \uparrow & \nwarrow
\end{array}\right)\,.
\end{equation}
In the above list, arrows represent parts of the space which converge
to $\St_{\m\n}\in\ash$, which form a qubit \foreignlanguage{american}{\acs{DFS}}.
The left example symbolizes a system with no parity symmetry. In the
right example of a system with parity symmetry, the full space is
``cut-up'' into four independent subspaces, each of which converges
to a steady state/coherence.

\subsection{Symmetries\label{sec22}}

We now partially extend the parity discussion in the previous Subsection
to more general symmetries. As mentioned in eq.~(\ref{dissym}),
a continuous symmetry $U$ is a unitary operator whose corresponding
superoperator $\mathcal{U}=e^{\phi\mathcal{J}}$ is such that $\mathcal{U}^{\dgt}\L\mathcal{U}=\L$,
or equivalently $[\mathcal{J},\L]=0$. It is therefore easy to see
that both $\mathcal{U}$ and $\mathcal{U}^{\dgt}$ are symmetries
of both $\L$ and $\L^{\dgt}$. To state in a different way, $\mathcal{U}$
commutes with time-evolution generated by $\L$, 
\begin{equation}
e^{t\L}\U^{\dgt}(\rin)=e^{t\L}(U^{\dg}\rin U)=U^{\dg}e^{t\L}(\rin)U=\U^{\dgt}e^{t\L}(\rin)
\end{equation}
for any $\rin\in\oph$. Examples of symmetries include any $U$ such
that $UHU^{\dg}=H$ and $UF^{\ell}U^{\dg}=e^{i\phi_{l}}F^{\ell}$
\citep{bardyn} or any permutations among the jump operators $F^{\ell}$
that leave $\L$ invariant \citep{prozen}. Note that the former case
provides an example of a symmetry whose generator doesn't commute
with $F^{\ell}$. The Lindbladian can be block-diagonalized by $\mathcal{U}$
(with each block corresponding to an eigenvalue of $\U$) in the same
way that a Hamiltonian can be block-diagonalized by $U$ (with each
block corresponding to an eigenvalue of $U$). Symmetries can thus
significantly reduce computational cost, with the additional complication
that the blocks of $\L$ may not be further diagonalizable. However,
symmetries by themselves \textit{do not} determine the dimension of
\acs{ASH} because some blocks may contain only decaying subspaces
and no steady states. Diagonal parts of $\rin$ will always be in
blocks with steady states since the trace is preserved. For a unitary
$U$ such that $[U,H]=[U,F^{\ell}]=0$, $\dim\{\ash\}$ will be at
least as much as the number of distinct eigenvalues of $U$ (\citep{prozen},
Thm.~A.1). One can see this by decomposing $U$ into a superposition
of projections on its eigenspaces and applying eq.~(\ref{eq:noether-proj})
to each projection. However, such a result once again does not say
anything about whether \acs{ASH} will be a quantum memory or a classical
one.

An example of a symmetry is invariance of the zero-temperature cavity
\textemdash{} $\L$ with $F=\aa$ and $H=0$ \textemdash{} under bosonic
rotations $R_{\phi}\equiv e^{i\phi\ph}$ (with $\ph=\aa^{\dg}\aa$).
This is an example of a continuous symmetry which does not stem from
a conserved quantity in \acs{OPH}. Instead, this symmetry stems from
the generator $\mathcal{N}$ of the corresponding $\mathcal{R}_{\phi}\equiv e^{i\phi\mathcal{N}}$,
which commutes with $\L$. The generator acts as $\mathcal{N}(\rin)=\ph\rin-\rin\ph$
and its commutation with $\L$ can be easily checked. The block diagonalization
of $\L$ stemming from this symmetry corresponds to equations of motion
for matrix elements $\bra n|\rin|m\ket$ with $m-n=r$ being decoupled
from those with $m-n\neq r$ \{\citep{zoller_book}, eq.~(6.1.6)\}.
This will be used to calculate conserved quantities in Sec.~\ref{sec:Many-photon-absorption}.
In this way, symmetries can help compartmentalize evolution of both
states and operators.

Any symmetry of $\L$ leaves the asymptotic subspace invariant,
\begin{equation}
\U^{\dgt}\L\U=\L\,\,\,\,\,\,\,\Rightarrow\,\,\,\,\,\,\,\U^{\dgt}\ppp\U=\ppp\,.
\end{equation}
This is because one can create a basis for \acs{ASH} which consists
of eigenstates of $\U$. Symmetries can thus classify \citep{bardyn}
unique steady states and/or constrain their properties \cite{popkov2013,Rivas2016}.
For the example of the previous paragraph, the vacuum $|0\ket\bra0|$
is rotationally invariant under $R_{\phi}$. When \acs{ASH} is not
one dimensional, symmetries will rotate \acs{ASH} into itself and
so can act nontrivially on any given state $\rout\in\ash$. Symmetries
can thus be used to perform unitary rotations on the steady-state
subspace. We briefly mention the existence of anti-commuting symmetries
such as chiral \citep{bardyn} or parity-time \citep{prosen2012,prosen2012a}
for dissipative dynamics. These can reveal symmetries in the spectrum
of $\L$ and $\L^{\dgt}$ \citep{prosen2012}, similar to the spectrum
of a chirally-symmetric Hamiltonian being symmetric around zero. A
brute-force approach of finding all symmetries of $\L$ is to find
the null space of the commutator super-superoperator $[\L,\cdot]$
(\cite{schirmer_symmetries}, Appx. A; see also \cite{Zeier2011}).\selectlanguage{english}%

\inputencoding{latin9}\newpage{}\foreignlanguage{english}{}%
\begin{minipage}[t]{0.5\textwidth}%
\selectlanguage{english}%
\begin{flushleft}
\begin{singlespace}\textit{``Never do a calculation until you know
the answer.''}\end{singlespace}
\par\end{flushleft}
\begin{flushleft}
\hfill{}\textendash{} Steven M. Girvin
\par\end{flushleft}\selectlanguage{english}%
\end{minipage}

\chapter{Examples\label{ch:3}}

\selectlanguage{english}%
Here, we present examples of $\L$ which do not have a unique steady
state. The first three Sections focus on calculating conserved quantities
of single-body systems such as qubits and oscillators \cite{pub011,ABFJ}.
The last section takes the systems theory perspective and reviews
standard techniques for stabilizing ground state subspaces of many-body
frustration-free Hamiltonians.

\section{Single-qubit dephasing\label{sec31}}

In this example, \acs{ASH} is two-dimensional: all steady states
can be written as convex combinations of two orthogonal basis states
$\St_{0}$ and $\St_{1}$. In other words, such an \acs{ASH} stores
one classical bit (i.e., one probability's worth of information).
Such an \acs{ASH} can also be thought of as a one-dimensional simplex
\cite{Macieszczak2015} and is the simplest version of an information-preserving
structure \cite{robin}. In the many-body case, such a system is called
bistable \cite{Letscher2016}. We also initially assume that there
is no decaying subspace ($\pp=I$).

Consider one qubit undergoing dephasing on two of the three axes of
its Bloch sphere, thereby stabilizing the Bloch vector onto the third
axis. In this case, there is one jump operator $F=Z$ (with $X,Y,Z$
the usual Pauli matrices) and no Hamiltonian. The master equation
simplifies to the \textit{Poisson semigroup generator} (i.e., a Lindbladian
with a unitary jump operator \cite{Lindblad1976})
\begin{equation}
\L(\r)=Z\r Z-\half\{I,\r\}=Z\r Z-\r=-\half[Z,[Z,\r]]\,.
\end{equation}
Picking the eigenbasis of $Z$, $Z|\m\ket=(-)^{\m}|\m\ket$ with $\m=0,1$,
one can see that the states $\St_{\m}=|\m\ket\bra\m|$ will be steady
but the coherence $|0\ket\bra1|$ will not survive. The steady-state
density matrix is then
\begin{equation}
\rout=\lim_{t\rightarrow\infty}e^{t\L}(\rin)=\ppp(\rout)=c_{0}|0\ket\bra0|+c_{1}|1\ket\bra1|\,.
\end{equation}
Naturally, one expects the system to record the initial $Z$-component
of $\rin$. One can see that $\L^{\dgt}=\L$ since the jump operator
is Hermitian, so the conserved quantities $J^{\m}=\St_{\m}$. Letting
$c_{Z}=\text{Tr}\{Z\rin\}$, one indeed determines that the $Z$-component
is preserved and $c_{\m}=\half[1+(-)^{\m}c_{Z}]$. 

This example can be straightforwardly generalized to an $N$-dimensional
system whose \acs{ASH} is spanned by all diagonal populations $|\m\ket\bra\m|$
(for $\m=0,1,\cdots,N-1$) and where there is still no decaying subspace.
Lindbladians with such \acs{ASH} include \textit{dephasing Lindbladians}
\cite{Avron2011} {[}see Fig.~\ref{fig:specific-ash}(b){]}. These
systems can be used to model what happens in a measurement \cite{weinberg2016},
with $\St_{\m}$ interpreted as \textit{pointer states} of the system
\cite{Zurek2003}.

One can also extend this qubit example to include a decaying subspace
$\lrbig$. While the structure of \acs{ASH} would remain the same,
$c_{\m}$ would additionally store information about the populations
(and possibly the coherences) of states initially in $\lrbig$. For
example, consider adding a third level $|2\ket$ to the Hilbert space
and adding another jump operator $F^{\prime}=|0\ket\bra2|$ which
decays that new level to $|0\ket$. Then, the conserved quantity associated
with $|0\ket$ gains an additional term which ``catches'' the decayed
population: $J^{0}=\St_{0}+|2\ket\bra2|$.

Finally, this example can be extended to cases where the two (or more)
$\St_{\m}$ are mixed states. One can imagine such a case in an thermal
equilibrium system with a parity symmetry: $\rout=c_{0}\St_{0}+c_{1}\St_{1}$,
where $\St_{\m}$ are the unique steady states in each parity sector.
We direct the interested reader to further examples of such systems
in spin chains \cite{prozen,Ilievski2014,Medvedyeva2016,Monthus2017,Monthus2017a},
fermionic systems (\cite{prosen2010}, Example 5.2), and quantum transport
in models of energy harvesting \cite{Manzano2013}.

\section{Two-qubit dissipation\label{sec32}}

In this example, \acs{ASH} is initially a one-qubit \foreignlanguage{american}{\acs{DFS}}
with $\hout=0$ and there is a two-dimensional decaying subspace $\lrbig$.
This example is taken from recent experimental work that stabilizes
Bell states using trapped ions \cite{zoller_stabilizers} and is closely
related to stabilizer generators of qubit codes \cite{sarma2013}.
We study this case in detail by adding different Hamiltonians and
jump operators \cite{pub011,ABFJ} and seeing how the structure of
\acs{ASH} and the conserved quantities changes.

\subsection{Clean case\label{subsec:Clean-case}}

Let \acs{OPH} be the space of matrices acting on the Hilbert space
$\h$ of two qubits. Let $\L$ have one jump operator ($c$ in Box
1 of \cite{zoller_stabilizers}) 
\begin{equation}
F=\half\left(I-Z_{1}Z_{2}\right)X_{2}\,,
\end{equation}
where the subscript labels the qubit. Intuitively, \acs{ASH} is equivalent
to the space spanned by $|\psi_{k}\ket\bra\psi_{l}|$, where $k,l\in\{0,1\}$
and the Bell states $|\psi_{k}\ket\equiv\frac{1}{\sqrt{2}}[|01\ket+(-)^{k}|10\ket]$.
While we can operate using the original Pauli matrices, let us instead
call the other two Bell states $|\psi_{k}^{\perp}\ket\equiv\frac{1}{\sqrt{2}}[|00\ket+(-)^{k}|11\ket]$
and re-write the jump using this basis:
\begin{equation}
F=\sum_{k=0}^{1}|\psi_{k}\ket\bra\psi_{k}^{\perp}|=\left(\begin{array}{cc|cc}
0 & 0 & 1 & 0\\
0 & 0 & 0 & 1\\
\hline 0 & 0 & 0 & 0\\
0 & 0 & 0 & 0
\end{array}\right)=F_{\ur}\,.\label{eq:jumporig}
\end{equation}
This allows us to conform to the $\empbig$ structure, which is delineated
using the two lines in the matrix above: the first two Bell states
$\{|\psi_{0}\ket,|\psi_{1}\ket\}$ form the \foreignlanguage{american}{\acs{DFS}}
and the latter two $\{|\psi_{0}^{\perp}\ket,|\psi_{1}^{\perp}\ket\}$
decay into the \foreignlanguage{american}{\acs{DFS}}. Obviously,
\acs{ASH} is spanned by 
\begin{equation}
\Psi_{kl}=|\psi_{k}\ket\bra\psi_{l}|\,.
\end{equation}
The conserved quantities can be determined by the formula in Thm.~\ref{prop:3}:
\begin{equation}
J^{kl}=|\psi_{k}\ket\bra\psi_{l}|+|\psi_{k}^{\perp}\ket\bra\psi_{l}^{\perp}|\,.
\end{equation}
One would think that since there is a decaying subspace, the conserved
quantities do not commute with $F$ (see Sec.~\ref{sec12}). However,
the presence of $\lrbig$ is not a sufficient condition and we do
in fact have $[J^{kl},F]=0$ because there is no extra decoherence
in $\lrbig$ ($F_{\lr}=0$) and because the decay of states is one-to-one
($|\psi_{k}^{\perp}\ket\rightarrow|\psi_{k}\ket$ for $k=0,1$). A
few sanity checks: the steady diagonal state basis elements add up
to the projection onto the \foreignlanguage{american}{\acs{DFS}},
$P=\Psi_{00}+\Psi_{11}$, and the diagonal conserved quantities correspondingly
add up the identity, $I=J^{00}+J^{11}$. Both $J^{kl}$ and $\St_{kl}$
form the Lie algebra $\mathfrak{u}(2)$. The steady state $\rout\in\ash$
for initial state $\rin\in\oph$ can be expressed as 
\begin{align}
\rout & =\sum_{k,l=0}^{1}\tr\{J^{kl\dg}\rin\}|\psi_{k}\ket\bra\psi_{l}|=P\rin P+\sum_{k,l=0}^{1}\bra\psi_{k}^{\perp}|\rin|\psi_{l}^{\perp}\ket|\psi_{k}\ket\bra\psi_{l}|\,.
\end{align}

Notice that $\Pi=J^{00}-J^{11}$ is a parity operator, meaning that
the analysis from Sec.~\ref{sec12} holds and we can partition the
evolution into invariant blocks. Everything in $\L$ commutes with
$\Pi$, so \acs{OPH} can be partitioned into four blocks of matrices
that are built out of the two subspaces of $\h$ of positive and negative
parity. Each block (indexed by $k,l$) is of the form $\{|\psi_{k}\ket\bra\psi_{l}|,|\psi_{k}\ket\bra\psi_{l}^{\perp}|,|\psi_{k}^{\perp}\ket\bra\psi_{l}|,|\psi_{k}^{\perp}\ket\bra\psi_{l}^{\perp}|\}$,
and there are four of them since $k,l\in\{0,1\}$. Each block has
its own steady state basis element $\St_{kl}$ and conserved quantity
$J^{kl}$ and evolves independently from the other blocks.

\subsection{Coherence suppressed by a jump}

Continuing from the previous \foreignlanguage{american}{\acs{DFS}}
example, let us add a term to $F_{\lr}$:
\begin{equation}
F=\sum_{k=0}^{1}|\psi_{k}\ket\bra\psi_{k}^{\perp}|+\a\sum_{k=0}^{1}(-1)^{k}|\psi_{k}^{\perp}\ket\bra\psi_{k}^{\perp}|=\left(\begin{array}{cc|cc}
0 & 0 & 1 & 0\\
0 & 0 & 0 & 1\\
\hline 0 & 0 & \a & 0\\
0 & 0 & 0 & -\a
\end{array}\right)\,,
\end{equation}
where the $\a$-dependent term $F_{\lr}$ now dephases the non-DFS
Bloch vector (with $\a\in\mathbb{R}$). The steady-state basis elements
are still $\St_{kl}=|\psi_{k}\ket\bra\psi_{l}|$ since $F_{\ul}=0$.
To determine the corresponding $J^{kl}$, we use eq.~(\ref{eq:main}).
Acting on $\St_{kl}$ with $\L^{\dgt}$ (\ref{eq:adjl}) and then
the adjoint of $\L_{\lr}^{-1}$ (\ref{eq:llr}) yields the corresponding
conserved quantities
\begin{equation}
\J^{kl}=|\psi_{k}\ket\bra\psi_{l}|+\frac{|\psi_{k}^{\perp}\ket\bra\psi_{l}^{\perp}|}{1+2\a^{2}(1-\d_{kl})}\,.
\end{equation}
The only non-trivial feature of the steady state is due to $F_{\lr}$
and $\L_{\lr}^{-1}$. Namely, an initial nonzero coherence $\bra\psi_{0}^{\perp}|\rin|\psi_{1}^{\perp}\ket$
leads necessarily to a mixed steady state due to coherence suppression
of order $O(\a^{-2})$.

\subsection{Coherence suppressed by a Hamiltonian ($H_{{\scriptscriptstyle \infty}}\protect\neq0$)\label{subsec:Coherence-suppressed-by}}

A similar coherence suppression can be achieved by adding the Hamiltonian
\begin{equation}
H=\half\b(|\psi_{0}\ket\bra\psi_{0}|-|\psi_{1}\ket\bra\psi_{1}|)=H_{\ul}
\end{equation}
 (with $\b\in\mathbb{R}$) to the original jump $F$ from eq.~(\ref{eq:jumporig}).
Now the \foreignlanguage{american}{\acs{DFS}} is non-stationary (with
$\hout=H$) and the off-diagonal \foreignlanguage{american}{\acs{DFS}}
elements $\St_{k\neq l}$ rotate. Abusing notation by omitting the
corresponding eigenvalue $\la=\b$, the left asymptotic eigenmatrices
become
\begin{equation}
\J^{kl}=|\psi_{k}\ket\bra\psi_{l}|+\frac{|\psi_{k}^{\perp}\ket\bra\psi_{l}^{\perp}|}{1+i\b(-)^{l}(1-\d_{kl})}\,.
\end{equation}
Despite the fact that $F_{\lr}=0$, the inverse $(\L-i\la)_{\lr}$
from Thm.~\ref{prop:3} still inflicts damage to the initial state
due to $\hout$ (for nonzero $\b$), but now the coherence suppression
is of order $O(\b^{-1})$. 

The coherence suppression shown above is due to $\ulbig$ rotating
while states from $\lrbig$ flow into it. It turns out that one can
cancel that suppression by also rotating $\lrbig$ in the same direction.
If we add another term to the Hamiltonian, 
\begin{equation}
H^{\pr}=\half\a(|\psi_{0}^{\perp}\ket\bra\psi_{0}^{\perp}|-|\psi_{1}^{\perp}\ket\bra\psi_{1}^{\perp}|)=H_{\lr}^{\pr}\,,
\end{equation}
then, once again due to the inverse piece $(\L-i\la)_{\lr}$ in determining
$J^{kl}$, 
\begin{equation}
J^{01}=J^{10\dg}=|\psi_{0}\ket\bra\psi_{1}|+\frac{|\psi_{0}^{\perp}\ket\bra\psi_{1}^{\perp}|}{1+i\left(\alpha-\beta\right)}\,.
\end{equation}
Setting $\a=\b$ means that the coherence suppression can be canceled
by a proper rotation in the decaying subspace $\lrbig$. This effect
will be studied in a future work.

\subsection{Driven case\label{sec33}}

As a final example, let us take this \foreignlanguage{american}{\acs{DFS}}
case and convert it into an \foreignlanguage{american}{\acs{NS}}
by changing the original jump from eq.~(\ref{eq:jumporig}) to (assuming
$\g\in\mathbb{R}$)
\begin{equation}
F=\sum_{k=0}^{1}|\psi_{k}\ket\bra\psi_{k}^{\perp}|-\g I\,.
\end{equation}
Note that we could have equivalently added the Hamiltonian 
\begin{equation}
H=-i\g\sum_{k=0}^{1}|\psi_{k}\ket\bra\psi_{k}^{\perp}|+H.c.
\end{equation}
due to the ``gauge'' transformation (\ref{eq:gauge}). This driving
expands the \foreignlanguage{american}{\acs{DFS}} into a qubit \foreignlanguage{american}{\acs{NS}}
{[}Fig.~\ref{fig:specific-ash}(f){]}, absorbing the decaying subspace
$\lrbig$. Since the new jump still commutes with everything the original
($\g=0$) jump commuted with, we still have a parity symmetry and
therefore can study each individual block of states $\{|\psi_{k}\ket\bra\psi_{l}|,|\psi_{k}\ket\bra\psi_{l}^{\perp}|,|\psi_{k}^{\perp}\ket\bra\psi_{l}|,|\psi_{k}^{\perp}\ket\bra\psi_{l}^{\perp}|\}$.
Now, the steady state basis element in each block is
\begin{equation}
\St_{kl}=\frac{\left(1+\g^{2}\right)|\psi_{k}\ket\bra\psi_{l}|+\g|\psi_{k}\ket\bra\psi_{l}^{\perp}|+\g|\psi_{k}^{\perp}\ket\bra\psi_{l}|+\g^{2}|\psi_{k}^{\perp}\ket\bra\psi_{l}^{\perp}|}{\sqrt{1+4\g^{2}+2\g^{4}}}\,,
\end{equation}
where we are adding $\g$-dependent factors so that $\bb\St_{kl}|\St_{pq}\kk=\d_{kp}\d_{lq}$.
The previous conserved quantities are carried over since they commute
with $F$ for all values of $\g$, but now we have to add extra factors
in order to make sure $\bb J^{kl}|\St_{pq}\kk=\d_{kp}\d_{lq}$:
\begin{equation}
J^{kl}=\frac{\sqrt{1+4\g^{2}+2\g^{4}}}{1+2\g^{2}}\left(|\psi_{k}\ket\bra\psi_{l}|+|\psi_{k}^{\perp}\ket\bra\psi_{l}^{\perp}|\right)\,.
\end{equation}
Let us now organize \acs{OPH} in a different way in order to reveal
the \foreignlanguage{american}{\acs{NS}} structure and its associated
quantities, introduced in Sec.~\ref{sec:Examples}. Both the $\St$'s
and $J$'s can be put into the following factored form:
\begin{align}
\St_{kl} & =|k\ket\bra l|\ot\frac{\as}{\an}\,\,\,\,\,\,\,\,\,\,\,\,\,\,\text{and}\,\,\,\,\,\,\,\,\,\,\,\,\,\,J^{kl}=|k\ket\bra l|\ot\an\ai\,,
\end{align}
where the auxiliary steady state, identity, and square-root of purity
($\an\equiv\sqrt{\tr\{\as^{2}\}}$) are 
\begin{equation}
\as\equiv\frac{1}{1+2\g^{2}}\begin{pmatrix}1+\g^{2} & \g\\
\g & \g^{2}
\end{pmatrix}\,,\,\,\,\,\,\,\,\,\,\,\,\,\,\,\,\,\ai\equiv\begin{pmatrix}1 & 0\\
0 & 1
\end{pmatrix}\,,\,\,\,\,\,\,\,\,\,\,\,\,\,\,\,\,\an\equiv\frac{\sqrt{1+4\g^{2}+2\g^{4}}}{1+2\g^{2}}
\end{equation}
and $|k\ket\bra l|$ is the basis for the \foreignlanguage{american}{\acs{DFS}}
part of the \foreignlanguage{american}{\acs{NS}}. Naturally, the
\foreignlanguage{american}{\acs{NS}} reduces to a \foreignlanguage{american}{\acs{DFS}}
as $\g\rightarrow0$, whereas the dissipation and driving balance
out and produce a maximally mixed $\as$ when $\g\rightarrow\infty$.
\selectlanguage{american}%

\section{Many-photon absorption\label{sec:Many-photon-absorption}}

\selectlanguage{english}%
In this example, \acs{ASH} is initially a \foreignlanguage{american}{\acs{DFS}}
with $\hout=0$ and there is an infinite-dimensional decaying subspace
$\lrbig$. This family of examples includes single-mode two-photon
\cite{loudon1975,simaan1975,agarwal1987,gilles1993,simaan1978} and
$d>2$-photon \cite{voigt1980,zubairy1980,klimov2003} absorption.
Their dynamics has been analytically solved for all time in the aforementioned
references, but studying \acs{ASH} does not require that tedious
algebra. These and related quantum optical systems (see Ref.~\cite{dodonov1997}
for a brief review of the older literature) have been recently gaining
interest from the quantum information (see Ch.~\ref{ch:7}) and optomechanics
\cite{nunnenkamp2012,borkje2013} communities. 

\subsection{Two-photon case\label{sec34}}

Consider bosonic systems with jump operator $F=\aa^{2}$ with $[\aa,\aa^{\dg}]=I$
and no Hamiltonian. While this system is infinite, one can successfully
analyze it for finite energy using a large finite Fock space spanned
by $\{|n\ket\}_{n=0}^{N}$ (where $N\gg1$) \cite{schirmer}. Here,
\acs{ASH} has the same qubit \foreignlanguage{american}{\acs{DFS}}
structure as in Example~\ref{sec32}, with basis $\St_{kl}=|k\ket\bra l|$
in Fock space (with $k,l=0,1$), as well as a similar parity symmetry.
The diagonal conserved quantities $J^{kk}$ correspond to projectors
on the even and odd subspaces respectively:
\begin{equation}
J^{kk}\equiv\sum_{n=0}^{\infty}|2n+k\ket\bra2n+k|\equiv\Pi_{k}\,,\label{eq:catproj2}
\end{equation}
and we can once again build a parity operator $J^{00}-J^{11}$ that
commutes with $F$. Due to this symmetry, \acs{OPH} is once again
split into four independent subspaces: the four blocks $\{|2n+k\ket\bra2m+l|\}_{n,m=0}^{\infty}$
(labeled by $k,l\in\{0,1\}$) evolve independently of each other.

The conserved quantity for the off-diagonal subspace,
\begin{equation}
J^{01}=\frac{(\ph-1)!!}{\ph!!}\Pi_{0}\aa,\label{eq:j01}
\end{equation}
where $(m+2)!!\equiv(m+2)\cdot m!!$ is the double factorial \cite{dfac},
does not commute with the jump operator $F$. This conserved quantity
(\ref{eq:j01}), derived first in Ref.~\cite{simaan1978}, has provided
the primary motivation for the initial portions of this body of work.
One can obtain this quantity by first using the parity symmetry to
isolate the subspace where it exists and then solving the equation
$\L^{\dgt}(J)=0$ in that subspace. Due to the parity structure, we
know that $J^{01}$ is off-diagonal in the sense that $J^{01}=\Pi_{0}J^{01}\Pi_{1}$.
Furthermore, since $\bb J^{01}|\St_{01}\kk=1$, $J^{01}$ has to overlap
with its corresponding steady-state coherence $\St_{01}=|0\ket\bra1|$.
With those two constraints and symmetry of $\L$ under $V=e^{i\phi\ph}$
(see Sec.~\ref{sec22}), $J^{01}$ must consist only of elements
from the sector $\{|2n\ket\bra2n+1|\}_{n=0}^{\infty}$. Assuming a
solution of the form $J^{01}=j(\ph)\Pi_{0}\aa$ and plugging into
$\L^{\dgt}(J^{01})=0$ yields a recursion relation for $j(\ph)$,
whose solution is eq.~(\ref{eq:j01}).

Physically, $J^{01}$ represents how the environment distinguishes
components of $\rin$. It partially preserves information only from
elements $\{|2n\ket\bra2n+1|\}_{n=0}^{\infty}$ since, in that case,
the same number of photon pairs is lost in relaxing to $|0\ket\bra1|$.
In all other even-odd basis sectors, e.g., $\{|2n\ket\bra2n-1|\}_{n=0}^{\infty}$,
different numbers of photon pairs are lost ($n$ vs. $n-1$ pairs
for the example). While $J^{01}$ tells us exactly which sector contributes
to the asymptotic state, it does not tell us how coherences decay
as photons are lost. To determine intermediate-time behavior, one
should consider the eigenvalues of the other eigenmatrices of $\L$
in each sector {[}see eq.~(\ref{eq:expansion}){]}. In fact, not
all information is preserved even in the $\{|2n\ket\bra2n+1|\}_{n=0}^{\infty}$
sector! The remaining left eigenmatrices $\{L^{01,m}\}_{m=0}^{\infty}$
in that sector are of the same form as $J^{01}$, namely $L^{01,m}=l^{01,m}\left(\ph\right)\Pi_{0}a$,
but with the double factorials generalized to operators whose nonzero
entries are
\begin{equation}
\bra2n|l^{01,m}\left(\ph\right)|2n\ket=\frac{1}{\sqrt{2m+1}}\prod_{k=m+1}^{n}\frac{k\left(2k-1\right)}{2\left(k^{2}-m^{2}\right)}
\end{equation}
and with the convention that the sum $\prod_{k=j}^{l}f\left(k\right)=1$
when $j=l+1$ and zero when $j>l+1$ \cite{zubairy1980}. Their corresponding
eigenvalues are $\l^{01,m}=-4m^{2}$ and one can see that $L^{01,0}=J^{01}$.
The smallest Fock state in the support of $l^{01,m}\left(\ph\right)$
is $|2m\ket$, implying that the coherences $|2m\ket\bra2m+1|$ decay
no faster than $e^{-4m^{2}t}$.

\subsection{$d$-photon case\label{sec35}}

Let us now generalize the $d=2$ case to all $d>0$ and consider the
jump operator $F=\aa^{d}$. Note that the $d=1$ case is simply single-photon
loss, which has the vacuum Fock state as its unique steady state.
For the general case, let
\begin{equation}
\Pi_{k}=\sum_{n=0}^{\infty}|dn+k\ket\bra dn+k|=\frac{1}{d}\sum_{l=0}^{d-1}e^{i\frac{2\pi}{d}(\ph-k)l}\label{eq:oscosc}
\end{equation}
be $d$ different projections with $k,l\in\{0,1,\cdots,d-1\}$. Noting
the cyclic relationship among projection operators,
\begin{equation}
\Pi_{k}\aa=\aa\Pi_{(k+1)\text{mod}\,{d}}=\Pi_{k}\aa\Pi_{(k+1)\text{mod}\,{d}},\label{eq:catrelproj}
\end{equation}
one can see that $[\Pi_{k},\aa^{d}]=0$. According to Sec.~\ref{sec12},
the Fock space is then partitioned into $d^{2}$ subspaces, each evolving
independently. We can thus write
\begin{equation}
\rout=\sum_{k,l=0}^{d-1}c_{kl}|k\ket\bra l|
\end{equation}
with $c_{kl}=\text{Tr}\{J_{kl}^{\dg}\rin\}$. Extending the recipe
of the $d=2$ case, there are $d^{2}$ conserved quantities 
\begin{equation}
J_{kl}=\frac{j_{kl}\left(\ph\right)}{\sqrt{(l)_{l-k}}}\Pi_{k}\aa^{l-k}\,,\label{eq:cctwophot}
\end{equation}
where the square-root is to satisfy the biorthogonality condition
(\ref{eq:norms}), $J_{lk}=J_{kl}^{\dg}$, 
\begin{equation}
\bra dn+k|j_{kl}\left(\ph\right)|dn+k\ket=\prod_{p=0}^{n-1}\frac{2\left(dp+l\right)_{l-k}}{\left(dp+l\right)_{l-k}+\left(dp+l+d\right)_{l-k}}
\end{equation}
(for all $n\in\mathbb{N}$) and zero elsewhere, and the falling factorial
$\left(x\right)_{n}=x\left(x-1\right)...\left(x-n+1\right)$. Since
$(x)_{0}=1$, the diagonal conserved quantities simplify to $J_{kk}=\Pi_{k}$.
Since $\sum_{k=0}^{d-1}J_{kk}=I$, only $d^{2}-1$ quantities are
independent. The off-diagonal quantity simplifies to eq.~(\ref{eq:j01})
for $d=2$ and only the identity remains for $d=1$. The $J_{kl}$
are reducible into a direct sum of $\mathfrak{u}(d)$ Lie algebras.
In other words, $\sum_{k,l=0}^{d-1}J_{kl}$ forms an infinite block-diagonal
matrix with blocks of length $d$, diagonal entries of 1, and off-diagonal
entries depending on $j_{kl}(\ph)$. 

\subsection{Steady state for an initial coherent state\label{subsec:Steady-state-for}}

As an example calculation, we determine $\rout$ when $\rin=|\b\ket\bra\b|$,
a coherent state $\aa|\b\ket=\b|\b\ket$ with $\b\in\mathbb{C}$.
Note that only the piece of $\rin$ that initially lives in a given
subspace, $\Pi_{k}\rin\Pi_{l}$, contributes to the corresponding
$c_{kl}$ in $\rout$. Since a coherent state fills the entire Fock
space, all subspaces evolve non-trivially and equilibrate to 

\begin{equation}
c_{kl}=\frac{\b^{\star l-k}e^{-\left|\b\right|^{2}}}{\sqrt{(l)_{l-k}}}\sum_{n=0}^{\infty}\frac{j_{kl}(dn+k)}{(dn+k)!}(\left|\b\right|^{2})^{dn+k}.\label{eq:sum}
\end{equation}
Since the factors $P_{n}=j_{kl}(dn+k)/(dn+k)!$ are polynomials in
$n$, $c_{kl}$ are generalized hypergeometric functions whose arguments
are be roots of $P_{n+1}/P_{n}$ \cite{zeilberger}. The diagonal
elements simplify if instead we express $\Pi_{k}$ using the right-hand
side of eq.~(\ref{eq:oscosc}), 
\begin{equation}
c_{kk}=\frac{1}{d}\sum_{l=0}^{d-1}e^{-i\frac{2\pi}{d}kl}\exp\left[\left|\b\right|^{2}\left(e^{i\frac{2\pi}{d}l}-1\right)\right]\,.\label{eq:diagqtys}
\end{equation}
In the large $|\b|$ limit, $c_{kk}\rightarrow1/d$, distributing
populations equally among the diagonal steady states. For $k\neq l$
in this limit, $c_{kl}$ converges to a constant times $e^{-i\theta(k-l)}$,
thus storing the phase $\theta\equiv\arg(\b)$ of the initial coherent
state for any $d$. Taking a look at specific cases, for $d=1$, eq.~(\ref{eq:sum})
is just $c_{00}=1$. For $d=2$, expressing in the $|k\ket\bra l|$
basis, 
\begin{equation}
\rout=\begin{pmatrix}\,\half(1+e^{-2\left|\b\right|^{2}})\,\,\,\,\,\,\, & \b^{\star}e^{-\left|\b\right|^{2}}I_{0}(\left|\b\right|^{2})\,\\
\text{c.c.} & \half(1-e^{-2\left|\b\right|^{2}})
\end{pmatrix},
\end{equation}
where $I_{0}$ is the modified Bessel function of the first kind \cite{dlmf}.
In the large $|\b|$ limit, $c_{01}\rightarrow e^{-i\theta}/\sqrt{2\pi}$.
We consider a generalized version of this case in Ch.~\ref{ch:7}.

\section{Ground state subspaces of frustration-free Hamiltonians\label{sec:Ground-state-subspaces}}

This final case is discussed primarily to present ``many-body''
Lindbladian examples and discuss some related work.\footnote{However, this is by no means an attempt to review the disconnected
literature.} The focus here is not on conserved quantities, but on a recipe for
jumps that guarantees stabilization of the ground state subspace of
any given frustration-free Hamiltonian. The presentation is standalone
for convenience, and these results can be reconstructed from, e.g.,
Thm.~1 of Ref.~\cite{Ticozzi2009} and Corr.~1 of Ref.~\cite{Ticozzi2012}.

These many-body examples feature a \textit{uniformly factorizable}
Hilbert space $\h=\h_{0}^{\otimes M}$ (a term borrowed from \cite{Pastawski2016}),
where $\h_{0}$ is the Hilbert space of some ``local'' site (such
as a spin) and $M$ is the number of such sites. A Hamiltonian $H$
\foreignlanguage{american}{is }\textit{frustration-free} \foreignlanguage{american}{if
all ground states of $H$ are also ground states of all individual
terms used to construct $H$}. The ground states of $H$ thus form
the $\ash=\ulbig$ of the Lindbladian constructed out of said jumps,
making this a \foreignlanguage{american}{\acs{DFS}} case. To make
things concrete, we state the following theorem and provide an expository
proof.
\selectlanguage{american}%
\begin{thm}
[Stabilizing frustration-free Hamiltonian ground states  \cite{Ticozzi2009,Verstraete2009,Ticozzi2012}]\label{thm:ff}Let
the Hilbert space \foreignlanguage{english}{$\h=\h_{0}^{\otimes M}$,
$H=\sum_{k}H_{k}$}, where $H_{k}$ is a Hamiltonian acting nontrivially
on a subset of the $M$ sites, and assume that $H$ is frustration-free.
Then, for each $H_{k}$, there exist jump operators $\{F^{k,\ell}\}_{\ell}$
such that $F^{k,\ell}$ act nontrivially on the same subset of sites
as $H_{k}$ and the ground states of $H$ form the \acs{DFS} ($\ulbig$)
of the Lindbladian constructed out of $\{F^{k,\ell}\}_{k,\ell}$.
Moreover, the jump operators satisfy
\begin{align}
F^{k,\ell} & =F_{\rmt}^{k,\ell}\\
\sum_{k,\ell}F_{\ur}^{k,\ell\dg}F_{\ur}^{k,\ell} & >0\,.
\end{align}
\end{thm}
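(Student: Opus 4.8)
The plan is to build the jumps locally, term by term, and then read off both ``moreover'' properties from the frustration-free structure; the hypothesis enters only once, but decisively. Let $\Pi_{k}$ denote the projection onto the ground space of $H_{k}$ (acting on the sites of $H_{k}$, tensored with the identity elsewhere). Frustration-freeness says exactly that every ground state of $H$ lies in each $\Pi_{k}\h$, and conversely any $|\psi\ket\in\bigcap_{k}\Pi_{k}\h$ has energy $\sum_{k}\min\sigma(H_{k})$ and is therefore a global ground state; hence $\pp\h=\bigcap_{k}\Pi_{k}\h$, so $\pp\leq\Pi_{k}$ and $\Pi_{k}\pp=\pp$ for every $k$. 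This is the identity that does the work.

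For each $k$ I would define jumps that relax the local excitations of $H_{k}$ into its local ground space while annihilating that ground space: pick orthonormal bases $\{|g_{a}^{k}\ket\}$ of $\Pi_{k}\h$ and $\{|e_{b}^{k}\ket\}$ of $(I-\Pi_{k})\h$ on the sites of $H_{k}$, and set $F^{k,\ell}=|g_{a}^{k}\ket\bra e_{b}^{k}|$ with $\ell=(a,b)$. Then each $F^{k,\ell}$ acts only on the sites of $H_{k}$, and $\sum_{\ell}F^{k,\ell\dg}F^{k,\ell}=d_{k}\,(I-\Pi_{k})$ with $d_{k}=\dim\Pi_{k}\h>0$. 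Because $F^{k,\ell}=F^{k,\ell}(I-\Pi_{k})$ and $(I-\Pi_{k})\pp=0$, I get $F^{k,\ell}\pp=0$, i.e. $F_{\ul}^{k,\ell}=F_{\ll}^{k,\ell}=0$, which is precisely $F^{k,\ell}=F_{\rmt}^{k,\ell}$. Taking the Lindbladian Hamiltonian to be $\pp H\pp$ removes all jump terms from $\L_{\ul}$; since all ground states share one energy $E_{0}$, $\pp H\pp=E_{0}\pp$ acts trivially, so $\ulbig$ is invariant and stationary, a bona fide \acs{DFS} in the sense of Sec.~\ref{subsec:DFS-case}.

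The substance of the theorem is the positivity statement, which I would read as the claim that $\ash=\ulbig$ exactly, i.e. that the decaying block $\lrbig$ supports no residual steady state. Here frustration-freeness enters a second time through $\sum_{k,\ell}F^{k,\ell\dg}F^{k,\ell}=\sum_{k}d_{k}(I-\Pi_{k})$, which is strictly positive on $\qq\h$: if $\sum_{k}(I-\Pi_{k})|\psi\ket=0$ then $|\psi\ket\in\bigcap_{k}\Pi_{k}\h=\pp\h$. Thus the jumps admit no common dark vector orthogonal to the ground space. This is the input needed to show, via Thm.~\ref{prop:2} and the asymptotic-projection machinery of Thm.~\ref{prop:3}, that $\L_{\lr}$ carries no purely imaginary eigenvalue and that $\ulbig$ attracts all of $\oph$.

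The main obstacle is precisely this last step. For local jumps the relaxation into the \emph{global} ground space is generically multi-step, so the one-shot transfer $F_{\ur}$ need not move an arbitrary vector of $\qq\h$ into $\pp\h$, and the ``no common dark vector'' property above must be upgraded to ``no proper $\L$-invariant subspace inside $\qq\h$.'' The cleanest way I would close this gap is a Lyapunov/invariant-subspace argument: show that a suitable energy functional $\tr\{H\,e^{t\L}(\r)\}$ is nonincreasing and strictly decreasing off $\ulbig$, so that any candidate invariant subspace of $\qq\h$ would have to consist of simultaneous ground states of every $H_{k}$ and hence lie in $\pp\h$ by frustration-freeness --- a contradiction. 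Establishing the strict decrease, and thereby ruling out higher-dimensional invariant ``traps'' in $\lrbig$ that the single-step transfer leaves untouched, is the technically delicate part of the proof.
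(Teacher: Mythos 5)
Your construction and the first half of the argument follow the same route as the paper's proof: the paper likewise builds, for each $H_{k}$, local jumps that map the excited space of $H_{k}$ onto its kernel (using partial isometries rather than your complete family of rank-one operators $|g_{a}^{k}\ket\bra e_{b}^{k}|$, an immaterial difference), and it deduces $F_{\ul}^{k,\ell}=F_{\ll}^{k,\ell}=0$, hence $F^{k,\ell}=F_{\rmt}^{k,\ell}$ and the \acs{DFS} property, exactly as you do from $\pp\h=\bigcap_{k}\Pi_{k}\h$. The divergence is in the attractivity step: the paper computes $\tr_{\lr}\{\dot{\r}\}=-\sum_{k,\ell}\tr\{F_{\ur}^{k,\ell\dg}F_{\ur}^{k,\ell}\r_{\lr}\}$ and then asserts that some jump must contribute strictly negatively because $\qq\h$ contains no joint dark vector; there is no Lyapunov functional and no invariant-set analysis. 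So your proof is incomplete at precisely the one step where the paper itself is on shaky ground.

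Your suspicion about that step is well founded. Take two qubits with $H_{1}=|1\ket\bra1|\ot I$ and $H_{2}=I\ot|1\ket\bra1|$, so $\pp=|00\ket\bra00|$, and take the local jumps $F^{1}=\sigma_{-}\ot I$, $F^{2}=I\ot\sigma_{-}$ (any local jumps compatible with $F=F_{\rmt}$ behave the same way here). Then $F_{\ur}^{1}=|00\ket\bra10|$ and $F_{\ur}^{2}=|00\ket\bra01|$, so $\sum_{k}F_{\ur}^{k\dg}F_{\ur}^{k}=|10\ket\bra10|+|01\ket\bra01|$ annihilates $|11\ket\in\qq\h$: the displayed strict positivity fails, and $\tr\{\qq\r\}$ is momentarily stationary on $\r=|11\ket\bra11|$ even though that state does eventually cascade into $\ulbig$. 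The paper's lower bound silently replaces $\tr\{F_{\ur}^{\dg}F_{\ur}\r_{\lr}\}$ (which controls the trace flow out of $\lrbig$ and can vanish) by $\tr\{F^{\dg}F\r_{\lr}\}$ (which is what the no-common-dark-vector argument actually makes positive), and the inequality runs the wrong way. Your proposed repair --- an invariance-principle argument showing that no nontrivial $\L$-invariant set survives inside $\lrbig$ --- is the correct and standard closure, and is essentially what the cited Ticozzi--Viola references do via LaSalle's principle; but you have not executed it, and the candidate functional $\tr\{H\r\}$ needs care when distinct $H_{j}$ overlap, since relaxing locally with respect to $H_{k}$ can raise the energy of an overlapping $H_{j}$. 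Until that step is carried out, the attractivity claim remains unproven, and the ``moreover'' inequality, read literally, fails already in the example above.
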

\begin{proof}
The strategy is two-fold: first construct $\{F^{k,\ell}\}_{\ell}$
whose Lindbladian $\L_{k}$ stabilizes the ground states of $H_{k}$
and then show that the full Lindbladian $\L=\sum_{k}\L_{k}$ stabilizes
\textit{only} the ground states of $H$.

We can shift $H_{k}$ by a constant such that $H_{k}\geq0$ for all
$k$, meaning that all ground states of $H_{k}$ have eigenvalue zero.
\foreignlanguage{english}{For each $k$, define the projection $Q_{k}$
to be on the excited states (also, range or support; see footnote
\ref{fn:The-support-(kernel)} in Ch.~\ref{ch:1}) of $H_{k}$: $H_{k}=Q_{k}H_{k}Q_{k}$.
Define $P_{k}=I-Q_{k}$ to be the projection on the kernel of $H_{k}$,
i.e., $H_{k}P_{k}=P_{k}H_{k}P_{k}=0$. }The strategy is to \foreignlanguage{english}{create
jumps $F^{k,\ell}$ which take us from the range to the kernel. If
the range and kernel are the same dimension, we can do this by just
having one jump $F^{k}$ which is the \textit{isometry} ($F^{k\dg}F^{k}=Q_{k}$)
acting on states in $\ran Q_{k}$ and taking them to $\ker Q_{k}$.
If the range is bigger than the kernel, then we can have isometries
$F^{k,\ell}$ between distinct and non-overlapping subspaces of the
range and the kernel. In other words, if $\dim\ker Q_{k}=d$ and $\dim\ran Q_{k}=D$,
we can have $\left\lceil D/d\right\rceil $ jumps $\{F^{k,\ell}\}_{\ell=1}^{\left\lceil D/d\right\rceil }$
with $d$ ones on a diagonal in the upper right sector such that together
they form the matrix
\begin{equation}
\left(\begin{array}{c|cccc}
0 & F^{k,1} & F^{k,2} & \cdots & F^{k,\left\lceil D/d\right\rceil }\\
\hline  & 0\\
 &  & 0\\
 &  &  & \ddots\\
 &  &  &  & 0
\end{array}\right)\,,\label{eq:shape}
\end{equation}
where the upper left block is the $d$-dimensional $\ker Q_{k}$ and
lower right block is $D$-dimensional $\ran Q_{k}$. One can verify
by simple block matrix multiplication that the Hamiltonian formed
by our jumps, 
\begin{equation}
K^{k}=\half\sum_{\ell}F^{k,\ell\dg}F^{k,\ell}=\left(\begin{array}{c|c}
0 & 0\\
\hline 0 & \half\sum_{\ell}F^{k,\ell\dg}F^{k,\ell}>0
\end{array}\right)\,,\label{eq:cond2}
\end{equation}
is positive definite in the block corresponding to $\ran Q_{k}$.
Recall from Sec.~\ref{sec:What-is-a} that one can think of evolution
due to $\L_{k}$ (constructed out of $\{F^{k,\ell}\}_{\ell}$) as
being composed of a deterministic part, the anticommutator generated
by $-K^{k}$, and a recycling part, generated by applications of the
recycling term $F^{k,\ell}\cdot F^{k,\ell\dg}$. Due to the structure
of $F^{k,\ell}$, the recycling term always takes states out of $\ran Q_{k}$
and maps them into $\ker Q_{k}$. Due to $K^{k}>0$, the deterministic
part always decays anything in $\ran Q_{k}$. To show this, one can
consider the change in population on $\ran Q_{k}$,
\begin{align}
\tr\{Q_{k}\dot{\r}\} & =\tr\{Q_{k}\L_{k}(\r)\}=-2\tr\{K\r\}<0\,,
\end{align}
where we have used the structure of $F^{k,\ell}$ and the fact that
$K>0$ on $\ran Q_{k}$. A similar calculation, now using the four-corners
($\empbig$) decomposition, is now going to be done for the full Lindbladian
$\L=\sum_{k}\L_{k}$.}

\selectlanguage{english}%
Each dissipator $\L_{k}$ drives states to the ground states subspace
of its corresponding $H_{k}$ according to the above procedure. The
full generator $\L=\sum_{k}\L_{k}$ should then drive states into
$\ulbig$ \textemdash{} the intersection of the ground state spaces
of all $H_{k}$, i.e., the ground state subspace of $H$. To prove
this, we show that all states initially in $\lrbig$ decay to zero
in $\lrbig$. The change in population in $\lrbig$ is $\tr_{\lr}\left\{ \dot{\r}\right\} =\bb I|\R_{\lr}\L|\r\kk$,
where $\R_{\lr}$ is the superoperator projection onto $\lrbig$ .
We need to examine how the $F^{k,\ell}$'s decompose under the new
block structure. Since $\ulbig$ is their joint kernel, all jumps
annihilate states in $\ulbig$ ($F_{\ul}^{k,\ell}=0$) and no jump
can take states out ($F_{\ll}^{k,\ell}=0$). Therefore, $F^{k,\ell}=F_{\rmt}^{k,\ell}$.
Since there is no Hamiltonian, this implies that the jumps in $\L$
satisfy Thm.~\ref{prop:2}, meaning that $\L$ cannot take states
in $\thubig$ to $\lrbig$. Applying this yields
\begin{align}
\tr_{\lr}\left\{ \dot{\r}\right\}  & =\bb I|\R_{\lr}\L|\r\kk=\bb I|\R_{\lr}\L\R_{\lr}|\r_{\lr}\kk\equiv\bb I|\L_{\lr}|\r_{\lr}\kk\,.
\end{align}
The relevant piece $\L_{\lr}$ (\ref{eq:llr}) is
\begin{equation}
\L_{\lr}\left(\r_{\lr}\right)=\sum_{k,\ell}F_{\lr}^{k,\ell}\r_{\lr}F_{\lr}^{k,\ell\dg}-\sum_{k}\left\{ K_{\lr}^{k},\r_{\lr}\right\} \,.
\end{equation}
The Hamiltonian $K^{k}$ consists of two pieces,
\begin{equation}
K_{\lr}^{k}=\half\sum_{\ell}\left(F^{k,\ell\dg}F^{k,\ell}\right)_{\lr}=\half\sum_{\ell}F_{\lr}^{k,\ell\dg}F_{\lr}^{k,\ell}+F_{\ur}^{k,\ell\dg}F_{\ur}^{k,\ell}\,.
\end{equation}
The $F_{\lr}$ piece of $K^{k}$ conspires with the recycling term
$F_{\lr}^{k,\ell}\cdot F_{\lr}^{k,\ell\dg}$ and creates a bona fide
Lindbladian with jump operator $F_{\lr}$. Since Lindbladians are
trace-preserving, the $F_{\lr}$ pieces do not contribute to $\tr_{\lr}(\dot{\r})$.
The $F_{\ur}$ has no corresponding recycling term, so that part is
\textit{not} of Lindblad form. The $F_{\ur}$ parts instead give us
a potential decrease in trace:
\begin{align}
\tr_{\lr}\left\{ \dot{\r}\right\}  & =-\sum_{k,\ell}\tr\left\{ F_{\ur}^{k,\ell\dg}F_{\ur}^{k,\ell}\r_{\lr}\right\} \,.
\end{align}
Since $F_{\ur}^{k,\ell\dg}F_{\ur}^{k,\ell}\geq0$ and $\r_{\lr}\geq0$,
each term in the above sum is $\leq0$. We show that it is $<0$,
meaning that everything in $\lrbig$ decays. Since $\ulbig$ is the
intersection of the kernels of all $F^{k,\ell}$ and since we are
in $\lrbig$, there exists at least one $F^{j,m}$ for which $\tr\left\{ F^{j,m\dg}F^{j,m}\r_{\lr}\right\} >0$.
This provides a lower bound on the decay,
\begin{align}
\tr_{\lr}\left\{ \dot{\r}\right\}  & \leq-\tr\left\{ F^{j,m\dg}F^{j,m}\r_{\lr}\right\} <0\,,
\end{align}
ensuring that all states initially in $\lrbig$ decay into $\ulbig$.
Since the above is true for all $\r_{\lr}$, $\sum_{k,\ell}F_{\ur}^{k,\ell\dg}F_{\ur}^{k,\ell}>0$
on $\lrbig$ and all steady states of $\L$ are in $\ulbig$.
\end{proof}
\selectlanguage{english}%
The above procedure has been used in several specific cases throughout
the literature, for example in obtaining stabilizer quantum error-correcting
code states \cite{Kraus2008,Muller2011,Dengis2014}or ground states
of the AKLT model \cite{Kraus2008,Zhou2017}. However, the above recipe
for $F^{k,\ell}$ is not unique. Instead of having each jump be an
isometry from part of the range to the kernel of $H_{k}$, forming
the shape in eq.~(\ref{eq:shape}), one can instead have $F^{k}$
(for each $k$) act like a ladder operator. In other words, 
\begin{equation}
F^{k}=\left(\begin{array}{c|cccc}
0 & f^{k,1} & 0 & \cdots & 0\\
\hline  & 0 & f^{k,2} & 0 & \vdots\\
 &  & 0 & \ddots & 0\\
 &  &  & \ddots & f^{k,\left\lceil D/d\right\rceil }\\
 &  &  &  & 0
\end{array}\right)\,,
\end{equation}
where the upper left block is the $d$-dimensional $\ker Q_{k}$ and
lower right block is the $D$-dimensional $\ran Q_{k}$. To ensure
that $f^{k,p}$ transfers all states in the block below it into the
block to the left, we need to have $f^{k,p\dg}f^{k,p}>0$ when restricted
to the block below $f^{k,p}$. If done this way, only one jump per
each $k$ is sufficient. We note that the conserved quantities of
$\L=\sum_{k}\L_{k}$ are complicated since $\L_{\lr}\neq0$ for both
recipes, but they can nevertheless be determined by Thm.~\ref{prop:3}.
One nice example of these types of jumps was used to stabilize the
ground states of the Kitaev Majorana wire Hamiltonian \cite{Diehl2011}
(see also \cite{bardyn}). Conversely, there exists an algorithm \cite{Ticozzi2012a}
(see also \cite{ying2013}) which, given a subspace $\ulbig$, tries
to decompose a jump operator into a structure similar to the above
in order to check whether all states converge to $\ulbig$, failing
if the $f$ in the lower right corner is not positive definite.

Of course, other stabilization schemes exist besides those described
above {[}e.g., \cite{Verstraete2009}, eq.~(6){]}. A quite elegant
family of schemes is based on the idea that, given a quantum channel
$\E$, the Lindbladian
\begin{equation}
\L\equiv\E-\id\label{eq:stab}
\end{equation}
is one whose semigroup $e^{t\L}$ has the same fixed points as that
of $\E$ \cite{Wolf2008,Pastawski2011}. Therefore, given a cleverly
chosen $\E$ which stabilizes some desirable states, the asymptotic
projection $\ppp=\lim_{t\rightarrow\infty}e^{t\L}$ generated by the
above $\L$ will also stabilize those states. In another work, it
is shown that only one jump operator is required to stabilize any
state \cite{Ticozzi2010}. Note that Hamiltonian-based feedback control
can also be used to make sure that the states of interest are stabilized
(\cite{Ticozzi2009}, Thm.~2). Since all local gapped Hamiltonians
can be with approximated with ones that are frustration-free \cite{Hastings2006},
the above recipes allow for stabilization of states close to any phase
of matter that can be generated by such Hamiltonians. Extensions to
stabilization of mixed states using frustration-free Lindbladians
can be found in Ref.~\cite{Johnson2015}. 

The bad news regarding all of these preparation schemes is that, for
``exotic'' states such as 2D topological phases and assuming some
notion of locality for the jumps, the speed of convergence (i.e.,
inverse of the dissipative gap $\dgg$) \textit{increases} with the
length scale $L$ associated with the system size. For example, an
optimal toric code stabilizer \cite{Dengis2014} has gap $\dgg=O(1/L)$,
meaning that the system has arbitrarily small excitations above the
steady state in the thermodynamic limit.\selectlanguage{english}%

\inputencoding{latin9}\newpage{}\foreignlanguage{english}{}%
\begin{minipage}[t]{0.5\textwidth}%
\selectlanguage{english}%
\begin{flushleft}
\begin{singlespace}\textit{``I don\textquoteright t always integrate,
but when I do, I integrate by parts.''}\end{singlespace}
\par\end{flushleft}
\begin{flushleft}
\hfill{}\textendash{} Nicholas Read
\par\end{flushleft}\selectlanguage{english}%
\end{minipage}

\chapter{Time-dependent perturbation theory\label{ch:4}}

In this chapter, we apply the four-corners decomposition to the first-order
terms in ordinary time-dependent perturbation theory \cite{ABFJ}.
In Sec.~\ref{sec:Decomposing-the-Kubo}, we determine that the first-order
correction within \acs{ASH} is of Hamiltonian form and the energy
scale of the first-order term causing leakage out of \acs{ASH} is
governed by the dissipative gap of $\L_{\thu}$. We extend these conclusions
to jump operator perturbations $F^{\ell}\rightarrow F^{\ell}+f^{\ell}$.
In Sec.~\ref{sec:Exact-all-order-Dyson}, we determine the full Dyson
expansion to all orders exactly, given a perturbation which slowly
ramps up to a constant and an initial state that is a steady state
of the unperturbed $\L$. We conclude in Sec.~\ref{subsec:Relation-to-previous}
by making contact with previously studied topics: dark states, geometric
linear response, the Dyson series for the case of an unperturbed $\L$
with a unique steady state, quantum Zeno dynamics, and the effective
operator formalism.
\selectlanguage{english}%

\section{Decomposing the Kubo formula\label{sec:Decomposing-the-Kubo}}

Let us assume that time evolution is governed by a time-independent
Lindbladian $\L$ and that the system is perturbed as
\begin{equation}
\L\rightarrow\L+g(t)\oo\,,
\end{equation}
where the perturbation superoperator $\oo$ is multiplied by a slowly
ramping up time-dependent factor $g(t)$ from time $-\infty$ to a
time $t$. The Lindbladian-based Kubo formula \cite{Bernad2008,Bernad2010,mukamel,Wei2011,Jaksic2013,Shen2014,Ban2015,Venuti2015a,VillegasMartinez2016,Ban2017}
is derived analogously to the Hamiltonian formula, i.e., it is a leading-order
Dyson expansion of the full evolution.\footnote{We note that there exists an adiabatic derivation as well \cite{Chetrite2012},
which is not addressed here.} The main difference is that the derivation is performed in the superoperator
formalism. However, the superoperator formalism lends a natural interpretation
of the terms in the superoperator Dyson series. As a result, we use
the intuitiveness of the terms to justify the expansion, omitting
the quite standard technical modifications needed to obtain them.

The first term in such a series acts on a state $\r$ as \cite{VillegasMartinez2016}
\begin{equation}
\T_{t}^{\left(1\right)}|\r\kk=\intt 0td\tau g\left(\tau\right)e^{\left(t-\tau\right)\L}\oo e^{\tau\L}|\r\kk\,.\label{eq:dyson1}
\end{equation}
We remind the reader that we use vectorized notation for matrices
and the Hilbert-Schmidt inner product $\bb A|\r\left(t\right)\kk\equiv\tr\left\{ A^{\dg}\r\left(t\right)\right\} $
(see Ch.~\ref{app:Preliminaries}). This term offers an intuitive
interpretation if one thinks of the system as evolving from the right
side of the expression to the left. Reading the integrand from right
to left, the initial state $\r$ evolves under the unperturbed Lindbladian
$\L$ to time $\tau$, is perturbed by $\oo$, and then evolves under
$\L$ from $\tau$ to $t$. The integral represents a sum over all
possible acting times $\tau$ of the perturbation. Applying an observable
$\bb A|$ from the left is equivalent to evaluating said observable
at time $t$. If we now also make the assumption that we are in an
initially steady state $\r=\rout$, the right-most exponential $e^{\tau\L}$
is removed since $\L|\rout\kk=0$. Since we do not have any evolution
until the time of the perturbation with such an assumption, we can
extend the initial time from $0$ to $-\infty$. These manipulations
then produce the Kubo formula \cite{Kubo1957}
\begin{equation}
\bb A|\T_{t}^{\left(1\right)}|\rout\kk=\intt{-\infty}td\tau g\left(\tau\right)\bb A|e^{\left(t-\tau\right)\L}\oo|\rout\kk\,.\label{eq:kubo}
\end{equation}
We proceed to apply the four-corners decomposition to this formula.
However, before doing to, let us show that this is indeed the original
Kubo formula.

\paragraph{Hamiltonian case}

Let us set $\L=\H=-i\left[H,\cdot\right]$, $\oo=-i[V,\cdot]$ for
a Hamiltonian $V$, and massage eq.~(\ref{eq:kubo}) into standard
form. For that, define $O(t)\equiv e^{iHt}Oe^{-iHt}=e^{-t\H}\left(O\right)$
and recall that $\left[H,\rout\right]=0$ since $\rout$ is generically
a superposition of projections on eigenstates of $H$. We can then
commute $e^{iHt}$ with $\rout$ and cyclically permute under the
trace to obtain 
\begin{equation}
\bb A|\T_{t}^{\left(1\right)}|\rout\kk=\frac{1}{i}\intt{-\infty}t\dd\tau g\left(\tau\right)\tr\left\{ \left[A\left(t-\tau\right),\hpert\right]\rout\right\} \,,
\end{equation}
recovering the usual time-ordered commutator expression.

The perturbations considered here are Hamiltonian and jump operator
perturbations of $\L$ (\ref{eq:def-1}), respectively\begin{subequations}
\begin{eqnarray}
H & \rightarrow & H+g\left(t\right)\hpert\\
F^{\ell} & \rightarrow & F^{\ell}+g\left(t\right)f^{\ell}\label{eq:pertjump}
\end{eqnarray}
\end{subequations}{[}for $\hpert,f^{\ell}\in\oph$ and $\hpert^{\dg}=\hpert${]}.
It will be shown that both generate unitary evolution within all \acs{ASH}
and leakage caused by both does not take states into $\lrbig$. We
first handle the Hamiltonian case first for simplicity, 
\begin{equation}
\oo=-i\left[\hpert,\cdot\right]\equiv\spert\,,\label{eq:pert}
\end{equation}
returning to the jump case in Sec.~\ref{subsec:linds}. 

We now use four-corners projections $\R_{\emp}$ to partition eq.~(\ref{eq:kubo}).
Due to the no-leak property (\ref{eq:no-leak}), we have $\R_{\lr}\spert\R_{\ul}=0$.
Remembering that the Lindbladian is block upper-triangular in the
four-corners partition {[}see eq.~(\ref{eq:gen}){]}, it follows
that $e^{t\L}$ is also block upper-triangular. We do not make any
assumptions on the observable: $A=A_{\ul}+A_{\of}+A_{\lr}$. Further
decomposing the first term using the asymptotic projection $\ppp$
from eq.~(\ref{eq:proj}) and its complement $\qqq\equiv\id-\ppp$
yields\stepcounter{equation}
\begin{align}
\bb A|\T_{t}^{\left(1\right)}|\rout\kk&=\intt{-\infty}t d\tau g\left(\tau\right)\bb A_{\ul}|e^{\left(t-\tau\right)\sout}\ppp\spert\ppp|\rout\kk   \tag{{\theequation\textbf{A}}}\label{eq:ka}\\
&+\intt{-\infty}t d\tau g\left(\tau\right)\bb A_{\ul}|e^{\left(t-\tau\right)\L}\qqq\R_{\thu}\spert|\rout\kk    \tag{{\theequation\textbf{B}}}\label{eq:kb} \\
&+\intt{-\infty}t d\tau g\left(\tau\right)\bb A_{\of}|e^{\left(t-\tau\right)\L}\R_{\of}\spert|\rout\kk   \tag{{\theequation\textbf{C}}}\label{eq:kc} \,.
\end{align}The terms differ by which parts of $\spert$ perturb $\rout$ and
also which parts of $A$ ``capture'' the evolved result. The three
relevant parts of $A$ correspond to the three labels in Fig.~\ref{fig:lres}.
One can readily see that $A_{\lr}$ is irrelevant to this order due
to (\ref{eq:no-leak}). The term (\ref{eq:ka}) consists of perturbing
and evolving \textit{within} the asymptotic subspace \textbf{A}, shaded
gray in the Figure. The effect of the perturbation within \acs{ASH}
is $\ppp\spert\ppp$ (shown in Sec.~\ref{subsec:hams} to be of Hamiltonian
form), and $\sout$ is the part of the unperturbed $\L$ that generates
unitary evolution within \acs{ASH}. The term (\ref{eq:ka}) therefore
most closely resembles the traditional Hamiltonian-based Kubo formula.
The remaining two terms quantify \textit{leakage out of} \acs{ASH}
and contain non-Hamiltonian contributions. The term (\ref{eq:kb})
consists of perturbing into regions \textbf{B} and \textbf{C} in Fig.~\ref{fig:lres},
but then evolving under $\R_{\ul}e^{t\L}\R_{\thu}$ strictly into
region \textbf{B} (since $\ppp e^{t\L}\qqq=0$). The term (\ref{eq:kc})
consists of perturbing into region \textbf{C} and remaining there
after evolution due to $\R_{\of}e^{t\L}\R_{\of}$. This term is eliminated
if $A_{\of}=0$, i.e., if the observable is strictly in $\ulbig$.

\begin{figure}
\begin{centering}
\includegraphics[width=0.35\columnwidth]{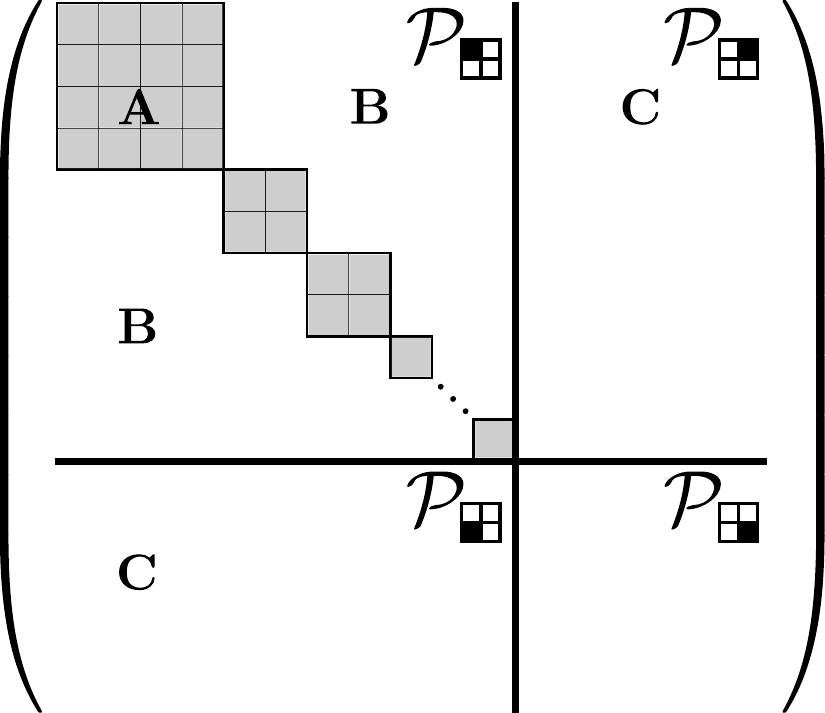}
\par\end{centering}
\caption{\label{fig:lres} Sketch of regions of linear response of the asymptotic
subspace \(\ash\) (gray) to a Hamiltonian perturbation. Each of three
regions \textbf{A}, \textbf{B}, and \textbf{C} corresponds to the
respective response term (\ref{eq:ka}), (\ref{eq:kb}), and (\ref{eq:kc})
in the text.}
\end{figure}

\paragraph{DFS case}

Recall that in this case $\ulbig$ is a \foreignlanguage{american}{\acs{DFS}}
($\ppp\R_{\ul}=\R_{\ul}$), and we do not assume it is stationary
($\hout\neq0$). From eq.~(\ref{eq:gen-1}), we can see that $\L$
cannot take any coherences in $\ofbig$ back into the \foreignlanguage{american}{\acs{DFS}}
($\R_{\ul}\L\R_{\of}=0$). Therefore, the interference term (\ref{eq:kb})
is eliminated and the response formula reduces to\begin{subequations}
\label{eq:kdfsmain}
\begin{align}
\bb A|\T_{t}^{\left(1\right)}|\rout\kk&=\intt{-\infty}t\dd\tau g\left(\tau\right)\bb A_{\ul}|e^{\left(t-\tau\right)\sout}\ppp\spert\ppp|\rout\kk \label{eq:kdfsa}\tag{{\theequation\textbf{A}}}\\
&+\intt{-\infty}t\dd\tau g\left(\tau\right)\bb A_{\of}|e^{\left(t-\tau\right)\L}\R_{\of}\spert|\rout\kk. \label{eq:kdfsb}\tag{{\theequation\textbf{C}}} 
\end{align}
\end{subequations}If furthermore $A_{\of}=0$, there are no interference terms coming
from outside of the \foreignlanguage{american}{\acs{DFS}} and the
Lindbladian linear response reduces to the purely Hamiltonian-based
term (\ref{eq:kdfsa}). Such a simplification can also be achieved
when $\hpert_{\of}=0$, which implies that the Hamiltonian does not
take $\rout$ out of the \foreignlanguage{american}{\acs{DFS}} to
begin with ($\R_{\of}\spert\R_{\ul}=0$).

For the rest of this chapter, we set $\hout=0$ and have the time-dependent
part $g\left(t\right)$ of our perturbation ramp up to a constant
at $t=0$:
\begin{equation}
g\left(t\right)=\lim_{\eta\rightarrow0}e^{\eta t\Theta\left(-t\right)}=\begin{cases}
\lim_{\eta\rightarrow0}e^{\eta t} & t<0\\
1 & t\geq0
\end{cases}\,,\label{eq:ramp}
\end{equation}
where $\Theta(t)$ is the Heaviside step function. As a result, the
integrals in the Kubo formula can be performed exactly (see Sec.~\ref{subsec:Proof}),
simplifying the formula to\footnote{\label{fn:ringdown}We also note that one can have a sudden perturbation
$g\left(t\right)=\Theta\left(t\right)$ \cite{Zanardi2014}. In that
case, one does not obtain the $\nicefrac{1}{\eta}$ term, but the
leakage term now contains a ``ringdown'' contribution due to the
sudden onset of the perturbation: $\L^{-1}\spert\rightarrow(e^{t\L}-1)\L^{-1}\spert$.
None of the results in Sec.~\ref{sec:Decomposing-the-Kubo} depend
on which $g\left(t\right)$ one picks, but the all-order Dyson series
in Thm.~\ref{thm:Dyson} relies on using a slowly ramping up perturbation
in order to avoid the ringdown terms.}
\begin{equation}
\bb A|\T_{t}^{\left(1\right)}|\rout\kk=\left(t+\frac{1}{\eta}\right)\bb A|\ppp\spert\ppp|\rout\kk-\bb A|\L^{-1}\spert|\rout\kk\,.\label{eq:kubo-simplified}
\end{equation}
This version, which is true for more general perturbations in Lindblad
form ($\spert\rightarrow\oo$), invites an analogy with degenerate
perturbation theory, in which $\ppp\spert\ppp$ is the superoperator
analogue of the perturbation projected onto the subspace of interest
and $\L^{-1}\spert$ is an analogue of the term governing corrections
to the wavefunction and including the famous ``energy denominator''.
The $\nicefrac{1}{\eta}$ factor, an ``infinity'', is the (unfortunate)
consequence of the perturbation acting on the steady-state subspace
for an infinite amount of time during the time interval $(-\infty,0]$
and in the $\eta\rightarrow0$ limit.\footnote{Note that, when evaluating the response in frequency space, a more
careful treatment of $\eta$ may be necessary \cite{Bradlyn2012}.} While this choice of $g\left(t\right)$ creates this uncomfortable,
but explainable, infinity within \acs{ASH}, it allows us to write
the leakage term strictly in terms of
\begin{align}
\L^{-1} & =\qqq\L^{-1}\qqq\equiv-\intt{-\infty}td\tau g\left(\tau\right)e^{\left(t-\tau\right)\L}\qqq=-\intt 0{\infty}d\tau e^{\tau\L}\qqq\,.\label{eq:inver}
\end{align}
This pseudo-inverse ($\L^{-1}\L=\L\L^{-1}=\qqq$) is also the inverse
of all invertible parts in the Jordan normal form of $\L$ (\cite{Zanardi2014},
Appx.~D). In the context of finite matrices, it is called the Drazin
pseudoinverse {[}\cite{grevillebook}, eq.~(41) for $k=1${]}. In
the Hamiltonian context, this is the familiar Green's function in
\acs{OPH} (i.e., Liouville space \cite{mukamel}). In the context
of linear operators, this is simply the resolvent of $\L$ at $z=0$,
i.e.,
\begin{equation}
\L^{-1}=-\frac{1}{2\pi i}\ointop_{\G}\frac{dz}{z}\left(\L-z\right)^{-1}\,,
\end{equation}
where $\Gamma$ is the contour which encircles zero and no other points
in the spectrum of $\L$ {[}see \cite{Avron2012a}, eq.~(70) or \cite{katobook},
Ch.~3, eq.~(6.23){]}. Note that $\L^{-1}$ is \textit{not} the Moore-Penrose
pseudoinverse; while $\L^{-1}$ inverts the Jordan normal form of
$\L$, the Moore-Pensore inverse inverts the diagonal matrix in the
singular-value decomposition of $\L$. While $\L^{-1}$ appears naturally
in the above formulation, the Moore-Penrose inverse can be used to
study time-independent Lindbladian perturbation theory \cite{Li2014}.
Since both pseudoinverses are basically identical for diagonalizable
$\L$, we anticipate that differences between the formalisms (if any)
should arise only in those parts of $\L$ which are not diagonalizable.

In the next Subsections, we use the no-leak and clean-leak properties
to determine that evolution within \acs{ASH} is of Hamiltonian form
and to quantify the leakage scale of the second term in eq.~(\ref{eq:kubo-simplified}).

\subsection{Evolution within $\textnormal{As(\ensuremath{\mathsf{H}})}$\label{subsec:hams}}

Let us focus on the term $\ppp\spert\ppp$ from eq.~(\ref{eq:kubo-simplified}),
which quantifies the effect of the perturbation \textit{within} \acs{ASH}.
A swift application of the no-leak and clean-leak properties (\ref{eq:no-leak}-\ref{eq:clean-leak})
allows us to substitute $\ps\equiv\ppp\R_{\ul}$ for $\ppp$. Recalling
that $\R_{\lr}\spert\R_{\ul}=0$ and that $\ppp\spert\ppp$ is strictly
acting on states $\rout\in\ash$ yields
\begin{equation}
\ppp\spert\ppp=\ppp\R_{\ul}\spert\ppp=\ps\spert\ps\,.
\end{equation}
It turns out that this first-order effect of the perturbation within
\acs{ASH} is always of Hamiltonian form, for some effective Hamiltonian
that we determine now.

\paragraph{DFS case}

Here, we can immediately read off the effective Hamiltonian. Since
$\ps=\R_{\ul}$ for the \foreignlanguage{american}{\acs{DFS}} case,
\begin{equation}
\ppp\spert\ppp=-i\left[V_{\ul},\cdot\right]
\end{equation}
with $V_{\ul}$ the perturbation projected onto the \foreignlanguage{american}{\acs{DFS}}. 

\paragraph{NS case}

In this case, we have to use the formula for $\ps$ from eq.~(\ref{eq:nsproj-1}),
re-stated below:
\begin{equation}
\ps=\idfs\ot|\as\kk\bb\ai|\,,\label{eq:nsproj3}
\end{equation}
with $\idfs(\cdot)=\iidfs\cdot\iidfs$ being the superoperator projection
on the \foreignlanguage{american}{\acs{DFS}} part, $\ai$ being the
operator projection on the auxiliary part, and $\pp=\iidfs\ot\ai$.
Direct multiplication yields 
\begin{equation}
\ppp\spert\ppp=\ps\spert\ps=\bb\ai|\spert|\as\kk\ot|\as\kk\bb\ai|\,,
\end{equation}
where the evolution within the auxiliary part is trivial and evolution
within the \foreignlanguage{american}{\acs{DFS}} part is generated
by the effective \foreignlanguage{american}{\acs{DFS}} Hamiltonian
$W$:
\begin{equation}
\bb\ai|\spert|\as\kk=-i\left[\tr_{\textsf{ax}}\{\as\hpert_{\ul}\},\cdot\right]\equiv-i\left[W,\cdot\right]\,.\label{eq:effham-ham}
\end{equation}
To better reveal the effect of $\as$, it is worthwhile to express
$\hpert_{\ul}$ as a sum of tensor products of various \foreignlanguage{american}{\acs{DFS}}
and auxiliary Hamiltonians: $V_{\ul}=\sum_{\iota}\hpertdfs^{\iota}\ot\apert^{\iota}$.
The effective Hamiltonian then becomes 
\begin{equation}
W=\sum_{\iota}\tr_{\textsf{ax}}\{\as\apert^{\iota}\}\hpertdfs^{\iota}\,.\label{eq:unit}
\end{equation}
In words, $\ppp\spert\ppp$ is a linear combination of Hamiltonian
perturbations $\hpertdfs^{\iota}$ on the \foreignlanguage{american}{\acs{DFS}},
with each perturbation weighted by the expectation value of the corresponding
auxiliary operator $\apert^{\iota}$ in the state $\as$. 

\subsection{Leakage out of $\textnormal{As(\ensuremath{\mathsf{H}})}$\label{subsec:Leakage-out-of}}

Now we can apply the clean-leak property (\ref{eq:clean-leak}) to
narrow down those eigenvalues of $\L$ which are relevant in characterizing
the scale of the leakage term $\L^{-1}\spert$ from the simplified
Kubo formula (\ref{eq:kubo-simplified}). By definition (\ref{eq:inver}),
$\L^{-1}$ has the same block upper-triangular structure as $\L$
from eq.~(\ref{eq:gen}). This fact conspires with $\R_{\lr}\spert\R_{\ul}=0$
to allow us to ignore $\L_{\lr}$ and write
\begin{equation}
\L^{-1}\spert|\rout\kk=\L_{\thu}^{-1}\spert|\rout\kk\,.\label{eq:leakagekubo}
\end{equation}
Therefore, the relevant gap is the nonzero eigenvalue of $\L_{\thu}$
with the smallest absolute value. However, we now show how the spectrum
of $\L_{\thu}$ is actually contained in the spectrum of $\L_{\ul}+\L_{\ur}$.
Recalling the block upper-triangular structure of $\L$ from eq.~(\ref{eq:gen}),
one can establish that its eigenvalues must consist of eigenvalues
of $\L_{\ul}$, $\L_{\of}$, and $\L_{\lr}$. However, evolution of
the two coherence blocks is decoupled, $\L_{\of}=\L_{\ur}+\L_{\ll}$
(see Sec.~\ref{app:decomp}), and eigenvalues of $\L_{\of}$ come
in pairs. Therefore, one can then define the \textit{effective dissipative
gap} $\adg$ to be the nonzero eigenvalue of $\L_{\ul}+\L_{\ur}$
with the smallest absolute value. If we want leakage to be suppressed,
we want $\adg$ to be as \textit{large} as possible.

\paragraph{DFS case}

Assume that we have a \foreignlanguage{american}{\acs{DFS}} case:
all of $\ulbig$ evolves unitarily, so $\L_{\ul}=\sout$ does not
have a dissipative gap. In that case, we can omit $\ulbig$ from eq.~(\ref{eq:leakagekubo})
and simplify it to
\begin{equation}
\L^{-1}\spert|\rout\kk=\L_{\of}^{-1}\spert|\rout\kk\,.\label{eq:simplified-dfs-leakage}
\end{equation}
Therefore, the effective dissipative gap $\adg$ is just the dissipative
gap of $\L_{\ur}$.

\subsection{Jump operator perturbations\label{subsec:linds}}

Having covered Hamiltonian perturbations, let us return to jump operator
perturbations of the Lindbladian (\ref{eq:def-1}). Recall from eq.~(\ref{eq:pertjump})
that
\begin{equation}
F\rightarrow F+g\left(t\right)f
\end{equation}
with $g\left(t\right)$ a ramping function and $f\in\oph$, not necessarily
Hermitian. It was first shown in Ref.~\cite{Zanardi2015} that such
perturbations actually induce unitary evolution on \foreignlanguage{american}{\acs{NS}}
blocks of those Lindbladians which do not possess a nontrivial decaying
space ($\pp=I$). Here we extend this interesting result to cases
where $P\neq I$, thereby covering all $\L$. Namely, just like Hamiltonian
perturbations $\spert$, jump operator perturbations induce unitary
evolution within \acs{ASH} and the leakage scale associated with
them is still $\adg$.

Returning to eq.~(\ref{eq:pert}), the action of the perturbation
to first order in $g$ is 
\begin{equation}
\oo(\r)\equiv\Y(\r)=\lind\left(F\r f^{\dg}+H.c.-{\textstyle \half}\left\{ f^{\dg}F+F^{\dg}f,\r\right\} \right)\,,\label{eq:problem}
\end{equation}
where $\lind$ is the rate corresponding to the jump operator $F$
(we ignore the index $\ell$ for clarity). We hope to invoke the clean-leak
property (\ref{eq:clean-leak}) once again, but the first term on
the right-hand side of the above acts simultaneously and non-trivially
on \textit{both} sides of $\r$. There is thus a possibility that
one can reach $\lrbig$ when acting with $\Y$ on a steady state.
However, the condition $F_{\ll}=0$ from Thm.~\ref{prop:2} implies
that $\R_{\lr}(F\r f^{\dg})$ is zero for all $f$, so one can \textit{still}
substitute $\ps$ for $\ppp$:
\begin{equation}
\ppp\Y\ppp|\rout\kk=\ps\Y\ps|\rout\kk\,.
\end{equation}
Furthermore, the fact that $\R_{\lr}\Y\R_{\ul}=0$ allows us to ignore
$\lrbig$ in determining the leakage energy scale associated with
these jump operator perturbations. We finish with calculating the
corresponding effective Hamiltonian for the most general cases.

\paragraph{NS case}

Having eliminated the influence of the decaying subspace $\lrbig$,
we can now repeat the calculation done for Hamiltonian perturbations
using the \foreignlanguage{american}{\acs{NS}} projection (\ref{eq:nsproj3}),
yielding 
\begin{equation}
\ps\Y\ps=\bb\ai|\Y|\as\kk\ot|\as\kk\bb\ai|\,.
\end{equation}
After some algebra, the \foreignlanguage{american}{\acs{DFS}} part
reduces to Hamiltonian form \cite{Zanardi2015}: $\bb\ai|\Y|\as\kk=-i[Y,\cdot]$
where
\begin{equation}
Y\equiv\frac{i}{2}\lind\tr_{\textsf{ax}}\left\{ \as\left(F_{\ul}^{\dg}f_{\ul}-f_{\ul}^{\dg}F_{\ul}\right)\right\} \,.\label{eq:effham}
\end{equation}

\paragraph{Multi-block case}

We now sketch the calculation of both Hamiltonian and jump operator
perturbations, $\oo=\spert+\Y$, for the most general case of $\ulbig$
housing multiple \foreignlanguage{american}{\acs{NS}} blocks. Once
again, we can get rid of the decaying subspace and substitute $\ps$
for $\ppp$. In addition, since $\ps$ does not have any presence
except within the (gray) \foreignlanguage{american}{\acs{NS}} blocks
of $\ulbig$ {[}see Fig.~\ref{fig:decomp}{]}, $\ps$ does not project
onto any coherences between the \foreignlanguage{american}{\acs{NS}}
blocks. The contributing part of $\ppp\oo\ppp$ thus consists of the
Hamiltonian and jump operator perturbations projected to each \foreignlanguage{american}{\acs{NS}}
block. Combining the effective Hamiltonians arising from $\spert$
and $\Y$ {[}respectively eqs.~(\ref{eq:effham-ham}) and (\ref{eq:effham}){]},
the effective evolution within the \foreignlanguage{american}{\acs{DFS}}
part of each \foreignlanguage{american}{\acs{NS}} block (indexed
by $\varkappa$) is generated by the Hamiltonian
\begin{equation}
X^{(\varkappa)}\equiv\tr_{\textsf{ax}}^{(\varkappa)}\left\{ \as^{(\varkappa)}\left(\hpert_{\ul}+\frac{i}{2}\lind(F_{\ul}^{\dg}f_{\ul}-f_{\ul}^{\dg}F_{\ul})\right)\right\} \,.
\end{equation}
The unprojected Hamiltonian $X\equiv\hpert+\frac{i}{2}\lind\left(F^{\dg}f-f^{\dg}F\right)$
is exactly the operator resulting from joint variation of the Hamiltonian
and jump operators of $\L$ (\cite{Avron2012a}, Thm.~5).
\selectlanguage{american}%

\section{Exact all-order Dyson expansion\label{sec:Exact-all-order-Dyson}}

\selectlanguage{english}%
Let us now return to general Lindbladian perturbations $\oo$ and
study how higher-order terms in the Dyson series are also naturally
interpreted from right to left. For example, the second-order term
acts on a states in \acs{ASH} as \cite{VillegasMartinez2016}
\begin{equation}
\T_{t}^{\left(2\right)}\ppp\equiv\intop_{-\infty}^{t}g(\tau_{2})d\tau_{2}e^{\left(t-\tau_{2}\right)\L}\oo\intop_{-\infty}^{\tau_{2}}g(\tau_{1})d\tau_{1}e^{\left(\tau_{2}-\tau_{1}\right)\L}\oo\ppp\,,\label{eq:dyson2}
\end{equation}
and one can see that it is a sum over all possible pairs of times
$\tau_{1}\leq\tau_{2}$ at which the perturbation can be applied.
The $\ppp$ on the left means that the initial state is necessarily
in \acs{ASH}, and we study the full Dyson series with this restriction
from now on. The full time-ordered ($\mathbb{T}$) evolution operator
is expanded as
\begin{equation}
\mathbb{T}e^{\int_{\tau=-\infty}^{t}dt\left(\L+g\left(\tau\right)\oo\right)}\ppp=\sum_{N=0}^{\infty}\T_{t}^{\left(N\right)}\ppp=\ppp+\T_{t}^{\left(1\right)}\ppp+\T_{t}^{\left(2\right)}\ppp+\cdots\,,\label{eq:dyson}
\end{equation}
where we have already seen the first two terms $\T^{\left(1\right)}$
(\ref{eq:dyson1}) and $\T^{\left(2\right)}$ (\ref{eq:dyson2}).
The $N$th order term $\T_{t}^{\left(N\right)}\ppp$ consists of $N$
applications of the perturbation $g\left(\tau\right)\oo$ at times
$\tau_{1}\leq\tau_{2}\leq\cdots\leq\tau_{N}$ with evolution generated
by the unperturbed term $\L$ between those times. Let us define the
operator which acts with the perturbation $g\left(\tau\right)\oo$
at time $\tau_{n}$, evolves with the unperturbed $\L$ from $\tau_{n}$
to $\tau_{m}$, and sums up over all possible $\tau_{n}$:
\begin{equation}
\S_{m,n}\equiv\intop_{-\infty}^{\tau_{m}}g\left(\tau_{n}\right)d\tau_{n}e^{\left(\tau_{m}-\tau_{n}\right)\L}\oo\,.
\end{equation}
Also, let $\tau_{t}\equiv t$ so that $\S_{t,n}$ is an integral over
$\tau_{n}$ from $-\infty$ to $t$. Then, the $N$th order term can
be expressed as a convolution of $\S$'s,
\begin{equation}
\T_{t}^{\left(N\right)}=\S_{t,N-1}\S_{N-1,N-2}\cdots\S_{3,2}\S_{2,1}\,.
\end{equation}
By convolution, we mean that $\S_{n,n-1}$ is a function of the variable
$\tau_{n}$ which is integrated out by $\S_{n+1,n}$. This way, the
$\T$'s can be defined recursively:
\begin{equation}
\T_{t}^{\left(N+1\right)}=\S_{t,N+1}\T_{N+1}^{\left(N\right)}\,.
\end{equation}

We return to the case of $g\left(\tau\right)$ slowly ramping up to
a constant, as in eq.~(\ref{eq:ramp}), and continue analyzing terms
for $N>1$. While the $\nicefrac{1}{\eta}$ infinity is ever-present
in the entire expansion, eq.~(\ref{eq:ramp}) allows us to compute
all of the integrals in the series (\ref{eq:dyson}) \textit{exactly}.
We state the result first and prove it in the next Subsection.
\begin{thm}
[Exact all-order Dyson expansion]\label{thm:Dyson}The $N$th order
term in the Dyson series (\ref{eq:dyson}), given a slowly ramping
up perturbation (\ref{eq:ramp}) and an initial state in $\ash$,
is
\begin{align}
{\cal T}_{t}^{\left(N\right)} & =\left(1+\frac{\p_{t}}{\eta}\right)\sum_{M=0}^{N}\frac{t^{N-M}}{\left(N-M\right)!}\sum_{\l\in\cat_{M}^{N}}\prod_{n=1}^{N}{\cal X}(\l_{n})\,,\label{eq:dyson-1}
\end{align}
where $\l=\left(\l_{1},\l_{2},\cdots,\l_{N}\right)$ is a sequence
of $N$ nonnegative integers, $\cat_{M}^{N}$ is a particular set
of such sequences,
\begin{equation}
\cat_{M}^{N}=\left\{ \l\,\text{such that}\,\sum_{n=1}^{l}\l_{N+1-n}\begin{cases}
\leq\min\left\{ l,M\right\}  & 1\leq l<N\\
=M & l=N
\end{cases}\right\} \,,\label{eq:rules}
\end{equation}
and the operator that is put into the product for each element $\l_{n}$
is
\begin{equation}
{\cal X}(\l_{n})=\begin{cases}
-\L^{-\l_{n}}\oo & \l_{n}>0\\
\ppp\oo & \l_{n}=0
\end{cases}\,.\label{eq:rules2}
\end{equation}
\end{thm}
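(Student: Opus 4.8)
The plan is to prove the closed form by induction on $N$, using the convolution that defines the Dyson series. Writing $\T_\tau^{(N)}$ for the $N$th-order term regarded as a function of its upper time limit, the recursion $\T_t^{(N+1)}=\S_{t,N+1}\T_{N+1}^{(N)}$ unpacks to
$$\T_t^{(N+1)} = \intt{-\infty}t d\tau\, g(\tau)\, e^{(t-\tau)\L}\oo\,\T_\tau^{(N)}\,.$$
Since we have set $\hout=0$, the projector $\ppp$ is onto the zero-eigenvalue (steady) subspace, so $e^{(t-\tau)\L}\ppp=\ppp$, and inserting $\id=\ppp+\qqq$ between the propagator and $\oo$ gives the split $e^{(t-\tau)\L}\oo=\ppp\oo+e^{(t-\tau)\L}\qqq\oo$. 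The first piece leaks straight back into $\ash$ (an $\l=0$, i.e.\ $\X(0)=\ppp\oo$, event); the second stays in the complement and will be resolved into powers of the pseudoinverse $\L^{-1}$ (the $\l>0$, i.e.\ $\X(\l)=-\L^{-\l}\oo$, events). The inductive hypothesis is $\T_t^{(N)}=(1+\p_t/\eta)P_N(t)$ with $P_N(t)=\sum_{M=0}^N \frac{t^{N-M}}{(N-M)!}\sum_{\l\in\cat_M^N}\prod_{n=1}^N\X(\l_n)$, an $\eta$-free operator polynomial in $t$; the base cases are the $N=0$ term of eq.~(\ref{eq:dyson}) and eq.~(\ref{eq:kubo-simplified}) for $N=1$.

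The analytic heart is two integral identities for the ramp~(\ref{eq:ramp}) in the $\eta\to0$ limit, applied to the integrand $g(\tau)(1+\p_\tau/\eta)q(\tau)$ for an operator monomial $q$. For the $\ppp\oo$ piece I would compute $\intt{-\infty}t d\tau\, g(\tau)(1+\p_\tau/\eta)q(\tau)$ directly: integrating the $q'/\eta$ term by parts against $g$ produces a boundary term $q(t)/\eta$ together with a $-\int_{-\infty}^0 e^{\eta\tau}q$ contribution that \emph{exactly cancels} the like term coming from $\int g\,q$, collapsing everything to $(1+\p_t/\eta)\int_0^t q$. This preserves the $(1+\p_t/\eta)$ prefactor and raises the polynomial degree by one, matching the $v=0$ prepend and the factorial shift $\tfrac1{(N-M)!}\to\tfrac1{(N+1-M)!}$. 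For the $\qqq$ piece I would integrate by parts repeatedly, using invertibility of $\L$ on $\ran\qqq$ (the pseudoinverse of eq.~(\ref{eq:inver}), with $\L^{-1}\L=\L\L^{-1}=\qqq$); the boundary terms at $\tau=-\infty$ vanish because $g\to0$ and $e^{(t-\tau)\L}\qqq\to0$ there, the finite part telescopes to $-\sum_{v\geq1}\L^{-v}\qqq\,q^{(v-1)}(t)$, and the residual $1/\eta$ part telescopes to $-\tfrac1\eta\sum_{v\geq1}\L^{-v}\qqq\,q^{(v)}(t)$. These are precisely the $v\geq1$ prepends $\X(v)=-\L^{-v}\oo$, now with the degree \emph{lowered} by $v-1$.

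Assembling the two contributions, the finite parts build $P_{N+1}(t)$ and the pole parts build $\tfrac1\eta\p_t P_{N+1}(t)$, giving $\T_t^{(N+1)}=(1+\p_t/\eta)P_{N+1}$. What then remains is combinatorics: I would show that prepending $v\in\{0,1,2,\dots\}$ to the sequences of $\cat_M^N$ — with $v=0$ sending $(N,M)\mapsto(N+1,M)$ and $v\geq1$ sending $(N,M)\mapsto(N+1,M+v)$ — generates exactly $\cat_{M'}^{N+1}$. The constraint in eq.~(\ref{eq:rules}) is then forced automatically: the operator $q^{(v-1)}(t)$ vanishes whenever $v-1$ exceeds the current polynomial degree, and since that degree equals (steps taken) minus (pseudoinverse powers already used), the surviving sequences are exactly those whose right-partial sums obey $\sum_{n=1}^l \l_{N+1-n}\leq l$, refined to $\leq\min\{l,M\}$ by the fixed total $\sum_n\l_n=M$ and to $=M$ at $l=N$. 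Counting the admissible sequences reproduces the Catalan number advertised in the text.

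The main obstacle I anticipate is the bookkeeping of the $\eta\to0$ limit rather than the combinatorics: one must handle the kink of $g$ at $\tau=0$ so that repeated integration by parts does not generate spurious boundary contributions there, justify interchanging the limit with the integration-by-parts tower, and verify the simultaneous cancellation of all poles of order $\geq2$ at every order $N$ — the structural fact that each integration against $g$ supplies at most one new factor of $1/\eta$, with the two cancellations above killing the would-be $1/\eta^2$ terms. Once the two integral lemmas are established with these cancellations made explicit, the inductive step reduces to a direct, if bookkeeping-heavy, matching of operator monomials, polynomial degrees, and factorials against the definition of $\cat_M^N$.
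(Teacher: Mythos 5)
Your proposal is correct and follows essentially the same route as the paper's proof: induction on $N$ via the convolution recursion, splitting with $\id=\ppp+\qqq$, evaluating the ramped integrals by (repeated) integration by parts with the $1/\eta$ poles cancelling between the two pieces, and then matching the prepended index $v=M-K$ against the definition of $\cat_M^{N+1}$ using the Catalan-triangle recursion. The only cosmetic difference is that the paper packages the integration-by-parts tower into a single closed-form indefinite-integral identity before applying it, whereas you telescope it inline.
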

The pseudoinverse $\L^{-1}$ is defined in eq.~(\ref{eq:inver})
and it is implied that $\L^{-1}$ and its powers act on the range
of $\L$, i.e., $\L^{-1}=\L^{-1}\qqq$. The sequences and their corresponding
terms are presented in Tab.~\ref{tab:list} up to $N=4$. Recall
that \((1+\frac{1}{\eta}\p_{t})t^{N-M}/\left(N-M\right)!\) prepends
and \(\ppp\) appends each term.

The sequences readily lend themselves to a diagrammatic interpretation.
Consider all paths on a two-dimensional grid from a point $\left(N,N\right)$
down to the point $\left(0,M\right)$ where one can only perform only
the following types of steps: one step to the left ($\oo$), no steps
at all ($\ppp$), or $\l_{n}$ steps down ($-\L^{-\l_{n}}$ for $\l_{n}>0$).
Using these rules for the steps allows one to associate each sequence
$\l\in\cat_{M}^{N}$ (\ref{eq:rules}) with its own path from $\left(N,N\right)$
to $\left(0,M\right)$. It should be clear that each path contains
exactly $N$ steps to the left (i.e., $N$ instances of $\oo$) and
$M$ steps down ($\sum_{n}\l_{n}=M$). Such a diagrammatic interpretation,
along with its close connection to well-established perturbative methods
\cite{Bloch1958,Klein1974,Brouder2010}, will be considered in a future
publication \cite{pert}.

\begin{sidewaystable}
\begin{tabular}{cccrrcccrrcccr}
\cline{1-4} \cline{6-9} \cline{11-14} 
$N$ & $M$ & $\cat_{M}^{N}$ & $\phantom{{\displaystyle ^{^{1}}}}$Term & \multirow{22}{*}{~} & $N$ & $M$ & $\cat_{M}^{N}$ & $\phantom{{\displaystyle ^{^{1}}}}$Term & \multirow{22}{*}{~} & $N$ & $M$ & $\cat_{M}^{N}$ & $\phantom{{\displaystyle ^{^{1}}}}$Term\tabularnewline
\cline{1-4} \cline{6-9} \cline{11-14} 
\multirow{2}{*}{1} & 0 & 0 & $\phantom{{\displaystyle ^{^{1}}}}\ppp\oo$ &  & \multirow{21}{*}{4} & 0 & 0000 & $\ppp\oo\ppp\oo\ppp\oo\ppp\oo$ &  & \multirow{21}{*}{4} & \multirow{7}{*}{3} & 1020 & $\L^{-1}\oo\ppp\oo\L^{-2}\oo\ppp\oo$\tabularnewline
\cline{2-4} \cline{7-9} 
 & 1 & 1 & $\phantom{{\displaystyle ^{^{1}}}}-\L^{-1}\oo$ &  &  & \multirow{4}{*}{1} & 0001 & $-\ppp\oo\ppp\oo\ppp\oo\L^{-1}\oo$ &  &  &  & 1200 & $\L^{-1}\oo\L^{-2}\oo\ppp\oo\ppp\oo$\tabularnewline
\cline{1-4} 
\multirow{5}{*}{2} & 0 & 00 & $\phantom{{\displaystyle ^{^{1}}}}\ppp\oo\ppp\oo$ &  &  &  & 0010 & $-\ppp\oo\ppp\oo\L^{-1}\oo\ppp\oo$ &  &  &  & 2001 & $\L^{-2}\oo\ppp\oo\ppp\oo\L^{-1}\oo$\tabularnewline
\cline{2-4} 
 & \multirow{2}{*}{1} & 01 & $\phantom{{\displaystyle ^{^{1}}}}-\ppp\oo\L^{-1}\oo$ &  &  &  & 0100 & $-\ppp\oo\L^{-1}\oo\ppp\oo\ppp\oo$ &  &  &  & 2010 & $\L^{-2}\oo\ppp\oo\L^{-1}\oo\ppp\oo$\tabularnewline
 &  & 10 & $\phantom{{\displaystyle ^{^{1}}}}-\L^{-1}\oo\ppp\oo$ &  &  &  & 1000 & $-\L^{-1}\oo\ppp\oo\ppp\oo\ppp\oo$ &  &  &  & 2100 & $\L^{-2}\oo\L^{-1}\oo\ppp\oo\ppp\oo$\tabularnewline
\cline{2-4} \cline{7-9} 
 & \multirow{2}{*}{2} & 11 & $\phantom{{\displaystyle ^{^{1}}}}\L^{-1}\oo\L^{-1}\oo$ &  &  & \multirow{9}{*}{2} & 0011 & $\ppp\oo\ppp\oo\L^{-1}\oo\L^{-1}\oo$ &  &  &  & 0300 & $-\ppp\oo\L^{-3}\oo\ppp\oo\ppp\oo$\tabularnewline
 &  & 20 & $\phantom{{\displaystyle ^{^{1}}}}-\L^{-2}\oo\ppp\oo$ &  &  &  & 0101 & $\ppp\oo\L^{-1}\oo\ppp\oo\L^{-1}\oo$ &  &  &  & 3000 & $-\L^{-3}\oo\ppp\oo\ppp\oo\ppp\oo$\tabularnewline
\cline{1-4} \cline{12-14} 
\multirow{14}{*}{3} & 0 & 000 & $\phantom{{\displaystyle ^{^{1}}}}\ppp\oo\ppp\oo\ppp\oo$ &  &  &  & 0110 & $\ppp\oo\L^{-1}\oo\L^{-1}\oo\ppp\oo$ &  &  & \multirow{14}{*}{4} & 1111 & $\L^{-1}\oo\L^{-1}\oo\L^{-1}\oo\L^{-1}\oo$\tabularnewline
\cline{2-4} 
 & \multirow{3}{*}{1} & 001 & $\phantom{{\displaystyle ^{^{1}}}}-\ppp\oo\ppp\oo\L^{-1}\oo$ &  &  &  & 1001 & $\L^{-1}\oo\ppp\oo\ppp\oo\L^{-1}\oo$ &  &  &  & 1120 & $-\L^{-1}\oo\L^{-1}\oo\L^{-2}\oo\ppp\oo$\tabularnewline
 &  & 010 & $\phantom{{\displaystyle ^{^{1}}}}-\ppp\oo\L^{-1}\oo\ppp\oo$ &  &  &  & 1010 & $\L^{-1}\oo\ppp\oo\L^{-1}\oo\ppp\oo$ &  &  &  & 1201 & $-\L^{-1}\oo\L^{-2}\oo\ppp\oo\L^{-1}\oo$\tabularnewline
 &  & 100 & $\phantom{{\displaystyle ^{^{1}}}}-\L^{-1}\oo\ppp\oo\ppp\oo$ &  &  &  & 1100 & $\L^{-1}\oo\L^{-1}\oo\ppp\oo\ppp\oo$ &  &  &  & 1210 & $-\L^{-1}\oo\L^{-2}\oo\L^{-1}\oo\ppp\oo$\tabularnewline
\cline{2-4} 
 & \multirow{5}{*}{2} & 011 & $\phantom{{\displaystyle ^{^{1}}}}\ppp\oo\L^{-1}\oo\L^{-1}\oo$ &  &  &  & 0020 & $-\ppp\oo\ppp\oo\L^{-2}\oo\ppp\oo$ &  &  &  & 2011 & $-\L^{-2}\oo\ppp\oo\L^{-1}\oo\L^{-1}\oo$\tabularnewline
 &  & 101 & $\phantom{{\displaystyle ^{^{1}}}}\L^{-1}\oo\ppp\oo\L^{-1}\oo$ &  &  &  & 0200 & $-\ppp\oo\L^{-2}\oo\ppp\oo\ppp\oo$ &  &  &  & 2101 & $-\L^{-2}\oo\L^{-1}\oo\ppp\oo\L^{-1}\oo$\tabularnewline
 &  & 110 & $\phantom{{\displaystyle ^{^{1}}}}\L^{-1}\oo\L^{-1}\oo\ppp\oo$ &  &  &  & 2000 & $-\L^{-2}\oo\ppp\oo\ppp\oo\ppp\oo$ &  &  &  & 2110 & $-\L^{-2}\oo\L^{-1}\oo\L^{-1}\oo\ppp\oo$\tabularnewline
\cline{7-9} 
 &  & 020 & $\phantom{{\displaystyle ^{^{1}}}}-\ppp\oo\L^{-2}\oo\ppp\oo$ &  &  & \multirow{7}{*}{3} & 0111 & $-\ppp\oo\L^{-1}\oo\L^{-1}\oo\L^{-1}\oo$ &  &  &  & 1300 & $\L^{-1}\oo\L^{-3}\oo\ppp\oo\ppp\oo$\tabularnewline
 &  & 200 & $\phantom{{\displaystyle ^{^{1}}}}-\L^{-2}\oo\ppp\oo\ppp\oo$ &  &  &  & 1011 & $-\L^{-1}\oo\ppp\oo\L^{-1}\oo\L^{-1}\oo$ &  &  &  & 2020 & $\L^{-2}\oo\ppp\oo\L^{-2}\oo\ppp\oo$\tabularnewline
\cline{2-4} 
 & \multirow{5}{*}{3} & 111 & $-\L^{-1}\oo\L^{-1}\oo\L^{-1}\oo$ &  &  &  & 1101 & $-\L^{-1}\oo\L^{-1}\oo\ppp\oo\L^{-1}\oo$ &  &  &  & 2200 & $\phantom{{\displaystyle ^{^{1}}}}\L^{-2}\oo\L^{-2}\oo\ppp\oo\ppp\oo$\tabularnewline
 &  & 120 & $\phantom{{\displaystyle ^{^{1}}}}\L^{-1}\oo\L^{-2}\oo\ppp\oo$ &  &  &  & 1110 & $-\L^{-1}\oo\L^{-1}\oo\L^{-1}\oo\ppp\oo$ &  &  &  & 3001 & $\L^{-3}\oo\ppp\oo\ppp\oo\L^{-1}\oo$\tabularnewline
 &  & 201 & $\phantom{{\displaystyle ^{^{1}}}}\L^{-2}\oo\ppp\oo\L^{-1}\oo$ &  &  &  & 0120 & $\ppp\oo\L^{-1}\oo\L^{-2}\oo\ppp\oo$ &  &  &  & 3010 & $\L^{-3}\oo\ppp\oo\L^{-1}\oo\ppp\oo$\tabularnewline
 &  & 210 & $\phantom{{\displaystyle ^{^{1}}}}\L^{-2}\oo\L^{-1}\oo\ppp\oo$ &  &  &  & 0201 & $\ppp\oo\L^{-2}\oo\ppp\oo\L^{-1}\oo$ &  &  &  & 3100 & $\L^{-3}\oo\L^{-1}\oo\ppp\oo\ppp\oo$\tabularnewline
 &  & 300 & $\phantom{{\displaystyle ^{^{1}}}}-\L^{-3}\oo\ppp\oo\ppp\oo$ &  &  &  & 0210 & $\ppp\oo\L^{-2}\oo\L^{-1}\oo\ppp\oo$ &  &  &  & 4000 & $-\L^{-4}\oo\ppp\oo\ppp\oo\ppp\oo$\tabularnewline
\cline{1-4} \cline{6-9} \cline{11-14} 
\end{tabular}

\caption{\label{tab:list}List of sequences \(\l=(\l_{1},\l_{2},\cdots,\l_{N})\)
and their corresponding terms for the Dyson series in eq.~(\ref{eq:dyson})
up to \(N=4\). Recall that \((1+\frac{1}{\eta}\p_{t})t^{N-M}/\left(N-M\right)!\)
prepends and \(\ppp\) appends each term.}
\end{sidewaystable}

Counting the sequences $\l\in\cat_{M}^{N}$ and the corresponding
set of all sequences required to construct $\T^{\left(N\right)}$,
\begin{equation}
\cat^{N}\equiv\bigcup_{M=0}^{N}\cat_{M}^{N}\,,
\end{equation}
reveals a quite interesting connection to the Catalan numbers \textemdash{}
a sequence of numbers which has 215 different combinatorial interpretations
\cite{catalan}! The number of possible sequences $\l$ of length
$N$ that sum up to $M$ is
\begin{equation}
|\cat_{M}^{N}|=\frac{\left(N+M\right)!\left(N+1-M\right)}{M!\left(N+1\right)!}\equiv C\left(N,M\right)\,,\label{eq:catalan}
\end{equation}
where $C\left(N,M\right)$ is an entry in Catalan's triangle \cite{catalan2}
and we define $C\left(N\right)\equiv C\left(N,N\right)$. As a result
of the properties of Catalan's triangle (see Tab.~\ref{tab:catalan}),
the total number of terms for order $N$ is the $N+1$st Catalan number:
\begin{equation}
\sum_{M=0}^{N}C\left(N,M\right)=\frac{\left(2N+2\right)!}{\left(N+1\right)!\left(N+2\right)!}=C\left(N+1\right)\,.
\end{equation}
The first few entries in Catalan's triangle are reproduced in Tab.~\ref{tab:catalan}.
Thus, the number of terms in the $N$th order piece $\T^{\left(N\right)}$
scales as
\begin{equation}
C\left(N+1\right)\sim\frac{4^{N+1}}{\left(N+1\right)^{3/2}\sqrt{\pi}}\,,
\end{equation}
indicating an exponentially increasing number of terms.

\begin{table}
\selectlanguage{american}%
\begin{centering}
\begin{tabular}{cccccc}
\toprule 
$N\backslash M$ & $0$ & $1$ & $2$ & $3$ & $4$\tabularnewline
\midrule
$0$ & $1$ & \selectlanguage{english}%
\selectlanguage{english}%
 & \selectlanguage{english}%
\selectlanguage{english}%
 & \selectlanguage{english}%
\selectlanguage{english}%
 & \selectlanguage{english}%
\selectlanguage{english}%
\tabularnewline
$1$ & $1$ & $1$ & \selectlanguage{english}%
\selectlanguage{english}%
 & \selectlanguage{english}%
\selectlanguage{english}%
 & \selectlanguage{english}%
\selectlanguage{english}%
\tabularnewline
$2$ & $1$ & $2$ & $2$ & \selectlanguage{english}%
\selectlanguage{english}%
 & \selectlanguage{english}%
\selectlanguage{english}%
\tabularnewline
$3$ & $1$ & $3$ & $5$ & $5$ & \selectlanguage{english}%
\selectlanguage{english}%
\tabularnewline
$4$ & $1$ & $4$ & $9$ & $14$ & $14$\tabularnewline
\bottomrule
\end{tabular}
\par\end{centering}
\selectlanguage{english}%
\caption{\label{tab:catalan}Catalan's triangle. Note how each term is a sum
of entries above and to the left.}
\end{table}

\subsection{Proof\label{subsec:Proof}}

Since $\T^{\left(N\right)}$'s can be defined recursively, we prove
eq.~(\ref{eq:dyson}) by induction on $N$. 

\paragraph{Base case }

This corresponds to $N=1$. Explicitly,
\begin{equation}
\T_{t}^{\left(1\right)}\ppp=\intop_{-\infty}^{t}g(\tau_{1})d\tau_{1}e^{\left(t-\tau_{1}\right)\L}\oo\ppp\,.
\end{equation}
Inserting the decomposition $\id=\ppp+\qqq$ to the left of $\oo$
and simplifying yields
\begin{equation}
\T_{t}^{\left(1\right)}\ppp=\intop_{-\infty}^{t}g(\tau_{1})d\tau_{1}\ppp\oo\ppp+\intop_{-\infty}^{t}d\tau_{1}e^{\left(t-\tau_{1}\right)\L}\qqq\oo\ppp\,,
\end{equation}
where the $\eta\rightarrow0$ limit can readily be taken in the second
integral. Performing both integrals and omitting $\qqq$ for conciseness
yields
\begin{equation}
\T_{t}^{\left(1\right)}\ppp=\left[\left(t+\frac{1}{\eta}\right)\ppp\oo+\L^{-1}\oo\right]\ppp=\left(1+\frac{\p_{t}}{\eta}\right)\left(t\ppp\oo+\L^{-1}\oo\right)\ppp\,.
\end{equation}
For $N=1$, the two sequences $\l$ which satisfy the rules from eq.~(\ref{eq:rules})
are $\l=\left(0\right)$ and $\l=\left(1\right)$. These respectively
correspond to the two terms above. The corresponding elements in the
second line of Catalan's triangle are $C\left(1,0\right)=1$ and $C\left(1,1\right)=1$,
which sum up to the Catalan number $C\left(2,2\right)=2$.

\paragraph{Inductive case}

Assume that eq.~(\ref{eq:dyson}) is true for $N$ and show that
it is true for $N+1$. Explicitly,\begin{subequations}
\begin{align}
\T_{t}^{\left(N+1\right)} & =\S_{t,N+1}\T_{N+1}^{\left(N\right)}\\
 & =\intop_{-\infty}^{t}g(\tau_{N+1})d\tau_{N+1}e^{\left(t-\tau_{N+1}\right)\L}\sum_{M=0}^{N}\frac{\left(1+\frac{\p_{\tau_{N+1}}}{\eta}\right)\tau_{N+1}^{N-M}}{\left(N-M\right)!}\sum_{\l\in\cat_{M}^{N}}\oo\prod_{n=1}^{N}{\cal X}(\l_{n})\,.
\end{align}
\end{subequations}We once again split the integrals using $\id=\ppp+\qqq$:\begin{subequations}
\begin{align}
\T_{t}^{\left(N+1\right)} & =\sum_{M=0}^{N}\intop_{-\infty}^{t}g(\tau_{N+1})d\tau_{N+1}\frac{\left(1+\frac{\p_{\tau_{N+1}}}{\eta}\right)\tau_{N+1}^{N-M}}{\left(N-M\right)!}\ppp\sum_{\l\in\cat_{M}^{N}}\oo\prod_{n=1}^{N}{\cal X}(\l_{n})\label{eq:firstpart}\\
 & +\sum_{M=0}^{N}\intop_{-\infty}^{t}d\tau_{N+1}\frac{\left(1+\frac{\p_{\tau_{N+1}}}{\eta}\right)\tau_{N+1}^{N-M}}{\left(N-M\right)!}e^{\left(t-\tau_{N+1}\right)\L}\qqq\sum_{\l\in\cat_{M}^{N}}\oo\prod_{n=1}^{N}{\cal X}(\l_{n})\label{eq:secondpart}
\end{align}
\end{subequations}We now simplify the integrals.
\begin{itemize}
\item Let us first simplify those integrals in $\T_{t}^{\left(N+1\right)}$
which involve $g(\tau_{N+1})$, i.e., eq.~(\ref{eq:firstpart}).
For each $M$, split each of the two integrals into an integral over
$(-\infty,0]$ and $[0,t]$, just like in the base case. All $(-\infty,0]$
integrals are zero, which we prove by using the following indefinite
integral for some $a\in\mathbb{R}$ (proven using integration by parts):
\begin{equation}
\int_{T_{1}}^{T_{2}}d\tau e^{a\tau}\frac{\tau^{M}}{M!}=\left.\left(-\right)^{M}e^{a\tau}\sum_{K=0}^{M}\frac{\left(-\tau\right)^{K}}{K!}a^{-\left(M+1-K\right)}\right|_{T_{1}}^{T_{2}}\,.\label{eq:help}
\end{equation}
For the $(-\infty,0]$ integrals, $a=\eta$. The above is zero when
evaluated at $T_{1}=-\infty$ due to the exponential $e^{\eta\tau}$.
It produces the ``infinity'' $\left(-\right)^{M}/\eta^{M+1}$ at
$T_{2}=0$. The only remaining integrals are those over $[0,t]$,
which are trivially evaluated. The two infinities coming from the
integrals of $\tau_{N+1}^{N-M}$ and $\tau_{N+1}^{N-M-1}$ over $t\in(-\infty,0]$
cancel due to having different signs, and, after simplification, all
of this yields 
\begin{equation}
\intop_{-\infty}^{t}g(\tau_{N+1})d\tau_{N+1}\frac{\left(1+\frac{\p_{\tau_{N+1}}}{\eta}\right)\tau_{N+1}^{N-M}}{\left(N-M\right)!}=\left(1+\frac{\p_{t}}{\eta}\right)\frac{t^{N+1-M}}{\left(N+1-M\right)!}\,.\label{eq:gt}
\end{equation}
\item Now let us apply eq.~(\ref{eq:help}) to the integrals in eq.~(\ref{eq:secondpart}).
This procedure is significantly simplified by observing that
\begin{equation}
\int d\tau e^{a\tau}\p_{\tau}\frac{\tau^{M}}{M!}=\left(-\right)^{M}e^{a\tau}\p_{\tau}\sum_{K=0}^{M}\frac{\left(-\tau\right)^{K}}{K!}a^{-\left(M+1-K\right)}\,.
\end{equation}
When performing integrals of $e^{\tau\L}$, we can heuristically pretend
$\L$ is a scalar since we are working only on its range. Performing
these manipulations and simplifying signs yields 
\begin{equation}
\intop_{-\infty}^{t}d\tau_{N+1}\frac{\left(1+\frac{\p_{\tau_{N+1}}}{\eta}\right)\tau_{N+1}^{N-M}}{\left(N-M\right)!}e^{\left(t-\tau_{N+1}\right)\L}\qqq=\left(1+\frac{\p_{t}}{\eta}\right)\sum_{K=0}^{N-M}\frac{t^{K}}{K!}\left[-\L^{-\left(N+1-M-K\right)}\right]\qqq\,.\label{eq:nogt}
\end{equation}
\end{itemize}
Plugging eqs.~(\ref{eq:gt}) and (\ref{eq:nogt}) and into $\T_{t}^{\left(N+1\right)}$
(\ref{eq:firstpart}) yields
\begin{align}
\T_{t}^{\left(N+1\right)} & =\left(1+\frac{\p_{t}}{\eta}\right)\sum_{M=0}^{N}\sum_{K=0}^{N+1-M}\frac{t^{K}}{K!}{\cal X}\left(N+1-M-K\right)\sum_{\l\in\cat_{M}^{N}}\prod_{n=2}^{N+1}{\cal X}(\l_{n})\,,
\end{align}
where we are now using the rule for ${\cal X}$ from eq.~(\ref{eq:rules2})
and have also shifted the labeling of the sequences $\l$ by one ($\l_{n}\rightarrow\l_{n+1}$)
for later convenience. Now, let us switch the order of the sums, sum
over $K$ backwards ($\sum_{k=0}^{N+1}a_{k}=\sum_{k=0}^{N+1}a_{N+1-k}$),
and rename indices as $M\leftrightarrow K$, yielding
\begin{align}
\T_{t}^{\left(N+1\right)} & =\left(1+\frac{\p_{t}}{\eta}\right)\sum_{M=0}^{N+1}\frac{t^{\left(N+1-M\right)}}{\left(N+1-M\right)!}\sum_{K=0}^{M-\d_{M,N+1}}{\cal X}\left(M-K\right)\sum_{\l\in\cat_{K}^{N}}\prod_{n=2}^{N+1}{\cal X}(\l_{n})\,.\label{eq:almost}
\end{align}
The Kronecker delta $\d_{M,N+1}$ is there because the sum over $K$
for $M=N+1$ has only $N+1$ (and not $N+2$) terms. For each $M$,
we proceed to rewrite the sum over $K$ in terms of new sequences
$\l^{\prime}$ and show that those sequences are elements of $\cat_{M}^{N+1}$. 
\begin{itemize}
\item According to eq.~(\ref{eq:rules}), the elements in each sequence
$\l\in\cat_{K}^{N}$ sum to $K$. Therefore, if we prepend these sequences
with $\l_{1}=M-K$, we obtain new sequences 
\begin{equation}
\l^{\prime}=(\l_{1},\l_{2},\cdots,\l_{N+1})
\end{equation}
whose elements sum to $M$. Since all elements $\l_{n\geq2}$ satisfy
eq.~(\ref{eq:rules}) and since $0\leq\l_{1}\leq N+1$, the new sequences
satisfy 
\begin{equation}
\sum_{n=1}^{l}\l_{N+2-n}\begin{cases}
\leq\min\left\{ l,M\right\}  & 1\leq l<N+1\\
=M & l=N+1
\end{cases}\,.\label{eq:rules-1}
\end{equation}
Therefore, for each $M$, all of the new sequences $\l^{\prime}\in\cat_{M}^{N+1}$.
\item According to eq.~(\ref{eq:catalan}), the total number of the old
sequences $\l$ is $|\cat_{K}^{N}|=C\left(N,K\right)$ for each $K$.
Therefore, the total number of new sequences $\l^{\prime}$ for each
$M$ is
\begin{equation}
\sum_{K=0}^{M-\d_{M,N+1}}C\left(N,K\right)=C(N+1,M-\d_{M,N+1})\,,
\end{equation}
where the right-hand side is true because of a property of Catalan's
triangle, namely, each term is a sum of entries above and to the left
\cite{catalan2}. Therefore, the total number of sequences $\l^{\prime}$
for each $M$ is $C(N+1,M)=|\cat_{M}^{N+1}|$ {[}for this formula,
we do not need the Kronecker delta because $C(N+1,N)=C(N+1)${]}.
Using once again properties of Catalan numbers, the total number of
terms for $N+1$ is
\begin{equation}
\sum_{M=0}^{N+1}C\left(N+1,M\right)=C\left(N+2\right)=|\cat^{N+1}|\,.
\end{equation}
\end{itemize}
Having shown that, for each $M,$ the sum over $K$ can be rearranged
as a sum over terms corresponding to all of the sequences in $\cat_{M}^{N+1}$,
we can rewrite eq.~(\ref{eq:almost}) as
\begin{align}
\T_{t}^{\left(N+1\right)} & =\left(1+\frac{\p_{t}}{\eta}\right)\sum_{M=0}^{N+1}\frac{t^{\left(N+1-M\right)}}{\left(N+1-M\right)!}\sum_{\l^{\pr}\in\cat_{M}^{N+1}}\prod_{n=1}^{N+1}{\cal X}(\l_{n})\,,
\end{align}
thereby completing the inductive step and the proof.\hfill$\boxempty$

\section{Relation to previous work\label{subsec:Relation-to-previous}}

\selectlanguage{american}%
We now mention six connections of the above general derivations to
previous works studying more specific cases. The first two deal with
first-order perturbation theory while the last five make contact with
higher-order effects.
\selectlanguage{english}%

\subsection{Decoherence Hamiltonian and dark states\label{subsec:Example:-decoherence-Hamiltonian}}

Focusing on the \foreignlanguage{american}{\acs{DFS}} case, we have
$\L^{-1}\spert|\rout\kk=\L_{\of}^{-1}\spert|\rout\kk$ (\ref{eq:simplified-dfs-leakage})
and the leakage rate $\adg$ is the dissipative gap of $\L_{\of}$.
However, we can show something more with a few minor assumptions.
Moreover, we show that for the semisimple \foreignlanguage{american}{\acs{DFS}}
cases of Sec.~\ref{subsec:Semisimple-DFS-case}, the dissipative
gap of $\L_{\ur}$ is the excitation gap of a related Hamiltonian.
We assume that $\L$ (\ref{eq:def}) can be written without a Hamiltonian
part,
\begin{equation}
\L(\r)=\half\sum_{\ell}\lind_{\ell}(2F^{\ell}\r F^{\ell\dg}-F^{\ell\dg}F^{\ell}\r-\r F^{\ell\dg}F^{\ell})\,,
\end{equation}
and that \foreignlanguage{american}{\acs{DFS}} states are annihilated
by the jump operators, $F^{\ell}|\psi_{k}\ket=0$ (if $|\psi_{k}\ket$
are also eigenstates of $H$, they are called \textit{dark states}
\cite{Kraus2008}). This implies that $F_{\ul}^{\ell}=\pp F^{\ell}\pp=0$
(with $\pp=\sum_{k=0}^{d-1}|\psi_{k}\ket\bra\psi_{k}|$). We now determine
$\adg$ for such systems. Borrowing from Sec.~\ref{app:decomp} and
using the above assumptions, $\L_{\ur}(\r)=-\half\sum_{\ell}\lind_{\ell}\pp\r\left(F^{\ell\dg}F^{\ell}\right)_{\lr}$
(\ref{eq:disgap}). From this, we can extract the decoherence \cite{Karasik2008}
or parent \cite{Iemini2015} Hamiltonian
\begin{equation}
\hdg\equiv\half\sum_{\ell}\lind_{\ell}F^{\ell\dg}F^{\ell}\,.
\end{equation}
The (zero-energy) ground states of $\hdg$ are exactly the \foreignlanguage{american}{\acs{DFS}}
states $|\psi_{k}\ket$ \cite{Karasik2008,Iemini2015} and the excitation
gap of $\hdg$ is $\adg$. We come back to this case in Ch.~\ref{ch:7}.

\subsection{Geometric linear response\label{subsec:Geometric-linear-response}}

We can avoid having to calculate the Green's function $\L^{-1}$ in
the Kubo formula (\ref{eq:kubo-simplified}) by cleverly choosing
an observable to measure. In what is essentially a linear response
version of the adiabatic response calculation of Ref.~\cite{Avron2012a},
Thm.~9, let us define the \textit{flux of an operator $X$}
\begin{equation}
A\equiv\dot{X}=\L^{\dgt}(X)\,.
\end{equation}
If we measure said flux $A$, a simple manipulation of the Kubo formula
(\ref{eq:kubo-simplified}) yields\begin{subequations}
\begin{align}
\bb A|\T_{t}^{\left(1\right)}|\rout\kk & =\bb\L^{\dgt}(X)|[({\textstyle t+\frac{1}{\eta}})\ppp-\L^{-1}\spert]|\rout\kk\\
 & =\bb X|\L[({\textstyle t+\frac{1}{\eta}})\ppp-\L^{-1}\spert]\spert|\rout\kk\\
 & =-\bb X|\L\L^{-1}\spert|\rout\kk\\
 & =-\bb X|\qqq\spert|\rout\kk\,.
\end{align}
\end{subequations}One can see that $\L$ is not present in the above
result. Therefore, if one perturbs with a \textit{current} $V=i\p_{\a}$
(i.e., $-i[V,\r]=\p_{\a}\rho$ for all $\rho$) and measures the flux
of $X=i\p_{\b}$ (for some parameters $\a,\b$), then one obtains
a Berry curvature in what can be called \textit{geometric linear response}.
This is a generalization of the geometric linear response of Hamiltonian
systems (\cite{Gritsev2012}, Appx.~C of Ref.~\cite{Bradlyn2012})
to Lindbladians. A more detailed linear response calculation can be
found in Sec.~IV.A.1 of Ref.~\cite{ABFJ}, complementing the earlier
adiabatic response calculation in Sec.~7 of \cite{Avron2012a}.

\subsection{Dyson series for unique state case}

Assume that the steady state is unique, so $\ppp=|\varrho\kk\bb I|$.
Then, assuming a trace-preserving perturbation, $\ppp\oo=|\varrho\kk\bb\oo^{\ddagger}(I)|=0$
(\ref{eq:easy}), all sequences $\l$ with $\l_{n}=0$ for some $n$
in the Dyson expansion in Thm.~\ref{thm:Dyson} vanish. The only
sequences that remain are of the form $\l=\left(1,1,\cdots,1\right)$
and the $N$th order term reduces to
\begin{align}
{\cal T}_{t}^{\left(N\right)} & =(-\L^{-1}\oo)^{N}\,.\label{eq:dyson-1-1}
\end{align}
This matches the time-independent perturbation theory calculation
from Ref.~\cite{Li2014}. Contrary to the exponentially increasing
number of terms when there is more than one steady state, the number
of terms in the $N$th order Dyson series term for an unperturbed
$\L$ with a unique steady state is just... one!

\subsection{Quantum Zeno dynamics}

Recall that the power of $t$ prepending each term is $N-M$, meaning
that the highest power of $t$ always prepends the sole $M=0$ term
\begin{equation}
\frac{t^{N}}{N!}(\ppp\oo\ppp)^{N}\,,\label{eq:dominant}
\end{equation}
associated with the sequence $\l=(0,0,\cdots,0)$. Therefore, if we
rescale our perturbation as $\oo\rightarrow\frac{1}{T}\oo$ and evolve
to time $T\gg1$, then, for each $N$, the term (\ref{eq:dominant})
is dominant and of order $O(1)$ while the remaining $M>0$ terms
are of order $O(1/T^{M})$. Since this dominant term acts within \acs{ASH}
and does not cause any leakage out of \acs{ASH}, it is often said
to generate quantum Zeno dynamics (\cite{Facchi2002,Schafer2014,Arenz2016};
see also \cite{Anandan1988,Beige2000a}). This effect has already
been mentioned in Appx.~D of Ref.~\cite{Zanardi2014} and derived
to within first order using related methods \cite{Paulisch2015,Azouit2016}.
Since we have shown in Sec.~\ref{subsec:hams} that $\ppp\oo\ppp$
is unitary for $\oo$ being a Hamiltonian ($\spert$) and/or jump
operator ($\Y$) perturbation, we can see that \textit{Zeno dynamics
is always unitary for all perturbations of those forms for any order
$N$ in the $T\rightarrow\infty$ limit}.

\subsection{Second-order terms\label{subsec:Second-order-terms}}

\selectlanguage{american}%
Omitting the $\nicefrac{1}{\eta}$ infinity, the $N=2$ terms are
explicitly
\begin{align}
\T_{t}^{\left(2\right)}\ppp & =\frac{(t\ppp\oo\ppp)^{2}}{2!}-t(\ppp\oo\L^{-1}+\L^{-1}\oo\ppp)\oo\ppp+\L^{-1}(\oo\L^{-1}-\L^{-1}\oo\ppp)\oo\ppp\,.
\end{align}
\foreignlanguage{english}{Being the first nonunitary correction to
\acs{ASH}, the term
\begin{equation}
\LE\equiv-\ppp\oo\L^{-1}\oo\ppp\label{eq:uls}
\end{equation}
is relevant in a variety of many-body contexts \cite{Garcia-Ripoll2009,prozen,Cai2013,Lesanovsky2013,Sciolla2015,Znidaric2015,Monthus2017,Monthus2017a,Medvedyeva2016}
and quantum optical scenarios (see next Subsection). This term can
also generate universal Lindbladian evolution on \acs{ASH} using}
the following clever Zeno-like scheme\foreignlanguage{english}{ }\cite{Zanardi2015a}\foreignlanguage{english}{.
Assume that we have a rescaled Hamiltonian perturbation $\oo=\frac{1}{\sqrt{T}}\spert$
and that $\ppp\spert\ppp=0$. Then, we can see that only $\LE$ and
$\frac{1}{T}\L^{-1}\spert\L^{-1}\spert\ppp$ remain, but the former
is dominant in the $T\rightarrow\infty$ limit. The first order leakage
term $\frac{1}{\sqrt{T}}\L^{-1}\spert\ppp$ is then the dominant correction,
being of order $O(\nicefrac{1}{\sqrt{T}}$). We note that, in general,
this term does not have to be in Lindblad form }\cite{Zanardi2015a}\foreignlanguage{english}{
and therefore can generate continuous-time but \textit{non-Markovian}
evolution.}

Just like for $N=1$, we can continue to apply the four-corners decomposition
$\empbig$ to the $N\geq2$ terms. \foreignlanguage{english}{For the
case of Hamiltonian ($\oo=\spert$) and/or jump operator ($\oo=\Y$)
perturbations, the piece $\L_{\lr}$ is also not relevant in $\LE$.
Since $\R_{\lr}\oo\R_{\ul}=0$ (see Sec.~\ref{subsec:hams}), one
has
\begin{equation}
\LE=-\ppp\oo\L_{\thu}^{-1}\oo\ps\,.
\end{equation}
However, we cannot replace the remaining $\ppp$ with $\ps$ since
two actions of $\spert$ \textit{can} take the state from $\ulbig$
to $\lrbig$. Moreover, this is not always true when the perturbation
$\oo$ is of general Lindblad form.}
\selectlanguage{english}%

\subsection{Effective operator formalism\label{subsec:Effective-Operator-Formalism}}

Let us return to the type of $\L$ studied in Sec.~\ref{subsec:non-Hermitian}
and assume that $F=F_{\ur}$, $H=H_{\lr}$, and the perturbation $\oo=\spert=-i[\hpert,\cdot]$
is Hamiltonian with $\hpert=\hpert_{\of}$. Then, $\LE$ (\ref{eq:uls})
is exactly that from the effective operator formalism of Ref.~\cite{Reiter2012}
(with $H_{\ul}=0$ for simplicity). Recall that now $\L_{\tho}\equiv\K$,
where $\K(\r)\equiv-i[K\r-\r K^{\dg}]$ and the ``non-Hermitian Hamiltonian''
\begin{equation}
K=K_{\lr}\equiv H_{\lr}-\frac{i}{2}\sum_{\ell}F^{\ell\dg}F^{\ell}\label{eq:nheh}
\end{equation}
(with $K_{\lr}>0$ on $\lrbig$). We show that $\LE$ is a Lindbladian,
\begin{equation}
\LE(\r)=-i[\HE,\r]+\FE\r\FE^{\dg}-\half\{\FE^{\dg}\FE,\r\}\,,\label{eq:leff-1}
\end{equation}
with Hamiltonian $\HE=-\half V(K^{-1}+K^{-1\dg})V$ and jump operators
$\FE^{\ell}=F^{\ell}K^{-1}\hpert_{\ll}$.

Recall that $\ppp$ splits into two terms, and that, for this \foreignlanguage{american}{\acs{DFS}}
case, the terms simplify to 
\begin{equation}
\ppp=\R_{\ul}+\ppp\R_{\lr}=\R_{\ul}-\R_{\ul}\L\L_{\lr}^{-1}=\R_{\ul}-\R_{\ul}\L\K_{\lr}^{-1}\,.
\end{equation}
Given right and left eigenstates $|n\ket$ and $\bra\tilde{m}|$ of
$K$ and their respective eigenvalues $\l_{n}$ and $\l_{m}$, 
\begin{equation}
\K^{-1}(|n\ket\bra\tilde{m}|)=i\frac{|n\ket\bra\tilde{m}|}{\l_{n}-\l_{m}^{\star}}\,.\label{eq:invk}
\end{equation}
Using the formula for $\ppp$, recalling (\ref{eq:simplified-dfs-leakage}),
and remembering that $\LE$ acts on $\rout\in\ulbig$ yields
\begin{align}
\LE & =\left(-\R_{\ul}+\R_{\ul}\L\K_{\lr}^{-1}\right)\spert\K_{\of}^{-1}\spert\R_{\ul}\,.\label{eq:leff}
\end{align}
Let us first calculate the piece $\spert\K_{\of}^{-1}\spert\R_{\ul}$
above:\begin{subequations}
\begin{align}
\spert\K_{\of}^{-1}\spert\R_{\ul}(\rout) & =-i\spert\K_{\of}^{-1}(\hpert_{\ll}\rout-\rout\hpert_{\ur})\\
 & =\spert(K^{-1}\hpert_{\ll}\rout+\rout\hpert_{\ur}K^{-1\dg})\\
 & =-i(\hpert_{\ll}\rout\hpert_{\ur}K^{-1\dg}-K^{-1}\hpert_{\ll}\rout\hpert_{\ur}+\hpert_{\ur}K^{-1}\hpert_{\ll}\rout-\rout\hpert_{\ur}K^{-1\dg}\hpert_{\ll})\,.\label{eq:fourfour}
\end{align}
\end{subequations}Here, $\K_{\of}^{-1}$ can be expressed in terms
of $K^{-1}$ by using eq.~(\ref{eq:invk}) and observing that either
the bra or ket in each outer product of operators in $\of$ is in
the kernel of $K$. We now examine the first and second pair of terms
to obtain the recycling and deterministic terms in $\LE$.
\begin{itemize}
\item The first two terms in eq.~(\ref{eq:fourfour}) involve $\hpert_{\ll}\rout\hpert_{\ur}\in\lrbig$
and are the terms seen by the second term of $\LE$ (\ref{eq:leff}).
By decomposing $\hpert_{\ll}\rout\hpert_{\ur}$ into outer products
$|n\ket\bra\tilde{m}|\in\lrbig$ of eigenstates of $K$, we can simply
study each outer product.\footnote{This assumes that $K$ is diagonalizable, although we are confident
this treatment can be extended.} For each $|n\ket\bra\tilde{m}|$,\begin{subequations}
\begin{align}
\K_{\lr}^{-1}(|n\ket\bra\tilde{m}|K^{-1\dg}-K^{-1}|n\ket\bra\tilde{m}|) & =\left(\frac{1}{\l_{m}^{\star}}-\frac{1}{\l_{n}}\right)\K_{\lr}^{-1}\left(|n\ket\bra\tilde{m}|\right)\\
 & =\frac{\l_{n}-\l_{m}^{\star}}{\l_{n}\l_{m}^{\star}}i\frac{|n\ket\bra\tilde{m}|}{\l_{n}-\l_{m}^{\star}}=i\frac{|n\ket\bra\tilde{m}|}{\l_{n}\l_{m}^{\star}}\\
 & =iK^{-1}|n\ket\bra\tilde{m}|K^{-1\dg}\,.
\end{align}
\end{subequations}Plugging this in and using $\R_{\ul}\L\R_{\lr}(\cdot)=\sum_{\ell}F^{\ell}\cdot F^{\ell\dg}$
(\ref{eq:transfer}) yields the second term in $\LE$,
\begin{align}
\R_{\ul}\L\K_{\lr}^{-1}\spert\K_{\of}^{-1}\spert\R_{\ul}(\rout) & =\sum_{\ell}F^{\ell}K^{-1}\hpert_{\ll}\rout\hpert_{\ur}K^{-1\dg}F^{\ell\dg}\equiv\sum_{\ell}\FE\rout\FE^{\dg}\,.
\end{align}
\item The second two terms in eq.~(\ref{eq:fourfour}) involve $\hpert_{\ur}K^{-1}\hpert_{\ll}\rout-H.c.\in\ofbig$
and are the terms seen by the second term of $\LE$ (\ref{eq:leff}).
Here, it is useful to decompose $K^{-1}=K_{+}^{-1}+K_{-}^{-1}$, where
$K_{\pm}=\half(K^{-1}\pm K^{-1\dg})$. The $K_{+}$ term produces
an effective Hamiltonian $\HE$ part of $\hpert_{\ur}K^{-1}\hpert_{\ll}$
while the $K_{-}$ term can be combined with eq.~(\ref{eq:nheh})
to relate $\hpert_{\ur}K^{-1}\hpert_{\ll}$ to the anti-commutator
piece consisting of $\sum_{\ell}\FE^{\dg}\FE$:\begin{subequations}
\begin{align}
-i\hpert_{\ur}K_{-}^{-1}\hpert_{\ll} & =-i\hpert_{\ur}\half(K^{-1}-K^{-1\dg})\hpert_{\ll}=-\frac{i}{2}\hpert_{\ur}K^{-1\dg}(K-K^{\dg})K^{-1}\hpert_{\ll}\\
 & =-\frac{i}{2}\hpert_{\ur}K^{-1\dg}\left(-i\sum_{\ell}F^{\ell\dg}F^{\ell}\right)K^{-1}\hpert_{\ll}=-\frac{1}{2}\sum_{\ell}\FE^{\dg}\FE\,.
\end{align}
\end{subequations}
\end{itemize}
Thus, we have constructed both the jump and deterministic terms of
eq.~(\ref{eq:leff-1}) and concisely linked the effective operator
formalism to ordinary second-order perturbation theory.\selectlanguage{english}%

\inputencoding{latin9}\newpage{}\foreignlanguage{english}{}%
\begin{minipage}[t]{0.5\textwidth}%
\selectlanguage{english}%
\begin{flushleft}
\begin{singlespace}\textit{``In general one may expect such effects
whenever an isolated system is considered as being divided into two
interacting parts, each slaved to a different aspect of the other.
The systems considered {[}...{]} might be regarded as a special case,
in which the coupling is with 'the rest of the Universe' (including
us as observers). The only role of the rest of the Universe is to
provide a Hamiltonian with slowly-varying parameters, thus forcing
the system to evolve adiabatically with phase continuation governed
by the time-dependent Schrödinger equation.''}\end{singlespace}
\par\end{flushleft}
\begin{flushleft}
\hfill{}\textendash{} Michael V. Berry
\par\end{flushleft}\selectlanguage{english}%
\end{minipage}

\chapter{Adiabatic perturbation theory\label{ch:5}}

\selectlanguage{english}%
We now apply the four-corners decomposition to adiabatic perturbation
theory. The leading order term governs adiabatic evolution within
\acs{ASH} while all other terms are non-adiabatic corrections. We
show that for a cyclic adiabatic deformation of steady \acs{ASH},
the \textit{holonomy} is unitary \cite{ABFJ}. We also determine that
the energy scale governing non-adiabatic corrections is once again
governed by the effective dissipative gap $\adg$. We begin by reviewing
the adiabatic/Berry connection for a Hamiltonian system in Sec.~\ref{subsec:Connection:-non-degenerate-Hamil}
and the \acs{DFS} case in Sec.~\ref{subsec:Unitary-case:-degenerate}.
We then continue to do full adiabatic perturbation theory for the
Lindbladian case in Sec.~\ref{subsec:adiabatic-response}.

The adiabatic limit has been generalized to Lindbladians \cite{AbouSalem2007,Joye2007,Avron2012a,Avron2012b,Schmid2013,Venuti2015}
and all orders of corrections to adiabatic evolution have been derived
(e.g., \cite{Avron2012b}, Thm.~6). By ``\textit{the} adiabatic
limit'', we mean that dominated by the steady states of $\L$. Another
adiabatic limit exists which is dominated by eigenstates of the Hamiltonian
part of $\L$ \cite{davies1978,Thunstrom2005,pekola2010}, which we
do not address here. Unlike adiabatic evolution of \textquotedblleft non-Hermitian
Hamiltonian\textquotedblright{} systems, Lindbladian adiabatic evolution
always obeys the rules of quantum mechanics (i.e., is completely-positive
and trace-preserving). In this work, we do not make the adiabatic
approximation to Hamiltonians (and later to Lindbladians \cite{Sarandy2005})
since it is not sufficient for the adiabatic theorem to hold. \foreignlanguage{american}{In
the adiabatic approximation, one assumes that certain (seemingly reasonable)
quantitative requirements on eigenstates and their derivatives w.r.t.
parameters are sufficient for the system to be approximately adiabatic.
However, those conditions have }been shown to be insufficient (see
\cite{Ortigoso2012} and refs. therein), so here we work strictly
in the adiabatic limit and do not assume any of the conditions of
the adiabatic approximation. We assume that \acs{ASH} is steady ($\hout=0$),
but note that this analysis can be extended to non-steady \acs{ASH}
by carefully including a \textquotedblleft dynamical phase\textquotedblright{}
contribution from $\hout$. 

The first work to make contact between adiabatic/Berry connections
and Lindbladians was Ref.~\cite{Sarandy2006}. Avron \textit{et al.}
(\cite{Avron2012b}, Prop. 3) showed that the corresponding holonomy
is trace-preserving and completely positive. Reference \cite{Carollo2006}
(see also \cite{Carollo2003}) showed that the holonomy is unitary
for Lindbladians possessing one \foreignlanguage{american}{\acs{DFS}}
block. Reference \cite{Oreshkov2010} proposed a theory of adiabaticity
which extended that result to the multi-block case and arrived at
eq.~(\ref{eq:nsad2}). They showed that corrections to their result
were $O\left(\nicefrac{1}{\sqrt{T}}\right)$ (with $T$ being the
traversal time), as opposed to $O\left(\nicefrac{1}{T}\right)$ as
in a proper adiabatic limit. By explicitly calculating the adiabatic
connections below, we connect the result of Ref.~\cite{Oreshkov2010}
with the formulation of Ref.~\cite{Sarandy2006}, showing that non-adiabatic
corrections are actually $O\left(\nicefrac{1}{T}\right)$. We also
extend Ref.~\cite{Oreshkov2010} to \foreignlanguage{american}{\acs{NS}}
cases where the dimension of the auxiliary subspace (i.e., the rank
of $\as$) can change. Regarding leakage out of \acs{ASH}, the idea
that $\lrbig$ is not relevant to first-order non-adiabatic corrections
is mentioned in the Supplemental Material of Ref.~\cite{Oreshkov2010}.

\section{Hamiltonian case\label{subsec:Connection:-non-degenerate-Hamil}}

First, let us review two important consequences of the (Hamiltonian)
quantum mechanical adiabatic theorem. Namely, adiabatic evolution
can be thought of as either (1) being generated by an effective operator
\cite{Kato1950} or (2) generating transport of vectors in parameter
space, leading to Abelian \cite{Vinitskii1990,Pancharatnam,berry1984,Aharonov1987}
or non-Abelian \cite{Wilczek1984} holonomies. We loosely follow the
excellent expositions in Ch.~2.1.2 of Ref.~\cite{phasebook} and
Sec.~9 of Ref.~\cite{avronleshouches}. We conclude with a summary
of \textit{four} different ways (\ref{eq:hol1}-\ref{eq:uzan-1})
of writing holonomies for the non-degenerate case and outline the
generalizations done in the following Sections.

Let $|\psi_{0}^{(t)}\ket$ be the instantaneous unique (up to a phase)
zero-energy ground state of a Hamiltonian $H(t)$. We assume that
the ground state is separated from all other eigenstates of $H(t)$
by a nonzero excitation gap for all times of interest. Let us also
rescale time ($s=t/T$) such that the exact state $|\psi(s)\ket$
evolves according to 
\begin{equation}
\frac{1}{T}\p_{s}|\psi(s)\ket=-iH(s)|\psi(s)\ket\,.\label{eq:schro-1}
\end{equation}
The adiabatic theorem states that $|\psi(s)\ket$ (with $|\psi(0)\ket=|\psi_{0}^{(s=0)}\ket$)
remains an instantaneous eigenstate of $H(s)$ (up to a phase $\t$)
in the limit as $T\rightarrow\infty$, with corrections of order $O(\nicefrac{1}{T})$.
Let $\pad^{(s)}=|\hol_{0}^{(s)}\ket\bra\hol_{0}^{(s)}|$ be the projection
onto the instantaneous ground state. In the adiabatic limit,
\begin{equation}
|\psi(s)\ket=e^{i\t(s)}|\psi_{0}^{(s)}\ket\label{eq:coord}
\end{equation}
and the initial projection $\pad^{(0)}$ evolves into
\begin{equation}
\pad^{(s)}=\uad(s)\pad^{(0)}\uad^{\dg}(s)\label{eq:adia}
\end{equation}
(with $\uad$ generating purely adiabatic evolution). The adiabatic
evolution operator $\uad$ is determined by the Kato equation 
\begin{equation}
\p_{s}\uad=-i\hk\uad\,,\label{eq:diffeq}
\end{equation}
with so-called Kato Hamiltonian \cite{Kato1950} ($\padd\equiv\p_{s}\pad$)
\begin{equation}
\hk=i[\padd,\pad]\,.\label{eq:kato}
\end{equation}
Such an adiabatic operator $\uad$ can be shown to satisfy eq.~(\ref{eq:adia})
(see \cite{phasebook}, Prop. 2.1.1) using 
\begin{equation}
\pad\padd\pad=\qad\padd\qad=0\,.\label{eq:projid}
\end{equation}
The $\pad\padd\pad=0$ is a key consequence of the idempotence of
projections while $\qad\padd\qad=0$ is obtained by application of
the no-leak property (\ref{eq:no-leak}); both are used throughout
the text. The conventional adiabatic evolution operator is then a
product of exponentials of $-i\hk$ ordered along the path $s^{\pr}\in[0,s]$
(with path-ordering denoted by $\path$):
\begin{equation}
\uad(s)=\path\exp\left(\int_{0}^{s}[\padd,\pad]\dd s^{\prime}\right)\,.
\end{equation}

Due to the \textit{intertwining property} (\ref{eq:adia}), $\uad(s)$
simultaneously transfers states in $\pad^{(0)}\h$ to $\pad^{(s)}\h$
and states in $\qad^{(0)}\h$ to $\qad^{(s)}\h$ (with $\qad\equiv I-\pad$)
without mixing the two subspaces during the evolution. The term $\padd\pad$
in eq.~(\ref{eq:kato}) is responsible for generating the adiabatic
evolution of $\pad\h$ while the term $\pad\padd$ generates adiabatic
evolution of $\qad\h$. To see this, observe that the adiabatically
evolving state $|\psi(s)\ket=\uad(s)|\psi_{0}^{(s=0)}\ket\in\pad^{(s)}\h$
obeys the Schrödinger equation
\begin{equation}
\p_{s}|\psi(s)\ket=[\padd,\pad]|\psi(s)\ket\,.\label{eq:comm-1-1}
\end{equation}
Applying property (\ref{eq:projid}), the second term in the commutator
can be removed without changing the evolution. Since we are interested
only in adiabatic evolution of the zero-eigenvalue subspace $\pad\h$
(and not its complement), we can simplify $\uad$ by removing the
second term in the Kato Hamiltonian. This results in the adiabatic
equation 
\begin{equation}
\p_{s}|\hol(s)\ket=\padd\pad|\hol(s)\ket\label{eq:schro-2}
\end{equation}
and effective adiabatic evolution operator
\begin{equation}
\uadd^{(s)}=\path\exp\left(\int_{0}^{s}\padd\pad\dd s^{\prime}\right)\,.
\end{equation}

We now assume that $s$ parameterizes a path in the parameter space
$\psp$ of some external time-dependent parameters of $H(s)$. For
simplicity, we assume that $\psp$ is simply-connected.\footnote{\label{fn:sc}If $\psp$ is not simply connected (i.e., has holes),
then the Berry phase may contain ``topological'' contributions.
Such effects are responsible for anyonic statistics (e.g., \cite{Read2009},
Sec.~I.B) and can produce Berry phases even for a one-dimensional
parameter space \cite{Zak1989}.} By writing $\pad$ and $\padd$ in terms of $|\psi_{0}\ket$ and
explicitly differentiating, the adiabatic Schrödinger equation (\ref{eq:schro-2})
becomes
\begin{equation}
\p_{s}|\hol\ket=(I-\pad)\p_{s}|\hol\ket.\label{eq:schro}
\end{equation}
This implies a \textit{parallel transport condition}
\begin{equation}
0=\pad\p_{s}|\hol\ket=\bra\hol|\p_{s}\hol\ket|\hol\ket\,,\label{eq:pratran}
\end{equation}
which describes how to move the state vector from one point in $\psp$
to another. The particular condition resulting from adiabatic evolution
eliminates any first-order deviation from the unit overlap between
nearby adiabatically evolving states \cite{simon1983}: 
\begin{equation}
\bra\hol(s+\d s)|\hol(s)\ket=1+O(\d s^{2})\,.
\end{equation}
Therefore, we have shown two interpretations stemming from the adiabatic
theorem. The first is that adiabatic evolution of $|\hol(s)\ket$
(with $|\psi(0)\ket=|\psi_{0}^{(s=0)}\ket$) is generated (in the
ordinary quantum mechanical sense) by the $\padd\pad$ piece of the
Kato Hamiltonian $\hk$. The second is that adiabatic evolution realizes
parallel transport of $|\hol(s)\ket$ along a curve in parameter space.
As we show now, either framework can be used to determine the adiabatically
evolved state and the resulting Berry phase.

We now define a coordinate basis $\{\xx_{\a}\}$ for the parameter
space $\psp$. In other words, 
\begin{equation}
\p_{t}=\frac{1}{T}\p_{s}=\frac{1}{T}\sum_{\a}\vv_{\a}\p_{\a}\,,\label{eq:param}
\end{equation}
where $\p_{s}$ is the derivative along the path, $\p_{\a}\equiv\p/\p\xx_{\a}$
are derivatives in various directions in parameter space, and $\dot{\xx}_{\a}\equiv\frac{\dd\xx_{\a}}{\dd s}$
are (dimensionless) parameter velocities. Combining eqs.~(\ref{eq:coord})
and (\ref{eq:param}) with the parallel transport condition (\ref{eq:pratran})
gives
\begin{equation}
0=\pad\p_{s}|\psi\ket=i\sum_{\a}\dot{\xx}_{\a}(\p_{\a}\t-A_{\a,00})|\psi\ket\,,\label{eq:partran2}
\end{equation}
where the \textit{adiabatic/Berry connection} $A_{\a,00}=i\bra\psi_{0}|\p_{\a}\psi_{0}\ket$
is a vector/gauge potential in parameter space. The reason we can
think of $A_{\a,00}$ as a gauge potential is because it transforms
as one under \textit{gauge transformations} $|\psi_{0}\ket\rightarrow e^{i\vartheta}|\psi_{0}\ket$
where $\vartheta\in\mathbb{R}$:
\begin{equation}
A_{\a,00}\rightarrow A_{\a,00}-\p_{\a}\vartheta\,.
\end{equation}
These structures arise because the adiabatic theorem has furnished
for us a vector bundle over the parameter-space manifold $\psp$ \cite{simon1983,avronleshouches}.
More formally, given the \textit{trivial bundle} $\psp\times\h$ (where
at each point in $\psp$ we have a copy of the full Hilbert space
$\h$), the projection $\pad$ defines a (possibly nontrivial) sub-bundle
of $\psp\times\h$ (in this case, a line bundle, since $\pad$ is
rank one). The trivial bundle has a covariant derivative $\nabla_{\alpha}\equiv\partial_{\alpha}$
with an associated connection that can be taken to vanish. The Berry
connection $A_{\alpha,00}$ is then simply the connection associated
with the covariant derivative $\pad\nabla_{\alpha}$ induced on the
sub-bundle defined by $\pad$.

The Berry connection describes what happens to the initial state vector
as it is parallel transported. It may happen that the vector does
not return to itself after transport around a closed path in parameter
space (due to e.g., curvature or non-simple connectedness of $\psp$).
Given an initial condition $\t(0)=0$, the parallel transport condition
(\ref{eq:partran2}) uniquely determines how $\t$ changes during
adiabatic traversal of a path $C$ parameterized by $s\in[0,1]$,
i.e., from a point $\xx_{\a}^{(s=0)}\in\psp$ to $\xx_{\a}^{(1)}$.
For a closed path ($\xx_{\a}^{(1)}=\xx_{\a}^{(0)}$) and assuming
$A_{\a,00}$ is defined uniquely for the whole path \cite{Read2009},
the state transforms as $|\psi(0)\ket\rightarrow B|\psi(0)\ket$ with
resulting gauge-invariant \textit{holonomy} (here, Berry phase)
\begin{equation}
B\equiv\exp\left(i\sum_{\a}\oint_{C}A_{\a,00}\dd\xx_{\a}\right)\,.\tag{Hol1}\label{eq:hol1}
\end{equation}
Alternatively, we can use (\ref{eq:param}) and the Schrödinger equation
(\ref{eq:schro-2}): $|\psi(0)\ket\rightarrow\uadd|\psi(0)\ket$ with
holonomy 
\begin{equation}
\uadd\equiv\path\exp\left(\sum_{\a}\oint_{C}\p_{\a}\pad\pad\dd\xx_{\a}\right)\,.\tag{Hol2}\label{eq:hol2}
\end{equation}
Since the geometric and Kato Hamiltonian formulations of adiabatic
evolution are equivalent, eqs.~(\ref{eq:hol1}-\ref{eq:hol2}) offer
two ways to get to the same answer. They reveal two representations
of the Berry connection and holonomy: the \textit{coordinate representation}
$\{iA_{\a,00},B\}$, which determines evolution of $\t$ from eq.~(\ref{eq:coord}),
and the \textit{operator representation} $\{\p_{\a}\pad\pad,\uadd\}$,
which determines evolution of $|\psi_{0}\ket$ \{Prop. 1.2 of \cite{Avron1989},
eq.~(5) of \cite{Avron2011}\}. Despite the latter being a path-ordered
product of matrices, it simplifies to the Berry phase in the case
of closed paths.

For completeness, we also state an alternative form for each holonomy
representation (\ref{eq:hol1}-\ref{eq:hol2}). If there are two or
more parameters, the coordinate representation can be expressed in
terms of the (here, Abelian) Berry curvature $F_{\a\b,00}\equiv\p_{\a}A_{\b,00}-\p_{\b}A_{\a,00}$
using Stokes' theorem:
\begin{equation}
B=\exp\left(\frac{i}{2}\sum_{\a,\b}\iint_{S}F_{\a\b,00}\dd\xx_{\a}\dd\xx_{\b}\right)\,,\tag{Hol3}\label{eq:bercurv-1}
\end{equation}
where $S$ is a surface whose boundary is the contour $C$. The operator
representation can also be written as a product of the path-dependent
projections $\pad$: 
\begin{equation}
\uadd=\path\prod_{s\in C}\pad^{(s)}\,,\tag{Hol4}\label{eq:uzan-1}
\end{equation}
where $\path\prod$ denotes a continuous product ordered from right
to left along the path $C$ \{eq.~(47) of \cite{Avron2012b}, Prop.
1 of \cite{Zanardi2015}\}. This form of the holonomy should be reminiscent
of the Pancharatnam phase \cite{Pancharatnam,phasebook} and, more
generally, of a dynamical quantum Zeno effect (\cite{Facchi2002,Schafer2014,Arenz2016};
see also \cite{Anandan1988,Beige2000a}).

In the following Sections, we generalize both representations to multi-dimensional
subspaces in operator form, superoperator form, and to Lindblad \foreignlanguage{american}{\acs{NS}}
blocks, so (in that order) the relevant quantities generalize to 
\begin{eqnarray*}
\pad & \rightarrow & \{\iidfs,\idfs,\ppp\}\\
A & \rightarrow & \{\adfs,\aadfs,\A\}\\
B & \rightarrow & \{\bdfs,\bbdfs,\ber\}\\
\uadd & \rightarrow & \{\uk,\ukdfs,\U\}\,.
\end{eqnarray*}

\section{DFS case\label{subsec:Unitary-case:-degenerate}}

We now generalize the above to a degenerate $d$-dimensional instantaneous
ground state eigenspace spanned by orthonormal basis states $\{|\psi_{k}^{(s)}\ket\}_{k=0}^{d-1}$
of a Hamiltonian $H(s)$. Due to the clean-leak condition (\ref{eq:clean-leak}),
this analysis also holds if that same eigenspace comprises the \foreignlanguage{american}{\acs{DFS}}
of a Lindbladian. For notational convenience, we indicate the s-dependence
as a superscript. We denote the respective operator and superoperator
projections as 
\begin{equation}
\begin{aligned}\iidfs^{(s)} & =\sum_{k=0}^{d-1}|\psi_{k}^{(s)}\ket\bra\psi_{k}^{(s)}|\\
\idfs^{(s)}(\r) & =\sum_{\m=0}^{d^{2}-1}|\stdfs_{\m}(s)\kk\bb\stdfs_{\m}(s)|\r\kk=\iidfs^{(s)}\r\iidfs^{(s)}
\end{aligned}
\label{eq:projs}
\end{equation}
{[}where $\stdfs_{\m}\in\text{span}\{|\psi_{k}\ket\bra\psi_{l}|\}$
is a Hermitian matrix basis for the \foreignlanguage{american}{\acs{DFS}},
$\iidfs^{\dg}=\iidfs$, $\idfs^{\dgt}(\r)=\iidfs^{\dg}\r\iidfs^{\dg}=\idfs$,
and $\r\in\text{\acs{OPH}}${]} in order to make contact with the
next Section, where such a set is the \foreignlanguage{american}{\acs{DFS}}
part of an \foreignlanguage{american}{\acs{NS}} block. For now however,
adiabatic evolution of $\{|\psi_{k}\ket\}$ occurs under the ordinary
Hamiltonian Schrödinger eq.~(\ref{eq:schro-1}).

Adiabatic evolution generalizes straightforwardly from the previous
Section by letting $\pad^{(s)}\rightarrow\iidfs^{(s)}$. The resulting
Wilczek-Zee adiabatic connection \cite{Wilczek1984} becomes a Hermitian
matrix (for each $\a$) with elements
\begin{equation}
\adfs_{\a,kl}\equiv i\bra\psi_{k}|\p_{\a}\psi_{l}\ket\,.\label{eq:bercondfs-1}
\end{equation}
This connection transforms as a gauge potential under transformations
$|\psi_{k}\ket\rightarrow|\psi_{l}\ket R_{lk}$, where $R\in U(d)$
is a unitary rotation of the \foreignlanguage{american}{\acs{DFS}}
states:
\begin{equation}
\adfs_{\a}\rightarrow R^{\dg}\adfs_{\a}R+iR^{\dg}\p_{\a}R\,.
\end{equation}
The holonomy (here, Wilson loop) is given by the matrix
\begin{equation}
\bdfs=\path\exp\left(i\sum_{\a}\oint_{C}\adfs_{\a}\dd\xx_{\a}\right)\label{eq:hol-1}
\end{equation}
acting on the vector of coefficients $c_{k}$. Generalizing eq.~(\ref{eq:hol1}),
$|\psi\ket$ transforms under the holonomy as
\begin{equation}
|\psi(0)\ket=\sum_{k=0}^{d-1}c_{k}|\psi_{k}^{(0)}\ket\rightarrow\sum_{k,l=0}^{d-1}\bdfs_{kl}c_{l}|\psi_{k}^{(0)}\ket\,.
\end{equation}

We now express the above structures in superoperator form in order
to bridge the gap between unitary and Lindblad systems. To do this,
we upgrade the state basis $\{|\psi_{k}\ket\}_{k=0}^{d-1}$ to the
matrix basis $\{|\stdfs_{\m}\kk\}_{\m=0}^{d^{2}-1}$. The adiabatic
Schrödinger equation can equivalently be expressed in operator and
superoperator form using the superoperator projection from eq.~(\ref{eq:projs}):
\begin{equation}
\p_{s}\r=|\p_{s}\psi\ket\bra\psi|+|\psi\ket\bra\p_{s}\psi|=\idfsd\idfs(\r)\label{eq:schrodfs}
\end{equation}
since $|\p_{s}\psi\ket=\iidfsd\iidfs|\psi\ket$, $\idfsd(\r)=\iidfsd\r\iidfs+\iidfs\r\iidfsd$,
and $\iidfsd=\p_{s}\iidfs$. The operator representation of the holonomy
is then the path-ordered product of exponentials of the generator
$\idfsd\idfs$.

The superoperator version of the coordinate form of the \foreignlanguage{american}{\acs{DFS}}
connection is then 
\begin{equation}
\aadfs_{\a,\m\n}=\bb\stdfs_{\m}|\p_{\a}\stdfs_{\n}\kk=\tr\{\stdfs_{\m}\p_{\a}\stdfs_{\n}\}\,.\label{eq:bercondfs}
\end{equation}
Sticking with the convention that $\stdfs_{0}\equiv\frac{1}{\sqrt{d}}\iidfs$
is the only traceful element and using property (\ref{eq:projid}),
$\stdfs_{0}\p_{\a}\stdfs_{0}\stdfs_{0}=0$ and we can see that $\aadfs_{\a,\m0}=\aadfs_{\a,0\m}=0$
for all $\m$. Thus, $\aadfs_{\a}$ consists of a direct sum of zero
with a $(d^{2}-1)$-dimensional anti-symmetric matrix acting on the
Bloch vector components $\{|\stdfs_{\m\neq0}\kk\}$. Since the latter
is anti-symmetric, the holonomy is unitary. Formally, letting $\text{\acs{OPH}}^{\star}$
be the space of traceless $d$-dimensional Hermitian matrices, $\idfs$
defines a sub-bundle of the trivial bundle $\psp\times\text{\acs{OPH}}^{\star}$
and $\aadfs_{\a}$ is the connection associated with the covariant
derivative $\idfs\p_{\a}$ induced on that sub-bundle.

\section{Lindbladian case\label{subsec:adiabatic-response}}

Throughout this entire Section, we assume that \acs{ASH} is steady
($\hout=0$). Recall that a system evolves in a rescaled time $s\equiv t/T\in\left[0,1\right]$
according to a time-dependent Lindbladian $\L\left(s\right)$, where
the end time $T$ is infinite in the adiabatic limit. For all $s$,
we define a continuous and differentiable family of asymptotic projections
\begin{equation}
\ppp^{(s)}=\sum_{\m}|\St_{\m}^{(s)}\kk\bb\J^{\m}(s)|\,,
\end{equation}
steady-state basis elements $\St_{\m}^{(s)}$ (such that $\L(s)|\St_{\m}^{(s)}\kk=0$),
and conserved quantities $\J^{\m}(s)$ (such that $\bb\J^{\m}(s)|\L(s)=0$).
Each projection therefore is associated with its own instantaneous
asymptotic subspace, $\ppp^{(s)}\oph$. The dimension of the instantaneous
subspaces (i.e., the rank of $\ppp^{(s)}$) is assumed to stay constant
during this evolution. In other words, the zero eigenvalue of $\L\left(s\right)$
is isolated from all other eigenvalues at all points $s$ by the\textit{
dissipative gap} $\dgg$ {[}analogous to the excitation gap in Hamiltonian
systems; see eq.~(\ref{eq:dissipative-gap-def}){]}. We once again
assume that $s\in\left[0,1\right]$ parameterizes a path in a space
of control parameters $\psp$, whose coordinate basis is $\{\xx_{\a}\}$,
and use the parameterization from the Hamiltonian case in eq.~(\ref{eq:param}):
\begin{equation}
\p_{t}=\frac{1}{T}\p_{s}=\frac{1}{T}\sum_{\a}\vv_{\a}\p_{\a}\,,\label{eq:param-1}
\end{equation}
where $\p_{s}$ is the derivative along the path, $\p_{\a}\equiv\p/\p\xx_{\a}$
are derivatives in various directions in parameter space, and $\dot{\xx}_{\a}\equiv\frac{\dd\xx_{\a}}{\dd s}$
are (unitless) parameter velocities. 

Following Ref.~\cite{Avron2012b}, starting with an initially steady
state $|\r(0)\kk\in\text{\acs{ASH}}$, adiabatic perturbation theory
is an expansion of the equation of motion
\begin{equation}
\frac{1}{T}\p_{s}|\r(s)\kk=\L(s)|\r(s)\kk
\end{equation}
in a series in $\nicefrac{1}{T}$. Each term in the expansion is further
divided using the decomposition $\id=\ppp+\qqq$ into terms inside
and outside the instantaneous \acs{ASH}. This allows one to derive
both the adiabatic limit (when $T\rightarrow\infty$) and all corrections.
The $O(\nicefrac{1}{T})$ expansion for the final state from Thm.~6
of Ref.~\cite{Avron2012b} reads\begin{align}
|\r\left(s\right)\kk &=\U^{(s,0)}|\r(0)\kk+\frac{1}{T}\L^{-1}\left(s\right)\pppd^{(s)}\U^{(s,0)}|\r(0)\kk \nonumber\\
&\!\!\!\!\!\!\!\!+\frac{1}{T}\int_{0}^{s}\dd r\U^{(s,r)}\{\pppd\L^{-1}\pppd\}^{\left(r\right)}\U^{(r,0)}|\r(0)\kk\,,\label{eq:adiaaaa}
\end{align}where all quantities in curly brackets are functions of $r$, $\pppd\equiv\p_{s}\ppp$,
$\qqq\equiv\id-\ppp$, and $\L^{-1}$ is the instantaneous inverse
(\ref{eq:inver}). The superoperator 
\begin{equation}
\U^{(s,s^{\pr})}=\path\exp\left(\int_{s^{\pr}}^{s}\pppd^{(r)}\ppp^{(r)}\dd r\right)
\end{equation}
parallel transports states in $\ppp^{(s^{\pr})}\text{\acs{OPH}}$
to states in $\ppp^{(s)}\text{\acs{OPH}}$ and is a path-ordered product
of exponentials of the \textit{adiabatic connection} $\pppd\ppp$,
the generator of Lindbladian adiabatic evolution.

Like the Kubo formula, all terms can be interpreted when read from
right to left. The first term in eq.~(\ref{eq:adiaaaa}) represents
adiabatic evolution of \acs{ASH}, the (second) \textit{leakage term}
quantifies leakage of $|\r(0)\kk$ out of \acs{ASH}, and the (last)
\textit{tunneling term} represents interference coming back into \acs{ASH}
from outside. This term is a continuous sum of adiabatically evolved
steady states which are perturbed by $\pppd\L^{-1}\pppd$ at all points
$r\in[0,s]$ during evolution. Due to its dependence on the spectrum
of $\L$, this term needs to be minimized to determine the optimal
adiabatic path through \acs{ASH} \cite{Avron2010}. Notice also the
similarity between the leakage term and the leakage term (\ref{eq:leakagekubo})
of the Kubo formula. Motivated by this, we proceed to apply the four-corners
decomposition to all three terms.

\subsection{Evolution within $\textnormal{As(\ensuremath{\mathsf{H}})}$\label{subsec:holonomy}}

Let us now assume a \textit{closed path} {[}$\L\left(s\right)=\L\left(0\right)${]}.
In the adiabatic limit {[}according to eq.~(\ref{eq:adiaaaa}){]},
an initial steady state evolves in closed path $C$ as 
\begin{equation}
|\r(0)\kk\rightarrow\U|\r(0)\kk\label{eq:hollind}
\end{equation}
and acquires a \textit{holonomy} 
\begin{equation}
\U\equiv\U^{(1,0)}=\path\exp\left(\oint_{C}\pppd\ppp\dd s\right)\,.\label{eq:udfs}
\end{equation}
This is operator version of the holonomy and connection ($\pppd\ppp$)
since the above expression acts on the steady-state basis elements
$\St_{\m}^{(s=0)}$ used to express the initial steady state 
\begin{equation}
|\r\left(0\right)\kk=\sum_{\m}c_{\m}|\St_{\m}^{(0)}\kk\,.\label{eq:state}
\end{equation}

Let us now study the coordinate representation of the holonomy. This
can be done by a straightforward generalization of the Hamiltonian
analysis of Sec.~\ref{subsec:Unitary-case:-degenerate} to Lindbladians
\cite{Sarandy2006,Avron2012b}, which produces a parallel transport
condition
\begin{equation}
\ppp\p_{s}|\r\kk=0\label{eq:partranmain}
\end{equation}
characterizing the Lindbladian adiabatic limit. After expressing $\p_{s}$
in terms of the various $\p_{\a}$'s (\ref{eq:param-1}), this condition
provides an equation of motion for the coordinate vector $c_{\m}$
from eq.~(\ref{eq:state}). Solving this equation yields the coordinate
representation of the holonomy
\begin{equation}
\ber=\path\exp\left(-\sum_{\a}\oint_{C}\A_{\a}\dd\xx_{\a}\right)\label{eq:reshol}
\end{equation}
and corresponding adiabatic connection 
\begin{equation}
\A_{\a,\m\n}\equiv\bb\J^{\m}|\p_{\a}\St_{\n}\kk\,.\label{eq:bercon}
\end{equation}
Note that $\A_{\a}$ is a real matrix since $\{J^{\m},\St_{\m}\}$
are Hermitian. The connection transforms as a gauge potential under
$|\St_{\m}\kk\rightarrow|\St_{\n}\kk\rest_{\n\m}$ and $\bb\J^{\m}|\rightarrow\rest_{\m\n}^{-1}\bb\J^{\n}|$
for any $\rest\in GL\left[\dim\text{\acs{ASH}},\mathbb{R}\right]$:
\begin{equation}
\A_{\a}\rightarrow\rest^{-1}\A_{\a}\rest+\rest^{-1}\p_{\a}\rest\,.\label{eq:trans}
\end{equation}
Upon evolution in the closed path, the density matrix transforms as
\begin{equation}
|\r(0)\kk=\sum_{\m=0}^{d^{2}-1}c_{\m}|\St_{\m}^{(0)}\kk\rightarrow\sum_{\m,\n=0}^{d^{2}-1}\ber_{\m\n}c_{\n}|\St_{\m}^{(0)}\kk\,,
\end{equation}
equivalent to the operator representation (\ref{eq:hollind}). We
study both representations below, showing that the holonomy is unitary
for all \acs{ASH}.

First, let us remove the decaying subspace from both representations
of the connection by applying the clean-leak property (\ref{eq:clean-leak}).
Simplifying $\A_{\a}$ turns out to be similar to calculating the
effective Hamiltonian perturbation $\rrreac$ within \acs{ASH} in
Sec.~\ref{subsec:hams}. By (\ref{eq:clean-leak}), 
\begin{equation}
\A_{\a,\m\n}\equiv\bb\J^{\m}|\p_{\a}\St_{\n}\kk=\bb\J_{\ul}^{\m}|\p_{\a}\St_{\n}\kk\,.\label{eq:bercon-1}
\end{equation}
For the operator representation, one first applies (\ref{eq:clean-leak})
to the parallel transport condition (\ref{eq:partranmain}):
\begin{equation}
0=\ppp|\p_{s}\r\kk=\ps|\p_{s}\r\kk\,.
\end{equation}
Then, one uses this condition to obtain an equation of motion for
$\r$: 
\begin{equation}
|\p_{s}\r\kk=\left(\id-\ps\right)|\p_{s}\r\kk=\psd\ps|\r\kk\,.
\end{equation}
The last equality above can be checked by expressing both sides in
terms of the steady-state basis elements $\St_{\m}$ and conserved
quantities $\J^{\m}$. For a closed path, the solution to this equation
of motion is then the same holonomy, but now with the minimal projection
$\ps$ instead of the asymptotic projection $\ppp$: 
\begin{equation}
\U=\path\exp\left(\oint_{C}\psd\ps\dd s\right)\,.\label{eq:udfsalt-1}
\end{equation}
The holonomy $\U$ thus does not depend on the piece $\ppp\R_{\lr}$
associated with the decaying subspace.

\subsubsection*{Unique state case}

Now the only conserved quantity is the identity $J=I$, so it is easy
to show that
\begin{equation}
\A_{\a}=\bb I|\p_{\a}\varrho\kk=\tr\left\{ \p_{\a}\varrho\right\} =0\,.
\end{equation}
The unique steady state can never acquire a Berry phase. This may
clash with the reader's memories from introductory quantum mechanics,
where a nonzero Berry phase was calculated for a Hamiltonian with
a unique ground state. Such a phase is undetectable since it is an
overall phase of the ground-state wavefunction. This phase disappears
when the state is written as density matrix. Since the Lindbladian
formalism deals with density matrices directly, one never encounters
such an overall phase. This should not be confused with interferometry
experiments used to detect Berry phases. Such experiments implicitly
assume that the adiabatically evolving subspace is more than one-dimensional.
In that case, a phase gained by one basis component and not the others
is then \textit{not} an overall phase, but a relative (and thus observable)
phase.

\subsubsection*{NS case}

For this case, the \foreignlanguage{american}{\acs{NS}} factors into
a \foreignlanguage{american}{\acs{DFS}} and an auxiliary part \textit{for
each }$s\in[0,1]$. The \foreignlanguage{american}{\acs{DFS}} part
is mapped into a reference \foreignlanguage{american}{\acs{DFS}}
space spanned by a (parameter-independent Hermitian matrix) basis
$\{|\stb_{\m}\kk\}_{\m=0}^{d^{2}-1}$.\footnote{Note that in general $|\stb_{\m}\kk\neq|\stdfs_{\m}(s=0)\kk$ since
$s$ parameterizes a particular path in $\psp$ while $\{|\stb_{\m}\kk\}$
is fixed.} We let $S(s)$ (with $\dist(\r)\equiv S\r S^{\dg}$) be the unitary
operator which simultaneously maps the instantaneous basis elements
$|\St_{\m}^{(s)}\kk$ into the reference \foreignlanguage{american}{\acs{DFS}}
basis and diagonalizes $\as^{(s)}$. Similarly, this $S(s)$ factors
the instantaneous conserved quantities $\bb\J_{\ul}^{\m}(s)|$ into
a \foreignlanguage{american}{\acs{DFS}} part and the identity $\ai^{(s)}$
on the auxiliary space. Therefore, we define the family of instantaneous
minimal projections as 
\begin{equation}
\ps^{(s)}=\dist(s)\left(\idfsb\ot|\as^{(s)}\kk\bb\ai^{(s)}|\right)\dist^{\dgt}(s)\,,\label{eq:states-1}
\end{equation}
where $\idfsb(\cdot)=\sum_{\m=0}^{d^{2}-1}|\stb_{\m}\kk\bb\stb_{\m}|\cdot\kk=\iidfsb\cdot\iidfsb$
is the superoperator projection onto the $\xx_{\a}$-independent \foreignlanguage{american}{\acs{DFS}}
reference basis. The generators of motion
\begin{equation}
G_{\a}\equiv iS^{\dg}\p_{\a}S\,\,\,\,\,\,\,\,\text{and}\,\,\,\,\,\,\,\,\GG_{\a}\equiv-i[G_{\a},\cdot]
\end{equation}
can \textit{mix up} the \foreignlanguage{american}{\acs{DFS}} with
the auxiliary part, generating novel dissipation-assisted adiabatic
dynamics.

We note that $\as^{(s)}$ (and therefore $\ai^{(s)}$) can change
rank ($\da^{(s)}$), provided that $\ps^{(s)}$ remains differentiable.
For example, one can imagine $\as^{(s)}$ to be a thermal state associated
with some Hamiltonian on $\hha$ whose rank jumps from one to $\da$
as the temperature is turned up from zero. This implies that $\pp^{(s)}$
and thus $\R_{\emp}^{(s)}$ can change rank also. However, such deformations
do not change the dimension $d^{2}$ of the steady-state subspace
and thus do not close the dissipative gap. To account for such deformations
in the one \foreignlanguage{american}{\acs{NS}} block case, the path
can be partitioned into segments of constant rank$\{\pp\}$ and the
connection calculation below can be applied to each segment.

Simplifying eq.~(\ref{eq:bercon-1}) by invoking the reference basis
structure of $\{\J,\St\}$ from eq.~(\ref{eq:states-1}) gives
\begin{equation}
\begin{aligned}\A_{\a} & =\aadfst_{\a}+\aax_{\a}=-i[\adfst_{\a},\cdot]\ot|\as\kk\bb\ai|+\aax_{\a}\,,\end{aligned}
\label{eq:bercons}
\end{equation}
where the \foreignlanguage{american}{\acs{DFS}} effective Hamiltonian
is \cite{Oreshkov2010} 
\begin{equation}
\adfst_{\a}\equiv\tr_{\textsf{ax}}\left\{ \left(\iidfsb\ot\as^{(s)}\right)G_{\a}\right\} \label{eq:nsad2}
\end{equation}
and the second term is the $\an$-dependent constant 
\begin{equation}
\aax_{\a,\m\n}=-\p_{\a}\ln\an^{(s)}\d_{\m\n}\,.
\end{equation}
The first term clearly leaves the auxiliary part invariant and generates
unitary evolution within the \foreignlanguage{american}{\acs{DFS}}
part of the \foreignlanguage{american}{\acs{NS}}. We can thus see
that \foreignlanguage{american}{\acs{DFS}} holonomies can be influenced
by $\as^{(s)}$. We will see that the second term's only role is to
preserve the trace for open paths.

Sticking with the convention that $\St_{0}^{(s)}$ is traceful and
the traceless $\St_{\m\neq0}^{(s)}$ carry the \foreignlanguage{american}{\acs{DFS}}
Bloch vector, we notice that $\A_{\a}$ transforms as a gauge potential
under orthogonal Bloch vector rotations $\rest\in SO(d^{2}-1)$: 
\begin{equation}
|\St_{\m\neq0}\kk\rightarrow|\St_{\n\neq0}\kk\rest_{\n\m}\,\,\,\,\,\,\,\text{and}\,\,\,\,\,\,\,|\J_{\ul}^{\m\neq0}\kk\rightarrow|\J_{\ul}^{\n\neq0}\kk\rest_{\n\m}\,.\label{eq:transnscase}
\end{equation}
In addition, one has the freedom to internally rotate $\as$ without
mixing $\St_{\m}$ with $\St_{\n\neq\m}$. Under such a transformation
$\ddax$, 
\begin{equation}
|\St_{\m}\kk\rightarrow\ddax|\St_{\m}\kk=\dist\kkk{\stb_{\m}\ot\frac{\uax\as\uax^{\dg}}{\an}}
\end{equation}
for some $\uax\in U(\da)$ and the connection transforms as an Abelian
gauge potential: 
\begin{equation}
\A_{\a,\m\n}\rightarrow\A_{\a,\m\n}+\bb\J_{\ul}^{\m}|\ddax^{\dgt}\p_{\a}\ddax|\St_{\n}\kk\,.
\end{equation}

Plugging in eq.~(\ref{eq:bercons}) into the Lindblad holonomy (\ref{eq:reshol}),
we can see that $\aax_{\a}$ is proportional to the identity matrix
(of the space of coefficients $c_{\m}$) and thus can be factored
out. Therefore,
\begin{equation}
\ber=\exp\left(\sum_{\a}\oint_{C}\p_{\a}\ln\an\dd\xx_{\a}\right)\bbdfs\,,\label{eq:holcoord}
\end{equation}
where $\bbdfs$ is the unitary $\as$-influenced holonomy associated
with $\adfst$. The first term in the above product for an open path
$s\in[0,1]$ is simply $\an^{(1)}/\an^{(0)}$, providing the proper
re-scaling of the coefficients $c_{\m}$ to preserve the trace of
$|\r(0)\kk$.\footnote{For open paths, $\ber$ is related to non-cyclic geometric phases
in other dissipative systems \{e.g., \cite{Sinitsyn2009}, eq.~(47)\}
and non-Hermitian systems \cite{viennot2012}.} For a closed path, this term vanishes (since $\an$ is real and positive)
and $\ber=\bbdfs$. Thus, we have shown that the holonomy after a
closed-loop traversal of one \foreignlanguage{american}{\acs{NS}}
block is unitary.

\subsubsection*{Multi-block case}

The generalization to multiple \foreignlanguage{american}{\acs{NS}}
blocks is straightforward: the reference basis now consists of multiple
blocks. Recall that $\J^{\m}$ do not have presence in the off-diagonal
parts neighboring the \foreignlanguage{american}{\acs{NS}} blocks
{[}Fig.~\ref{fig:decomp}(b){]} and that the only \foreignlanguage{american}{\acs{NS}}
block that $\p_{\a}\St_{\m}$ has presence in is that of $\St_{\m}$.
Therefore, each \foreignlanguage{american}{\acs{NS}} block is imparted
with its own unitary holonomy.

\subsection{Adiabatic curvature\label{subsec:Berry-curvature}}

The adiabatic connection $\A_{\a}$ (\ref{eq:bercon}) can be used
to define an adiabatic curvature defined on the parameter space induced
by the steady states. For simply-connected parameter spaces $\psp$
(see footnote \ref{fn:sc}), the adiabatic curvature can be shown
to generate the corresponding holonomy. More precisely, the Ambrose-Singer
theorem (\cite{nakahara}, Thm.~10.4) implies that the holonomy for
an infinitesimal closed path $C$ with basepoint $\xx_{\a}^{(0)}$
is the adiabatic curvature at $\xx_{\a}^{(0)}$. One can alternatively
use a generalization of Stokes' theorem to non-Abelian connections
\cite{Arefeva1980} to express the holonomy in terms of a ``surface-ordered''
integral of the corresponding adiabatic curvature, generalizing the
Abelian case (\ref{eq:bercurv-1}). Letting $\p_{\al\a}A_{\b\ar}=\p_{\a}A_{\b}-\p_{\b}A_{\a}$,
the curvature is
\begin{equation}
\F_{\a\b,\m\n}\equiv\p_{\al\a}\A_{\b\ar,\m\n}+[\A_{\a},\A_{\b}]_{\m\n}\,.\label{eq:curves-1}
\end{equation}
Using the \foreignlanguage{american}{\acs{NS}} adiabatic connection
(\ref{eq:bercons}) and remembering that $\p_{\a}\aax_{\b}$ is symmetric
in $\a,\b$, the adiabatic curvature for one \foreignlanguage{american}{\acs{NS}}
block,
\begin{equation}
\F_{\a\b,\m\n}=\p_{\al\a}\aadfst_{\b\ar,\m\n}+[\aadfst_{\a},\aadfst_{\b}]_{\m\n}\,,\label{eq:curvres}
\end{equation}
is just the curvature associated with the connection $\adfst$.

\subsection{Leakage out of the asymptotic subspace\label{subsec:gap}}

We now return to the adiabatic response formula (\ref{eq:adiaaaa})
to apply the four-corners decomposition to the $O\left(\nicefrac{1}{T}\right)$
non-adiabatic corrections. By definition (\ref{eq:inver}), $\L^{-1}$
has the same block upper-triangular structure as $\L$ from eq.~(\ref{eq:gen}).
The derivative of the asymptotic projection has partition 
\begin{equation}
\pppd=\left[\begin{array}{ccc}
\,(\psd)_{\ul}\, & \,\R_{\ul}\pppd\R_{\of}\, & \,\R_{\ul}\pppd\R_{\lr}\\
\R_{\of}\psd\R_{\ul} & 0 & \,\R_{\of}\pppd\R_{\lr}\\
0 & 0 & 0
\end{array}\right]\,.\label{eq:derproj}
\end{equation}
One can interpret $\pppd$ as a perturbation, analogous to $\spert$
from Ch.~\ref{ch:4}, and observe from the above partition that $\pppd$
does not connect block diagonal spaces: $\R_{\lr}\pppd\R_{\ul}=0$.
In addition, whenever $\pppd^{(r)}$ acts on a parallel transported
state living in $\R_{\ul}^{(r)}\text{\acs{OPH}}$, only the first
column in the above partition ($\pppd\R_{\ul}$) is relevant. These
observations result in $\L^{-1}\rightarrow\L_{\thu}^{-1}$ and the
replacement of two factors of $\pppd$ with $\psd$ in eq.~(\ref{eq:adiaaaa}).
Interestingly, we cannot replace the remaining $\pppd$ since $\R_{\ul}\pppd\R_{\of}$
contains contributions from $|\St^{\m}\kk\bb\p_{s}\J_{\lr}^{\m}|\R_{\of}$:\begin{align}
|\r\left(s\right)\kk &=\U^{(s,0)}|\r (0)\kk+\frac{1}{T}\L^{-1}_{\thu}\left(s\right)\psd^{(s)}\U^{(s,0)}|\r(0)\kk \nonumber\\
&\!\!\!\!\!\!\!\!+\frac{1}{T}\int_{0}^{s}\dd r\U^{(s,r)}\{\pppd\L^{-1}_{\thu}\psd\}^{\left(r\right)}\U^{(r,0)}|\r(0)\kk\,.\label{eq:dissss}
\end{align}Using the results of Sec.~\ref{subsec:Leakage-out-of}, the energy
scale governing the leading-order non-adiabatic corrections is once
again the effective dissipative gap $\adg$ \textemdash{} the nonzero
eigenvalue of $\L_{\ul}+\L_{\ur}$ with the smallest real part. A
similar result is shown for the leakage term in the Supplement of
Ref.~\cite{Oreshkov2010}. In addition, the tunneling term, which
is similar to the second-order perturbative correction $\ppp\spert\L^{-1}\spert\ppp$
discussed in Sec.~\ref{subsec:Leakage-out-of}, does not contain
contributions from $\L_{\lr}$.

\begin{sidewaystable*}
\begin{tabular}{llll}
\toprule 
\addlinespace
 & Hamiltonian/\foreignlanguage{american}{\acs{DFS}} systems: & Hamiltonian/\foreignlanguage{american}{\acs{DFS}} systems: & \multirow{2}{*}{Lindbladians: one \foreignlanguage{american}{\acs{NS}} block}\tabularnewline
 & operator notation & superoperator notation & \tabularnewline\addlinespace
\midrule
\addlinespace[0.2cm]
State basis & $|\psi_{k}\ket=$ \foreignlanguage{american}{\acs{DFS}} states & $\stdfs_{\m}=(\stdfs_{\m})^{\dg}\in\text{span}\{|\psi_{k}\ket\bra\psi_{l}|\}$ & ${\displaystyle |\St_{\m}\kk=|\stdfs_{\m}\kk\ot\kkk{\frac{\as}{\an}}}$\tabularnewline
\addlinespace
 & ${\displaystyle \iidfs=\sum_{k=0}^{d-1}|\psi_{k}\ket\bra\psi_{k}|}$ & ${\displaystyle \idfs=\sum_{\m=0}^{d^{2}-1}|\stdfs_{\m}\kk\bb\stdfs_{\m}|}$ & ${\displaystyle \ppp=\sum_{\m=0}^{d^{2}-1}|\St_{\m}\kk\bb\J^{\m}|}$\tabularnewline
\addlinespace
 &  &  & $\phantom{\ppp}=\ps+\ppp\R_{\lr}$\tabularnewline\addlinespace
\midrule
\addlinespace
Connection & $\adfs_{\a,kl}=i\bra\psi_{k}|\p_{\a}\psi_{l}\ket$ & $\aadfs_{\a,\m\n}=\bb\stdfs_{\m}|\p_{\a}\stdfs_{\n}\kk$ & $\A_{\a,\m\n}=\bb\J^{\m}|\p_{\a}\St_{\n}\kk$\tabularnewline
\addlinespace
 &  &  & \hyperref[eq:bercons]{$\phantom{\A_{\a,\m\n}}=\aadfst_{\a,\m\n}+\aax_{\a,\m\n}$}\tabularnewline\addlinespace
\midrule 
\addlinespace
Curvature & $\fdfs_{\a\b}=\p_{\al\a}\adfs_{\b\ar}-i[\adfs_{\a},\adfs_{\b}]$ & $\ffdfs_{\a\b}=\p_{\al\a}\aadfs_{\b\ar}+[\aadfs_{\a},\aadfs_{\b}]$ & $\F_{\a\b}=\p_{\al\a}\A_{\b\ar}+[\A_{\a},\A_{\b}]$\tabularnewline
\addlinespace
 &  &  & \hyperref[eq:curvres]{$\phantom{\mathcal{F}_{\a\b}}=\p_{\al\a}\aadfst_{\b\ar}+[\aadfst_{\a},\aadfst_{\b}]$}\tabularnewline\addlinespace
\addlinespace
 & $\fdfs_{\a\b,kl}=\bra\psi_{k}|\p_{\al\a}\iidfs\p_{\b\ar}\iidfs|\psi_{l}\ket$ & $\ffdfs_{\a\b,\m\n}=\bb\stdfs_{\m}|\p_{\al\a}\idfs\p_{\b\ar}\idfs|\stdfs_{\n}\kk$~~~~ & $\F_{\a\b,\m\n}=\bb\J_{\ul}^{\m}|\p_{\al\a}\ps\p_{\b\ar}\ps|\St_{\n}\kk$\tabularnewline\addlinespace
\midrule 
\addlinespace
\acs{QGT} & $\qdfs_{\a\b,kl}=\bra\psi_{k}|\p_{\a}\iidfs\p_{\b}\iidfs|\psi_{l}\ket$ & $\qqdfs_{\a\b,\m\n}=\bb\stdfs_{\m}|\p_{\a}\idfs\p_{\b}\idfs|\stdfs_{\n}\kk$ & $\geom_{\a\b,\m\n}=\bb\J_{\ul}^{\m}|\p_{\a}\ps\p_{\b}\ps|\St_{\n}\kk$\tabularnewline
\addlinespace
 & $\phantom{\qdfs_{\a\b,kl}}=-i\p_{\a}\adfs_{\b,kl}-(\adfs_{\a}\adfs_{\b})_{kl}$~~~~ & $\phantom{\qqdfs_{\a\b,\m\n}}=\p_{\a}\aadfs_{\b,\m\n}+(\aadfs_{\a}\aadfs_{\b})_{\m\n}$ & \hyperref[eq:qgt3]{$\phantom{\geom_{\a\b,\m\n}}=\p_{\a}\A_{\b,\m\n}+(\A_{\a}\A_{\b})_{\m\n}$}\tabularnewline
\addlinespace
 & $\phantom{\qdfs_{\a\b,kl}=}\,\,\,\,\,\,\,\,\,\,\,\,-\bra\psi_{k}|\p_{\a}\p_{\b}\psi_{l}\ket$ & $\phantom{\qqdfs_{\a\b,\m\n}=}\,\,\,\,\,\,\,\,\,-\bb\stdfs_{\m}|\p_{\a}\p_{\b}\stdfs_{\n}\kk$ & \hyperref[eq:qgt3]{$\phantom{\geom_{\a\b,\m\n}=}\,\,\,\,\,\,\,\,\,-\bb\J_{\ul}^{\m}|\p_{\a}\p_{\b}\St_{\n}\kk$}\tabularnewline\addlinespace
\midrule
\addlinespace
Metric tensor~~~~ & $\gdfs_{\a\b}=\tr\{\iidfs\p_{\sl\a}\iidfs\p_{\b\sr}\iidfs\}$ & $\ggdfs_{\a\b}=\Tr\{\idfs\p_{\sl\a}\idfs\p_{\b\sr}\idfs\}$ & $\met_{\a\b}=\Tr\{\ps\p_{(\a}\ps\p_{\b)}\ps\}$\tabularnewline\addlinespace
\bottomrule
\end{tabular}

\caption{\label{tab:Summary-of-adiabatic-structures}Summary of quantities
defined in Chs.~\ref{ch:5} and \ref{ch:6}.}
\end{sidewaystable*}
\selectlanguage{english}%

\inputencoding{latin9}\newpage{}\foreignlanguage{english}{}%
\begin{minipage}[t]{0.5\textwidth}%
\selectlanguage{english}%
\begin{flushleft}
\begin{singlespace}\textit{``The main added value of the paper is
that of providing results contained in 2-3 papers in a single one.''}\end{singlespace}
\par\end{flushleft}
\begin{flushleft}
\hfill{}\textendash{} Anonymous Referee
\par\end{flushleft}\selectlanguage{english}%
\end{minipage}

\chapter{Quantum geometric tensor\label{ch:6}}

\selectlanguage{english}%
Here, we introduce the Lindbladian \ac{QGT} $\geom$ and explicitly
calculate it for the unique state and \foreignlanguage{american}{\acs{NS}}
block cases \cite{ABFJ}. The anti-symmetric part of the \ac{QGT}
is equal to the curvature $\F$ generated by the connection $\A$
(see Sec.~\ref{subsec:Berry-curvature}). We show here that the symmetric
part of the \ac{QGT} produces a generalized metric tensor $\met$
for parameter spaces associated with Lindbladian steady-state subspaces.
We first review the Hamiltonian \ac{QGT} for a single state in Sec.~\ref{sec:Background:-non-degenerate-Hamil}
and then extend to the \foreignlanguage{american}{\acs{DFS}} case
in Sec.~\ref{sec:DFS-case}. The Lindbladian \ac{QGT} is calculated
in Sec.~\ref{sec:quantum-geometry}. We introduce other geometric
quantities in Sec.~\ref{sec:alt-geom-tensor}, including an alternative
geometric tensor $\geom^{\textsf{alt}}$ whose curvature is different
from the adiabatic curvature, but whose metric appears in the Lindbladian
adiabatic path length. Most of the relevant quantities for the Hamiltonian,
\foreignlanguage{american}{\acs{DFS}}, and Lindbladian cases are
summarized in Table \ref{tab:Summary-of-adiabatic-structures}.

The original geometric quantity, later called the \ac{QGT} by Berry
\cite{BerryQGT}, is introduced for Hamiltonian systems in Ref.~\cite{provost1980}.
This quantity encodes both a metric for measuring distances \cite{Anandan1990}
and the adiabatic curvature. The \ac{QGT}  is experimentally probeable
(e.g., via current noise measurements \cite{Neupert2013}). The Berry
curvature can be obtained from adiabatic transport in Hamiltonian
\cite{Avron1985,Xiao2010,Read2011} and Lindbladian \cite{Avron2011,Avron2012a}
systems and even ordinary linear response (see Sec.~\ref{subsec:Geometric-linear-response}).
Singularities and scaling behavior of the metric are in correspondence
with quantum phase transitions \cite{CamposVenuti2007,Zanardi2007,Kolodrubetz2013}.
Conversely, flatness of the metric and curvature may be used to quantify
stability of a given phase \cite{Roy2014,Dobardzic2013,Jackson2015,Bauer2015},
a topic of particular interest due to its applications in engineering
exotic topological phases. Regarding generalization of the \ac{QGT}
 to Lindbladians, to our knowledge there has been no introduction
of a tensor including both the adiabatic curvature and a metric associated
with \acs{ASH}. However, Refs.~\cite{Banchi2014,Marzolino2014}
did apply various known metrics to study distinguishability within
families of Gaussian fermionic and spin-chain steady states, respectively.

\section{Hamiltonian case\label{sec:Background:-non-degenerate-Hamil}}

First let us review the non-degenerate Hamiltonian case before generalizing
to degenerate Hamiltonians in operator/superoperator form. We recommend
Refs.~\cite{avronleshouches,Kolodrubetz2016} for detailed expositions.
Continuing from Sec.~\ref{subsec:Connection:-non-degenerate-Hamil},
we begin with an instantaneous zero-energy state $|\psi_{0}\ket$
and projection $\pad=|\psi_{0}\ket\bra\psi_{0}|$ which are functions
of a vector of control parameters $\{\xx_{\a}\}$. The distance between
the projections $\pad^{(s)}$ and $\pad^{(s+\d s)}$ along a path
parameterized by $s\in[0,1]$ (with parameter vectors $\xx_{\a}^{(s)}$
at each $s$) is governed by the \ac{QGT} \begin{subequations}
\begin{eqnarray}
Q_{\a\b,00} & = & \bra\psi_{0}|\p_{\a}\pad\p_{\b}\pad|\psi_{0}\ket\\
 & = & \bra\p_{\a}\psi_{0}|(I-\pad)|\p_{\b}\psi_{0}\ket\,.
\end{eqnarray}
\end{subequations}The second form can be obtained from the former
by explicit differentiation of $\pad$ and $\p_{\a}\pad\p_{\b}\pad=(\p_{\a}\pad)(\p_{\b}\pad)$
by convention. The $I-\pad$ term makes $Q_{\a\b,00}$ invariant upon
the gauge transformations $|\psi_{0}\ket\rightarrow e^{i\vartheta}|\psi_{0}\ket$.
The tensor can be split into symmetric and anti-symmetric parts,
\begin{equation}
2Q_{\a\b,00}=M_{\a\b,00}-iF_{\a\b,00}\,,
\end{equation}
which coincide with its real and imaginary parts. The anti-symmetric
part is none other than the adiabatic/Berry curvature from eq.~(\ref{eq:bercurv-1}).
The symmetric part is the quantum Fubini-Study metric tensor \cite{provost1980}
\begin{equation}
M_{\a\b,00}=\tr\{\pad\p_{\sl\a}\pad\p_{\b\sr}\pad\}=\tr\{\p_{\a}\pad\p_{\b}\pad\}\,,\label{eq:fsm}
\end{equation}
where $A_{\sl\a}B_{\b\sr}=A_{\a}B_{\b}+A_{\b}B_{\a}$ and the latter
form can be obtained using $\pad\p_{\a}\pad\pad=0$. This quantity
is manifestly symmetric in $\a,\b$ and real; it is also non-negative
when evaluated in parameter space (see \cite{Rezakhani2010}, Appx.~D).

\section{DFS case\label{sec:DFS-case}}

For degenerate Hamiltonian systems \cite{Ma2010,Rezakhani2010} and
in the \foreignlanguage{american}{\acs{DFS}} case, the \ac{QGT}
$\qdfs$ is a tensor in both parameter ($\a,\b$) and state ($k,l$)
indices and can be written as\begin{subequations}
\begin{eqnarray}
\qdfs_{\a\b,kl} & = & \bra\psi_{k}|\p_{\a}\iidfs\p_{\b}\iidfs|\psi_{l}\ket\label{eq:conv}\\
 & = & \bra\p_{\a}\psi_{k}|(I-\iidfs)|\p_{\b}\psi_{l}\ket\,,\label{eq:ouras}
\end{eqnarray}
where $\iidfs=\sum_{k=0}^{d-1}|\psi_{k}\ket\bra\psi_{k}|$ is the
projection onto the degenerate zero eigenspace of $H(s)$. Since
projections are invariant under changes of basis of their constituents,
it is easy to see that $\qdfs_{\a\b}\rightarrow R^{\dg}\qdfs_{\a\b}R$
under \foreignlanguage{american}{\acs{DFS}} changes of basis $|\psi_{k}\ket\rightarrow|\psi_{l}\ket R_{lk}$
for $R\in U(d)$. Notice that the \ac{QGT} in eq.~(\ref{eq:ouras})
consists of overlaps between states outside of the zero eigenspace.
For our applications, we write the \ac{QGT} in a third way such that
it consists of overlaps within the zero eigenspace only:
\begin{equation}
\qdfs_{\a\b,kl}=-i\p_{\a}\adfs_{\b,kl}-(\adfs_{\a}\adfs_{\b})_{kl}-\bra\psi_{k}|\p_{\a}\p_{\b}\psi_{l}\ket\,,\label{eq:ourresf}
\end{equation}
\end{subequations}where $\adfs_{\a}$ is the \foreignlanguage{american}{\acs{DFS}}
Berry connection and we used
\begin{equation}
\begin{aligned}0=\p_{\b}\bra\psi_{k}|\psi_{l}\ket & =\bra\p_{\b}\psi_{k}|\psi_{l}\ket+\bra\psi_{k}|\p_{\b}\psi_{l}\ket\\
\p_{\a}\bra\psi_{k}|\p_{\b}\psi_{l}\ket & =\bra\p_{\a}\psi_{k}|\p_{\b}\psi_{l}\ket+\bra\psi_{k}|\p_{\a}\p_{\b}\psi_{l}\ket\,.
\end{aligned}
\label{eq:trick}
\end{equation}
The Berry curvature is the part of the \ac{QGT} anti-symmetric in
$\a,\b$ (here, also the imaginary part of the \ac{QGT}): $\fdfs_{\a\b}=i\qdfs_{\al\a\b\ar}$.
From (\ref{eq:ourresf}) we easily recover the proper form of the
\foreignlanguage{american}{\acs{DFS}} Berry curvature listed in Table
\ref{tab:Summary-of-adiabatic-structures}.

The symmetric part of the \ac{QGT} appears in the infinitesimal distance
between nearby \textit{parallel transported} \textit{rays} (i.e.,
states of arbitrary phase) $\psi(s)$ and $\psi(s+\d s)$ in the degenerate
subspace:
\begin{equation}
\bra\p_{s}\psi|\p_{s}\psi\ket=\bra\p_{s}\psi|(I-\iidfs)|\p_{s}\psi\ket\,,
\end{equation}
where we used the parallel transport condition $\iidfs|\p_{s}\psi\ket=0$.
Expanding $\p_{s}$ into parameter derivatives using eq.~(\ref{eq:param})
and writing out $|\psi\ket=\sum_{k=0}^{d-1}c_{k}|\psi_{k}\ket$ yields
\begin{equation}
\bra\p_{s}\psi|\p_{s}\psi\ket=\half\sum_{\a,\b}\sum_{k,l=0}^{d-1}\qdfs_{\sl\a\b\sr,kl}\vv_{\a}\vv_{\b}c_{k}^{\star}c_{l}\,.
\end{equation}
The corresponding Fubini-Study metric on the parameter space $\psp$
is $\qdfs_{\sl\a\b\sr}$ traced over the degenerate subspace:
\begin{equation}
\gdfs_{\a\b}\equiv\sum_{k=0}^{d-1}\qdfs_{\sl\a\b\sr,kk}=\bb\iidfs|\p_{\sl\a}\iidfs\p_{\b\sr}\iidfs\kk\,.
\end{equation}

All of this reasoning easily extends to the superoperator formalism
($|\psi_{k}\ket\rightarrow|\stdfs_{\m}\kk$). The superoperator \ac{QGT}
corresponding to $\qdfs$ can be written as
\begin{eqnarray}
\qqdfs_{\a\b,\m\n} & = & \bb\stdfs_{\m}|\p_{\a}\idfs\p_{\b}\idfs|\stdfs_{\n}\kk\label{eq:qdfs}\\
 & = & \p_{\a}\aadfs_{\b,\m\n}+(\aadfs_{\a}\aadfs_{\b})_{\m\n}-\bb\stdfs_{\m}|\p_{\a}\p_{\b}\stdfs_{\n}\kk\,,\nonumber 
\end{eqnarray}
where $\aadfs_{\a}$ is the adiabatic connection (\ref{eq:bercondfs}).
The \ac{QGT} is a real matrix (since $\aadfs_{\a}$ is real) and
consists of parts symmetric ($\qqdfs_{\sl\a\b\sr}$) and antisymmetric
($\qqdfs_{\al\a\b\ar}$) in $\a,\b$. Observing the second line of
(\ref{eq:qdfs}), it should be easy to see that the Berry curvature
$\ffdfs_{\a\b}=\qqdfs_{\al\a\b\ar}$. The symmetric part of the superoperator
\ac{QGT} appears in the infinitesimal Hilbert-Schmidt distance (\cite{geombook},
Sec.~14.3) between nearby parallel transported \foreignlanguage{american}{\acs{DFS}}
states $\r(s)$ and $\r(s+\d s)$:
\begin{equation}
\bb\p_{s}\r|\p_{s}\r\kk=\bb\p_{s}\r|(\id-\idfs)|\p_{s}\r\kk\,,
\end{equation}
where we used the parallel transport condition $\idfs|\p_{s}\r\kk=0$.
Similar manipulations as with the operator \ac{QGT}, including the
expansion $|\r\kk=\sum_{\m=0}^{d^{2}-1}c_{\m}|\stdfs_{\n}\kk$, yield
\begin{equation}
\bb\p_{s}\rout|\p_{s}\rout\kk=\half\sum_{\a,\b}\sum_{\m,\n=0}^{d^{2}-1}\qqdfs_{\sl\a\b\sr,\m\n}\vv_{\a}\vv_{\b}c_{\m}c_{\n}\,.
\end{equation}
The corresponding superoperator metric 
\begin{equation}
\ggdfs_{\a\b}\equiv\Tr\{\idfs\p_{\sl\a}\idfs\p_{\b\sr}\idfs\}\,,
\end{equation}
where $\Tr$ is the trace in superoperator space, is the symmetric
part of the superoperator \ac{QGT} traced over the degenerate subspace.
Since $\text{\acs{OPH}}=\h\otimes\h^{\star}$, it is not surprising
that $\ggdfs_{\a\b}$ is proportional to the operator metric $\gdfs_{\a\b}$:
\begin{equation}
\ggdfs_{\a\b}=\sum_{\m=0}^{d^{2}-1}\qqdfs_{\sl\a\b\sr,\m\m}=2d\gdfs_{\a\b}\,.
\end{equation}

\section{Lindbladian case\label{sec:quantum-geometry}}

Now let us turn to the Lindbladian \ac{QGT} and show that its symmetric
part produces a generalized metric tensor for parameter spaces associated
with Lindbladian steady-state subspaces. In Ch.~\ref{ch:6}, we showed
using the operator representation of the adiabatic connection and
the conditions (\ref{eq:no-leak}-\ref{eq:clean-leak}) that the minimal
projection $\ps=\ppp\R_{\ul}$ (and not $\ppp$) generates adiabatic
evolution within \acs{ASH}. Following this, we define 
\begin{equation}
\geom_{\a\b}\equiv\ps\p_{\a}\ps\p_{\b}\ps\ps
\end{equation}
to be the associated \ac{QGT}. While $\ps=\sum_{\m}|\St_{\m}\kk\bb\J_{\ul}^{\m}|$
is not always Hermitian due to $J_{\ul}^{\m}\neq\St_{\m}$ (e.g.,
in the \foreignlanguage{american}{\acs{NS}} case), we show that the
\ac{QGT} nevertheless remains a meaningful geometric quantity. Looking
at the matrix elements of $\geom_{\a\b}$ and explicitly plugging
in the instantaneous $\ps$ (\ref{eq:states-1}) yields the following
three forms:\begin{subequations} 
\begin{eqnarray}
\!\!\!\!\!\!\!\!\!\!\!\!\!\!\!\!\!\!\!\!\!\geom_{\a\b,\m\n} & \equiv & \bb\J_{\ul}^{\m}|\p_{\a}\ps\p_{\b}\ps|\St_{\n}\kk\label{eq:qgt1}\\
 & = & \bb\p_{\a}\J_{\ul}^{\m}|\left(\id-\ps\right)|\p_{\b}\St_{\n}\kk\label{eq:qgt2}\\
 & = & \p_{\a}\A_{\b,\m\n}+(\A_{\a}\A_{\b})_{\m\n}-\bb\J_{\ul}^{\m}|\p_{\a}\p_{\b}\St_{\n}\kk\,,\label{eq:qgt3}
\end{eqnarray}
\end{subequations} with $\A_{\a}$ the Lindblad adiabatic connection
(\ref{eq:bercon}). Since $\A_{\a,\m\n}$ are real and $\{\J^{\m},\St_{\n}\}$
are Hermitian, the matrix elements are all real. From its second form,
one easily deduces that the \ac{QGT} transforms as $\geom_{\a\b}\rightarrow\rest^{-1}\geom_{\a\b}\rest$
for any basis transformation $\rest\in GL\left[\dim\text{\acs{ASH}},\mathbb{R}\right]$
{[}see eq.~(\ref{eq:trans}){]}. Each matrix $\geom_{\a\b}$ consists
of parts symmetric ($\geom_{\sl\a\b\sr}$) and antisymmetric ($\geom_{\al\a\b\ar}$)
in $\a,\b$. From the third form, it is evident that its anti-symmetric
part is exactly the adiabatic curvature $\F_{\a\b}$ from eq.~(\ref{eq:curves-1})
(cf. \cite{Avron2012a}, Prop. 13). The rest of this Section is devoted
to calculating the symmetric part and its corresponding metric on
$\psp$, which is defined as the trace of the symmetric part of the
\ac{QGT},
\begin{equation}
\met_{\a\b}\equiv\Tr\{\ps\p_{\sl\a}\ps\p_{\b\sr}\ps\}=\sum_{\m=0}^{d^{2}-1}\geom_{\sl\a\b\sr,\m\m}\,.
\end{equation}

Before proving that this is a metric for some of the relevant cases,
let us first reveal how such a structure corresponds to an infinitesimal
distance between adiabatically connected Lindbladian steady states
by adapting results from non-Hermitian Hamiltonian systems \cite{Nesterov2009,Brody2013,Brody2014}.
The zero eigenspace of $\L_{\ul}$ is diagonalized by right and left
eigenmatrices $|\St_{\m}\kk$ and $\bb\J_{\ul}^{\m}|$, respectively.
In accordance with this duality between $\St$ and $\J_{\ul}$, we
introduce an \textit{associated operator} $|\widehat{\rout}\kk$ \cite{Brody2013,Brody2014},
\begin{equation}
|\rout\kk=\sum_{\m=0}^{d^{2}-1}c_{\m}|\St_{\m}\kk\,\,\,\leftrightarrow\,\,\,|\widehat{\rout}\kk\equiv\sum_{\m=0}^{d^{2}-1}c_{\m}|\J_{\ul}^{\m}\kk\,,\label{eq:assoc}
\end{equation}
to every steady-state subspace operator $|\rout\kk$. This allows
us to define a modified inner product $\bb\widehat{A}|B\kk$ for matrices
$A$ and $B$ living in the steady-state subspace. Since $\St_{\m}$
and $\J_{\ul}^{\m}$ are biorthogonal ($\bb\J_{\ul}^{\m}|\St_{\n}\kk=\d_{\m\n}$),
this inner product is surprisingly equivalent to the Hilbert-Schmidt
inner product $\bb A|B\kk$. However, the infinitesimal distance is
not the same:
\begin{equation}
\bb\p_{s}\widehat{\rout}|\p_{s}\rout\kk\neq\bb\p_{s}\rout|\p_{s}\rout\kk\,.
\end{equation}
The symmetric part $\geom_{(\a\b)}$ shows up in precisely this modified
infinitesimal distance. Using eq.~(\ref{eq:assoc}), the parallel
transport condition (\ref{eq:partranmain}), and parameterizing $\p_{s}$
in terms of the $\p_{\a}$'s (\ref{eq:param-1}) yields
\begin{equation}
\bb\p_{s}\widehat{\rout}|\p_{s}\rout\kk=\half\sum_{\a,\b}\sum_{\m,\n=0}^{d^{2}-1}\geom_{\sl\a\b\sr,\m\n}\vv_{\a}\vv_{\b}c_{\m}c_{\n}\,,
\end{equation}
as evidenced by the second form (\ref{eq:qgt2}) of the Lindblad \ac{QGT}.
Tracing the symmetric part over the steady-state subspace gives the
metric $\met_{\a\b}$. 

\subsection{Unique state case}

Here things simplify significantly, yet the obtained metric turns
out to be novel nonetheless. The asymptotic projection is $\ps=|\varrho\kk\bb\pp|$
and a straightforward calculation using eq.~(\ref{eq:qgt2}) yields
\begin{equation}
\met_{\a\b}=\bb\p_{\sl\a}\pp|\p_{\b\sr}\varrho\kk\,.\label{eq:metricunique}
\end{equation}
Using the eigendecomposition $\varrho=\sum_{k=0}^{\du-1}\l_{k}|\psi_{k}\ket\bra\psi_{k}|$,
\begin{eqnarray}
\met_{\a\b} & = & 2\sum_{k=0}^{\du-1}\l_{k}\bra\p_{\sl\a}\psi_{k}|\qq|\p_{\b\sr}\psi_{k}\ket\,,
\end{eqnarray}
where $\qq=I-\pp$ and $\bra\p_{\sl\a}\psi_{k}|\qq|\p_{\b\sr}\psi_{k}\ket$
is the Fubini-Study metric corresponding to the eigenstate $|\psi_{k}\ket$.
In words, $\met_{\a\b}$ is the sum of the eigenstate Fubini-Study
metrics weighted by their respective eigenvalues/populations. If $\varrho$
is pure, then it is clear that $\met_{\a\b}$ reduces to the Fubini-Study
metric. Finally, if $\varrho$ is full rank, then $\pp=I$ and $\met_{\a\b}=0$.
This means that the metric is non-zero only for those $\varrho$ which
are not full rank.

\subsection{NS case}

Recall from eq.~(\ref{eq:states-1}) that adiabatic evolution on
the \foreignlanguage{american}{\acs{NS}} is parameterized by the
instantaneous minimal projections
\begin{eqnarray}
\ps^{(s)} & = & \dist(s)\left(\idfsb\ot|\as^{(s)}\kk\bb\ai^{(s)}|\right)\dist^{\dgt}(s)\,,\label{eq:proj-1}
\end{eqnarray}
where $\idfsb(\cdot)=\sum_{\m=0}^{d^{2}-1}|\stb_{\m}\kk\bb\stb_{\m}|\cdot\kk=\iidfsb\cdot\iidfsb$
is the superoperator projection onto the $\xx_{\a}$-independent \foreignlanguage{american}{\acs{DFS}}
reference basis. We remind the reader (see Sec.~\ref{subsec:holonomy})
that the only assumption of such a parameterization is that the state
$|\rout^{(s)}\kk$ is unitarily equivalent (via unitary $\dist$)
to a tensor product of a \foreignlanguage{american}{\acs{DFS}} state
and auxiliary part for all points $s\in[0,1]$ in the path.

We can simplify $\met_{\a\b}$ and show that it is indeed a metric
(more technically, a semi-metric). In the reference basis decomposition
of $\ps$ from eq.~(\ref{eq:proj-1}), the operators $G_{\a}\equiv iS^{\dg}\p_{\a}S$
(with $\dist(s)|\r\kk\equiv|S\r S^{\dg}\kk$) generate motion in parameter
space. After significant simplification, one can express $\met_{\a\b}$
in terms of these generators:
\begin{align}
\met_{\a\b} & =\met_{\a\b}^{(1)}+\met_{\a\b}^{(2)}\label{eq:met}\\
\met_{\a\b}^{(1)} & =2d\bb\iidfsb\ot\as|G_{\sl\a}(I-\iidfsb\ot\ai)G_{\b\sr}\kk\nonumber \\
\met_{\a\b}^{(2)} & =2d\bb G_{(\a}|\idfsb^{\star}\ot\ooo|G_{\b)}\kk\nonumber 
\end{align}
with projection $\idfsb^{\star}$ consisting of only traceless \foreignlanguage{american}{\acs{DFS}}
generators (we set $\stb_{0}=\frac{1}{\sqrt{d}}\iidfsb$),
\begin{equation}
\idfsb^{\star}\equiv\sum_{\m=1}^{d^{2}-1}|\stb_{\m}\kk\bb\stb_{\m}|=\idfsb-|\stb_{0}\kk\bb\stb_{0}|\,,
\end{equation}
and auxiliary superoperator defined (for all auxiliary operators $A$)
as $\ooo(A)\equiv(A-\bb\as|A\kk)\as$. 

The quantity $\met_{\a\b}$ is clearly real and symmetric in $\a,\b$,
so to show that it is a metric, we need to prove positivity ($\pt_{\a}\met_{\a\b}\pt_{\b}\geq0$,
with sum over $\a,\b$ implied, for all vectors $\pt$ in the tangent
space $\tang$ at a point $\xx\in\psp$ \cite{nakahara}). Since $\as$
is positive definite, one can show that the first term in (\ref{eq:met})
\begin{equation}
\pt_{\a}\met_{\a\b}^{(1)}\pt_{\b}=4d\bb O|O\kk\geq0
\end{equation}
with $O=(I-\iidfsb\ot\ai)(G_{\a}\pt_{\a})(\iidfsb\ot\sqrt{\as})$.
For the second term $\met_{\a\b}^{(2)}$, we can see that $\idfsb^{\star}$
is positive semidefinite since it is a projection. We show that $\ooo$
is positive semidefinite by utilizing yet another inner product associated
with open systems \cite{Alicki1976}. First note that
\begin{equation}
\bb A|\ooo|A\kk=\tr\{\as A^{\dg}A\}-\left|\tr\left\{ \as A\right\} \right|^{2}\,.\label{eq:oax}
\end{equation}
Since $\as$ is full-rank, $\bb A|B\kk_{\as}\equiv\tr\{\as A^{\dg}B\}$
is a valid inner product \cite{Alicki1976} and $\bb A|\ooo|A\kk\geq0$
is merely a statement of the Cauchy\textendash Schwarz inequality
associated with this inner product. For Hermitian $A$, (\ref{eq:oax})
reduces to the variance of $\bb A|\as\kk$.

Roughly speaking, the first term $\met_{\a\b}^{(1)}$ describes how
much the \foreignlanguage{american}{\acs{DFS}} and auxiliary parts
mix and the second term $\met_{\a\b}^{(2)}$ describes how much they
leave the $\ulbig$ block while moving in parameter space. For the
\foreignlanguage{american}{\acs{DFS}} case, $\met_{\a\b}^{(2)}=0$
(due to $\ooo=0$ for that case) and the metric reduces to the standard
\foreignlanguage{american}{\acs{DFS}} metric covered in Sec.~\ref{sec:DFS-case}.
For the unique state case, $\met_{\a\b}^{(2)}$ is also zero (due
to $\idfsb^{\star}$ not containing any traceful \foreignlanguage{american}{\acs{DFS}}
elements and thus reducing to zero when $\iidfsb=1$). The mixing
term $\met_{\a\b}^{(2)}$ is thus of course nonzero only in the \foreignlanguage{american}{\acs{NS}}
block case.
\selectlanguage{american}%

\section{Other geometric tensors\foreignlanguage{english}{\label{sec:alt-geom-tensor}}}

\selectlanguage{english}%
In the previous Section, we showed that the anti-symmetric part of
the \ac{QGT} 
\begin{equation}
\geom=\ps\p\ps\p\ps\ps
\end{equation}
corresponds to the curvature $\F$ associated with the adiabatic connection
$\A$ from Ch.~\ref{ch:6}. We thus postulate that this \ac{QGT}
and its corresponding symmetric part should be relevant in determining
distances between adiabatically connected Lindbladian steady states.
However, the story does not end there as there are \textit{two more}
tensorial quantities that can be defined using the steady-state subspace.
The first is an extension of the Fubini-Study metric to non/pseudo-Hermitian
Hamiltonians \cite{Mostafazadeh2007,Mostafazadeh2009,Brody2013,Brody2014}
(\textit{different} from \cite{Nesterov2009}) that can also be generalized
to Lindblad systems; we do not further comment on it here. The second
is the alternative geometric tensor
\begin{equation}
\geom^{\textsf{alt}}=\ps^{\dgt}\p\ps^{\dgt}\p\ps\ps\,,
\end{equation}
which is different from the \ac{QGT} due to $\ps$ not being Hermitian.
We show that $\geom^{\textsf{alt}}$ appears in a bound on the adiabatic
path length for Lindbladian systems, which has traditionally been
used to determine the shortest possible distance between states in
a parameter space $\psp$. Here we introduce the adiabatic path length,
generalize it to Lindbladians, and comment on $\geom^{\textsf{alt}}$.

\subsection{Hamiltonian case}

The \textit{adiabatic path length} for Hamiltonian systems quantifies
the distance between two adiabatically connected states $|\psi_{0}^{(s=0)}\ket$
and $|\psi_{0}^{(1)}\ket$. The adiabatic evolution operator (derived
in Sec.~\ref{subsec:Connection:-non-degenerate-Hamil}) for an arbitrary
path $s\in[0,1]$ and for initial zero-energy state $|\psi_{0}^{(0)}\ket$
is 
\begin{equation}
\uadd^{(1)}=\path\exp\left(\int_{0}^{1}\padd\pad\dd s\right)\,.
\end{equation}
Consider the Frobenius norm (\ref{eq:inprod}) of $\uadd^{(1)}$.
By expanding the definition of the path-ordered exponential, one can
show that $\Vert\uadd^{(1)}\Vert\leq\exp(L_{0})$ with path length
\begin{equation}
L_{0}\equiv\int_{0}^{1}\Vert\padd\pad\Vert\dd s\,.\label{eq:pathlength}
\end{equation}
Remembering that $\Vert A\Vert=\sqrt{\tr\{A^{\dg}A\}}$ and writing
$\p_{s}$ in terms of parameter derivatives, we see that the Fubini-Study
metric appears in the path length:
\begin{equation}
\Vert\padd\pad\Vert^{2}=\half\sum_{\a,\b}M_{\a\b,00}\vv_{\a}\vv_{\b}\,.\label{eq:hammet}
\end{equation}
Therefore, the shortest path between states in Hilbert space projects
to a \textit{geodesic} in parameter space satisfying the Euler-Lagrange
equations associated with the metric $M_{\a\b,00}$ and minimizing
the path length \{e.g., \cite{nakahara}, eq.~(7.58)\} (with sum
implied)
\begin{equation}
L_{0}=\int_{0}^{1}\sqrt{{\textstyle \half}G_{\a\b,00}\vv_{\a}\vv_{\b}}\dd s\,.
\end{equation}
In Hamiltonian systems, the adiabatic path length appears in bounds
on corrections to adiabatic evolution (\cite{Jansen2007}, Thm.~3;
see also \cite{Rezakhani2010}). This path length is also applicable
when one wants to simulate adiabatic evolution in a much shorter time
(\textit{counter-diabatic/superadiabatic dynamics} \cite{Demirplak2003,Lim1991,Vacanti2014}
or \textit{shortcuts to adiabaticity} \cite{Berry2009,Torrontegui2013})
by explicitly engineering the Kato Hamiltonian $i[\padd,\pad]$ from
eq.~(\ref{eq:kato}).

\subsection{Lindbladian case}

The tensor $\geom_{\a\b}^{\textsf{alt}}$ arises in the computation
of the corresponding Lindbladian adiabatic path length
\begin{equation}
L\equiv\int_{0}^{1}\Vert\psd\ps\Vert\dd s\,,\label{eq:path}
\end{equation}
where the superoperator norm of $\psd\ps$ is the analogue of the
operator Frobenius norm from eq.~(\ref{eq:inprod}): $\Vert\oo\Vert\equiv\sqrt{\Tr\{\oo^{\dgt}\oo\}}$
where $\oo$ is a superoperator. This path length provides an upper
bound on the norm of the Lindblad adiabatic evolution superoperator
(\ref{eq:udfs})
\begin{equation}
\U^{(1,0)}=\path\exp\left(\int_{0}^{1}\psd\ps\dd s\right)\,.\label{eq:udfsalt}
\end{equation}
Using properties of norms and assuming one \foreignlanguage{american}{\acs{NS}}
block, it is straightforward to show that 
\begin{equation}
\Vert\U^{(1,0)}\Vert\leq\exp(L)\,\,\,\,\,\,\,\text{with}\,\,\,\,\,\,\,L=\int_{0}^{1}\sqrt{{\textstyle \half}\da\met_{\a\b}^{\textsf{alt}}\vv_{\a}\vv_{\b}}\,\dd s
\end{equation}
(with sum over $\a,\b$ implied). The metric governing this path length
turns out to be 
\begin{equation}
\met_{\a\b}^{\textsf{alt}}=\bb\as|\as\kk\sum_{\a,\b}\geom_{\sl\a\b\sr,\m\m}^{\textsf{alt}}\,.
\end{equation}

For a unique steady state $\varrho$, this alternative metric reduces
to the Hilbert-Schmidt metric
\begin{equation}
\met_{\a\b}^{\textsf{alt}}=\bb\p_{\sl\a}\varrho|\p_{\b\sr}\varrho\kk\,.
\end{equation}
Note the subtle difference between this metric and the \ac{QGT} metric
$\met_{\a\b}=\bb\p_{\sl\a}\pp|\p_{\b\sr}\varrho\kk$ (\ref{eq:metricunique}).
This difference is precisely due to the absence of $\varrho$ in the
left eigenmatrices $J_{\ul}$. For the \ac{QGT} metric, $\varrho$
is never in the same trace twice while for the alternative metric,
the presence of $\ps^{\dgt}$ yields such terms. We note that for
a pure steady state $\varrho=\pp$ (with $\pp$ being rank one), both
metric tensors reduce to the Fubini-Study metric.

Another notable example is the \foreignlanguage{american}{\acs{DFS}}
case ($\as=1$). In that case, $\J_{\ul}^{\m}=\St_{\m}$ \textemdash{}
the \ac{QGT} and alternative tensor become equal ($\geom^{\textsf{alt}}=\geom$).
It is therefore the presence of $\as$ that allows for two different
metrics $\met_{\a\b}$ and $\met_{\a\b}^{\textsf{alt}}$. However,
for the \foreignlanguage{american}{\acs{NS}} case, the ``alternative''
curvature $\geom_{\al\a\b\ar,\m\n}^{\textsf{alt}}$ does not reduce
to the adiabatic curvature $\F_{\a\b,\m\n}$ associated with the connection
$\A_{\a}$ (unlike the \ac{QGT} curvature). How this subtle difference
between $\geom_{\a\b}$ and $\geom_{\a\b}^{\textsf{alt}}$ for the
\foreignlanguage{american}{\acs{NS}} and unique steady state cases
is relevant in determining distances between adiabatic steady states
of Lindbladians should be a subject of future investigation.\selectlanguage{english}%

\inputencoding{latin9}\newpage{}\foreignlanguage{english}{}%
\begin{minipage}[t]{0.5\textwidth}%
\selectlanguage{english}%
\begin{flushleft}
\begin{singlespace}\textit{``At the first of the 1960's Rochester
Coherence Conferences, I suggested that a license be required for
use of the word 'photon', and offered to give such a license to properly
qualified people. My records show that nobody working in Rochester,
and very few other people elsewhere, ever took out a license to use
the word 'photon'.''}\end{singlespace}
\par\end{flushleft}
\begin{flushleft}
\hfill{}\textendash{} Willis E. Lamb
\par\end{flushleft}\selectlanguage{english}%
\end{minipage}

\chapter{Application: driven two-photon absorption\label{ch:7}}

\selectlanguage{english}%
This chapter consists of a detailed investigation of a Hamiltonian-driven
version of the two-photon absorption process from Sec.~\ref{sec34}
(\cite{Hach1994}; \cite{puri}, Sec.~13.2.2). This is also the same
case we discussed in the overview of results in Sec.~\ref{sec:Questions-addressed-and}.
Variants of this case are also manifest in the degenerate parametric
oscillator (\cite{Wolinsky1988}; see also \cite{carmichael2}, eq.~12.10),
a laser-driven trapped ion (\cite{Poyatos1996}, Fig.~2d; see also
\cite{Garraway1998,carvalho2001}), nano-mechanical systems \cite{voje2013},
and superconducting qubit systems \cite{coolforcats,cats,Leghtas2014,Azouit2015,Albert2015,Minganti2016a,Bartolo2016,cohen2016,S.Touzard}
(where this case is colloquially known as the ``two-cat pump'').

\begin{figure}[t]
\centering{}\includegraphics[width=0.45\textwidth]{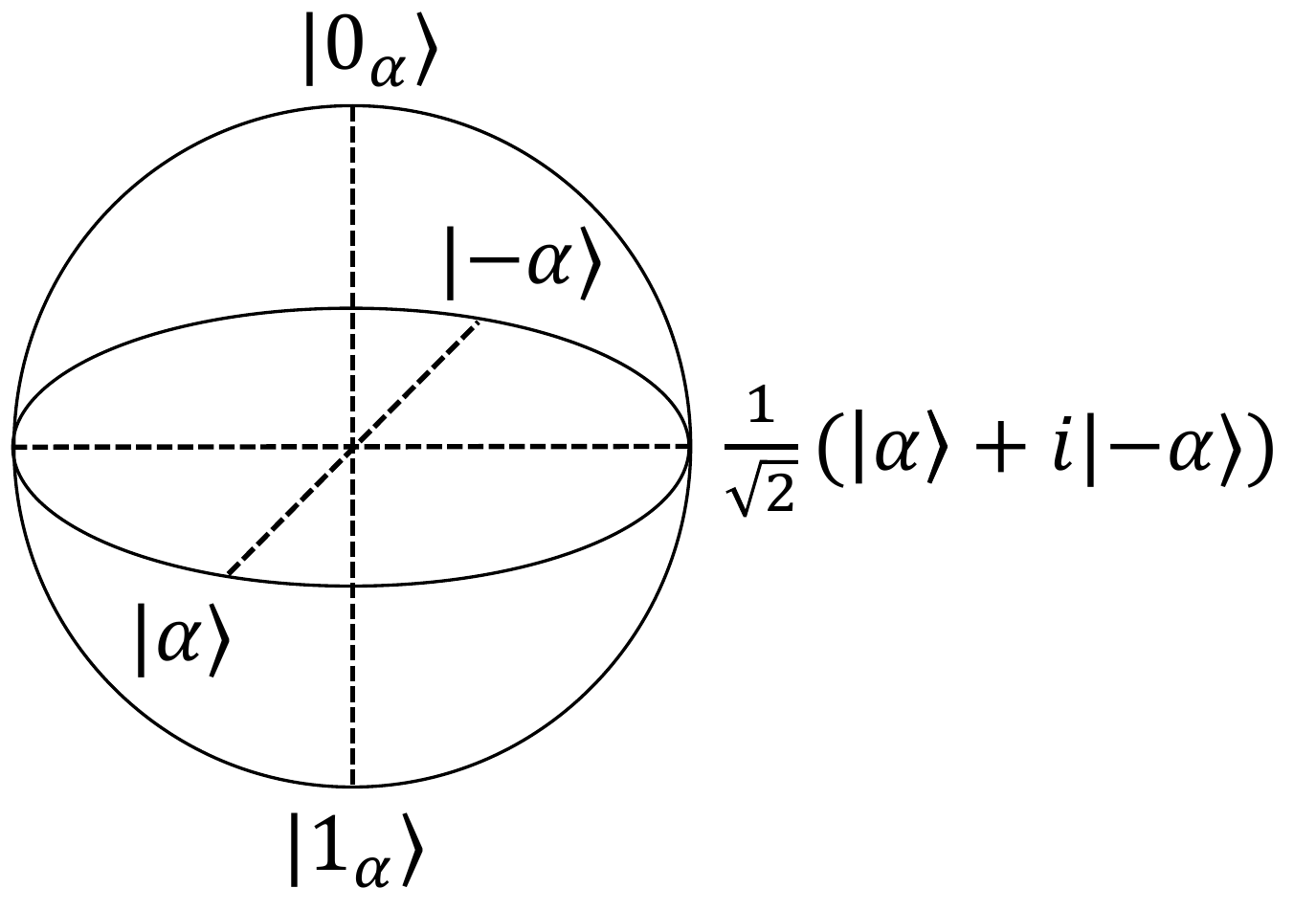}\caption{\foreignlanguage{american}{\label{fig:catschematic}Schematic of the cat-state Bloch sphere in
the large \(\a\) limit. Cat states \(|k_{\a}\ket\), \(k\in\{0,1\}\),
form the \(z\)-axis while coherent states \(\left|\pm\a\right\rangle \)
form the \(x\)-axis.}}
\end{figure}

\selectlanguage{american}%

\section{The Lindbladian and its steady states}

\selectlanguage{english}%
Consider a Lindbladian with a single jump operator 
\begin{equation}
F=\aa^{2}-\a^{2}=\left(\aa-\a\right)\left(\aa+\a\right)\,,
\end{equation}
where $\a\in\mathbb{R}$, $[\aa,\aa^{\dg}]=I$ and $\ph\equiv\aa^{\dg}\aa$.
Due to the ``gauge'' symmetry (\ref{eq:gauge}), this is equivalent
to adding a squeezing Hamiltonian $H=-i\a^{2}(\aa^{2}-\aa^{\dagger2})$
to a Lindbladian with the undriven two-photon absorption jump operator
$F=\aa^{2}$. Unlike the driven case of the two-qubit example from
Sec.~\ref{sec33}, in which driving takes the \foreignlanguage{american}{\acs{DFS}}
into an \foreignlanguage{american}{\acs{NS}}, here the undriven \foreignlanguage{american}{\acs{DFS}}
remains a \foreignlanguage{american}{\acs{DFS}} for all driving parameters.
Recall from Sec.~\ref{sec34} that, for $\a=0$, \acs{ASH} is a
qubit and consists of Fock states $|k\ket$, $k\in\{0,1\}$ (since
$F$ annihilates both). We have also already mentioned in Ch.~\ref{ch:1}
that, for large enough $\a$, \acs{ASH} remarkably retains its qubit
form, which this time is spanned by superpositions of coherent states
$\ct{\pm\a}$. Here, a treatment is given which is valid for all $\a$.
One may have noticed that both states $\ct{\pm\a}$ go to $|0\ket$
in the $\a\rightarrow0$ limit and do not reproduce the $\a=0$ steady
state basis. This issue is resolved by introducing the cat state basis
\cite{dodonov1974} 
\begin{equation}
|k_{\a}\ket\equiv\frac{e^{-\half\a^{2}}}{\sqrt{\nn_{k}}}\sum_{n=0}^{\infty}\frac{\a^{2n+k}}{\sqrt{(2n+k)!}}|2n+k\ket\sim\begin{cases}
|k\ket & \a\rightarrow0\\
\frac{1}{\sqrt{2}}(|\a\ket+(-)^{k}\ct{-\a}) & \a\rightarrow\infty
\end{cases}\label{eq:twocat}
\end{equation}
with normalization $\nn_{k}\equiv\half[1+(-)^{k}\exp(-2\a^{2})]$.
As $\a\rightarrow0$, cat states approach Fock states while for $\a\rightarrow\infty$,
the cat states (exponentially) quickly become ``macroscopic'' superpositions
of $\ct{\pm\a}$. This Lindbladian thus has \textit{only two distinct
parameter regimes}: one in which coherent states come together ($\a\approx1$)
and one in which they are well-separated ($\a\gg1,$ or more practically
$\a\apprge2$). Eq.~(\ref{eq:twocat}) shows that (for large enough
$\a$) cat states and coherent states become conjugate $z$- and $x$-bases
respectively, forming the \acs{ASH} qubit (see Fig.~\ref{fig:catschematic}).
Using projections $\Pi_{k}=\sum_{n=0}^{\infty}|2n+k\ket\bra2n+k|$
(\ref{eq:catproj2}), cat states can be concisely written as projected
(and normalized) coherent states:
\begin{equation}
|k_{\a}\ket\equiv\frac{\Pi_{k}|\a\ket}{\sqrt{\bra\a|\Pi_{k}|\a\ket}}\,\,\,\,\text{with normalization\,\,\,\,\ensuremath{\nn_{k}\equiv\bra\a|\Pi_{k}|\a\ket=\frac{1+(-)^{k}e^{-2\a^{2}}}{2}}}\,.\label{eq:catsintro}
\end{equation}
The projections are orthogonal: $\Pi_{k}\Pi_{l}=\d_{kl}^{\text{mod\,}2}\Pi_{k}$,
\foreignlanguage{american}{where }$\d_{qp}^{\text{mod\,}2}=1$ whenever
$q=p\text{\,mod\,}2$. Action of lowering or raising operators switches
subspaces \foreignlanguage{american}{{[}see eq.~(\ref{eq:catrelproj})}{]},
implying that
\begin{equation}
a\Pi_{k}=\Pi_{k+1\text{mod\,}2}a\,.\label{eq:catrelproj2}
\end{equation}

The cat state label $k\in\{0,1\}$ corresponds to the respective $\pm1$
eigenspace of the parity operator $\left(-\right)^{\ph}=\Pi_{0}-\Pi_{1}$.
This parity operator commutes with $F$ for all $\a$, so \acs{OPH}
is split into four blocks, 
\begin{equation}
\{|2n+k\ket\bra2m+l|\}_{n,m=0}^{\infty}\,\,\,\,\,\,\,\,\,\,\,\text{(labeled by \ensuremath{k,l\in\{0,1\}})}\,,\label{eq:blocks}
\end{equation}
which evolve independently of each other. The outer product $\St_{kl}\equiv|k_{\a}\ket\bra l_{\a}|$
is the unique steady-state basis element in the respective block $\{|2n+k\ket\bra2m+l|\}_{n,m=0}^{\infty}$,
and together the basis elements $\{\St_{kl}\}_{k,l=0}^{1}$ span $\ash=\ulbig$.
Outer products of all states orthogonal to $|k_{\a}\ket$ span the
decaying subspace $\lrbig$.

The cat state basis, unlike the coherent state basis, is orthonormal
for all values of $\a$ and simplifies most of the calculations done
here, with all of the complexity coming from the normalization factors
$\nn_{k}$. \foreignlanguage{american}{For example, using eqs.~(\ref{eq:catsintro}-\ref{eq:catrelproj2}),
orthogonality of projections, and the property of coherent states
$a|\a\ket=\a|\a\ket$}, the cat states have average occupation number
\begin{equation}
\bra k_{\a}|\ph|k_{\a}\ket=\a\frac{\bra\a|\Pi_{k}a^{\dg}\Pi_{k+1}|\a\ket}{\pi_{k}}=\a^{2}\frac{\pi_{k+1}}{\pi_{k}}=\begin{cases}
k+O(\a^{4}) & \a\rightarrow0\\
\a^{2}+O(\a^{2}e^{-2\a^{2}}) & \a\rightarrow\infty
\end{cases}\,.\label{eq:technical}
\end{equation}
This is sensible since Fock states have distinct average occupation
numbers while coherent states with the same magnitude $\a$ have the
same average occupation number.
\selectlanguage{american}%

\section{Conserved quantities}

\selectlanguage{english}%
We now search for the four conserved quantities corresponding to $\St_{kl}$.
By the correspondence from Thm.~\ref{thm:dual}, there exist four
$\{J^{kl}\}_{k,l=0}^{1}$ such that
\begin{equation}
\L^{\dgt}(J^{kl})=F^{\dg}J^{kl}F-\half\{F^{\dg}F,J^{kl}\}=0\,.
\end{equation}
Since parity symmetry is preserved, the diagonal $k=l$ conserved
quantities remain the same as for the $\a=0$ case: $J^{kk}=\Pi_{k}$.
One can use Thm.~\ref{prop:3} to determine the remaining conserved
quantity $J^{01}$. However, since $F_{\lr}\neq0$, inverting $\L_{\lr}$
is non-trivial. Fortunately, this inversion can be avoided and we
can use the $\a=0$ conserved quantity (\ref{eq:cctwophot}; now renamed
to $J^{01,q=0}$) to determine $J^{01}$. To do so, we apply $\L^{\dgt}$
to $J^{01,0}$, which yields nonzero terms only from the $\a$-dependent
part of $\L^{\dgt}$ (since $J^{01,0}$ is conserved under the $\a$-independent
part). These nonzero terms, which we call $J^{01,q=\pm1}$, can in
turn be plugged into $\L^{\dgt}$ themselves. Such recursive steps
produce a pattern: the quantities $J^{01,q}$ (labeled by $q\in\mathbb{Z}$)
turn out to be
\begin{equation}
J^{01,q}=\begin{cases}
\frac{\left(\ph-1\right)!!}{\left(\ph+2q\right)!!}\Pi_{0}a^{2q+1} & \,\,\,\,\,\,\,\,q\geq0\\
\Pi_{0}a^{\dg2\left|q\right|-1}\frac{\ph!!}{\left(\ph+2\left|q\right|-1\right)!!} & \,\,\,\,\,\,\,\,q<0
\end{cases}
\end{equation}
and the equation of motion they satisfy is
\begin{equation}
\L^{\dgt}\left(J^{01,q}\right)=\half\left(2q+1\right)\left[\a^{2}\left(J^{01,q-1}-J^{01,q+1}\right)-2qJ^{01,q}\right]\,.\label{eq:eomjq}
\end{equation}
Recall that we are looking for a conserved quantity $J^{01}$ such
that $\L^{\dgt}(J^{01})=0$. Since $J^{01}\rightarrow J^{01,q=0}$
for $\a\rightarrow0$ and since the set $\{J^{01,q}\}_{q\in\mathbb{Z}}$
is closed under application of $\L^{\dgt}$, $J^{01}$ for any $\a$
must be constructed out of the $J^{01,q}$'s:
\begin{equation}
J^{01}\propto\sum_{q\in\mathbb{Z}}a_{q}J^{01,q}\,,
\end{equation}
with some coefficients $a_{q}$. Determining these coefficients becomes
easy when one notices that the equations of motion for $J^{01,q}$
mimic the recurrence relation
\begin{equation}
\a^{2}\left[I_{q-1}\left(\a^{2}\right)-I_{q+1}\left(\a^{2}\right)\right]+2qI_{q}\left(\a^{2}\right)=0
\end{equation}
satisfied by the modified Bessel functions of the first kind $I_{q}$
\cite{dlmf}. Taking care of the factor of $2q+1$ in eq.~(\ref{eq:eomjq})
and an extra $q$-dependent sign yields
\begin{equation}
a_{q}=\frac{\left(-\right)^{q}}{2q+1}I_{q}\left(\a^{2}\right)\,.
\end{equation}
Now all that is left is to biorthogonalize the $J^{01}$ with its
corresponding \acs{ASH} basis element $\St_{01}$, i.e., make sure
that $\bra0_{\a}|J^{01\dg}|1_{\a}\ket=1$. Explicitly calculating
\begin{equation}
\bb\J^{01,q}|\St_{01}\kk=\bra0_{\a}|\J^{01,q\dg}|1_{\a}\ket=\sqrt{\frac{2\a^{2}}{\sinh2\a^{2}}}I_{q}\left(\a^{2}\right)\label{eq:normq}
\end{equation}
and using eq.~(5.8.6.2) from \cite{prudnikov},
\begin{equation}
\sum_{q\in\mathbb{Z}}\frac{\left(-\right)^{q}}{2q+1}I_{q}\left(\a^{2}\right)I_{q}\left(\a^{2}\right)=\frac{\sinh2\a^{2}}{2\a^{2}}\,,
\end{equation}
we obtain the properly normalized conserved quantity
\begin{equation}
J^{01}=\sqrt{\frac{2\a^{2}}{\sinh2\a^{2}}}\sum_{q\in\mathbb{Z}}\frac{\left(-\right)^{q}}{2q+1}I_{q}\left(\a^{2}\right)J^{01,q}\,.\label{eq:j01-1}
\end{equation}
One can check that $\L^{\dgt}(J^{01})=0$ as follows. First, use linearity
and the equation of motion (\ref{eq:eomjq}) for $J^{01,q}$. Then,
observe that each $J^{01,q}$ is supported on a different set of Fock
state outer products ($\{|2n\ket\bra2n+2q+1|\}_{n=0}^{\infty}$ for
$q\geq0$ and $\{|2n+2|q|\ket\bra2n+1|\}_{n=0}^{\infty}$ for $q<0$).
This means that the coefficient in front of each $J^{01,q}$ must
be zero for $J^{01}$ to be conserved. Rearranging the three infinite
sums (coming from $J^{01,q}$, $J^{01,q+1}$, and $J^{01,q-1}$) in
order to obtain that coefficient yields exactly the Bessel function
recursion relation above.
\selectlanguage{american}%

\section{State initialization\label{sec:State-initialization}}

\selectlanguage{english}%
We now determine the asymptotic state
\begin{equation}
\rout=\sum_{k,l=0}^{1}c_{kl}\St^{kl}=\sum_{k,l=0}^{1}c_{kl}|k_{\a}\ket\bra l_{\a}|\label{eq:purcats}
\end{equation}
starting from an initial coherent state $\rin=|\b\ket\bra\b|$. By
the correspondence from Thm.~\ref{thm:dual}, we know that $c_{kl}=\bb J^{kl}|\rin\kk$. 

\subsection{Steady state for an initial fixed-parity state}

Due to the decoupling of the blocks $\{|2n+k\ket\bra2m+l|\}_{n,m=0}^{\infty}$
(\ref{eq:blocks}), any state which starts exclusively in one of the
blocks evolves within that block into the block's fixed point $|k_{\a}\ket\bra l_{\a}|$.
Therefore, if we start in \textit{any} state of fixed parity $k\in\{0,1\}$
{[}i.e., $\left(-\right)^{\ph}=\left(-\right)^{k}$ for that state{]},
we necessarily converge to the \textit{pure} asymptotic state $\rout=|k_{\a}\ket\bra k_{\a}|$.
This holds true for mixed fixed-parity initial states as well, which
is an example of the environment (which induces this Lindbladian)
taking entropy out of the system.

\subsection{Steady state for an initial coherent state}

Now let $\rin=|\b\ket\bra\b|$ for some $\b\in\mathbb{C}$. The diagonal
quantities $c_{kk}=\bb J^{kk}|\rin\kk$ have already been determined
in eq.~(\ref{eq:diagqtys}). The tricky part is the off-diagonal
quantity, which simplifies to
\begin{equation}
c_{01}=\bra\b|J^{01\dg}|\b\ket=\frac{i\a\b^{\star}e^{-|\b|^{2}}}{\sqrt{2\sinh2\a^{2}}}\int_{\phi=0}^{\pi}d\phi e^{-i\phi}I_{0}\left(\left|\a^{2}-\b^{2}e^{2i\phi}\right|\right)\,.
\end{equation}
To derive this, we first apply eq.~(\ref{eq:j01-1}) to obtain the
sum
\begin{equation}
c_{01}=\frac{\sqrt{2}\a\b^{\star}e^{-|\b|^{2}}}{\sqrt{\sinh2\a^{2}}}\sum_{q\in\mathbb{Z}}\frac{\left(-\right)^{q}}{2q+1}I_{q}\left(\a^{2}\right)I_{q}\left(\left|\b\right|^{2}\right)e^{-i2q\t}\,,\label{eq:intermediate}
\end{equation}
where $\t=\arg\b$. This sum is convergent because the sum without
the $2q+1$ term is an addition theorem for $I_{q}$ {[}eq.~(5.8.7.2)
from \cite{prudnikov}{]}. To put the above into integral form, we
use the identity (derivable from the addition theorem) 
\begin{equation}
I_{q}\left(\a^{2}\right)I_{q}\left(\left|\b\right|^{2}\right)=\frac{1}{2\pi}\int_{\phi=0}^{2\pi}d\phi e^{iq\left(\phi+\pi\right)}I_{0}\left(\left|\a^{2}-\left|\b\right|^{2}e^{i\phi}\right|\right)\,.
\end{equation}
Plugging in the above identity into eq.~(\ref{eq:intermediate}),
interchanging the sum and integral (possible because of convergence),
evaluating the sum (which is a simple Fourier series), and performing
a change of variables yields the integral formula for $c_{01}$.

Using eq.~(5.8.1.15) from \cite{prudnikov}, one can calculate limits
for large $\left|\b\right|$ along the real and imaginary axes in
phase space of $\b$: 
\begin{equation}
\lim_{\b\rightarrow\infty}c_{01}=\frac{1}{2}\frac{\textrm{erf}(\sqrt{2}\a)}{\sqrt{1-e^{-4\a^{2}}}}\overset{_{\a\rightarrow\infty}}{\longrightarrow}\half\qquad\textrm{ and }\qquad\lim_{\b\rightarrow i\infty}c_{01}=-i\frac{1}{2}\frac{\textrm{erfi}(\sqrt{2}\a)}{\sqrt{e^{4\a^{2}}-1}}\overset{_{\a\rightarrow\infty}}{\longrightarrow}0\,,
\end{equation}
where erf and erfi are the error function and imaginary error function,
respectively. Recalling that $c_{kk}\rightarrow\half$ in both limits
(see Sec.~\ref{subsec:Steady-state-for}), we see that $\rout$ becomes
pure when $\b$ is real and large and that $\rout$ becomes maximally
mixed when $\b$ is pure imaginary and large. To study the remaining
sectors of $\b$ phase space, we numerically calculate the purity
for a lattice of $\b$'s in Fig.~\ref{f:catfinal} for $\a$ being
$0.001$, $\half$, $1$, and $5$. The rightmost panel shows the
behavior for large $\a$, showing that initial states $\b$ near the
respective steady states $\ct{\pm\a}$ converge to pure states. In
fact, one can show that those pure states are exactly $\ct{\pm\a}$.
In other words, the two-photon system is similar to a classical double-well
system in the combined large $\a,\b$ regime. However, starting in
the state $\propto|\b\ket+\ct{-\b}$ for any $\b$ guarantees a pure
asymptotic state by the symmetry arguments of the previous Subsection.
Therefore, while a ``classical'' initial state $|i\a\ket$ results
in a maximally mixed asymptotic state (in the large $\a$ limit),
the $\ct{-i\a}$ component in an initial cat state $\propto|i\a\ket+\ct{-i\a}$
cancels that effect and results in a pure asymptotic state!

\begin{figure}
\includegraphics[width=1\textwidth]{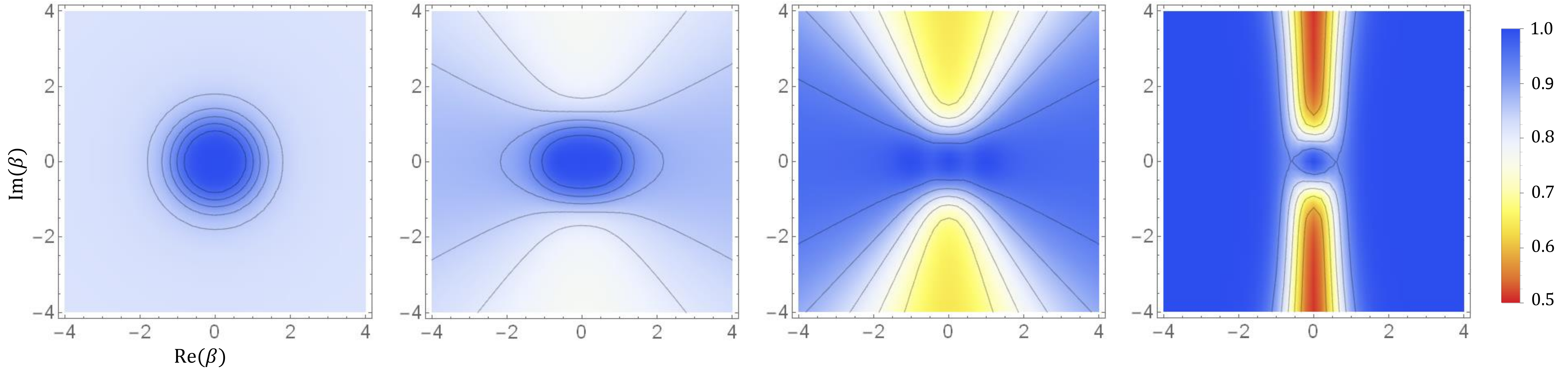}\caption{\foreignlanguage{american}{\label{f:catfinal}Purity of the asymptotic state of the driven two-photon
absorption Lindbladian. The four panels correspond to \foreignlanguage{english}{\(\a\)
being $0.001$, \(\half\), $1$, and $5$, respectively.} For each
panel, a point \(\b\) in phase space corresponds to the purity (\(\tr\{\rout^2\}\))
of the asymptotic state \(\rout\) (\ref{eq:purcats}) given an initial
coherent state \(|\beta\ket\). Besides \(\rout\) being pure away
from the vertical axis when \(\a\gg1\), one can observe that \(\rout\)
is also pure for initial states near the center of phase space. Indeed,
starting in the vacuum state (\(\beta=0\)), a fixed-parity state,
the system is driven to the pure Schrödinger cat state \(|0_\a\ket\)
(\ref{eq:catsintro}).}}
\end{figure}

\subsection{Steady state for an initial cat state}

Now let us briefly consider an initial state proportional to $|\a\ket+e^{i\t}|\b\ket$
with $\b\neq-\a$, i.e., a cat state in which one component is already
in \acs{ASH}. For simplicity, let us consider the large $\a$ limit,
meaning that all we say is true up to exponentially small corrections
due to the overlap between coherent states. For this case, it is useful
to consider the four-corners decomposition, in which $\ash=\ulbig$
is the cat-state subspace with projection 
\begin{equation}
P=|0_{\a}\ket\bra0_{\a}|+|1_{\a}\ket\bra1_{\a}|\overset{_{^{\a\rightarrow\infty}}}{\,\sim\,}|\a\ket\bra\a|+\ct{-\a}\cb{-\a}\,.\label{eq:catproj}
\end{equation}
We have seen above that initial states $|\b\ket$ which are much closer
to $|\a\ket$ than they are to $\ct{-\a}$ (i.e., $\left|\b+\a\right|\gg\left|\b-\a\right|$)
converges to $|\a\ket$. The same can be said of $\ct{-\a}$ , the
other ``well'' in this (approximately) double-well system. Let us
assume that $\b$ is much closer to $-\a$ so that both components
in the initial cat state do not converge to the same well. However,
we keep in mind that, in this approximation, $\bra\b|\a\ket\approx0$,
so $|\b\ket$ is still outside of both wells. This is a case in which
$\rin$ contains components in all four corners of \acs{OPH}:
\begin{equation}
\rin=\begin{pmatrix}(\rin)_{\ul} & (\rin)_{\ur}\\
(\rin)_{\ll} & (\rin)_{\lr}
\end{pmatrix}=\begin{pmatrix}|\a\ket\bra\a| & e^{i\t}|\a\ket\bra\b|\\
e^{-i\t}|\b\ket\bra\a| & |\b\ket\bra\b|
\end{pmatrix}\,.
\end{equation}
Due to Thm.~\ref{prop:3}, we know that the asymptotic projection
$\ppp=\ppp\R_{\di}$, meaning that coherences $\ofbig$ are not preserved
in the infinite-time limit. In the language of conserved quantities,
$J_{\of}^{kl}=0$. This means that $\t$ is not imprinted on $\rout$.
Moreover, since the component in $\lrbig$ converges to a different
location in $\ulbig$ than the component already in $\ulbig$, we
necessarily have a mixed asymptotic state ($\rout=\half P$).
\selectlanguage{american}%

\section{Ordinary perturbation theory\label{sec:Ordinary-perturbation-theory}}

Let us know apply the first-order perturbation theory developed in
Ch.~\ref{ch:4} to study the behavior of the cat-state \acs{ASH}
under both Hamiltonian and Lindbladian perturbations. We learn how
to induce induce unitary evolution within \acs{ASH} using Hamiltonians
and that the driven two-photon absorption Lindbladian suppresses the
effect of some (but not all) noise. In process, we apply Thm.~\ref{prop:3},
which greatly simplifies the calculations. All studied effects are
verified numerically in Ref.~\cite{cats}, and the Hamiltonian perturbation
calculations here offer another way to get to the same answers.

\subsection{A Hamiltonian-based gate\label{subsec:A-Hamiltonian-based-gate}}

Recall from eq.~(\ref{eq:kubo-simplified}) that first-order response
of a state in \acs{ASH} due to a slowly ramped-up perturbation $\oo$
is
\begin{equation}
\T_{t}^{\left(1\right)}|\rout\kk=t\ppp\oo\ppp|\rout\kk-\L^{-1}\oo|\rout\kk\,,\label{eq:pert-main-form}
\end{equation}
where we have omitted the ``infinity'' which occurs within \acs{ASH}
(due to the slow ramp-up of the perturbation) since it doesn't affect
our conclusions (see footnote \ref{fn:ringdown} in Ch.~\ref{ch:4}).
Recall also from Sec.~\ref{subsec:Relation-to-previous} and the
References therein that if we rescale the perturbation as $\spert\rightarrow\frac{1}{T}\spert$
and evolve to a time $t=T$, then the ``Zeno term'' $\ppp\spert\ppp$
is order $O\left(1\right)$ and dominates the $O(\nicefrac{1}{T})$
leakage term as $T\rightarrow\infty$. We use this effect to induce
a Hamiltonian-based gate on \acs{ASH}. 

\paragraph{Evolution within $\protect\ash$}

According to Sec.~\ref{subsec:hams}, $\ppp\spert\ppp$ is of Hamiltonian
form and, for \acs{DFS} cases, reduces to
\begin{equation}
\ppp\spert\ppp=-i[V_{\ul},\cdot]\equiv-i[\pp\hpert\pp,\cdot]\,,
\end{equation}
where $\pp$ (\ref{eq:catproj}) is the projection onto the cat subspace.
In other words, while $\hpert$ can in general drive states in $\ash=\ulbig$
out of \acs{ASH}, $V_{\lr}\neq0$, that part of $\hpert$ does \textit{not}
contribute to within first order in the perturbation. Consider the
perturbative Hamiltonian
\begin{equation}
V=i\b\left(a^{\dagger}e^{-i\theta}-ae^{i\theta}\right)
\end{equation}
with $\beta\in\mathbb{R}$ and an angle $\theta=[0,2\pi)$. After
writing out both projections in $V_{\ul}=\pp\hpert\pp$ in terms of
cat states, we need to calculate the matrix elements of $\hpert$
in \acs{ASH}, i.e., $\bra k_{\a}|V|l_{\a}\ket$. A calculation similar
to the one from eq.~(\ref{eq:technical}) yields
\begin{align}
\bra k_{\a}|V|l_{\a}\ket & =i\d_{l,k+1}^{\text{mod\,}2}\a\b\left(\sqrt{\frac{\nn_{k}}{\nn_{k+1}}}e^{-i\theta}-e^{i\theta}\sqrt{\frac{\nn_{k+1}}{\nn_{k}}}\right)\,.\label{eq:mat-elem-ham}
\end{align}
\foreignlanguage{english}{To make sense of these matrix elements,
we consider the small and large $\a$ limits. Plugging the expansions
from }eq.~(\ref{eq:technical}) \foreignlanguage{english}{into $V_{\ul}$
yields the two cases
\begin{equation}
V_{\ul}\sim\begin{cases}
i\a\b e^{-i\theta}|0\ket\bra1|+H.c. & \a\rightarrow0\\
2\a\b\sin\theta\left(|0_{\a}\ket\bra1_{\a}|+H.c.\right) & \a\rightarrow\infty
\end{cases}\,.
\end{equation}
For small $\a$, $V_{\ul}$ is a rotation on \acs{ASH}, which is
now spanned by outer products of Fock states $|k\ket$ ($k\in\{0,1\}$),
and the axis of the rotation is determined by $\theta$. For large
$\a$, $V_{\ul}$ is also a rotation, but its axis is fixed to be
the $x$-axis of the cat-qubit Bloch sphere from Fig.~\ref{fig:catschematic}
and only its \textit{strength} is dependent on $\theta$. For maximum
effect in this limit, $\theta$ needs to be $\nicefrac{\pi}{2}$,
which translates to driving perpendicular to the horizontal line connecting
$\a$ and $-\a$ in the phase space of the oscillator (\cite{cats},
Fig.~3). Graphically, such a gate shifts the fringes in the Wigner
function of a cat state and produces the same effect as the holonomic
gate from Fig.~\ref{f1}a. This gate was realized experimentally
in Ref. \cite{S.Touzard}.}

\selectlanguage{english}%
\begin{figure}
\begin{centering}
\includegraphics[width=0.5\columnwidth]{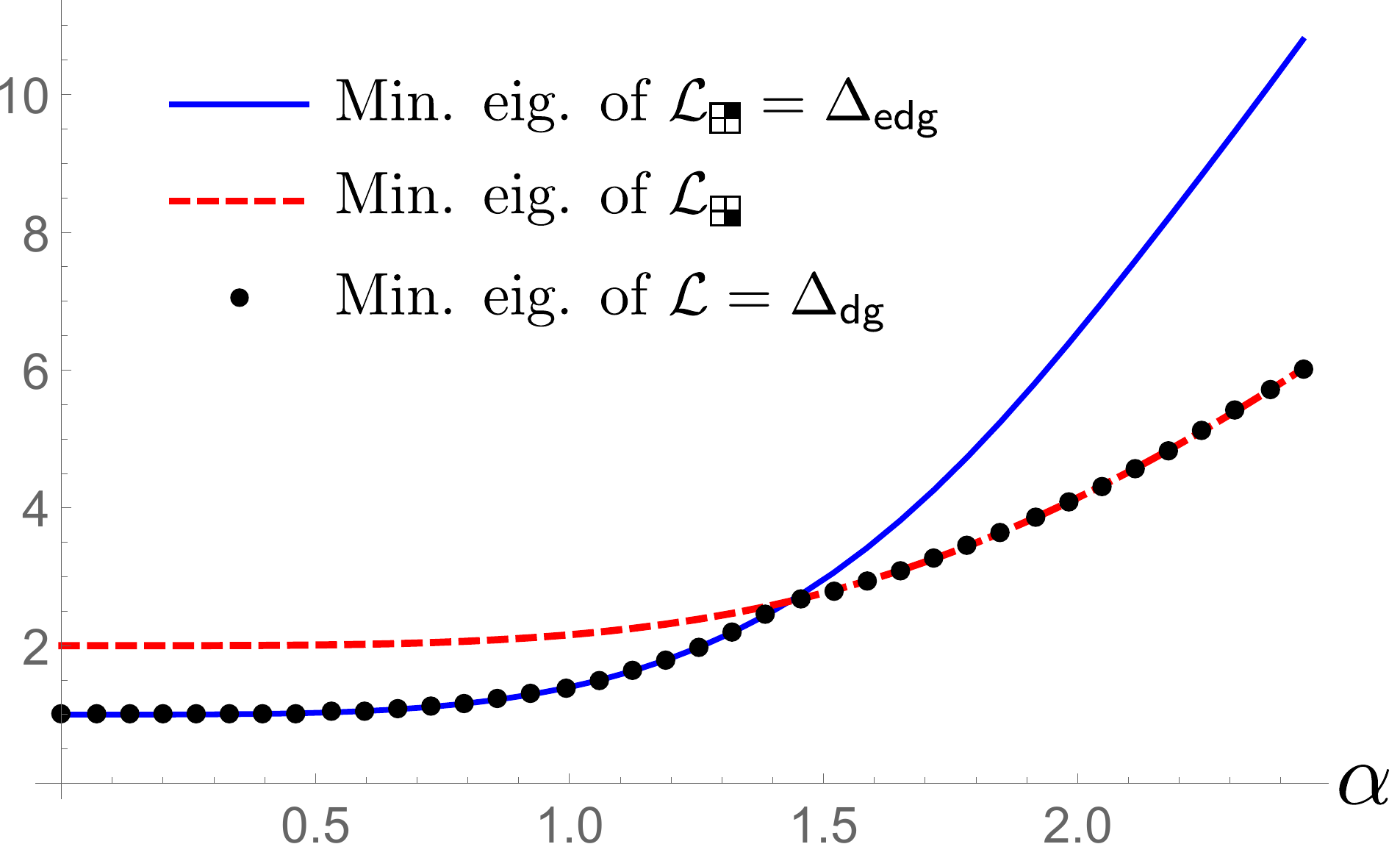}
\par\end{centering}
\caption{\label{fig:gap}\protect\ytableausetup{boxsize = 2pt}Plot of the
effective dissipative gap \(\Delta_{\textnormal{\textsf{edg}}}\),
the nonzero eigenvalue of \(\L\ytableaushort{ {*(black)} {*(black)} , {*(black)} {} }\)
with smallest real part, and the dissipative gap \(\Delta_{\textnormal{\textsf{dg}}}\)
(\ref{eq:dissipative-gap-def}) versus \(\alpha\) for the Lindbladian
with jump operator \(F=\aa^{2}-\a^{2}\). One can see that \(\Delta_{\textnormal{\textsf{edg}}}\geq\Delta_{\textnormal{\textsf{dg}}}\). }
\end{figure}

\selectlanguage{american}%

\paragraph{Leakage out of $\protect\ash$}

\selectlanguage{english}%
Let us now return to the leakage term $\L^{-1}\spert$. Since we have
a \foreignlanguage{american}{\acs{DFS}} case, we can apply the example
from Sec.~\ref{subsec:Example:-decoherence-Hamiltonian}. The leakage
caused by a Hamiltonian perturbation is then restricted to be in $\ofbig$,
$\L_{\of}(\r)=-\{(\hdg)_{\lr},\r_{\of}\}$, and the decoherence Hamiltonian
is
\begin{equation}
\hdg\equiv\half F^{\ell\dg}F^{\ell}=\half\left[\ph(\ph-1)-\a^{2}(\aa^{2}+\aa^{\dg2})+\a^{4}\right]\,.
\end{equation}
The ground states of $\hdg$ are exactly the cat states $|k_{\a}\ket$,
meaning that this Hamiltonian provides another way to stabilize such
states \cite{Cochrane1999,puri2016}. Moreover, $\adg$ is the excitation
gap of $\hdg$. It turns out that, for large enough $\a$, $\adg$
is larger than the dissipative gap $\dgg$ of the full $\L$, thereby
providing another layer of protection against leakage besides the
$T\rightarrow\infty$ Zeno limit. The excitation gap of $\hdg$ ($\adg$)
is plotted in Fig.~\ref{fig:gap} vs. $\a$, along with $\dgg$ and
the eigenvalue of $\L_{\lr}$ with smallest real part. One can see
that for $\a>1.5$, the dissipative gap of $\L$ is smaller and does
not coincide with the energy scale governing leakage.
\selectlanguage{american}%

\subsection{Passive protection against dephasing noise}

Let us now take a look at a perturbation of Lindblad form and consider
the $\ppp\oo\ppp$ (\ref{eq:pert-main-form}) for
\begin{equation}
\oo(\r)=\half\k\left(2\ph\r\ph-\left\{ \ph^{2},\r\right\} \right)\,.\label{eq:dephasing-noise}
\end{equation}
The above Lindbladian occurs when there are fluctuations in the frequency
parameter of the oscillator and is called the dephasing error channel
or, more colloquially, \textit{dephasing noise}. The term $\ppp\oo\ppp$
governs evolution within \acs{ASH}, which no longer has to be unitary
since $\oo$ is not in Hamiltonian form. Since we cannot reduce this
term to an operator like we did in the previous Subsection, we have
to consider the full superoperator and decompose the asymptotic projection
in terms of steady states and conserved quantities,
\begin{equation}
\ppp=\sum_{k,l=0}^{1}|\St_{kl}\kk\bb\J^{kl}|\,.
\end{equation}
That way, $\ppp\oo\ppp$ is determined by the 16 matrix elements of
$\oo$ within \acs{ASH}, $\bb\J^{kl}|\oo|\St_{pq}\kk$ for $k,l,p,q\in\{0,1\}$.
Luckily, dephasing noise \foreignlanguage{english}{preserves parity,
so it also does not couple the blocks $\{|2n+k\ket\bra2m+l|\}_{n,m=0}^{\infty}$
(\ref{eq:blocks}). Therefore, we can immediately say that $\bb\J^{kl}|\oo|\St_{pq}\kk\propto\d_{kp}\d_{lq}$.}
Moving $\oo$ to act on the $\J$'s and using $\St_{kl}^{\dg}=\St_{lk}$
(same for $J^{kl}$) and $\oo^{\dgt}=\oo$, we can instead consider
how $\oo$ acts on the conserved quantities:
\begin{equation}
\bb\J^{kl}|\oo|\St_{kl}\kk=\tr\{\J^{kl\dg}\oo(\St_{kl})\}=\tr\{\oo^{\dgt}(\J^{lk})\St_{kl}\}=\bb\St_{lk}|\oo|\J^{lk}\kk\,.
\end{equation}
\foreignlanguage{english}{Since $\oo(e^{i\pi\ph})=0$, the diagonal
conserved quantities $J^{kk}$ remain conserved.} Moreover, since
$J^{10}=J^{01\dg}$ and $\oo(J^{\dg})=[\oo(J)]^{\dg}$ for any $\oo$
in Lindblad form, we need only to determine the effect of $\oo$ on
$J^{01}$. Recall that $J^{01}$ (\ref{eq:j01-1}) is a superposition
of $J^{01,q}$'s for $q\in\mathbb{Z}$, which in turn are composed
of superpositions of \foreignlanguage{english}{$\{|2n\ket\bra2n+2q+1|\}_{n=0}^{\infty}$
for $q\geq0$ and $\{|2n+2|q|\ket\bra2n+1|\}_{n=0}^{\infty}$ for
$q<0$. Applying $\oo$ to each $J^{01,q}$ and simplifying yields
the simple equation
\begin{equation}
\oo|J^{01,q}\kk=-\half\k\left(2q+1\right)^{2}|J^{01,q}\kk\,,
\end{equation}
signaling that $J^{01,q}$ are actually eigenstates of $\oo$. One
of the $2q+1$ terms cancels the $2q+1$ in the denominator of the
sum used to write $J^{01}$ in terms of $J^{01,q}$'s. The matrix
element is then\begin{subequations}
\begin{eqnarray}
\bb\St_{01}|\oo|\J^{01}\kk & = & \sqrt{\frac{2\a^{2}}{\sinh2\a^{2}}}\sum_{q\in\mathbb{Z}}\frac{\left(-\right)^{q}}{2q+1}I_{q}\left(\a^{2}\right)\bb\St_{01}|\oo|J^{01,q}\kk\\
 & = & -\half\k\sqrt{\frac{2\a^{2}}{\sinh2\a^{2}}}\sum_{q\in\mathbb{Z}}\left(-\right)^{q}\left(2q+1\right)I_{q}\left(\a^{2}\right)\bb\St_{01}|J^{01,q}\kk\\
 & = & -\half\k\frac{2\a^{2}}{\sinh2\a^{2}}\sum_{q\in\mathbb{Z}}\left(-\right)^{q}\left(2q+1\right)I_{q}\left(\a^{2}\right)I_{q}\left(\a^{2}\right)\\
 & = & -\frac{\k\a^{2}}{\sinh2\a^{2}}\,,\label{eq:final-pert-cat}
\end{eqnarray}
\end{subequations}where in the last two steps we used eq.~(\ref{eq:normq})
and eq.~(5.8.7.2) from Ref.~\cite{prudnikov}, respectively. In
the Zeno limit discussed above, the quantity $c_{01}=\bb J^{01}|\rout\kk$
representing the coherence of the cat qubit decays exponentially at
the rate $\bb\St_{01}|\oo|\J^{01}\kk$. For small $\a$, the rate
reduces to the usual dephasing rate $\nicefrac{\k}{2}$ induced on
the Fock state outer product $|0\ket\bra1|$ by $\oo$. However, for
large $\a$, \textit{the rate itself} is exponentially suppressed
for large $\a$ since $\bb\St_{01}|\oo|\J^{01}\kk\sim-2\k\a^{2}e^{-2\a^{2}}$. }

\selectlanguage{english}%
\begin{figure}
\begin{centering}
\includegraphics[width=0.5\textwidth]{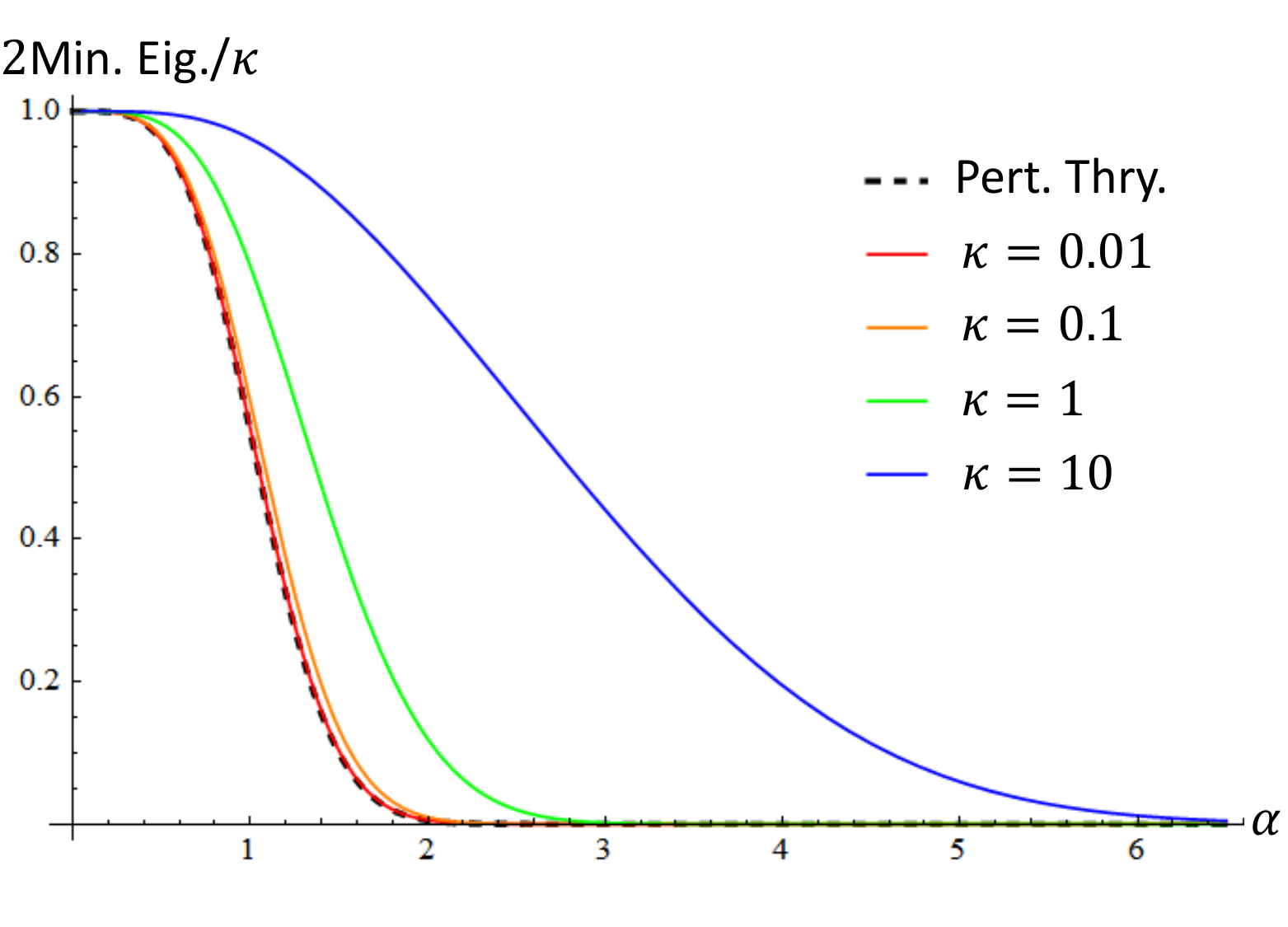} 
\par\end{centering}
\caption{\label{fig:cat-pert-final}Plot versus $\protect\a$ of the dissipative
gap (scaled by $-\kappa/2$) of \(\L+\oo\) (\ref{eq:dephasing-noise})
restricted to the ``off-diagonal'' block spanned by Fock state basis
elements \(\{|2n\ket\langle2m+1|\}_{n,m=0}^{\infty}\) for various
values of $\kappa$. The plot includes the result from the perturbative
calculation from eq.~(\ref{eq:final-pert-cat}) (dashed). One can
see that the effect of the perturbation quickly tends to zero with
increasing cat size $\alpha$.}
\end{figure}

We have numerically confirmed that eq.~(\ref{eq:final-pert-cat})
is indeed the first-order correction to \acs{ASH} due to $\oo$.
In Fig.~\ref{fig:cat-pert-final}, versus $\a$ and for various $\k$,
we plot the dissipative gap of $\L+\oo$ restricted to the block spanned
by $\{|2n\ket\bra2m+1|\}_{n,m=0}^{\infty}$. For small values of $\k$,
the numerical result approaches our above analytical estimate. In
fact, a similar trend holds for large values of $\kappa$, indicating
that higher-order terms (see Sec.~\ref{sec:Exact-all-order-Dyson})
should scale in a similar fashion. Since the Lindbladian itself is
preventing dephasing noise from acting within \acs{ASH}, we can say
that the cat pump \textit{passively protects} (in the sense of Ref.~\cite{Terhal2015})
from this error process.

It is illuminating to determine the degree to which the protection
from dephasing noise is coming from the driven two-photon Lindbladian
$\L$. To do so, we can split the perturbation into two terms,
\begin{equation}
\ppp\oo\ppp=\R_{\ul}\oo\R_{\ul}+\ppp\R_{\lr}\oo\R_{\ul}\,,\label{eq:pieces}
\end{equation}
and calculate the first term (which would be the only term if $\L$
had been Hermitian). The second term is purely a dissipative effect
and is due to $\L$ not being Hermitian and therefore not having the
same left and right eigenmatrices ($\J^{kl}=\St_{kl}+\J_{\lr}^{kl}\neq\St_{kl}$;
see Thm.~\ref{prop:3}). The first term has matrix elements $\bb\St_{kl}|\oo|\St_{kl}\kk$,
which are easily evaluated using techniques from eq.~(\ref{eq:technical}):
\begin{equation}
\bb\St_{kl}|\oo|\St_{kl}\kk=-\half\k\a^{2}\left(\frac{\pi_{k+1}}{\pi_{k}}+\frac{\pi_{l+1}}{\pi_{l}}-2\a^{2}\left[\frac{\pi_{k+1}\pi_{l+1}}{\pi_{k}\pi_{l}}-1\right]\right)\sim\begin{cases}
-\half\k\left(k-l\right)^{2} & \a\rightarrow0\\
-\k\a^{2} & \a\rightarrow\infty
\end{cases}\,.
\end{equation}
Therefore, for large $\a$, the piece $\R_{\ul}\oo\R_{\ul}\sim-\k\a^{2}\R_{\ul}$
is not trace-preserving. This shows the importance of using proper
Lindbladian perturbation theory instead of merely projecting perturbations
onto $\ulbig$.

It is worth noting that the leakage term $\L^{-1}\oo$ from eq.~(\ref{eq:pert-main-form})
dephases the cat-state basis elements that comprise the cat qubit,
reducing the purity of the full density matrix $\T_{t}^{\left(1\right)}|\rout\kk$
(\ref{eq:pert-main-form}). In phase space, this effect translates
to a slight smearing of the two Gaussian peaks that represent each
cat state. However, since this effect is due to leakage \textit{outside}
of \acs{ASH}, the quantum information that is stored \textit{within}
\acs{ASH} (and represented by $c_{kl}=\bb J^{kl}|\rout\kk$) is not
affected.
\selectlanguage{american}%

\subsection{Decoherence under single-photon loss}

While the cat pump is resilient to dephasing noise, it is unfortunately
incapable of protecting the quantum information in $\ulbig$ from
the most common type of error in photonic systems \textemdash{} amplitude
damping,
\begin{equation}
\oo(\r)=\half\k\left(2\aa\r\aa^{\dg}-\left\{ \ph,\r\right\} \right)\,.\label{eq:ampdamp}
\end{equation}
Let us show how the qubit in $\ulbig$ breaks down under this type
of error process by calculating $\ppp\oo\ppp$. We are interested
in large $\a$, so we work in the coherent state basis $\ct{\pm\a}$
(meaning that everything below is accurate up to exponential corrections
due to the overlap between the two states). Luckily, the recycling
term $\aa\cdot\aa^{\dg}$ keeps us in $\ulbig$ since $\aa|p\a\ket=p\a|p\a\ket$
with $p=\pm1$. In addition, the anti-commutator acts only from one
side at a time and so does not take $\ulbig$ into $\lrbig$ by (\ref{eq:no-leak}).
Therefore, $\ppp\R_{\lr}\oo\ppp=0$ and we luckily only need to calculate
$\R_{\ul}\oo\R_{\ul}$. Since $\ct{\pm\a}$ are approximately orthogonal,
$\R_{\ul}\oo\R_{\ul}$ is diagonal in the coherent state basis. Letting
$p,q\in\{\pm1\}$, a straightforward calculation in the large $\a$
limit yields
\begin{equation}
\bra p\a|\oo\left(|p\a\ket\bra q\a|\right)|q\a\ket\sim-\k\a^{2}\left(1-pq\right)\,.
\end{equation}
This perturbative result shows that the coherence $|\a\ket\cb{-\a}$
decays as $-2\k\a^{2}t$, in agreement with the small $\k$ limit
of the exact decay rate $-2\a^{2}(1-e^{-\k t})$ \{\cite{klimov_book},
below eq.~(9.11)\}.

\section{Holonomic quantum control\label{sec:Holonomic-quantum-control}}

Here, we apply the ideas learned in Ch.~\ref{ch:5} to study holonomies
on the cat-state \acs{ASH}. Recall that a slow (i.e., adiabatic)
variation of the parameters of a system in a closed loop returns the
system to its initial state, up to an operation (called a holonomy)
which is due to curvature and/or non-simple-connectedness of the parameter
space. Such holonomies can be used to perform quantum gates, either
on the ground states of a Hamiltonian or the \acs{ASH} of a Lindbladian,
in a process called \textit{holonomic quantum computation} \foreignlanguage{english}{\cite{Zanardi1999,pachos1999,lidarbook_zanardi}}.
We show how to perform such computation on the qubit spanned by \foreignlanguage{english}{the
cat states $|k_{\a}\ket$, $k\in\{0,1\}$}. 

In order to perform one of the gates, we need to introduce another
parameter into the previous Lindbladian, so from now we consider a
slightly more general $\L$ with jump operator
\begin{equation}
F=(\aa-\a_{0})(\aa-\a_{1})\,,\label{eq:f-general}
\end{equation}
\foreignlanguage{english}{where $\a_{l}$, $l\in\{0,1\}$, depend
on time. This jump operator stabilizes a two-dimensional \acs{ASH}
for \textit{all values of $\a_{0}$ and $\a_{1}$}. To see this, observe
that this new jump operator can be obtained by defining $\a_{\pm}=\half(\a_{0}\pm\a_{1})$
and conjugating $\aa^{2}-\a_{-}^{2}$ with the displacement operator
$D_{\a}$ (which acts on the vacuum state as $D_{\a}|0\ket=|\a\ket$):
\begin{equation}
D_{\a_{+}}\left(\aa^{2}-\a_{-}^{2}\right)D_{\a_{+}}^{\dg}=\left(\aa-\a_{+}\right)^{2}-\a_{-}^{2}=(\aa-\a_{0})(\aa-\a_{1})=F\,.
\end{equation}
Since the two jumps are related by a unitary conjugation, all spectral
properties of $\L$ with the original jump, including the existence
of a dissipative gap, hold for this case as well. Most importantly,
the two steady states are the displaced cat states $D_{\a_{+}}|k_{\a_{-}}\ket$.
While we can use this exact form of the steady states in the following
calculations, we instead use the coherent state basis and work in
the $|\a_{0}-\a_{1}|\gg1$ limit for some of the time in order to
simplify calculations. In this limit, \acs{ASH} is spanned by the
two coherent states $|\a_{l}\ket$, $l\in\{0,1\}$. }

\selectlanguage{english}%
The positions of the cat-qubit's two states $|\a_{l}(t)\ket$ in phase
space are now each controlled by a tunable parameter. We let $\a_{0}(0)=-\a_{1}(0)\equiv\a$,
meaning that the states $\ct{\pm\a}$ are the starting point of parameter
space evolution and the qubit defined by them (for large enough $\a$)
is shown in Fig.~\ref{fig:catschematic}. We now introduce two different
gates for this cat-qubit, the loop gate and the collision gate, which
together allow us to universally control said qubit. We work in the
adiabatic limit, meaning that the time $T$ used to perform the parameter
path is taken to infinity. Before proceeding, we want to briefly mention
that the leading-order $O(\nicefrac{1}{T})$ non-adiabatic correction
in the adiabatic perturbation theory expansion from Sec.~\ref{subsec:gap}
that causes leakage out of \acs{ASH} is still governed by the dissipative
gap $\adg$ of $\L_{\of}^{-1}$ and not that of $\L^{-1}$. This is
identical to the effect of the leakage term $\L^{-1}\spert=\L_{\of}^{-1}\spert$
in the ordinary perturbation theory calculations studied in Sec.~\ref{sec:Ordinary-perturbation-theory},
given a Hamiltonian perturbation $\spert$.

\begin{figure}[t]
\begin{centering}
\includegraphics[width=0.5\textwidth]{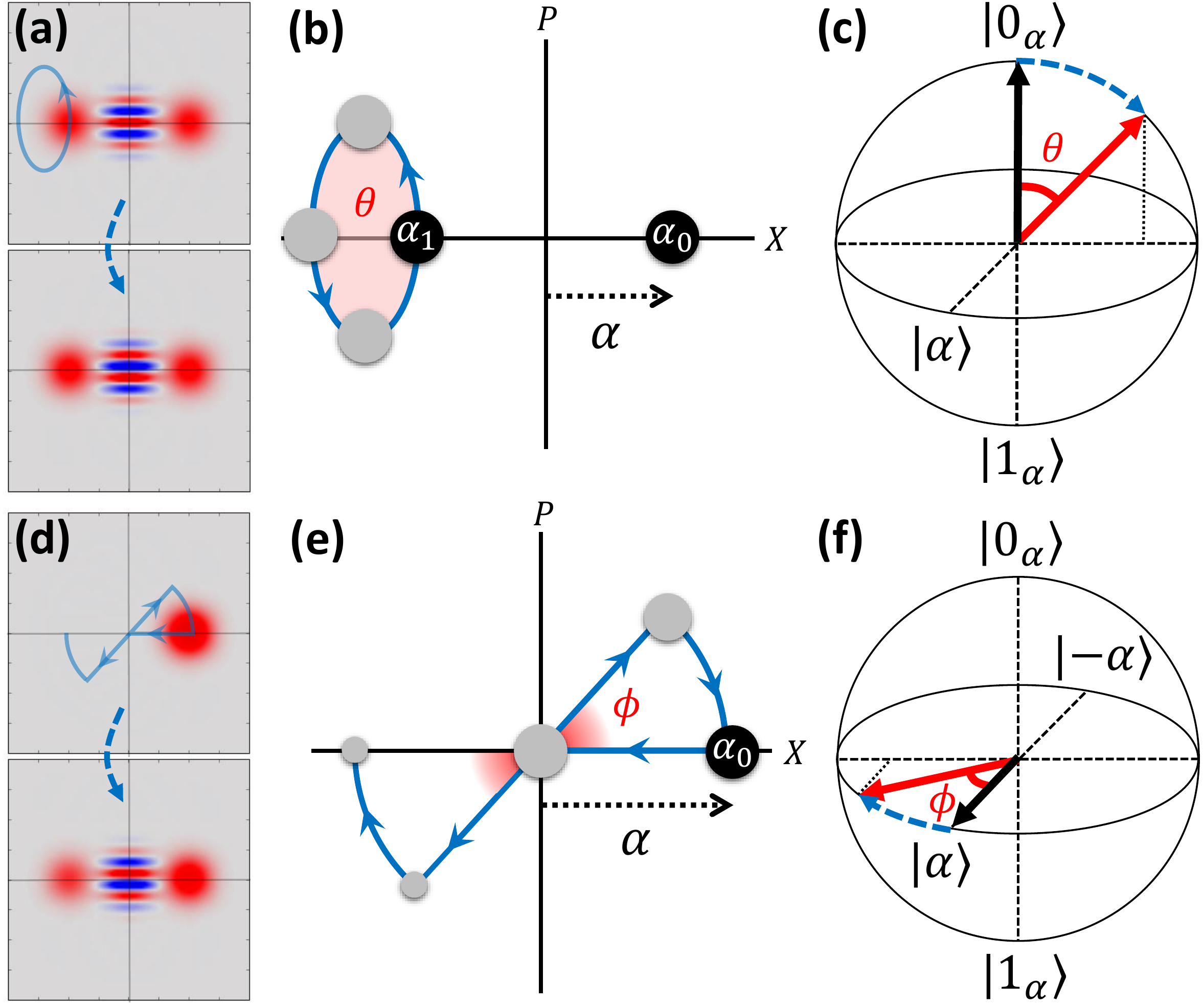} 
\par\end{centering}
\caption{\label{f1}\textbf{(a)} Wigner function sketch of the state before
(top) and after (bottom) a loop gate acting on \(\ct{-\a}\), depicting
the path of \(\ct{\a_{1}(t)}\) during the gate (blue) and a shift
in the fringes between \(\ct{\pm\a}\). \textbf{(b)} Phase space diagram
for the loop gate; \(X=\frac{1}{2}\bra\aa+\aa^{\dg}\ket\) and \(P=-\frac{i}{2}\bra\aa-\aa^{\dg}\ket\).
The parameter \(\a_{1}(t)\) is varied along a closed path (blue)
of area \(A\), after which the state \(\ct{-\a}\) gains a phase
\(\t=2A\) relative to \(|\a\ket\). \textbf{(c)} Effective Bloch
sphere of the \(\ct{\pm\a}\) qubit depicting the rotation caused
by the \(d=2\) loop gate. Black arrow depicts initial state while
red arrow is the state after application of the gate. The dotted blue
arrow does not represent the path traveled since the states leave
the logical space \(\ct{\pm\a}\) during the gate.\textbf{ (d-f)}
Analogous descriptions of the collision gate, which consists of reducing
\(\a\) to \(0\), driving back to \(\a\exp(i\phi)\), and rotating
back to \(\a\).}
\end{figure}

\subsection{Loop gate}

The loop gate involves an adiabatic variation of $\a_{1}(t)$ through
a closed path in phase space (see Fig.~\ref{f1}b). The state $|\a_{1}(t)\ket$
follows the path and, as long as the path is well separated from $|\a_{0}(t)\ket=|\a\ket$,
picks up a Berry phase of $\t=2A$, with $A$ being the area enclosed
by the path \cite{Chaturvedi1987}. It should be clear that initializing
the qubit in $\ct{-\a}$ produces only an irrelevant \textit{overall}
phase upon application of the gate. However, once the qubit is initialized
in a superposition of the two coherent states with coefficients $c_{\pm}$,
the gate imparts a \textit{relative} phase: 
\begin{equation}
c_{+}|\a\ket+c_{-}\ct{-\a}\longrightarrow c_{+}|\a\ket+c_{-}e^{i\theta}\ct{-\a}\,.
\end{equation}
Recalling that $|\a\ket$ lies on the $x$-axis of the cat-qubit Bloch
sphere from Fig.~\ref{fig:catschematic}, this gate can be thought
of as a rotation around that axis (depicted blue in Fig.~\ref{f1}c).
Similarly, adiabatically traversing a closed and isolated path with
the other state parameter $|\a_{0}(t)\ket$ induces a phase on $|\a\ket$.

\subsection{Collision gate}

For this gate, we utilize the small $\a$ regime to perform rotations
around the Bloch sphere $z$-axis (Fig.~\ref{f1}f), which effectively
induce a collision and population transfer between $|\a\ket$ and
$\ct{-\a}$. The procedure hinges on the following observation: applying
a bosonic rotation $R_{\phi}\equiv\exp(i\phi\ph)$ to well-separated
coherent or cat state superpositions \textit{does not} induce state-dependent
phases while applying $R_{\phi}$ to Fock state superpositions \textit{does}.
Only one tunable parameter $\a_{0}(t)=-\a_{1}(t)$ is necessary here,
so $F=\sqrt{\k}[\aa^{2}-\a_{0}(t)^{2}]$ with $|\a_{0}(0)|=\a$. The
collision gate consists of reducing $\a$ to $0$, driving back to
$\a\exp(i\phi)$, and rotating back to $\a$ (Fig.~\ref{f1}e). Recall
from Sec.~\ref{subsec:Unitary-case:-degenerate} that, for the \foreignlanguage{american}{\acs{DFS}}
case, the holonomy corresponding to \acs{ASH} is generated by $\p_{\l}\pp\pp$
(for some parameter $\l$). Using eq.~(\ref{eq:uzan-1}), we can
represent this holonomy as a path-ordered product of projections $\pp_{\a}=|0_{\a}\ket\bra0_{\a}|+|1_{\a}\ket\bra1_{\a}|$
onto the \foreignlanguage{american}{\acs{DFS}}. Therefore, the part
of the path which corresponds to the nonunitary driving from $0$
to $\a\exp(i\phi)$ can be approximated as $S_{\phi}\approx P_{\a e^{i\phi}}\cdots P_{\frac{2}{M}\a e^{i\phi}}P_{\frac{1}{M}\a e^{i\phi}}$
for integer $M\gg1$. Similarly, the part which ``deflates the cat''
from $\a$ to $0$ is approximately $S_{0}^{\dg}$. Combining these
with the rotation for the last segment of the path, the full ``pizza-slice''
path is\foreignlanguage{american}{ }represented by $R_{\phi}^{\dg}S_{\phi}S_{0}^{\dg}$.
Since 
\begin{equation}
R_{\phi}^{\dg}S_{\phi}R_{0}^{\dg}=R_{\phi}^{\dg}(R_{\phi}S_{0}R_{\phi}^{\dg})S_{0}^{\dg}=S_{0}R_{\phi}^{\dg}S_{0}^{\dg}\,,\label{eq:conj}
\end{equation}
the collision gate is equivalent to reducing $\a$, applying $R_{\phi}^{\dg}$
on the steady-state Fock basis $|k\ket=|k_{\a=0}\ket$, and driving
back to $\a$. The net result is thus a relative phase between the
states $\ct{k_{\a}}$:
\begin{equation}
c_{0}\ct{0_{\a}}+c_{1}\ct{1_{\a}}\longrightarrow c_{0}\ct{0_{\a}}+c_{1}e^{-i\phi}\ct{1_{\a}}\,.
\end{equation}
In the coherent state basis, this translates to a coherent population
transfer between $\ct{\pm\a}$.

\selectlanguage{american}%
Both gates can also be understood in terms of Berry connections of
the cat states, 
\begin{equation}
A_{kl}^{\l}\equiv i\bra k_{\a}|\p_{\l}l_{\a}\ket\,,
\end{equation}
where $\l$ is a parameter that is varied during the path. (Recall
from Ch.~\ref{ch:5} that $A^{\l}$ is the coordinate representation
of the connection $\p_{\l}\pp\pp$.) \foreignlanguage{english}{For
example, the collision gate arises from changes in the magnitude and
phase of the coherent state parameter $\a e^{i\varphi}$. Therefore,
$\lambda\in\{|\a|,\arg\a\equiv\varphi\}$ and a simple calculation
reveals 
\begin{align}
A_{kl}^{|\a|}=0\,\,\,\,\,\,\,\,\,\,\,\,\,\,\,\,\,\text{and}\,\,\,\,\,\,\,\,\,\,\,\,\,\,\,\,\,A_{kl}^{\varphi} & =-\d_{kl}\bra k_{\a}|\ph|k_{\a}\ket\,.
\end{align}
Recalling the small and large $\a$ limits of the average occupation
number from eq.~(\ref{eq:technical}), this confirms that the effective
operation induced by the collision gate is indeed caused by the rotation
induced on the Fock states at $\a=0$.}
\inputencoding{latin9}\newpage{}\foreignlanguage{english}{}%
\begin{minipage}[t]{0.5\textwidth}%
\selectlanguage{english}%
\begin{flushleft}
\begin{singlespace}\textit{``For today's electrical engineers worrying
about {[}Moore's Law{]}, quantum mechanics is a bug, but the hope
is that we can turn it into a feature.''}\end{singlespace}
\par\end{flushleft}
\begin{flushleft}
\hfill{}\textendash{} Robert J. Schoelkopf
\par\end{flushleft}\selectlanguage{english}%
\end{minipage}

\chapter{Application: single- and multi-mode cat codes\label{ch:8}}

We now proceed to state a series of extensions of the cat-state \acs{ASH}
stabilized by the ``two-cat pump'' of the previous \foreignlanguage{english}{chapter}.
The single-mode generalizations are called \textit{cat codes} \cite{Cochrane1999,Leghtas2013b,cats,Albert2015,paircat}
\textemdash{} \foreignlanguage{english}{quantum memories for coherent-state
quantum information processing \cite{catbook,cvbook_cats} which store
information in superpositions of well-separated coherent states evenly
distributed around the origin of phase space.} Here, we review the
single-mode generalizations and introduce an $M$-mode generalization
of cat-codes, making contact with the Lindbladians necessary to generate
these codes. We note that the states we consider have been studied
in a quantum optical context for $M=2$ \cite{Liu2001,Choi2008} and
$M=3$ \cite{an2003}.

\section{Single-mode cat codes}

\selectlanguage{english}%
In the previous chapter, we studied features of the Lindbladian generated
by the jump operator $F=\aa^{2}-\a^{2}$, which stabilized an \acs{ASH}
consisting of cat states $|k_{\a}\ket$, $k\in\{0,1\}$ (\ref{eq:catsintro}).
Through the lens of quantum information, this \acs{ASH} is part of
a \textit{quantum code} \textemdash{} a subspace that is used to store
an arbitrary quantum state in order to prevent its quantum information
from decohering or changing without notice. While we saw that such
an \acs{ASH} is passively protected from dephasing noise, it is not
protected from amplitude damping. We now double the size of this \acs{ASH}
in order to accommodate (and thus protect from) the effects of amplitude
damping.

Notice that the amplitude damping process (\ref{eq:ampdamp}) is generated
by the jump operator $\aa$, which decreases the occupation number
of a state by one, thereby flipping the occupation number parity.
If we had some way of storing information in a subspace of fixed (say,
even) occupation number parity which then could ``jump'' into an
error subspace of odd parity after being acted upon by $\aa$, then
we could in principle track such a jump and prevent the quantum information
from decohering. This is similar to more traditional multi-qubit stabilizer
codes \cite{nielsen_chuang}, which provide a large enough number
of error subspaces for a code such that the quantum information can,
after undergoing an error, ``jump'' from the code subspace into
an error subspace without overlapping with itself and decohering.
However, here we gain an additional advantage: we do not have to correct
the error and can simply track which subspace our quantum information
is in. In order to allow for the tracking of loss events $\aa$, all
we have to do is to make sure we have both an even- and an odd-parity
subspace in our \acs{ASH}, each of which are large enough to store
a qubit. This can be achieved by ``doubling'' the jump operator
to 
\begin{equation}
F=\aa^{4}-\a^{4}\,.\label{eq:newjump}
\end{equation}
For large enough $\a$, the four coherent states $|i^{k}\a\ket$,
$k\in\{0,1,2,3\}$, form the \acs{ASH} of the Lindbladian generated
by this jump operator. However, we once again would like to build
an orthonormal basis valid for all $\a$ whose states are eigenstates
of the parity operator $\left(-\right)^{\ph}$. Redefining projections
$\Pi_{k}=\sum_{n=0}^{\infty}|4n+k\ket\bra4n+k|$, the ``four-cat''
state basis consists of
\begin{equation}
|k_{\a}\ket\equiv\frac{\Pi_{k}|\a\ket}{\sqrt{\bra\a|\Pi_{k}|\a\ket}}\,\,\,\,\text{with normalization\,\,\,\,\ensuremath{\nn_{k}\equiv\bra\a|\Pi_{k}|\a\ket}}\,.\label{eq:catsintro-1}
\end{equation}
Similar to the two-cat pump states, these become Fock states $|k\ket$
for small $\a$ and equal superpositions of coherent states $\{|i^{k}\a\ket\}_{k=0}^{3}$
for large $\a$. The states $|0_{\a}\ket,|2_{\a}\ket$ are even parity
states \textemdash{} $\left(-\right)^{\ph}|0_{\a}\ket=|0_{\a}\ket$
and same for $|2_{\a}\ket$ \textemdash{} while the states $|1_{\a}\ket,|3_{\a}\ket$
are odd parity. Instead of using the entire four-dimensional space
to store a quartrit, we can use the even parity subspace as the (qubit)
code subspace and the odd parity subspace as the error subspace. That
way, it is possible to track loss events as they happen. Such tracking
has been experimentally realized in Ref.~\cite{Ofek2016}.

The new ``four-cat pump'' Lindbladian generated by $F$ from eq.~(\ref{eq:newjump})
enjoys many of the same features as the two-cat pump. A Hamiltonian-based
gate can be performed on the quantum information in either the even
or odd-parity subspace using the Hamiltonian $V=\aa^{2}+H.c.$, just
like $V=\aa+H.c.$ performed a gate on the two-dimensional \acs{ASH}
of the two-cat pump in Sec.~\ref{sec:Ordinary-perturbation-theory}.
The new jump operator is also of the type $F=F_{\ur}$, meaning that
the effective dissipative gap $\adg$ is the excitation gap of $\half F^{\dg}F$.
While an analytic representation for the 16 conserved quantities of
this case has not yet been found, it has been numerically determined
that dephasing noise is also suppressed (\cite{cats}, Fig.~A1b).
Holonomic quantum control can be performed on the entire \acs{ASH}
or only on its even/odd parity blocks \cite{Albert2015}.

We can continue along this line of reasoning and consider having $d-1$
error subspaces in order to track up to $d-1$ loss events (\cite{zaki},
Supplementary Material; \cite{Bergmann2016,li2016}). Such a scheme
is realized by the jump operator
\begin{equation}
F=\aa^{2d}-\a^{2d}\,.
\end{equation}
This operator annihilates the (unnormalized) states 
\begin{equation}
\Pi_{k}|\a\ket=e^{-\half\a^{2}}\sum_{n=0}^{\infty}\frac{\a^{2dn+k}}{\sqrt{(2dn+k)!}}|2dn+k\ket\,,\label{eq:gencats}
\end{equation}
 where $k\in\{0,1,\cdots,2d-1\}$ and the $2d$ projections from eq.~(\ref{eq:oscosc})
are
\begin{equation}
\Pi_{k}=\sum_{n=0}^{\infty}|2dn+k\ket\bra2dn+k|=\frac{1}{2d}\sum_{l=0}^{2d-1}e^{i\frac{\pi}{d}(\ph-k)l}\,.
\end{equation}
These states are eigenstates of the discrete rotation operator $e^{i\frac{\pi}{d}\ph}$,
just like the two-cat states are eigenstates of the parity operator
$e^{i\pi\ph}$. The power of $\aa$ is $2d$ (instead of $d$) because
this provides enough room for (a two-dimensional) code subspace and
the $d-1$ (two-dimensional) error subspaces that are required to
track up to $d-1$ loss events. If one wants to store a qu$D$it's
worth of information in the code subspace while still protecting from
$d-1$ loss events, then one can consider the jump operator $F=\aa^{Dd}-\a^{Dd}$.

The above schemes can be extended even further by considering jump
operators which are polynomials in $\aa$,
\begin{equation}
F=\prod_{k=0}^{d-1}(\aa-\a_{k})
\end{equation}
for some complex $\a_{k}$. Assuming that each $\a_{k}$ is well-separated
from the others in phase space, the kernel of the jump operator is
spanned by the $d$ coherent states $\{|\a_{k}\ket\}_{k=0}^{d-1}$.
Holonomic gates on such \acs{ASH} have been considered in Ref.~\cite{Albert2015}.
However, this general case (for $d>2$) is more complicated to work
with because, unlike the $d=2$ case (see Sec.~\ref{sec:Holonomic-quantum-control}),
it is \textit{not} unitarily related to $\aa^{d}-\a^{d}$ for some
$\a$.
\selectlanguage{american}%

\section{Two-mode cat codes\label{sec:Two-mode-cat-codes}}

Let us now consider a generalization of single-mode cat-codes to two
modes \cite{paircat} using the pair-coherent/Barut-Girardello states
\cite{Barut1971,Agarwal1986,Agarwal1988}. Recall that, in the single
mode case, our codes were eigenstates of powers of the lowering operator
$\aa$. In this case, our code states are eigenstates of powers of
$\aa\bl$, where $\bl$ is the lowering operator for another oscillator.
Naturally, $[\bl,\bl^{\dg}]=1$ and $\pb\equiv\bl^{\dg}\bl$. Recall
also that cat states were built by projecting (using $\Pi_{k}$) a
coherent state $|\a\ket$ onto eigenspaces of the rotation operator
$e^{i\frac{2\pi}{2d}\ph}$. We perform the same trick with $\aa\bl$.
However, while $\aa$ has only one type of eigenstate ($|\a\ket$),
$\aa\bl$ has a countable infinity of types, each of which is indexed
by a continuous parameter! Therefore, a careful analysis of the eigenstates
of $\aa\bl$ is required before we consider its higher powers.

An easy way to determine the eigenstates of $\aa$ is to simply plug
in a general state $|\psi\ket=\sum_{n=0}^{\infty}c_{n}|n\ket$ into
the eigenvalue relation 
\begin{equation}
\aa|\psi\ket=\a|\psi\ket
\end{equation}
and solve for the coefficients $c_{n}=\frac{\a^{n}}{\sqrt{n!}}$.
If we do this procedure with a two mode state $|\psi\ket=\sum_{n,m=0}^{\infty}c_{n,m}|n,m\ket$,
then we have two indices to consider. Instead, we can first use symmetries
to restrict what types of states $|\psi\ket$ we can plug in and thereby
avoid having to deal with two indices. Notice that $[\aa\bl,\pb-\ph]=0$,
meaning that $\aa\bl$ preserves the \textit{occupation number difference}
\begin{equation}
\de\equiv\pb-\ph\,.
\end{equation}
Therefore, any eigenstates of $\aa\bl$ are also eigenstates of $\de$.
The latter can be organized into subspaces of the same eigenvalue
$\Delta\in\mathbb{Z}$, namely $\{|n,n+\ra\ket\}_{n=0}^{\infty}$
for $\ra\geq0$ and $\{|n+\ra,n\ket\}_{n=0}^{\infty}$ for $\ra<0$.
That way, $\de|\psi\ket=\ra|\psi\ket$ for any state $|\psi\ket$
lying in a subspace of fixed $\ra$. 

For convenience, we can introduce the exchange operator 
\begin{equation}
X\equiv\exp\left[i\frac{\pi}{2}(\aa^{\dg}-\bl^{\dg})(\aa-\bl)\right]\,,\,\,\,\,\,\,\,\text{which acts as }\,\,\,\,\,\,\,X|n,m\ket=|m,n\ket\,,
\end{equation}
and write the subspaces for all negative $\ra$ as $\{X|n,n+\ra\ket\}_{n=0}^{\infty}$.
From now on, we assume that $\ra\geq0$, remembering that an application
of $X$ yields the corresponding results for $\ra<0$.

The projections onto each of the subspaces are
\begin{equation}
P_{\ra}\equiv\sum_{n=0}^{\infty}|n,n+\ra\ket\bra n,n+\ra|=\frac{1}{2\pi}\intop_{0}^{2\pi}d\t e^{i\left(\pb-\ph-\ra\right)\t}\label{eq:raproj}
\end{equation}
(for $\ra\geq0$ and $XP_{|\ra|}X$ for $\ra<0$). Since $\aa\bl$
is block diagonal in the decomposition of subspaces of fixed $\de$,
we have to only consider general states \textit{in each subspace}:
\begin{equation}
\aa\bl|\psi_{\ra}\ket\equiv\aa\bl\sum_{n=0}^{\infty}c_{n}|n,n+\ra\ket=\g^{2}|\psi_{\ra}\ket\,,
\end{equation}
with eigenvalue $\g^{2}$. Solving this equation by acting with $\aa\bl$
on each Fock state yields the solution $c_{n}=\frac{\g^{2n+\D}}{\sqrt{n!(n+\ra)!}}$,
and the resulting normalized state is the pair-coherent state
\begin{equation}
|\g_{\ra}\ket=\frac{1}{\sqrt{I_{\ra}(2|\g|^{2})}}\sum_{n=0}^{\infty}\frac{\g^{2n+\ra}}{\sqrt{n!\left(n+\ra\right)!}}|n,n+\ra\ket\,,\label{eq:genpair}
\end{equation}
where $I_{\ra}$ is the \foreignlanguage{english}{modified Bessel
function of the first kind \cite{dlmf}. These well-known states satisfy
several of the properties of ordinary coherent states: they are eigenstates
of a lowering operator ($\aa\bl$) and they are overcomplete (on each
subspace of fixed $\ra$). However, they are \textit{not} generated
by a displacement-like operator: $\exp(\g^{\star}\aa\bl-H.c.)|0,0\ket$
does not produce $|\g_{\ra}\ket$ but instead produces what is known
as a two-mode squeezed state \cite{scully}. Nevertheless, such states
\textit{can} be conveniently related to a two-mode coherent state
$|\g,\g\ket$ via the projections (\ref{eq:raproj}):
\begin{equation}
|\g_{\ra}\ket=\frac{P_{\ra}|\g,\g\ket}{\sqrt{\bra\g,\g|P_{\ra}|\g,\g\ket}}\,.
\end{equation}
}

\selectlanguage{english}%
Having introduced all of the eigenstates of $\aa\bl$, we can now
further apply projections $\Pi_{k}$ of the type discussed in the
previous Section in order to present two-mode cat codes. Notice that,
for the single-mode case in eq.~(\ref{eq:gencats}), applying $\Pi_{k}$
to a coherent state $|\a\ket$ is equivalent to having the index $n$
of the sum over Fock states of $|\a\ket$ transform as $n\rightarrow2dn+k$.
Here, we observe a similar pattern, but this time for the index $n$
in the sum of the pair coherent state (\ref{eq:genpair}). Let us
introduce projections onto eigenspaces of the joint rotation $e^{i\frac{2\pi}{4d}(\ph+\pb)}$
(where there is an extra factor of $2$ compared to the single-mode
case, $e^{i\frac{2\pi}{2d}\ph}$, corresponding to there being two
modes),
\begin{equation}
\Pi_{k}\equiv\frac{1}{4d}\sum_{l=0}^{4d-1}e^{i\frac{\pi}{2d}(\ph+\pb-k)l}=\sum_{n,m=0}^{\infty}|n,m\ket\bra n,m|\d_{n+m,k}^{\text{mod\,}4d}\,,\label{eq:tmtot}
\end{equation}
where $\d_{n+m,k}^{\text{mod\,}4d}=1$ whenever $n+m=k$ modulo $4d$.
Notice that $[\Pi_{k},P_{\ra}]=0$ since they are both functions of
$\ph,\pb$ only and that $\Pi_{k}\aa=\aa\Pi_{(k+1)\text{mod\,}4d}$
(and same for $\bl$). This implies that 
\begin{equation}
\Pi_{k}(\aa\bl)^{2d}=\aa^{2d}\Pi_{k+2d}\bl^{2d}=(\aa\bl)^{2d}\Pi_{k+4d}=(\aa\bl)^{2d}\Pi_{k}\,,
\end{equation}
meaning that any eigenstates of $(\aa\bl)^{2d}$ are also those of
$\Pi_{k}$. 

Applying $\Pi_{2k+\ra}$ to $|\g_{\ra}\ket$ produces the \textit{two-mode
cat code state}
\begin{equation}
|k_{\g,\ra}\ket=\frac{\Pi_{2k+\ra}P_{\ra}|\g,\g\ket}{\sqrt{\pi_{k,\ra}}}\label{eq:catsintro-1-1}
\end{equation}
with normalization $\nn_{k,\Delta}\equiv\bra\g,\g|\Pi_{2k+\ra}P_{\ra}|\g,\g\ket$
and $k\in\{0,1,\cdots,2d-1\}$. Fixing $\g$ to be real, the Fock
state representation of these states is 
\begin{equation}
|k_{\g,\ra}\ket=\frac{e^{-\g^{2}}}{\sqrt{\pi_{k,\ra}}}\sum_{n=0}^{\infty}\frac{\g^{2dn+k+\half\Delta}}{\sqrt{(2dn+k)!(2dn+k+\ra)!}}|2dn+k,2dn+k+\ra\ket\,.
\end{equation}
To show this, first observe that $P_{\ra}|\g,\g\ket$ consists of
Fock states from the subspace $\{|n,n+\ra\ket\}_{n=0}^{\infty}$.
Then, notice that $\Pi_{2k+\ra}$ projects those Fock states further
onto the subspace for which the total occupation number
\begin{equation}
2n+\ra=2k+\ra\mod4d\,.
\end{equation}
This implies that $n=k$ modulo $2d$. Therefore, for a given $k$,
the subspace of the states $\{|n,n+\ra\ket\}_{n=0}^{\infty}$ that
is preserved under $\Pi_{2k+\ra}$ is $\{|2dn+k,2dn+k+\ra\ket\}_{n=0}^{\infty}$.

One can check that $|k_{\g,\ra}\ket$, $k\in\{0,1,\cdots,2d-1\}$,
are eigenstates of $(\aa\bl)^{2d}$, meaning that the jump operator
used to stabilize an \acs{ASH} consisting of them is
\begin{equation}
F=(\aa\bl)^{2d}-\g^{2d}\,.
\end{equation}
As with the single-mode cat codes, we have verified numerically that
this Lindbladian suppresses dephasing noise in both modes for $d=1$.
In addition, this \acs{ASH} can store a qubit (say, in the $\ra=k=0$
subspace) that can be protected from arbitrary single-mode loss events
$\aa^{n}$ and $\bl^{m}$ for $n,m\in\{1,2,\cdots,\infty\}$ as well
as joint events $(\aa\bl)^{p}$ for $p\leq d-1$. This protection
can be understood by studying how these loss events interact with
the projections $P_{\ra}$ and $\Pi_{2k+\ra}$.
\begin{enumerate}
\item Single-mode loss events $\aa^{n}$ and $\bl^{m}$ shift the value
of $\ra$:
\begin{equation}
\begin{pmatrix}\aa\\
\bl
\end{pmatrix}P_{\ra}=\begin{pmatrix}P_{\ra+1}\aa\\
P_{\ra-1}\bl
\end{pmatrix}\,.
\end{equation}
Since the value of $\ra$ is shifted in different directions depending
on which mode incurred the losses, it is possible to track those events
by continuously monitoring the occupation number difference $\de=\pb-\ph$.
Since the eigenvalues of $\de$ are integers, an arbitrary amount
of single-mode loss events can be detected. Note that single-mode
events also do not commute with $\Pi_{2k+\ra}$, meaning that the
error subspace to which the qubit jumps to after such events have
different values of both $\ph+\pb$ (modulo $4d$) and $\de$.
\item For each $\ra$, there are $2d$ states of fixed photon number difference.
A joint loss event $\aa\bl$ commutes with $P_{\ra}$ but not with
$\Pi_{k}$, shifting $k\rightarrow k-1$:
\begin{equation}
\Pi_{2k+\ra}\aa\bl=\aa\bl\Pi_{2(k-1)+\ra}\,.
\end{equation}
Since the eigenspace of $\de$ doesn't change upon these errors, the
syndrome set is different from that associated with the single-mode
events above. Since there are $d-1$ error subspaces for each $\ra$,
it is possible to track up to $d-1$ such joint loss events.
\end{enumerate}
\selectlanguage{american}%
Some of the Hamiltonian-based and holonomic gates discussed in the
previous \foreignlanguage{english}{chapter} can also be extended to
these cases. For example, for $d=1$, the Hamiltonian $V=\aa\bl+H.c.$
performs a gate between the two states $|k_{\g,\ra}\ket$, $k\in\{0,1\}$,
for each $\ra$. A holonomic gate which consists of the path $\g\rightarrow0\rightarrow\g e^{i\phi}\rightarrow\g$,
a generalization of the single-mode collision gate from Sec.~\ref{sec:Holonomic-quantum-control},
induces a similar effect (again for each $\ra$).

\section{$M$-mode cat codes\label{sec:-mode-cat-codes}}

The two-mode generalization above can be naturally extended to $M$
modes, whose corresponding operators are labeled $\{\aa_{m},\aa_{m}^{\dg},\ph_{m}\}$
with $m\in\{1,2,\cdots,M\}$. Such codes for $M\geq3$ gain the additional
advantage of being able to correct for higher-weight products of losses
or for photon losses and gains at the same time (see Ref.~\cite{paircat}).
One way to characterize their code states is to use to a vector of
$M-1$ occupation number differences between neighboring modes,
\begin{equation}
\vec{\de}=\left\langle \ph_{2}-\ph_{1},\ph_{3}-\ph_{2},\cdots,\ph_{M}-\ph_{M-1}\right\rangle \,.
\end{equation}
Projections onto subspaces of fixed differences $\vec{\ra}=\left\langle \ra_{1},\ra_{2},\cdots,\ra_{M-1}\right\rangle $
are
\begin{equation}
P_{\vec{\ra}}=\sum_{n=0}^{\infty}\bigotimes_{m=1}^{M}\left|n+\sum_{p=1}^{m-1}\ra_{p}\right\rangle \left\langle n+\sum_{p=1}^{m-1}\ra_{p}\right|
\end{equation}
for $\Delta_{p}\geq0$ and projections on the total occupation number
$\hat{N}\equiv\sum_{m=1}^{M}\ph_{m}$ generalize straightforwardly
from eq.~(\ref{eq:tmtot}):
\begin{equation}
\Pi_{k}=\frac{1}{2dM}\sum_{l=0}^{2dM-1}e^{i\frac{\pi}{dM}\left(\hat{N}-k\right)l}\,.
\end{equation}

As an example, we write down only the $\vec{\ra}=\vec{0}$ and $d=1$
states, which are permutation symmetric. For $k\in\{0,1\}$,
\begin{equation}
|k_{\g,\vec{0}}\ket\equiv\frac{\Pi_{k}P_{\vec{0}}\cdot|\g\ket^{\otimes M}}{\sqrt{\pi_{k,\vec{0}}}}\propto\sum_{n=0}^{\infty}\g^{M(2n+k)}\bigotimes_{m=1}^{M}\frac{|2n+k\ket}{\sqrt{(2n+k)!}}\,,
\end{equation}
with normalization $\pi_{k,\vec{0}}\equiv\bra\g|^{\otimes M}\cdot\Pi_{k}P_{\vec{0}}\cdot|\g\ket^{\otimes M}$.
The jump operator which annihilates these states is 
\begin{equation}
F=\left(\prod_{m=1}^{M}\aa_{m}\right)^{2}-\g^{2M}\,.
\end{equation}
Once again, the analysis of the previous \foreignlanguage{english}{chapter}
is extendable to these codes.
\inputencoding{latin9}\newpage{}\foreignlanguage{english}{}%
\begin{minipage}[t]{0.5\textwidth}%
\selectlanguage{english}%
\begin{flushleft}
\begin{singlespace}\textit{``Rather than working on No-Go theorems,
I prefer to do Lego experiments.''}\end{singlespace}
\par\end{flushleft}
\begin{flushleft}
\hfill{}\textendash{} Michel H. Devoret
\par\end{flushleft}\selectlanguage{english}%
\end{minipage}

\chapter{Outlook\label{ch:9}}

\selectlanguage{english}%
This work is concerned with Lindbladians which admit more than one
steady state. The motivation for studying such Lindbladians is two-fold.
First, using a set of techniques typically characterized as\textit{
quantum} \textit{reservoir engineering}, such Lindbladians can be
used to stabilize exotic phases of matter (corresponding to possibly
degenerate ground states), quantum entanglement (for quantum communication
or metrology), or desirable subspaces (for quantum information processing).
Second, such Lindbladians can be used for \textit{autonomous/passive
quantum error correction} \cite{Terhal2015}, suppressing the effect
of errors on a steady-state subspace containing quantum information
and/or driving any leaked quantum information back into said subspace
after an error. We have reviewed and made contact with previous work,
detailed relevant manuscripts to which the author of this thesis contributed,
and presented previously unpublished results. A summary can be found
in Sec.~\ref{sec:Questions-addressed-and}.

\selectlanguage{american}%
One item that is anticipated to gain further application is the all-order
Dyson series for slowly ramping-up perturbations in Ch.~\ref{ch:4}.
Due to no restrictions on the unperturbed Lindbladian, the number
and type of its steady states, and the type of perturbation, that
analysis is just about as general as one could hope for while still
adhering to the laws of quantum mechanics. \foreignlanguage{english}{While
we focus on response to Hamiltonian perturbations within first-order
and evolution within the adiabatic limit, it would be of interest
to further study other Lindbladian perturbations \cite{Venuti2015a}
and their corresponding higher-order effects. While several elements
of this study consider asymptotic subspaces consisting of only one
block of steady states, it is not unreasonable to imagine that the
aforementioned second-order and/or non-adiabatic effects could produce
transfer of information between two or more blocks. Recently developed
diagrammatic series aimed for determining perturbed steady states
\cite{Li2015a} (see also \cite{Li2014}) may benefit from the four-corners
decomposition (whenever the unperturbed steady state is not full-rank).
The all-order Dyson series should provide insight into reservoir engineering
theory and experiments in which the ``good'' dissipation is stronger
than any ``bad'' noise.} Finally, while we have applied the general
result of Thm.~\ref{prop:3} to ordinary (time-dependent) and adiabatic
perturbation theory, future work could include an application to \textit{singular
perturbation theory} and \textit{adiabatic elimination} techniques
\cite{Mirrahimi2008,Azouit2016,Azouit2017} or \textit{quasi-degenerate
perturbation theory} (\cite{Winkler2003}, Appx. B).

A glaring item that is missing from this work, with the notable exception
of the frustration-free Lindbladians of Sec.~\ref{sec:Ground-state-subspaces},
is an application to many-body open systems \foreignlanguage{english}{with
\textit{non-equilibrium steady states} (NESS)}. This is not necessarily
exclusive to this work, as many concepts are only recently being extended
to \foreignlanguage{english}{NESS}. Examples include topological order
\cite{Rivas2013,Budich2015,Huang2014}, Thouless pumping \cite{Linzner2016},
and spontaneous symmetry breaking \cite{Wilming2016} to name a few.
\foreignlanguage{english}{It would also be of interest to determine
how the effective dissipative gap scales with system size vs. the
true dissipative gap in many-body systems \cite{Cai2013,Znidaric2015,Iemini2015,Wilming2016}.
}Other many-body concepts have yet to be extended to Lindbladians
with multiple \foreignlanguage{english}{NESS}. Matrix product methods
determining the steady state of a Lindbladian \cite{cui2015}\foreignlanguage{english}{
and current applications of the Keldysh formalism to Lindbladians
\cite{Sieberer2016} do not tackle degenerate cases. Stability of
NESS studied for the unique state case \cite{Cubitt2015a} should
also be extendable.}

\selectlanguage{english}%
Given a quantum channel $\E$, there exists a recipe (see Sec.~\ref{subsec:Quantum-channel-simulation}
and Ref.~\cite{ABFJ}) for a Lindbladian $\L$ whose time evolution
in the infinite-time limit contains one action of $\E$. Such an embedding
may prove useful in autonomous error correction and experimental quantum
channel simulation. It would be of interest to study the applicability
of this recipe in the broader context of previous efforts on channel
simulation, both theoretical \cite{Lloyd2001,Andersson2008,Wang2015,Shen2016,Iten2016}
and experimental \cite{Lu2015,McCutcheon2017}.

The metric stemming from the \ac{QGT}  will be examined in future
work, particularly to see whether it reveals information about bounds
on convergence rates \cite{delcampo2013,Rouchon2013,jing2015,Pires2016}.
It remains to be seen whether the scaling behavior of the metric is
correlated with phase stability \cite{Roy2014,Dobardzic2013,Jackson2015,Bauer2015}
and phase transitions \cite{CamposVenuti2007,Zanardi2007,Kolodrubetz2013}
for NESS. We do not derive a \ac{QGT} or metric for the case of multiple
\foreignlanguage{american}{\acs{NS}} blocks, so taking into account
any potential interaction of the blocks during adiabatic evolution
remains an open problem.

It has recently been postulated \cite{Macieszczak2015} that Lindbladian
meta-stable states also possess the same structure as the steady states.
This may mean that our results regarding conserved quantities (which
are dual to the steady states) also apply to the pseudo-conserved
quantities (dual to the meta-stable states).

Lastly, the properties of Lindbladian eigenmatrices should be extendable
to memory-kernel dynamics \cite{Janssen2017} and can be extended
to eigenmatrices of more general quantum channels \cite{BlumeKohout2008,robin,baumr,Carbone2015}.
Statements similar to Thm.~\ref{prop:3} exist for fixed points of
quantum channels \cite{robin,Cirillo2015} and their extension to
rotating points will be a subject of future work. These results may
also be useful in determining properties of asymptotic algebras of
observables \cite{Dhahri2010,alipour2015} and properties of quantum
jump trajectories when the Lindbladian is ``unraveled'' \cite{wisemanmilburn,Benoist2015}.\selectlanguage{english}%

\selectlanguage{english}%
\renewcommand*{\bibfont}{\footnotesize}

\selectlanguage{american}%
\inputencoding{latin9}\newpage{}

\manualmark \markboth{\spacedlowsmallcaps{\bibname}}{\spacedlowsmallcaps{\bibname}}
\refstepcounter{dummy}
\addtocontents{toc}{\protect\vspace{\beforebibskip}}
\addcontentsline{toc}{chapter}{\tocEntry{\bibname}}

\label{app:bibliography}

\begin{singlespace}
\printbibliography
\end{singlespace}

\endgroup\selectlanguage{english}%

\end{document}